\begin{document}

\newcommand{\mm}[1]{{\color{blue}{#1}}}

\definecolor{pink}{rgb}{1,0.08,0.45}
\newcommand{\refc}[1]{{\color{pink}{#1}}}

\def\dd{r}
\def\ld{{\widehat L}}
\def\Gcal{\mathcal G}
\def\eps{\epsilon}
\def\eqn{\eqref}
\def\wtau{\widehat \tau}
\def\ttau{\widetilde \tau}
\def\Q{\mathcal Q}
\def\hG{\widehat H}
\def\C{\mathcal C}
\def\ds{{\mathcal D}}
\def\br{br}
\def\M{{\cal M}}
\def\H{{\cal H}}
\def\bell{\ensuremath{\boldsymbol\ell}}
\def\Bin{{\bf Bin}}

\newcommand{\be}{\begin{equation}}
\newcommand{\ee}{\end{equation}}
\newcommand{\bea}{\begin{eqnarray}}
\newcommand{\eea}{\end{eqnarray}}
\newcommand{\bean}{\begin{eqnarray*}}
\newcommand{\eean}{\end{eqnarray*}}
\newcommand{\non}{\nonumber}
\newcommand{\no}{\noindent}
\newcommand\floor[1]{{\lfloor #1 \rfloor}}
\newcommand\ceil[1]{{\lceil #1 \rceil}}
\newcommand{\jc}[1]{{{\bf [Jane:}\ \color{red} #1} {\bf ]}}
\newcommand{\jt}[1]{{\color{red}\ #1 }}
\newcommand{\jn}[1]{{\color{pink}\ #1 }}
\newcommand{\remove}[1]{}
\newcommand{\lab}[1]{\label{#1}\ }

\def\a{\alpha}
\def\b{\beta}
\def\c{\chi}
\def\d{\delta}
\def\D{\Delta}
\def\e{\epsilon}
\def\f{\phi}
\def\F{\Phi}
\def\g{\gamma}
\def\G{\Gamma}
\def\k{\kappa}
\def\K{\Kappa}
\def\z{\zeta}
\def\th{\theta}
\def\Th{\Theta}
\def\l{\lambda}
\def\la{\lambda}
\def\La{\Lambda}
\def\m{\mu}
\def\n{\nu}
\def\p{\pi}
\def\P{\Pi}
\def\r{\rho}
\def\R{\Rho}
\def\s{\sigma}
\def\S{\Sigma}
\def\t{\tau}
\def\om{\omega}
\def\Om{\Omega}
\def\smallo{{\rm o}}
\def\bigo{{\rm O}}
\def\to{\rightarrow}
\def\E{{\bf Exp}}
\def\ex{{\mathbb E}}
\def\cd{{\cal D}}
\def\rme{{\rm e}}
\def\hf{{1\over2}}
\def\R{{\bf  R}}
\def\cala{{\cal A}}
\def\cale{{\cal E}}
\def\call{{\cal L}}
\def\cald{{\cal D}}
\def\calz{{\cal Z}}
\def\calf{{\cal F}}
\def\Fscr{{\cal F}}
\def\cc{{\cal C}}
\def\calc{{\cal C}}
\def\calh{{\cal H}}
\def\calk{{\cal K}}
\def\cals{{\cal S}}
\def\calr{{\cal R}}
\def\calt{{\cal T}}
\def\msq{{\mathscr Q}}
\def\bk{\backslash}

\def\out{{\rm Out}}
\def\temp{{\rm Temp}}
\def\overused{{\rm Overused}}
\def\big{{\rm Big}}
\def\moderate{{\rm Moderate}}
\def\swappable{{\rm Swappable}}
\def\candidate{{\rm Candidate}}
\def\bad{{\rm Bad}}
\def\crit{{\rm Crit}}
\def\col{{\rm Col}}
\def\dist{{\rm dist}}
\def\poly{{\rm poly}}

\def\tdT{{\widetilde\Theta}}

\newcommand{\Exp}{\mbox{\bf Exp}}
\newcommand{\var}{\mbox{\bf Var}}
\newcommand{\pr}{\mbox{\bf Pr}}

\newtheorem{lemma}{Lemma}
\newtheorem{theorem}{Theorem}
\newtheorem{corollary}[lemma]{Corollary}
\newtheorem{claim}[lemma]{Claim}
\newtheorem{remark}[lemma]{Remark}
\newtheorem{proposition}[lemma]{Proposition}
\newtheorem{observation}[lemma]{Observation}
\theoremstyle{definition}
\newtheorem{definition}[lemma]{Definition}

\newcommand{\limninf}{\lim_{n \rightarrow \infty}}
\newcommand{\proofstart}{{\bf Proof\hspace{2em}}}
\newcommand{\tset}{\mbox{$\cal T$}}
\newcommand{\proofend}{\hspace*{\fill}\mbox{$\Box$}}
\newcommand{\bfm}[1]{\mbox{\boldmath $#1$}}
\newcommand{\reals}{\mbox{\bfm{R}}}
\newcommand{\expect}{\mbox{\bf Exp}}
\newcommand{\he}{\hat{\e}}
\newcommand{\card}[1]{\mbox{$|#1|$}}
\newcommand{\rup}[1]{\mbox{$\lceil{ #1}\rceil$}}
\newcommand{\rdn}[1]{\mbox{$\lfloor{ #1}\rfloor$}}
\newcommand{\ov}[1]{\mbox{$\overline{ #1}$}}
\newcommand{\inv}[1]{\frac{1}{#1} }
\newcommand{\imax}{I_{\rm max}}

\newcommand{\whp}{w.h.p.}
\newcommand{\aas}{a.a.s.\ }

\date{\empty}

\title{Inside the clustering threshold for random linear equations}

\author{Pu Gao\footnote{Research supported by an NSERC Postdoctoral Fellowship.}
and Michael Molloy\footnote{Research supported by an NSERC Discovery Grant and Accelerator Supplement.} \\
Department of Computer Science, University of Toronto\\
10 King's College Road, Toronto, ON\\
pu.gao@utoronto.ca \hspace{5ex} molloy@cs.toronto.edu}

\maketitle

 \thispagestyle{empty}

\begin{abstract}  We study a random system of  $cn$ linear equations over $n$ variables in GF(2), where each equation contains exactly $r$ variables; this is equivalent to $r$-XORSAT. \cite{ikkm,amxor} determined the clustering threshold, $c^*_r$: if $c=c^*_r+\e$ for any constant $\e>0$, then \aas the solutions partition into well-connected, well-separated {\em clusters}  (with probability tending to 1 as $n\rightarrow\infty$).  This is part of a general clustering phenomenon which  is hypothesized to arise in  most of the commonly studied models of random constraint satisfaction problems, via sophisticated but mostly non-rigorous techniques from statistical physics.
We extend that study to the range $c=c^*_r+o(1)$, showing that if $c=c^*_r+n^{-\d}, \d>0$, then the connectivity parameter of each $r$-XORSAT cluster is $n^{\Theta(\d)}$, as compared to $O(\log n)$ when $c=c^*_r+\e$. This means that one can move between any two solutions in the same cluster via a sequence of solutions where consecutive solutions differ on at most $n^{\Theta(\d)}$ variables; this is tight up to the implicit constant.  In contrast, moving to a solution in another cluster requires that some pair of consecutive solutions differ in at least $n^{1-O(\d)}$ variables.

Along the way, we prove that in  a random $r$-uniform hypergraph with edge-density $n^{-\d}$ above the $k$-core threshold, \aas every vertex not in the $k$-core can be removed by a sequence of $n^{\Theta(\d)}$ vertex-deletions in which the deleted vertex has degree less than $k$; again, this is tight up to the implicit constant.
\end{abstract}

\newpage

\setcounter{page}{1}

\section{Introduction} The study of random constraint satisfaction problems (CSP's) has been revolutionized by a collection of hypotheses, arising from statistical physics, concerning {\em clusters} of solutions.  Roughly speaking: when the density (the ratio of the number of constraints to $n$, the number of variables) exceeds a specific constant (the {\em clustering threshold}), the set of solutions \aas\footnote{ A property holds \aas (asymptotically almost surely)
if it holds with probability tending to 1 as $n\rightarrow\infty$.} partitions into clusters.  One can move throughout any cluster by making small {\em local} changes; i.e. changing the values of $o(n)$ variables in each step.  But solutions in two different clusters must differ {\em globally}; i.e. they differ on a linear number of variables.

While these hypotheses are, for the most part, not rigorously proven, they come from some substantial mathematical analysis.  They explain many phenomena, most notably why these CSP's are algorithmically very challenging (eg.~\cite{aco,dg}).  Intuition gained from these hypotheses  has led to  some very impressive heuristics (eg. Survey Propogation\cite{sp,mz,bmz}), the best of the random $r$-SAT algorithms whose performance has been rigorously proven~\cite{coalg}, and some remarkably tight bounds on various satisfiability thresholds (the density above which \aas there are no solutions)~\cite{cop,cop2,cov}.  It is clear that, in order to approach most of the outstanding challenges around random CSP's, we need to  understand clustering.

Amongst the commonly studied random CSP's, $r$-XORSAT (a.k.a.\ linear equations over GF(2)) is the one for which the clustering picture is, rigorously, the most well-established. The exact satisfiability threshold was established for $r=3$ in \cite{dub}, and then for $r\geq 4$ in \cite{cuc,ps}.  The exact clustering threshold, $c^*_r$, was established independently in \cite{ikkm} and in \cite{amxor}.
Those two papers also provide a very thorough description of the clusters for any constant density $c>c_r^*$: roughly speaking, each cluster consists of a solution to the 2-core  along with all satisfying extensions of that solution to the remainder of the variables; see Section~\ref{ssc} for a more thorough (and correct) description.  It is possible to move within each cluster changing only $O(\log n)$ variables at each step, but moving from one cluster to another requires changing a linear number of variables.

{The cluster structure of $r$-XORSAT is simpler than that of most other models (see Section~\ref{sirw}), and this  has enabled researchers to prove challenging results for it which
we do not appear close to proving for other models.  The structure of the clusters in the other models are hypothesized to be a generalization of the simpler structure in $r$-XORSAT.  After the {\em freezing threshold}, most clusters consist of a solution $\s$ to a subset of the variables, called {\em frozen variables}, along with all extensions of $\s$ to the rest of the variables. One (of many) differences from $r$-XORSAT is that, in other models, the set of frozen variables can differ amongst the clusters.  Insights from the work on $r$-XORSAT have been valuable when studying more complicated models; eg. ideas from~\cite{amxor} led to~\cite{mmfreeze,mres}.

In this paper we look inside the  $r$-XORSAT clustering threshold, by analyzing the clusters when $c=c_r^*+o(1)$.  Our goal is to gain a better understanding of the way that clusters are {\em born}; i.e. to understand the first clusters to arise as the constraint density is increased.
Looking inside thresholds for such purposes is a common goal; see e.g.\  the extensive work on the birth of the giant component\cite{bb3,tlcomp,lpw,jlkp}, and the look inside the 2-SAT threshold\cite{bbckw}.

Specifically, we look at $c=c^*_r+n^{-\d}$ for  constant $\d>0$.  The most notable difference we find is that moving within a cluster requires changing
$n^{\Theta(\d)}$ variables during at least one step.
Most of the technical work is in analyzing  the 2-core of a random hypergraph, which is where we begin.

\subsection{Hypergraph $k$-cores}

We define $\calh_{\dd}(n,p)$ to be an $r$-uniform hypergraph on $n$ vertices where each of the $n\choose r$ potential hyperedges is present independently with probability $p$.  This random hypergraph underlies the most commonly studied random CSP models, and the typical range of focus is $p=\Theta(n^{1-r})$, where \aas there is a linear number of hyperedges.

\begin{definition}
\normalfont The {\em $k$-core} of a hypergraph $H$, $\calc_k(H)$, is the {maximum} subgraph of $H$ in which every vertex has degree at least $k$.
\end{definition}

The $k$-core was first studied by Bollob\'as\cite{bbcore} and has since become a major focus in random graph theory.  Its many applications include erasure codes\cite{lmss,lmss2}, colouring\cite{mr2}, hashing\cite{hmwc}, and orientability\cite{pw,csw,fkp,fern}.
The threshold for the appearance of a non-empty $k$-core in $\calh_{\dd}(n,p=c/n^{\dd-1})$ is determined\cite{psw,mmcore,jhk} to be:
\begin{equation}\lab{krthreshold}
c_{\dd,k}=\inf_{\mu > 0}
 \frac{(\dd-1)!\mu}{\left[e^{-\mu}\sum_{i = {k}}^{\infty} \mu^i/i!\right]^{\dd-1}}
 \enspace .
 \end{equation}
The $r$-XORSAT clustering threshold is equal to $c_{r,2}$; so we use that notation rather than $c^*_r$ in the remainder.

One way to find the $k$-core is to repeatedly remove vertices of degree less than $k$ until no such vertices remain.
It is not hard to see that the order in which vertices are removed does not affect the subgraph produced.
 It is often useful to consider removing the vertices in several {parallel} rounds:

\begin{definition} The {\em parallel $k$-stripping process}, applied to a hypergraph $H$, consists of iteratively removing {\em all} vertices of degree less than $k$ at once along with any hyperedges containing any of those vertices, until no vertices of degree less than $k$ remain; i.e. until we are left with the $k$-core of $H$.
We use $S_i$ to denote the vertices that are removed during iteration $i$. We use $\widehat{H}_i$ to denote the hypergraph remaining after $i-1$ iterations, i.e. after removing $S_1,...,S_{i-1}$.
\end{definition}

We will analyze the number of rounds that this process takes:

\begin{definition}  The {\em $k$-stripping number of $G$}, denoted $s_k(G)$, is the number of iterations in the  parallel $k$-stripping process applied to $G$.  We often drop the ``$k$'' and speak of the stripping number and $s(G)$.
\end{definition}

We are  interested in the number of deletions that are required to remove a particular vertex:

\begin{definition} A {\em $k$-stripping sequence} is a sequence of vertices that can be deleted from a hypergraph, one-at-a-time, {along with their incident hyperedges} such that at the time of deletion each vertex has degree less than $k$.
For any vertex $v$ not in the $k$-core, the {\em depth} of $v$ is the length of a shortest $k$-stripping sequence ending with $v$.
\end{definition}

For the purposes of studying $r$-XORSAT clusters, we only need to analyze these parameters for the case $k=2$.  But  the stripping number and depth are of more general interest, so we analyze them for all $k$.
(The case $k=r=2$, i.e. the 2-core of $G_{n,p}$ is very well understood, so we do not cover it here; see Appendix~\ref{sa1}.)
In order to prove that one can move within clusters, changing $O(\log n)$ variables at a time,
\cite{amxor} proved that for $c>c_{k,r}$, \aas every non-$k$-core vertex has depth $O(\log n)$.
(\cite{ikkm} implicitly proves that for $k=2$ the maximum depth is poly$(\log n)$.) The main technical work of this paper is to prove that inside the $k$-core threshold, the maximum depth rises to $n^{\Theta(1)}$:

\begin{theorem}\lab{mt}
For  $\dd,k\geq2, (\dd,k)\neq (2,2)$, consider any $0<\d<\hf$.
\begin{enumerate}
\item[(a)]  For $|c-c_{\dd,k}| \leq n^{-\d}$, \aas $s(\calh_{\dd}(n,p=c/n^{\dd-1}))\geq\Omega(n^{\d/2})$.
\item[(b)]  For $c=c_{\dd,k}+n^{-\d}$,  \aas
$ s(\calh_{\dd}(n,p=c/n^{\dd-1}))= O(n^{\d/2}\log n)$.
\end{enumerate}
\end{theorem}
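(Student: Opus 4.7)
The plan is to reduce the parallel stripping process to a one-dimensional deterministic recursion and exploit a critical slowing-down at a tangent fixed point. Following the standard branching-process/Poisson analysis of hypergraph $k$-cores, I would identify a parameter $\mu_i$ that, to leading order, characterizes the residual hypergraph $\widehat H_i$ at the start of round $i$ and satisfies a deterministic recursion $\mu_{i+1}=F_c(\mu_i)$ for an explicit $F_c$ depending only on $c,\dd,k$. The $k$-core corresponds to a stable fixed point $\mu_+$ of $F_c$; the threshold formula~\eqref{krthreshold} is precisely the condition that two fixed points collide at some $\mu^*$, meaning $F_{c_{\dd,k}}(\mu^*)=\mu^*$ and $F'_{c_{\dd,k}}(\mu^*)=1$. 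Taylor-expanding around $\mu^*$ gives, for explicit $\a,\b>0$,
\[
F_c(\mu)-\mu \;=\; \a(c-c_{\dd,k}) - \b(\mu-\mu^*)^2 + \text{higher order}.
\]
For $c=c_{\dd,k}+n^{-\d}$ this produces two fixed points separated by $\Theta(n^{-\d/2})$; for $c=c_{\dd,k}-n^{-\d}$ they vanish, leaving a bottleneck in which $|F_c(\mu)-\mu|$ is only $\Theta(n^{-\d})$ throughout a window of width $\Theta(n^{-\d/2})$ around $\mu^*$.

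\textbf{The two bounds.} Inside this critical window each iteration moves $\mu$ by only $\Theta(n^{-\d})$, so $\Theta(n^{\d/2})$ rounds are needed just to traverse it. This immediately yields (a): whether $c\leq c_{\dd,k}$ (the process strips everything) or $c\geq c_{\dd,k}$ (it halts at the $k$-core), $\mu_i$ must cross the window, so $s\geq\Omega(n^{\d/2})$. For (b), linearising at the stable fixed point $\mu_+$ gives $F'_c(\mu_+)=1-\Theta(n^{-\d/2})$, so reducing $|\mu_i-\mu_+|$ from its post-window value $\Theta(n^{-\d/2})$ down to the discrete resolution $1/n$ takes $\Theta(n^{\d/2}\log n)$ further rounds. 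Outside the window the dynamics are strongly contractive, contributing only $O(\log n)$ additional rounds to descend from the initial $\mu_1=\Theta(1)$, giving the claimed total $O(n^{\d/2}\log n)$.

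\textbf{Main obstacle: concentration.} The principal technical challenge is maintaining concentration of the stochastic process around this deterministic recursion across $\Theta(n^{\d/2})$ rounds while the per-round drift is only $n^{-\d}$; a crude differential-equation-method estimate produces errors of order $\sqrt{n}$ per round, which would swamp the drift. My plan is to track a vector of joint statistics of $\widehat H_i$ (edge count and a truncated joint degree sequence) via Doob/Freedman martingales, exposing the hyperedges incident to each stripped vertex one at a time within a round; since a single round is a simultaneous removal with bounded per-exposure dependence (each hyperedge affects only its $\dd$ endpoints), per-round fluctuations in the leading statistics should be of order $n^{1/2+o(1)}$, summing to $o(n^{1-\d/2})$ over $n^{\d/2}$ rounds---well below the scale at which the deterministic trajectory evolves. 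For (a) only a one-sided concentration bound is required (``the process does not traverse the window faster than the deterministic recursion''), while for (b) the $\log n$ factor simultaneously produces the decay of $|\mu_i-\mu_+|$ down to $1/n$ and absorbs the accumulated concentration slack; once the residual hypergraph is within $n^{o(1)}$ of $\calc_k(H)$, the constant-gap analysis of~\cite{amxor} finishes the job in $O(\log n)$ further rounds.
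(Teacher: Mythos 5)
Your proposal takes a genuinely different route from the paper's. The paper works with the slowed-down stripping process (SLOW-STRIP, removing one hyperedge per step) and tracks $L_t$, the total degree of the deletable queue; the heart of the argument is the precise drift estimate $\ex(L_{t+1}\mid G_t)\le L_t - Kn^{-\d/2}$ (Lemma~\ref{llt1}), together with a matching lower bound on the drift (Lemma~\ref{llt3}). For the lower bound on the stripping number, the decisive step is then a bootstrapping argument: since $L_{\wtau}=0$ at the stopping time, the two-sided drift estimate plus Azuma's inequality forces $L_{t_0}=O(\gamma^2/n)=O(n^{1-\d})$ (Lemma~\ref{llt2}), which caps the size of each $S_i$ and immediately gives $\Omega(n^{\d/2})$ rounds. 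You instead propose to analyze the parallel rounds directly via a deterministic recursion $\mu_{i+1}=F_c(\mu_i)$ with a tangent fixed point. The underlying quantitative phenomenon is the same—a contraction rate of $1-\Theta(n^{-\d/2})$ per parallel round near the stable fixed point, and per-round increments of order $n^{1-\d}$ vertices inside the critical window—so both arguments hinge on the same scaling. What the paper's SLOW-STRIP framing buys is a clean one-step exact drift formula for $L_t$ in the configuration model (Lemma~\ref{llt3}), which sidesteps having to prove that $\widehat H_i$ is faithfully described by a one-dimensional Poisson parameter after a full parallel round; what your framing buys is a more transparent connection to the classical fixed-point characterization of the threshold and a unified treatment of the subcritical and supercritical cases without the coupling trick the paper invokes in Appendix~\ref{scouple}.

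Two places where the sketch is loose. First, the claim that the dynamics outside the window are ``strongly contractive, contributing only $O(\log n)$ additional rounds'' is not correct: from the Taylor expansion you quote, $\mu_{i+1}-\mu_i\approx -\b(\mu_i-\mu^*)^2$, which yields $1/(\mu_i-\mu^*)\approx \b i + O(1)$; descending from a constant distance to the window boundary $\Theta(n^{-\d/2})$ therefore already takes $\Theta(n^{\d/2})$ rounds, not $O(\log n)$. This does not change the final $O(n^{\d/2}\log n)$ bound, but it means there is no ``fast'' regime outside the window, so the concentration control must be maintained over the full $\Theta(n^{\d/2}\log n)$ rounds. Second, saying that only a ``one-sided'' bound is needed for part (a) blurs an important point: the lower bound on the stripping number requires that the random trajectory not move through the window faster than the deterministic one, and since the fluctuations are a two-sided random walk with $F_c'\approx 1$ (so deviations persist rather than decay), you must control both tails of the accumulated noise, not just one. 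The crude estimate you give does close ($n^{1/2+\d/2}=o(n^{1-\d/2})$ for $\d<1/2$), but you should state it as a two-sided deviation bound. You would also need a quantitative version of the claim that $\widehat H_i$ is well-approximated at each round by a configuration-model hypergraph with truncated-Poisson degrees—the paper's Proposition~\ref{p:Poisson} and Lemma~\ref{l:err} do exactly this work, and the analogous step would be needed before the Taylor expansion of $F_c$ can be invoked for the actual stochastic process.
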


\begin{theorem}\lab{mt2}
For  $\dd,k\geq2, (\dd,k)\neq (2,2)$, consider any $0<\d<\hf$.
For $c=c_{\dd,k}+n^{-\d}$, \aas
the maximum depth in $\calh_{\dd}(n,p=c/n^{\dd-1})$ is between $\Omega(n^{\d/2})$ and $n^{O(\d)}$.
\end{theorem}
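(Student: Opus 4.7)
\emph{Lower bound.} Let $s(v)$ denote the round at which $v$ is removed by the parallel $k$-stripping process, so $v\in S_{s(v)}$. The first observation is that $\mathrm{depth}(v)\geq s(v)$ for every $v\notin \calc_k(H)$. Given any $k$-stripping sequence $w_1,\dots,w_t=v$, a short induction on $i$ shows $w_i\in S_1\cup\cdots\cup S_i$: by the inductive hypothesis $\widehat{H}_i$ is a subhypergraph of $H\setminus\{w_1,\dots,w_{i-1}\}$, so the degree-less-than-$k$ condition on $w_i$ forces $w_i\in S_i$ unless it was already stripped in an earlier round. Applying this to any vertex removed in the final parallel round and invoking Theorem~\ref{mt}(a) gives $\max_v \mathrm{depth}(v)\geq s(H)=\Omega(n^{\delta/2})$ \aas.

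\emph{Upper bound --- construction.} For each $v\notin\calc_k(H)$, I would construct a $k$-strippable set $B(v)\ni v$ of size $n^{O(\delta)}$; this forces $\mathrm{depth}(v)\leq|B(v)|$. The set is built as a \emph{causation tree}: include $v$, and for every $u\in B(v)$ with $\deg_H(u)\geq k$, select $\deg_H(u)-k+1$ hyperedges $e$ containing $u$ that were removed in the parallel process before round $s(u)$ (such hyperedges must exist since $u$ is eventually stripped), and for each such hyperedge add to $B(v)$ one vertex of $e\setminus\{u\}$ that was stripped strictly before $u$. Iterate until closure. By construction each $u\in B(v)$ has at least $\deg_H(u)-k+1$ critical hyperedges killed by earlier-round members of $B(v)$, so stripping the vertices of $B(v)$ in order of increasing $s(\cdot)$ is a valid sequential $k$-stripping sequence ending at $v$.

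\emph{Upper bound --- bounding $|B(v)|$.} The main obstacle is to prove $|B(v)|\leq n^{O(\delta)}$ \aas for all non-core $v$. The tree has depth at most $s(H)=O(n^{\delta/2}\log n)$ by Theorem~\ref{mt}(b), but a crude bound on the branching factor would give a super-polynomial size, so the argument must exploit the near-critical branching-process structure of the causation dynamics. The underlying Galton--Watson process has offspring mean exactly $1$ at $c=c_{\dd,k}$, and standard $k$-core critical-exponent scaling translates $c=c_{\dd,k}+n^{-\delta}$ into an $\Theta(n^{-\delta/2})$ gap from criticality. Conditioning on $v\notin\calc_k(H)$ selects the sub-critical dual of this process, whose total progeny has expectation $O(n^{\delta/2})$ and polynomial tails up to level $n^{O(\delta)}$. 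Upgrading from a single $v$ to a union bound over all non-core $v$ requires large-deviation tail estimates on $|B(v)|$, handled by the same branching-process coupling that drives the $O(n^{\delta/2}\log n)$ bound of Theorem~\ref{mt}(b); this is where I expect the main technical work to concentrate.
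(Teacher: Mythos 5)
Your lower bound argument matches the paper's: Lemma~\ref{ls2} is precisely your inductive claim that any stripping sequence ending at $v\in S_i$ has length at least $i$, and combining this with Theorem~\ref{mt}(a) gives the $\Omega(n^{\delta/2})$ bound. Your ``causation tree'' $B(v)$ for the upper bound is a valid (indeed a sparser) variant of the paper's construction: the paper instead builds a digraph $\cald$ during SLOW-STRIP, adding an arc $u\to v$ for every non-$v$ vertex $u$ of the hyperedge removed while processing $v$, and then bounds the size of the reachable set $R^+(v)$, which contains your $B(v)$ as a subset. So your object is fine and the reduction $\mathrm{depth}(v)\le |B(v)|$ is correct.

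However, the crucial step --- showing $|B(v)|\le n^{O(\delta)}$ simultaneously for all non-core $v$ with probability $1-o(1)$ --- is where your proposal stops being a proof. Appealing to ``the sub-critical dual of the Galton--Watson process'' and ``standard $k$-core critical-exponent scaling'' is heuristic: the exploration of $B(v)$ is \emph{not} a Galton--Watson process, because the level sets $S_1,\dots,S_{\imax}$ and all the degree data along them are determined (and revealed) by the parallel stripping process, and the edges you traverse are heavily conditioned on that history. The paper handles exactly this by a two-stage exposure (\textbf{EXPOSURE} reveals the level membership and per-vertex hyperedge type counts; \textbf{EDGE-SELECTION} only then reveals the actual edges, level by level), proves concentration properties of the level sizes and degree distributions (Lemma~\ref{lsi}), derives a level-by-level recursion for $\ex|R_j|$ (Lemma~\ref{lrec}), upgrades it to a high-probability recursion via McDiarmid's inequality (Lemma~\ref{lrconc}), and then solves the resulting nonhomogeneous linear recurrence to get $\pr(|R(v)|>n^{X\delta})<n^{-2}$ (Lemma~\ref{lrec2}), which supports the union bound over $n$ vertices. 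Your sketch neither supplies a tail bound strong enough for the union bound (a subcritical branching process with drift $\Theta(n^{-\delta/2})$ has tails decaying only at scale $n^{\delta}$, so you need both a larger size threshold $n^{C\delta}$ and a concentration argument that actually applies to the conditioned process), nor explains how to expose edges without destroying the independence you are implicitly invoking. This is a genuine gap, not a detail, and it is where essentially all of the paper's technical work on the upper bound lives (Sections~\ref{s2}, \ref{slsi}, \ref{srec}).
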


Our requirement $\d<\hf$ comes from the similar restriction on the actual appearance of the $k$-core, as shown by Kim\cite{jhk} (see also~\cite{dmcore}):

\begin{theorem}\lab{tkim}\cite{jhk} For  $\dd,k\geq2, (\dd,k)\neq (2,2)$, consider any $0<\d<\hf$.
\begin{enumerate}
\item[(a)] For $c=c_{\dd,k}-n^{-\d}$, \aas the $k$-core  of $\calh_{\dd}(n,p=c/n^{\dd-1})$ is empty.
\item[(b)] For $c=c_{\dd,k}+n^{-\d}$, \aas $\calh_{\dd}(n,p=c/n^{\dd-1})$ has a $k$-core with $\Theta(n)$ vertices.
\end{enumerate}
\end{theorem}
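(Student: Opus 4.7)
The plan is to analyze the parallel $k$-stripping process on $\calh_{\dd}(n,p=c/n^{\dd-1})$ via the configuration model and a fluid-limit analysis, then translate the resulting fluid-limit statements into high-probability statements about the random hypergraph. After round $i$, conditional on the degree sequence of $\widehat H_i$ the hypergraph is uniform with that sequence, and the surviving vertex degrees are asymptotically Poisson with a single parameter $\mu_i$. Tracking $\mu_i$ reduces the entire problem to a one-dimensional dynamical system. (Following Kim~\cite{jhk}, one can equivalently work in the Poisson-cloning model to avoid awkward degree-sequence conditioning; this is the route I would take.)

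A direct computation shows the limit trajectory of $\mu_i$ is driven by a function $f_c$ whose roots determine the outcome of the stripping: the process halts at some $\mu>0$ (surviving core) iff $f_c$ has a positive root, and runs to $0$ (empty core) iff it does not. Rearranging $f_c(\mu)=0$ recovers the infimum formula~\eqn{krthreshold}, so $c_{\dd,k}$ is precisely the transition point: at $c=c_{\dd,k}$ the smallest two positive roots of $f_c$ merge into a double root $\mu^{\ast}$ with $f_{c_{\dd,k}}(\mu^{\ast})=f'_{c_{\dd,k}}(\mu^{\ast})=0$ and $f''_{c_{\dd,k}}(\mu^{\ast})>0$. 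Perturbing this double root then gives, for $c=c_{\dd,k}+n^{-\d}$, a smallest positive root at $\mu^{\ast}+\Theta(n^{-\d/2})$ (so the deterministic trajectory settles at a point bounded away from $0$, leaving $\Theta(n)$ surviving vertices), and for $c=c_{\dd,k}-n^{-\d}$, $\inf_\mu f_c=-\Theta(n^{-\d})$ with no positive root (so the deterministic trajectory runs all the way to $0$).

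The main obstacle, and the source of the restriction $\d<\hf$, is concentration through the near-critical regime $\mu\approx\mu^{\ast}$. There the deterministic per-round drift is only $O(n^{-\d})$ while the single-round fluctuations of the configuration model are $\Theta(n^{-1/2})$; a naive martingale bound applied round-by-round is useless, and in principle the random trajectory could stall in a neighbourhood of $\mu^{\ast}$ in part~(a) or cross through the fixed point in part~(b). I would handle this by aggregating many consecutive stripping rounds into ``super-rounds'' over which the accumulated drift dominates the accumulated standard deviation, then applying Azuma (or Freedman, controlling the quadratic variation with the Poisson-cloning structure) to the super-round chain. The condition $\d<\hf$ is exactly the regime in which such an aggregation is possible: it is the range in which the deterministic $n^{1-\d}$ deviation in the number of constraints beats the $O(n^{1/2})$ inherent noise in the hypergraph. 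With this concentration in hand, the random trajectory is trapped near the deterministic one, forbidden from crossing the surviving fixed point (yielding part~(b)) and forced to exit the near-critical neighbourhood and continue down to $0$ (yielding part~(a)).
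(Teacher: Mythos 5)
Your approach is essentially the right one, but note that the paper does not prove Theorem~\ref{tkim} at all: it is imported verbatim from Kim~\cite{jhk} (see also~\cite{dmcore}), so there is no internal proof to compare against. Your sketch follows exactly the route of the cited source --- Poisson cloning / fluid limit of the stripping process, identification of $c_{\dd,k}$ with the tangency (double root) of the drift function, and the square-root perturbation $\mu(c)-\mu_{\dd,k}=\Theta(n^{-\d/2})$, which is precisely the computation the paper itself carries out later in Lemma~\ref{l:diff} via Taylor expansion of $h$ at $\mu_{\dd,k}$ where $h'(\mu_{\dd,k})=0$ and $h''(\mu_{\dd,k})>0$. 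The one place your write-up is only a plan rather than a proof is the concentration step through the critical window: the ``super-round'' aggregation is plausible and your heuristic for why $\d<\hf$ is the right boundary (deterministic excess $n^{1-\d}$ of constraints versus $\Theta(n^{1/2})$ sampling noise) matches the known scaling window, but making it rigorous is the bulk of~\cite{jhk}; since the theorem is cited rather than reproved here, that is an acceptable level of detail.
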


The most challenging difference between the setting of this paper and that of~\cite{amxor} is:
When $c=c_{r,k}+\Theta(1)$, the number of low degree vertices remaining after each round of the parallel stripping process drops geometrically (see the comment following Lemma~\ref{llt1}).
That easily implies that \aas the stripping number is $O(\log n)$.  Furthermore, it is the key property in the analysis of~\cite{amxor}  proving that the depth is $O(\log n)$.  For $c=c_{r,k}+n^{-\d}$, the number of such vertices drops much more slowly.  This requires us to use a very different, and much more intricate, analysis.

\subsection{Solution clusters}\label{ssc} Our model for a random system of equations  is $X_r(n,p)$ defined as follows.  We have $n$ variables over GF(2).  Each of the $n\choose r$ $r$-tuples of variables is chosen to form the LHS of an equation with probability $p$.  For each equation, the RHS is chosen uniformly from $\{0,1\}$.  We focus on the case $p=c/n^{r-1}$ where \aas there are a linear number of equations.  We restrict ourselves to the case $r\geq 3$, as the case $r=2$ behaves very differently, and is already well-understood (see eg.\cite{bbckw}).

It is not hard to see that this is equivalent to choosing an instance of $r$-XORSAT  on $n$ variables by choosing each $r$-tuple to form a clause with probability $p$, and then signing the variables within each clause uniformly at random.  An assignment to the variables is satisfying if every clause contains an odd number of true literals.

{An common alternate model is to choose $M=dn$ $r$-tuples of variables, and then form an equation for each $r$-tuple (with a uniformly random RHS).  Standard techniques allow us to translate our results to that model with $d:=c/r!$.}

The following definitions come from~\cite{amxor} and are equivalent to the cluster definition in~\cite{ikkm}.

\begin{definition}
The {\em underlying hypergraph} of a system of linear equations over GF(2) is defined as follows: the vertices are the variables and for each equation, the set of variables appearing in
that equation form a hyperedge.  The {\em 2-core} of a system of linear equations is the subset of equations corresponding to the hyperedges that are in the 2-core of the underlying hypergraph.
\end{definition}

Thus the underlying hypergraph of $X_r(n,p)$ is $\calh_r(n,p)$.
We speak about the variables/equations of a system of equations and the vertices/hyperedges of the underlying hypergraph interchangably.  Note that every solution to the 2-core can easily be extended to a solution of the entire system, by setting the other variables in the reverse order that they are removed.

Roughly speaking, the clusters correspond to solutions of the  2-core.
But this is not quite true - we need to account for the effects of short {\em flippable cycles} which we define as follows:

\begin{definition}\label{flippable_def}
A \emph{flippable cycle} in a hypergraph {$H$} is a set of vertices $S=\{v_1,\ldots,v_t\}$
{where the set of edges incident to $S$ can be ordered as $e_1, \ldots e_t$ such that}
each vertex $v_i$ lies in $e_i$ and in $e_{i+1}$ and in no other {edges of $H$}
 (addition mod $t$).
\end{definition}
Thus, the vertices $v_1,\ldots,v_t$ must have degree exactly two in the hypergraph. The remaining vertices in edges $e_1,\ldots,e_t$ can have arbitrary degree and are \emph{not} part of the  flippable cycle.  Note that if we take a solution $\s$ to the entire system, and change the assignment to each variable in a flippable cycle of the underlying hypergraph, we obtain another solution. It is easy to see that
no vertex can lie in  two flippable cycles.

\begin{definition}\label{coreflippable_def}  A \emph{core flippable cycle} in  a hypergraph $H$ is a flippable cycle in the subhypergraph  induced by the 2-core of $H$.
\end{definition}

Thus, in a core flippable cycle, the vertices $v_1,\ldots,v_t$ have degree exactly two \emph{in the 2-core}, but possibly higher degree in $H$. Note also that {$H$ may contain} flippable cycles outside the 2-core.  In Section~\ref{scon}, we show that very few vertices lie on core flippable cycles.

\begin{lemma}\lab{lflip} For $r\geq 3$, $0<\d<\hf$ and $c=c_{\dd,2}+n^{-\d}$, \aas
the total sizes of all core flippable cycles in $\calh_{\dd}(n,p=c/n^{r-1})$ is at most $O(n^{\d/2}\log n)$.
\end{lemma}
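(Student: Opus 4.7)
}

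The plan is to reduce to a first-moment calculation in the configuration-model representation of the 2-core. The first step is to combine Theorem~\ref{tkim} with the standard ODE analysis of the parallel stripping process (as in e.g.~\cite{mmcore}) to show that at $c=c_{r,2}+n^{-\d}$ the 2-core of $\calh_r(n,p)$ a.a.s.\ has its degree sequence concentrated near a truncated Poisson of parameter $\mu=\mu^*+O(n^{-\d/2})$, where $\mu^*$ is the minimizer in~\eqref{krthreshold} for $k=2$. This yields, in particular, $n_2 \sim \tfrac12\mu^2 e^{-\mu}\,n$ degree-2 vertices in the core and $m \sim \tfrac1r\mu(1-e^{-\mu})\,n$ core edges.

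Conditional on the degree sequence, the 2-core is uniform among hypergraphs with that sequence, so the analysis proceeds via the configuration model on $rm$ vertex-stubs and $rm$ edge-ports. A core flippable cycle of length $t$ corresponds to an ordered cyclic sequence $(v_1,e_1,\ldots,v_t,e_t)$ of distinct degree-2 vertices and distinct 2-core edges with $v_i \in e_i \cap e_{i+1}$ (indices mod $t$). A routine count---$(n_2)_t(m)_t$ orderings, $2$ stub-assignments per vertex, $r(r-1)$ port-assignments per edge, probability $1/\prod_{j=0}^{2t-1}(rm-j)$ that the $2t$ prescribed pairings appear in the uniform matching, and dividing by the $2t$ symmetries of the cycle---gives the expected number of core flippable cycles of length $t$ as $(1+o(1))\l^t/(2t)$, where
\[
\l \;:=\; \frac{2(r-1)n_2}{rm} \;=\; \frac{(r-1)\mu e^{-\mu}}{1-e^{-\mu}}+o(1).
\]

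The main step---and the only place where the value of $c$ really matters---is to show that $\l$ is bounded away from $1$. Differentiating~\eqref{krthreshold} at $\mu^*$ yields the critical identity $e^{\mu^*}=1+\mu^*+(r-1)(\mu^*)^2$, and substituting this into the displayed formula simplifies $\l$ at $\mu^*$ to $(r-1)/(1+(r-1)\mu^*)$. A direct check at $\mu=1$ shows that for $r\geq3$ the identity forces $\mu^*>1$, whence $\l^*<1$ by an $r$-dependent constant; since $\l(\mu)$ is strictly decreasing on $(0,\infty)$, also $\l\leq\l^*$ for every $c\geq c_{r,2}$. With $\l<1-\Omega(1)$, the expected total size of all core flippable cycles is $\sum_{t\geq 2}t\cdot(1+o(1))\l^t/(2t)=\l^2/(2(1-\l))\cdot(1+o(1))=O(1)$, and Markov's inequality then yields the claimed $O(n^{\d/2}\log n)$ a.a.s.\ bound with enormous room to spare. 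The principal technical subtlety is therefore the first step: extracting a sufficiently precise degree-sequence estimate inside the narrow window $c=c_{r,2}+n^{-\d}$, which is handled by combining Theorem~\ref{tkim} with the standard contiguity between the 2-core and the configuration-model hypergraph on its realized degree sequence.
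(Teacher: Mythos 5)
Your high-level strategy---conditioning on the 2-core degree sequence, passing to the configuration model, and reducing to a first-moment calculation governed by the parameter $\lambda := 2(r-1)n_2/(rm)$---is essentially the same as the paper's (compare equation~(\ref{eq2}), where the same ratio $2(r-1)Q_2/\Lambda$ appears). However, there is a crucial quantitative error: you claim that $\lambda$ is bounded below $1$ by an $r$-dependent constant, and this is false in the regime $c = c_{r,2}+n^{-\delta}$.

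The error enters through the critical identity. You differentiated the displayed formula \eqref{krthreshold} as printed, with the sum starting at $i=k$, and obtained $e^{\mu^*}=1+\mu^*+(r-1)(\mu^*)^2$. But that formula contains a typo: as the proof of Lemma~\ref{l:degreeK} makes explicit (and as is standard for hypergraph $k$-cores, e.g.~\cite{mmcore,psw}), the minimization involves $f_{k-1}$, not $f_k$, and the correct critical identity at $k=2$ is $e^{\mu^*}=1+(r-1)\mu^*$. Substituting that into $\lambda=(r-1)\mu e^{-\mu}/(1-e^{-\mu})$ gives $\lambda(\mu^*)=1$ exactly. This is no accident: it is precisely what Lemma~\ref{l:degreeK} asserts for $k=2$, and it is \emph{why} $c_{r,2}$ is the clustering threshold---at that density the first-moment parameter for flippable cycles crosses $1$. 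For constant $c>c_{r,2}$ you do get $\lambda<1-\Omega(1)$ (which is how \cite{amxor} obtains an $O(\log n)$ bound there), but for $c=c_{r,2}+n^{-\delta}$, Lemma~\ref{l:diff} gives $\mu(c)-\mu^*\sim K_1 n^{-\delta/2}$, and since $\lambda'(\mu^*)<0$, we get only $\lambda = 1-\Theta(n^{-\delta/2})$.

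The consequence is not cosmetic: the expected total size of all core flippable cycles is $\sum_t \lambda^t/2 = \Theta\bigl(1/(1-\lambda)\bigr) = \Theta(n^{\delta/2})$, \emph{not} $O(1)$. The $n^{\delta/2}$ in the lemma's bound is both real and necessary, which is the entire point of looking inside the clustering window. The good news is that the rest of your argument is robust to this correction: with expectation $\Theta(n^{\delta/2})$, Markov still gives $\Pr[\text{total size}>n^{\delta/2}\log n]=O(1/\log n)=o(1)$, which is all the lemma needs. (The paper instead bounds the expected number of collections of flippable cycles with total size exactly $a$ and union-bounds over $a\geq \Theta(n^{\delta/2}\log n)$; this is a slightly sharper moment argument but the conclusion is the same.) So your proof skeleton is salvageable, but you must repair the numerics: $\lambda$ is \emph{not} bounded away from $1$, and the ``enormous room to spare'' is an artifact of the wrong critical identity.
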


\begin{definition}
Two solutions are \emph{cycle-equivalent} if on the 2-core they differ only on variables in core flippable cycles ({while} they may differ arbitrarily on variables not in the 2-core).
\end{definition}

\begin{definition}\label{cluster_def}
The {\em solution clusters} of {$X_r(n,p={c/n^{r-1}})$} are the cycle-equivalence classes, i.e., two solutions are in the same cluster iff they are cycle-equivalent.
\end{definition}

In other words: Let $\s$ be any solution to the subsystem induced by the 2-core.  It is easy to see that $\s$ can be extended to a solution to the entire system of equations.  All such extensions, along with all extensions of any 2-core solutions obtained by altering $\s$ only on core flippable cycles, form a cluster.
By symmetry, all clusters are isomorphic.
Note that, if the 2-core is empty, then our definitions imply that all solutions are in the same cluster. {So the clustering threshold, $c_{2,r}$, is the density above which there are many (in fact, exponentially many) clusters rather than one.}

\begin{definition}
Two solutions $\sigma,\tau$  are \emph{$d$-connected} if there exists a sequence of solutions $\s,\s',\ldots,\t$ such that every pair of successive solutions differs on at most $d$ variables.
\end{definition}

\begin{theorem}\lab{tc1}  For $\dd\geq 3$,  there exists $\k=\k(r), Z=Z(r)>0$  such that: For $\d>0$ and $c=c_{\dd,2}+n^{-\d}$,  in $X_{\dd}(n,p=c/n^{\dd-1})$ \aas :
\begin{enumerate}
\item[(a)] every pair of solutions $\s,\t$ in the same cluster is $n^{\k\d}$-connected;
\item[(b)] every pair of solutions $\s,\t$ in different clusters is not $Zn^{1-r\d}$-connected.
\end{enumerate}
\end{theorem}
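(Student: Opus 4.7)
For part (a), the plan is a two-phase sequence from $\sigma$ to $\tau$. In Phase 1 we align the 2-core values: since $\sigma,\tau$ are in the same cluster, they differ on the 2-core only at vertices of core flippable cycles, whose total size is $O(n^{\delta/2}\log n)$ by Lemma \ref{lflip}. Enumerating the cycles on which they differ as $C_1,\ldots,C_t$, we flip them one at a time. Flipping $C_j$ on the 2-core forces corrections among non-2-core variables to repair the mixed edges incident to $C_j$; these corrections propagate through a canonical reverse-stripping extension, and by the maximum-depth bound $n^{O(\delta)}$ of Theorem \ref{mt2} each step changes at most $n^{O(\delta)}$ variables in total. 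In Phase 2, $\sigma$ and $\tau$ agree on the 2-core, so their XOR is a homogeneous solution of the non-2-core subsystem; we decompose its support into minimal-support kernel vectors of that subsystem and apply them one at a time. A parallel stripping/depth argument shows each such minimal flip also has size at most $n^{O(\delta)}$. Taking $\kappa$ larger than the implicit constants from both phases proves part (a).

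For part (b), suppose for contradiction that $\sigma,\tau$ lie in different clusters and are joined by a sequence $\sigma=\sigma_0,\ldots,\sigma_m=\tau$ with $|\sigma_i\triangle\sigma_{i+1}|\le Zn^{1-r\delta}$. Each difference $\alpha_i:=\sigma_i+\sigma_{i+1}$ (mod $2$) is a homogeneous solution of the full system; its restriction $\beta_i$ to the 2-core is a homogeneous solution of the 2-core subsystem with $|\mathrm{supp}(\beta_i)|\le Zn^{1-r\delta}$. The heart of the argument is the Key Lemma: \aas, any nonempty subset $D$ of 2-core vertices such that every 2-core hyperedge meets $D$ in an even number of vertices is either contained in the union of core flippable cycles, or satisfies $|D|\ge 2Zn^{1-r\delta}$ (for a suitable $Z=Z(r)>0$). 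Granted the Key Lemma, each $\beta_i$ is supported on core flippable cycles, so $\sigma_i$ and $\sigma_{i+1}$ are cycle-equivalent and hence lie in the same cluster; by transitivity $\sigma,\tau$ lie in the same cluster, contradicting the assumption.

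The main obstacle is proving the Key Lemma. We will do so by a first-moment calculation on $\calh_r(n,p=(c_{r,2}+n^{-\delta})/n^{r-1})$: for each $s<2Zn^{1-r\delta}$ we enumerate candidate 2-core subsets $D$ of size $s$ not contained in the union of core flippable cycles, and bound the probability that every 2-core edge meets $D$ evenly. For small $s$ the parity constraint forces a rigid cycle-like structure whose only sub-critical realizations are core flippable cycles (already excluded). For larger $s$ the probability decays with $s$, but the natural expansion constants of the 2-core degrade like $n^{-\delta}$ near threshold, and the interplay between the $r$ parity conditions per edge and this polynomial slack is exactly what produces the $n^{1-r\delta}$ scaling. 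Executing this first-moment estimate uniformly in $s$, together with the parallel analysis in Phase 2 of part (a) bounding non-2-core minimal-support kernel vectors, forms the bulk of the technical work.
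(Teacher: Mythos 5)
Your overall architecture matches the paper: part~(a) is proved by moving between solutions one basis vector of the kernel at a time and invoking the depth bound $|R^+(u)|\le n^{O(\delta)}$ from Theorem~\ref{mt2}, and part~(b) is a first-moment bound on small homogeneous ``linked'' sets in the 2-core whose near-threshold scaling degrades the expansion constant and produces the $n^{1-r\delta}$ bound (this is exactly how the paper adapts Lemma~51 of~\cite{amxor}). So the route is essentially the paper's.

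There is one concrete gap in the Phase~2 step of your part~(a). You decompose the non-2-core XOR $\sigma\oplus\tau$ into \emph{minimal-support} kernel vectors and assert that ``a parallel stripping/depth argument shows each such minimal flip also has size at most $n^{O(\delta)}$.'' The depth bound controls only the kernel basis vectors indexed by the free variables: for a free $u$, the vector with $1$ in coordinate $u$ and coordinates $\chi^{-1}(u)\subseteq R^+(u)$, which has support at most $1+|R^+(u)|\le n^{O(\delta)}$. An arbitrary inclusion-minimal kernel vector is a sum $\sum_{j\in J}v_{u_j}$ over some set $J$ of free variables (those on which its free coordinates equal $1$), and minimality does not bound $|J|$; a priori its support can be as large as $|J|\cdot n^{O(\delta)}$. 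The fix is simply not to use minimal-support vectors at all: write $\sigma\oplus\tau$ in the free-variable basis and change free coordinates one at a time, which is precisely what the paper does. Two smaller points: your Key Lemma for part~(b) should say that every small homogeneous solution of the 2-core subsystem is a \emph{sum of full core flippable cycles} (not merely contained in their union — containment alone does not make it a kernel vector, and partial cycles fail the parity condition at the boundary); and the near-threshold slack in the paper's estimate degrades like $n^{-\delta/2}$ (from $2(r-1)Q_2/\Lambda\le 1-Kn^{-\delta/2}$), not $n^{-\delta}$, though this does not change the final $n^{1-r\delta}$ exponent.
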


Thus, we can travel throughout any cluster by changing only $n^{\k\d}$ variables in each step.
But to travel from one cluster to another cluster requires changing  at least $n^{1-r\d}$ variables in one of the steps.
Theorem~\ref{tc1}(a) is best possible, up to the value of $\k$:

\begin{theorem}\lab{tc2}  For $\dd\geq 3$, $0<\d<\hf$ and $c=c_{\dd,2}+n^{-\d}$,  \aas there exists a pair of solutions $\s,\t$ in the same cluster of $X_{\dd}(n,p=c/n^{\dd-1})$ such that $\s,\t$ is not $n^{\d/20}$-connected.
\end{theorem}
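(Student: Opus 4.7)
The plan is to combine Theorem~\ref{mt2} with a ``flippability forces size'' lemma. A.a.s.\ the underlying hypergraph $H := \mathcal H_r(n, p = c/n^{r-1})$ contains a non-core vertex $v$ whose depth in $H$ is at least $D = \Omega(n^{\delta/2})$. Take any satisfying assignment $\sigma$ of $X_r(n, p)$ and form $\tau$ by flipping $\sigma(v)$ and cascading the change through the reverse of the parallel $2$-stripping: each time a flip breaks the equation that determined a variable at its stripping moment, flip another non-core variable in that equation and continue. The process only touches non-core variables, so $\sigma$ and $\tau$ agree on the $2$-core, lie in the same cluster, and $F^\ast := \sigma \oplus \tau$ is a non-core flippable set containing $v$.

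Suppose, for contradiction, that $\sigma$ and $\tau$ are $s$-connected with $s := n^{\delta/20}$ via a path $\sigma = \sigma_0, \ldots, \sigma_m = \tau$. For $\delta$ small enough that $s \ll n^{1-r\delta}$, Theorem~\ref{tc1}(b) forces all $\sigma_i$ into the same cluster, so each step $F_i := \sigma_{i-1} \oplus \sigma_i$ is a flippable set of $H$ of size at most $s$, and its $2$-core part is contained in the union of core flippable cycles. By Lemma~\ref{lflip} the total size of all such cycles is $O(n^{\delta/2}\log n)$, which is negligible next to $D$. Since $\bigoplus_i F_i = F^\ast \ni v$ and $v$ is non-core, some $F_j$ contains $v$ in its non-core part $F'$, with $|F'| \leq s$; after subtracting the small core-cycle contribution (and the non-core compensations it induces) one obtains a bona fide non-core flippable set $F''$ of $H$ of size $O(s + n^{\delta/2}\log n)$ that still contains $v$.

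It therefore suffices to prove the key lemma: \emph{any non-core flippable set $F$ containing a non-core vertex $v$ has $|F|$ at least the depth of $v$}. Applied to $F''$, this gives $|F''| \geq D \gg s + n^{\delta/2}\log n$, the desired contradiction. To prove the key lemma, I would show that $F$ itself is a valid $2$-stripping sequence ending at $v$: process the vertices of $F$ in the order they are stripped in the parallel $2$-stripping of $H$ (with $v$ scheduled last among its round) and verify that each $F$-vertex has $H$-degree $<2$ after the previously-processed $F$-vertices are removed. Flippability provides the crucial self-support, since every edge of $H$ incident to $F$ contains an even, hence at least $2$, number of $F$-vertices, so the non-$F$ ``helper'' vertices used in the full $H$-stripping can be replaced by $F$-companions. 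The main obstacle is ruling out a ``stuck'' residual $R \subseteq F$ of vertices that all have $H$-degree $\geq 2$ in $H \setminus (F \setminus R)$: such an $R$ would lie in the $2$-core of the restricted hypergraph and hence in the $2$-core of $H$, contradicting the assumption that $F$ is disjoint from the $2$-core. Making this ``no stuck residual'' property rigorous is the technical heart of the proof.
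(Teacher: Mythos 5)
Your reduction to the ``key lemma'' is where the argument breaks down, and the lemma itself is false. Consider a non-core flippable cycle: vertices $v_1,\dots,v_t\in F$ of $H$-degree exactly $2$, with $v_i\in e_i\cap e_{i+1}$, and each $e_i$ containing one additional vertex $x_i\notin F$. Every edge meets $F$ in exactly $2$ vertices, so $F$ is flippable, and it is disjoint from the $2$-core as long as some $x_i$ is eventually stripped. But the depth of $v_1$ is roughly $1+\min_i\mathrm{depth}(x_i)$: no $v_i$ becomes light until an $x_i$ is removed, and that can require an arbitrarily long stripping sequence entirely outside $F$. Thus $|F|=t$ can be much smaller than $\mathrm{depth}(v_1)$, refuting the claim that ``any non-core flippable set $F$ containing $v$ has $|F|\ge\mathrm{depth}(v)$.'' The specific step in your sketch that fails is the ``no stuck residual'' deduction: a residual $R\subseteq F$ with all degrees $\ge 2$ in $H\setminus(F\setminus R)$ need \emph{not} lie in the $2$-core of $H$, because the edges through $R$ can be dismantled via non-$F$ vertices (here the $x_i$'s), which is exactly your Case~(c) and is not blocked by flippability. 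In short, depth measures how many deletions (over the whole hypergraph) are needed to free $v$, while $|F|$ measures how many variables must co-flip; these are different quantities, and the inequality you need goes the wrong way.

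This is precisely the pitfall the paper flags at the start of Section~4.2: knowing $|R^+(u)|$ is large does not by itself preclude small flips, because a step could change several free variables simultaneously so that their $\chi$-sets partially cancel. The paper's proof therefore does not argue from depth; instead it isolates the last free variable $u^*$ to be stripped, proves (Lemma~\ref{lgap}) that no other free variable appears within $n^{\delta/20}$ parallel-stripping levels of $u^*$, and then (Lemma~\ref{lTw}) exhibits a vertex $w$ such that $T(w,u^*)$ is a bare path avoiding all core flippable cycles, giving $n^{\delta/20}+1$ variables $v$ with $\chi(v)=\{u^*\}$ exactly. Any solution step that changes $u^*$ must then change all of them, regardless of which other free variables are co-flipped. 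That structural guarantee is what your depth-based approach is missing, and I do not see a way to patch the key lemma without essentially recovering the paper's free-variable argument.
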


It is not clear whether Theorem~\ref{tc1}(b) is tight. Our analysis does not exclude the possibility that different clusters are, in fact, linearly separated; i.e. that there is some constant $\a>0$ such that \aas every pair of solutions $\s,\t$ in different clusters is not $\a n$-connected.

Note that Theorem~\ref{tc1} holds trivially when $\d\geq\inv{\k}$ and so it only says something interesting about clustering for sufficiently small $\d$ (see Section~\ref{sfd}).

\subsection{Related Work}\lab{sirw}
The solution clusters for random $r$-XORSAT were analyzed independently in~\cite{cdmm,mez};  much of the work was rigorous, but some key steps were missing. The description of clusters given there was essentially equivalent to what was eventually proven in~\cite{ikkm,amxor}, the main difference being the (relatively minor) effects of short cycles.
As described above, these papers establish the clustering threshold, along with a very detailed description of the clusters when $c$ exceeds the clustering threshold by a constant.

Early rigorous bounds on the satisfiability threshold for random $r$-XORSAT appeared in~\cite{nc,vk1,vk2,cdxor}.  The threshold was established exactly for $r=3$~\cite{dub}, and for $r\geq 4$~\cite{cuc, ps}.

Clustering for other random CSP's is much more complicated, and less has been established rigorously.  There is very extensive non-rigorous work (see, eg.~\cite{mmbook,sp, kmrsz, zk, mpwz}).  Some key rigorous results include: \cite{mmz,daud} proved that $r$-SAT has well-separated clusters.  \cite{aco} determined that such clusters arise in $r$-SAT, $r$-COL and $r$-NAESAT by the time the density is, asymptotically in $r$, no higher than the hypothesized location of the clustering threshold.  Notably, that density coincides (asymptotically in $r$) with the long-observed `algorithmic barrier', a density above which we know of no algorithms that are proven to find solutions. \cite{mrt} extends this analysis to a large family of CSP models. \cite{coind} shows a similar result for independent sets.

\cite{art} proved that at high densities, all $r$-SAT clusters contain frozen variables; i.e. variables that take the same value for all solutions in the cluster.  \cite{aco} determined the asymptotic (in $r$) location of the freezing threshold for $r$-COL, $r$-NAESAT and $r$-SAT; i.e. the density above which almost all clusters have frozen variables. \cite{mmfreeze} determined the exact value of the freezing threshold for $r$-COL;  this was extended to $r$-NAESAT and other CSP's in~\cite{mres}.
\cite{cz} established that {\em condensation} takes place in hypergraph 2-colouring.

We remark that clustering for $r$-XORSAT is much simpler than for the other commonly studied CSP's. All of the $r$-XORSAT clusters are isomorphic, the freezing threshold equals the clustering threshold, and each cluster has the same set of frozen variables.  To study the clusters, it suffices to analyze properties of the random hypergraph rather than random solutions.  This is what allows us to actually look inside the clustering threshold whereas we don't  know the exact clustering threshold for, eg. $r$-SAT, $r$-COL, etc. ($r\geq 3$).   Furthermore, the fact that the solutions can be represented algebraically is what allows us to analyze the connectivity of clusters, which we have been unable to do at all for other CSP's,

{\em A remark on asymptotic notation.}  Sometimes we use, eg. $A=B\pm \bigo(C)$ to emphasize that perhaps $A<B$.  Of course, this is redundant, since standard use of $\bigo(-)$ notation allows for this.

\section{Bounding the stripping number}\lab{smt1}

In this section, we sketch the proof of Theorem~\ref{mt} for the case $c=c_{r,k}+n^{-\d}$. Note that this is the case that is relevant to our study of the clustering threshold for random $r$-XORSAT. We extend the proof to the range $c_{r,k}-n^{-\d}\leq c < c_{r,k}+n^{-\d}$ in Appendix~\ref{scouple}.

We restate the parallel stripping process in a slowed-down version, which will be more convenient to analyze.   Rather than removing all vertices of $S_i$ at once, we remove them one at a time.  When removing a vertex, we slow down further by removing one edge at a time.
To facilitate this, we maintain a queue $\msq$ containing all deletable vertices; i.e. all vertices of degree less than $k$:

\begin{tabbing}
{\bf SLOW-STRIP}\\
{\bf Input:}  A hypergraph $G$.\\
Initialize: $t:=0,G_0:=G$, $\msq:=S_1$ is the set of all vertices of degree less than $k$ in $G$.\\
Whi\=le $\msq\neq\emptyset$:\\
\>Let $v$ be the next vertex in $\msq$.\\
\>Remove one of the hyperedges $e$ containing $v$.\\
\>If \=any vertex of $e$ now has degree 0 then remove that vertex from $G$ and from $\msq$.\\
\>If any other vertex of $e$  has its degree drop to below $k$ then add that vertex to the end of $\msq$.\\
\>$G_{t+1}$ is the resulting hypergraph; $t:=t+1$.\\
\end{tabbing}

At any point, $\msq$ may contain some vertices from $S_i$ and some from $S_{i+1}$.  However,
the vertices of $S_i$ are removed before the vertices of $S_{i+1}$.  Note also that when processing a vertex $v\in\msq$, all edges from $v$ are removed (and hence $v$ is removed) before moving to the next vertex of $\msq$.  So this procedure removes vertices in the same order as the parallel stripping process. In particular, { if $t$ is the iteration when the first vertex of $S_i$ is selected then $G_{t}$ is the graph remaining after $i-1$ iterations of the parallel stripping process and at time $t$, $\msq=S_{i}$.}

\begin{definition}  We use $t(i)$ to denote the iteration of SLOW-STRIP at which iteration $i$ of the parallel stripping process begins; i.e.\ when the first vertex of $S_i$ reaches the front of $\msq$.
\end{definition}

Lemma~\ref{l:B} in the appendix says that for any $\e>0$ there is a $B=B(\e)$ such that after step $B$  of the parallel process, or equivalently, step $t(B)$  of SLOW-STRIP, the size of the remaining graph $G_{t(B)}$ is at most $\e n$ greater than the size of the $k$-core. This implies that various parameters are very close to those of the $k$-core, and so it is often convenient to
focus on $t\geq t(B)$.

Our key parameter is $L_t$, the total degree of the vertices in $\msq$ at iteration $t$ of SLOW-STRIP.

\begin{lemma}\lab{llt1}  There are constants $B,K$ such that: If $c=c_{\dd,k}+n^{-\d}$, then \aas for every $t\geq t(B)$,
\[\ex(L_{t+1}{\mid G_t})\leq L_t-Kn^{-\d/2}.\]
\end{lemma}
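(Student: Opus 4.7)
The plan is to compute $\ex(L_{t+1}\mid G_t)$ via a one-step analysis of SLOW-STRIP, using the principle of deferred decisions on $\calh_r(n,p)$.

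First I would write down the deterministic one-step change of $L_t$ when the hyperedge $e=\{v,u_1,\ldots,u_{r-1}\}$ is removed (with $v$ at the front of $\msq$):
\[
L_{t+1}-L_t \;=\; -|e\cap\msq| \;+\; (k-1)\,\bigl|\{i:u_i\notin\msq,\ \deg_{G_t}(u_i)=k\}\bigr|.
\]
The first term records the single unit of degree lost at each $\msq$-vertex of $e$ (consistently handling the removals of $v$ or of any $u_i\in\msq$ of degree $1$), while the second records the $(k-1)$ units of degree each newly-arrived $\msq$-vertex contributes.

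Next, taking expectation over the uniformly chosen edge at $v$ and using deferred decisions to view the unrevealed hypergraph as a random completion of the revealed history, the $r-1$ other endpoints behave, to leading order, like independent degree-weighted random vertices of $V(G_t)\setminus\{v\}$. This yields
\[
\ex(L_{t+1}-L_t\mid G_t) \;\leq\; -1 \;-\; (r-1)\,\frac{L_t}{D_t} \;+\; (r-1)(k-1)\,\frac{k\,n_k}{D_t} \;+\; o(1),
\]
where $D_t=\sum_w\deg_{G_t}(w)$ and $n_k$ is the number of vertices of degree exactly $k$ outside $\msq$.

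Then, using Lemma~\ref{l:B} and a standard analysis of the parallel-stripping process, I would argue that for $t\geq t(B)$ with $B=B(\eps)$ large, the degree distribution of the non-$\msq$ part of $G_t$ is well-approximated by Poisson$(\mu_t)$ conditional on being $\geq k$, with $\mu_t$ close to the stable fixed point $\mu_+(c)$ of the parallel-stripping recurrence. For $c=c_{\dd,k}+n^{-\d}$, the quadratic character of the critical point gives $\mu_+(c)-\hat\mu=\Theta(n^{-\d/2})$, where $\hat\mu$ is the minimizer in~\eqref{krthreshold}.

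Finally, substituting the Poisson statistics into the drift expression and applying the Poisson identity $k p_k(\mu)=\mu p_{k-1}(\mu)$ together with the critical identity $Q_k(\hat\mu)=(\dd-1)\hat\mu p_{k-1}(\hat\mu)$, a first-order Taylor expansion around $\hat\mu$ shows the right-hand side is at most $-Kn^{-\d/2}$ for some constant $K=K(\dd,k)>0$, proving the lemma.

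The principal difficulty lies in rigorously establishing the degree-distribution concentration that underlies the third step, uniformly over all $t\geq t(B)$, \aas. This is typically handled by a Doob-martingale argument with Azuma- or Freedman-type concentration applied to the parallel-stripping dynamics, combined with a check that the per-SLOW-STRIP-step perturbations to the degree sequence within a single round remain small compared to the $n^{-\d/2}$ margin. The exclusion of $(\dd,k)=(2,2)$ in the hypothesis is essential: in that borderline case the leading-order contribution vanishes identically and a sharper analysis is required.
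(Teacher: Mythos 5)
Your drift formula and overall program match the paper's: compute $\ex(L_{t+1}-L_t\mid G_t)$ in terms of the proportion $\bar p_t$ of heavy degree coming from degree-$k$ vertices, and show $\bar p_t$ sits a $\Theta(n^{-\d/2})$ margin below the critical value $\frac{1}{(\dd-1)(k-1)}$ using the quadratic degeneracy of the $k$-core threshold and the Poisson/truncated-multinomial degree statistics. That much is correct.

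The gap is in your step (3). You claim that for $t\geq t(B)$ "the degree distribution of the non-$\msq$ part of $G_t$ is well-approximated by Poisson$(\mu_t)$ conditional on being $\geq k$, with $\mu_t$ close to the stable fixed point $\mu_+(c)$", citing Lemma~\ref{l:B} plus a "standard analysis of the parallel-stripping process." But Lemma~\ref{l:B} only places $G_{t(B)}$ within $\eps n$ vertices of the $k$-core, which pins down the average heavy degree $\zeta_t$ (equivalently $\mu_t$) only to within $\Theta(\eps)$ of $g_k(\mu(c))$ — a \emph{constant} error. That is vastly too coarse: the conclusion lives at scale $n^{-\d/2}$, so you need $\mu_t$ anchored above $\mu(c) - o(n^{-\d/2})$ for \emph{all} $t\geq t(B)$, a one-sided bound whose error is tiny compared to the $n^{-\d/2}$ margin, not a two-sided "closeness" at constant scale. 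A forward fluid-limit argument does not supply this, because the approximation error accumulates over the $\Theta(n)$ SLOW-STRIP steps from $t(B)$ to the first time the degree statistics are near-critical.

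The missing ingredient, which is the crux of the paper's proof, is a backward/monotone anchoring: one proves that $\zeta_t$ (the heavy average degree) is a \emph{supermartingale} with $O(1/n)$ increments (the paper's Lemma~\ref{l:zetaMartingale}), then applies Azuma's inequality anchored at the terminal value $\zeta_\tau$ to obtain $\zeta_t\geq\zeta_\tau-n^{-1/2+\eps}$ uniformly over $t\geq t(B)$ (Lemma~\ref{l:zetaT}). Since the $k$-core size asymptotics (Lemma~\ref{lcoresize2}) and Lemma~\ref{l:diff} give $\zeta_\tau-\zeta_{\dd,k}=\Theta(n^{-\d/2})$, and since $\psi$ is strictly decreasing near $\zeta_{\dd,k}$ (Lemma~\ref{l2:monotone}), this one-sided bound on $\zeta_t$ produces the needed one-sided bound $\bar p_t\leq \frac{1}{(\dd-1)(k-1)}-K_4 n^{-\d/2}$. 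Your sketch does flag "Doob-martingale argument with Azuma- or Freedman-type concentration" as the hard part, but frames it as a concentration check for the forward dynamics rather than the supermartingale-plus-terminal-anchoring structure that actually makes the argument close. Relatedly, the $o(1)$ error term in your displayed drift inequality must be tightened to $o(n^{-\d/2})$ (the paper achieves $O(n^{-1})$ there, plus $O(n^{-1/2}\log n)$ from the Poisson approximation of $\bar p_t$); otherwise the error swamps the drift you are trying to exhibit.
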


{\em Remark:}  The key Lemma 32 of~\cite{amxor} implies that, when $c=c_{r,k}+\e$, we have \aas $\ex(L_{t+1}{\mid G_t})\leq -\z L_t$ for a constant $\z>0$.  That difference is what causes the stripping number and depth to rise from $O(\log n)$ above the clustering threshold to $n^{\Theta(1)}$ inside the clustering threshold.  It is also the cause of most of the difficulties in this paper.  {However, the weaker fact that $\ex(L_{t+1}-L_t{\mid G_t})$  remains bounded below $-Kn^{-\d/2}$,
no matter how small $L_i$ gets,  is still very useful to our analysis.}

Some of our more delicate analysis requires us to be a sublinear distance away from the $k$-core.  Theorem 1.7 of~\cite{jhk} gives a more detailed version of Theorem~\ref{tkim}, from which we can  specify constants $\a=\a(r,k),K_1=K_1(r,k)$ such that when $c=c_{r,k}+n^{-\d}$, \aas the size of the $k$-core is $\a n+K_1 n^{1-\d/2}+o(n^{1-\d/2})$.
See Lemma~\ref{lcoresize} in Appendix~\ref{sbsn} for more detail.

We define $t_0$ to be the first iteration of SLOW-STRIP in which the number of vertices in the remaining graph is  $\a n+3K_1 n^{1-\d/2}$.   Since we delete at most one vertex per iteration, there are at least $K_1n^{1-\d/2}$ iterations following $t_0$.
We will show that, throughout those iterations, $L_t$ is small.

\begin{lemma}\lab{llt2}  If {$c=c_{\dd,k}+n^{-\d}$, then \aas for every $t>t_0$,} $L_{t}\leq O(n^{1-\d}).$
\end{lemma}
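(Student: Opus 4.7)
The plan is to combine the additive drift bound of Lemma~\ref{llt1} with Azuma--Hoeffding concentration, and then separately establish that the queue-degree $L_{t_0}$ at the reference time is already $O(n^{1-\delta})$.

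First, I would set up a supermartingale. Since $t_0 > t(B)$ (because $t_0$ is defined once vertex count has shrunk far below $\alpha n + \epsilon n$), Lemma~\ref{llt1} applies throughout the window $[t_0, T]$, where $T - t_0 = O(n)$ bounds the total remaining SLOW-STRIP iterations. Hence $M_t := L_t + (t-t_0) K n^{-\delta/2}$ is a supermartingale for $t \ge t_0$. Each iteration removes one hyperedge and alters the degrees or queue-status of at most $r$ vertices, each contributing a change of at most $k-1$ to $L$; so $|L_{t+1} - L_t| \le C_0$ for some constant $C_0 = C_0(r,k)$.

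Second, I would apply Azuma--Hoeffding to $M_t$, giving for each $t \in [t_0, T]$
\[
\Pr\bigl(L_t \ge L_{t_0} - (t-t_0) K n^{-\delta/2} + \lambda\bigr) \le \exp\bigl(-\lambda^2/(2(t-t_0)C_0^2)\bigr).
\]
Choosing $\lambda = C_1 \sqrt{n \log n}$ and union-bounding over $t$ yields, with high probability,
\[
L_t \le L_{t_0} - (t-t_0) K n^{-\delta/2} + O(\sqrt{n \log n}) \qquad \forall\, t > t_0.
\]
Dropping the nonpositive drift term and noting that $\sqrt{n \log n} = O(n^{1-\delta})$ for $\delta < 1/2$ (absorbing polylogarithms), this reduces to $L_t \le L_{t_0} + O(n^{1-\delta})$ w.h.p. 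So it suffices to show $L_{t_0} = O(n^{1-\delta})$ a.a.s.

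Third, to bound $L_{t_0}$ I would iterate the same supermartingale argument on the earlier window $[t(B), t_0]$. A simple edge-counting argument gives $t_0 - t(B) = \Theta(n)$ (the number of iterations equals the number of hyperedges removed between $t(B)$ and $t_0$, which is a constant fraction of $n$ once $B = B(\epsilon)$ is fixed via Lemma~\ref{l:B}). Hence the cumulative drift over this window is $-\Theta(n^{1-\delta/2})$, and Azuma gives
\[
L_{t_0} \le L_{t(B)} - \Theta(n^{1-\delta/2}) + O(\sqrt{n \log n}) \qquad \whp.
\]
So, provided one establishes $L_{t(B)} = O(n^{1-\delta/2})$, the drift annihilates the initial value and leaves $L_{t_0} = O(\sqrt{n \log n}) = O(n^{1-\delta})$, closing the argument.

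The main obstacle is the bound $L_{t(B)} = O(n^{1-\delta/2})$. Lemma~\ref{l:B} alone yields only the much weaker $L_{t(B)} \le (k-1)\epsilon n$, since $\mathcal{Q}_{t(B)} = S_B$ could in principle contain all non-core vertices present at that time. A refinement is required, showing that $|S_B|$ is in fact of order $n^{1-\delta/2}$---the correct scale dictated by the refined $k$-core size in Lemma~\ref{lcoresize}. I expect this to follow either from a differential-equation analysis tracking the density of degree-exactly-$k$ vertices as the parallel stripping process approaches the critical point (where contraction is only of order $n^{-\delta/2}$), or from a direct generating-function / structural estimate of $|S_i|$ for $i$ close to the stripping number. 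This step is the delicate one because we are just inside the $k$-core threshold, so standard geometric-decay arguments from \cite{amxor} are unavailable.
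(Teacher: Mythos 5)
You correctly set up the Azuma--Hoeffding concentration around the drift and you correctly spot that the argument stalls without a suitable a priori bound on $L_{t_0}$. But the route you sketch for closing that gap---pushing the drift argument backward from $t(B)$ to $t_0$---cannot work as written. Over $[t(B),t_0]$ we indeed have $t_0 - t(B) = \Theta(n)$, but Lemma~\ref{llt1}'s drift rate is only $-Kn^{-\delta/2}$, so the guaranteed cumulative decrease is $\Theta(n^{1-\delta/2})$: far too small to bring $L_{t(B)}$, which is $\Theta(\epsilon n)$ (linear, for any fixed $B$), down to $O(n^{1-\delta})$. You would therefore need $L_{t(B)} = O(n^{1-\delta/2})$, but this is simply false for constant $B$, and the differential-equation refinement you gesture at would have to be a genuinely new piece of analysis. (The reason $L_t$ nonetheless does reach zero is that the \emph{true} drift in Phase~1 is far steeper than $-Kn^{-\delta/2}$; Lemma~\ref{llt1} is only a one-sided estimate that becomes tight near the core.)

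The paper closes the gap with a different, cleaner device that requires no a priori control of $L_{t_0}$. The key ingredient is Lemma~\ref{llt3}, a \emph{two-sided} drift estimate
\[
\ex(L_{t+1}\mid G_t) = L_t \pm O(n^{-\delta/2}),
\]
valid during Phase~2 ($t\ge t_0$), where the residual graph is within $O(\gamma)=O(n^{1-\delta/2})$ vertices of the core and hence the fraction of degree-$k$ vertices is pinned at the critical value up to $O(\gamma/n)$. The lower-bound half of this estimate is what your argument is missing. Applying Azuma in both directions over the window $[t_0,\widehat\tau]$---which by Proposition~\ref{p:tau} has length at most $k\gamma$---gives both $L_t \le L_{t_0} + O(\gamma^2/n)$ and $L_t \ge L_{t_0} - O(\gamma^2/n)$ for all $t$ in the window. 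Taking $t=\widehat\tau$ in the lower bound and using $L_{\widehat\tau}=0$ yields $L_{t_0} \le O(\gamma^2/n) = O(n^{1-\delta})$ outright; substituting back into the upper bound finishes the proof. In short, the bound on $L_{t_0}$ falls out of the terminal condition $L_{\widehat\tau}=0$ combined with the fact that, once near the core, the process cannot lose queue degree too \emph{fast} in expectation---a direction of control your proposal does not have.
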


This bounds the size of each $S_i$, thus implying that  many rounds of the parallel stripping process are required to remove all the remaining vertices, and so lower bounding  the stripping number.

{\em Proof of Theorem~\ref{mt} for $c=c_{\dd,k}+n^{-\d}$.}
We begin with the lower bound on the stripping number for part (a).   We run SLOW-STRIP and recall that this can be viewed as also running the parallel stripping process slowly. At  iteration $t_0$, $\a n+ 3K_1n^{1-\d/2}$ vertices remain in the graph. Therefore, we need to remove at least $K_1n^{1-\d/2}$ vertices before reaching the $k$-core.  By Lemma~\ref{llt2}, at most $|L_{t(j)}|=O(n^{1-\d})$ of them belong to $S_j$ for each $j$. Therefore, we require at least $K_1n^{1-\d/2}/O(n^{1-\d})=\Omega(n^{\d/2})$ iterations of the parallel stripping process to remove them all.

Now we turn to the upper bound in part (b). {We will focus on the change in $L_{t(i)}$, the sum, over all $v\in S_i$ of the degree of $v$ at the beginning of iteration $i$ of the parallel stripping process}.

Between iterations $t(i)$ and $t(i+1)-1$ of SLOW-STRIP, all hyperedges from all vertices in $S_i$ must be deleted. One hyperedge is removed in each iteration, and it touches at most $\dd$ members of $S_i$.  Thus $t(i+1)-t(i)\geq \inv{\dd} L_{t(i)}$.  So  Lemma~\ref{llt1} yields that  for all $i> B$:
\[\ex(L_{t(i+1)})\leq L_{t(i)}-(t(i+1)-t(i))Kn^{-\d/2}\leq L_{t(i)}-\inv{\dd} L_{t(i)}Kn^{-\d/2}
=L_{t(i)}(1-\frac{K}{\dd}n^{-\d/2}).\]
It follows that
\[\ex(L_{t(i)})\leq n(1-\frac{K}{\dd}n^{-\d/2})^{i-B-1}.\]
Thus, for $i>B+\frac{2\dd}{K} n^{\d/2}\log n$, $\ex(L_{t(i)})=o(1)$ and so \aas $L_{t(i)}=0$.
I.e., \aas the stripping number is at most $B+\frac{2\dd}{K} n^{\d/2}\log n=O( n^{\d/2}\log n)$.
\proofend

\section{Bounding the maximum depth}\lab{smd}

In this section, we sketch the proof of Theorem~\ref{mt2}. It is easy to see that the stripping number provides a lower bound on the depth (see Lemma~\ref{ls2}) and so Theorem~\ref{mt}(a) provides the lower bound of Theorem~\ref{mt2}.  We will focus on the upper bound.

  Recall that $S_i$ is the set of vertices removed during iteration $i$ of the parallel stripping process, and $\hG_i$ is the subhypergraph remaining after $S_1,...,S_{i-1}$ are removed.  We actually run SLOW-STRIP and so we remove the {hyperedges containing vertices of $S_i$} one-at-a-time. This allows us  to create a directed graph (not a directed hypergraph) $\cald$ as follows. {When we select $v\in \msq$ and remove a hyperedge $e$ containing $v$,
we add a directed edge to $v$ from each of the $r-1$ other vertices in $e$.}
For any vertex $v\in H$, we define $R^+(v)$ to be the set of vertices reachable from $v$ in $\cald$.  It is not hard to see that $R^+(v)$ forms a stripping sequence ending with $v$.  So our upper bound in Theorem~\ref{mt2} follows from:
\begin{equation}\lab{emt2}
|R^+(v)|\leq n^{O(\d)} \mbox{ for every non-$k$-core vertex } v.
\end{equation}

For any vertex $v$ in some $S_i$, we analyze $R^+(v)$ by a graph search starting at $v$  and continuing to explore along outedges into levels $S_{i-1}, S_{i-2},...,S_1$ in decreasing  sequence.  We do so by taking advantage of the randomness of those edges.  However, when stripping down to level $S_i$ in order to discover  that $v\in S_i$, we expose all of those edges and hence remove the randomness.   So we must use a very careful exposure.  First we expose the vertices in each level $S_1,S_2,...$, without exposing the actual edges.  After all the levels are exposed, we expose the edges.

We describe this exposure for the case $r=2$, i.e.\ for a graph.  The general description for hypergraphs is the natural generalization, but one has to be careful about the fact that a hyperedge can have multiple vertices in one level $S_j$; see Appendix~\ref{s2}.

Let $\calc_k$ denote the $k$-core.
For a vertex $v\in S_i$, we let $d_i(v), d^+(v)$, and $d^{++}(v)$ denote the number of neighbours that $v$ has in $S_i, S_{i+1}$ and $\calc_k\cup\left(\cup_{j\geq i+2} S_j\right)$.  These are all the edges  in $\hG_i$ incident with $v$ and so $d_i(v)+d^+(v)+d^{++}(v)\leq k-1$.

\vspace{1ex}

{\bf EXPOSURE:}
\begin{enumerate}
\item Expose the vertices in $S_1,S_2,...$.
\item For each $i\geq 1$ and each vertex $v\in S_i$, expose  $d_i(v), d^+(v), d^{++}(v)$.
\end{enumerate}

\vspace{1ex}

{\bf EDGE-SELECTION:}
\begin{enumerate}
\item For each $i\geq 1$, we expose the edges within $S_i$ by selecting a uniformly random graph on the given degree sequence; i.e. where each $v\in S_i$ has degree $d_i(v)$.
\item For each $i\geq 1$, we expose the edges between $S_i,S_{i+1}$ by selecting a uniformly random bipartite graph subject to:
\begin{enumerate}
\item every $v\in S_i$ lies in exactly $d^+(v)$ edges;
\item every $u\in S_{i+1}$ lies in at least one edge.
\end{enumerate}
\item For each $i\geq 1$ we expose the remaining edges incident with $S_i$ in $\hG_i$ as follows:  For each $v\in S_i$, we choose $d^{++}(v)$ uniformly random neighbours in $\calc_k\cup\left(\cup_{j\geq i+2} S_j\right)$.
\end{enumerate}

It is not hard to verify that the edges chosen by EDGE-SELECTION are chosen with the correct distribution (see appendix~\ref{s2}).

This exposes all of the hypergraph except the edges of the $k$-core. Those edges can be chosen by exposing the degree sequence  and then taking a random graph on that degree sequence.

Now our proof of~(\ref{emt2}) goes as follows.  First, we run EXPOSURE and prove that the sets $S_i$ and degrees satisfy certain conditions (Lemma~\ref{lsi} below). Then we pick some $v\in S_i$ and for $j=i-1,...,1$ we reveal $R^+(v)\cap S_j$.  As we do so, we expose the relevant edges via EDGE-SELECTION.  The randomness of those edges allows us to set up a recursive equation bounding $|R^+(v)\cap S_j|$ which we solve to bound $|R(v)|$. For the details, see Appendices~\ref{s2},~\ref{slsi},~\ref{srec}.

\section{Inside the clustering threshold}\lab{sct}
\subsection{Theorem~\ref{tc1}}
The proof of Theorem~\ref{tc1}(b)  is nearly identical to that of Theorem 2 in \cite{amxor}. We omit the repetitive details; instead, in Appendix~\ref{scon}, we  note the differences in the two proofs. This section will be focussed on Theorems~\ref{tc1}(a) and~\ref{tc2}.  {The argument
that our upper bound on the maximum depth implies that the clusters are well-connected is the same as that used in~\cite{amxor,ikkm}.  We include it here for exposition, and because it is needed to understand the proof of Theorem~\ref{tc2}.}

We first describe how to use Gaussian elimination to express every solution in a particular cluster
in terms of  {\em free variables}:

In each core flippable cycle, $C$, we choose an arbitrary variable (vertex) $v_C$.  We eliminate each of the equations (hyperedges) containing two variables of $C$ except for one of the two containing $v_C$;
the result is that each of the $|C|-1$ other variables of $C$ is expressed in terms of $v_C$  and
some of the other 2-core variables not in any flippable cycles.

Next, we  process non-2-core equations in the reverse order of their deletion during SLOW-STRIP.  When we remove a variable (vertex) $v$ of degree 1 in the remaining subsystem (subhypergraph), along with the only equation (hyperedge) containing $v$, we express $v$ in terms of the other $r-1$ variables of that equation.  If the equation contains any other variables of degree 1, then they become free variables; i.e. they are not expressed in terms of any other variables.  Note that such variables have indegree zero in $\cald$.   The non-free variables in the equation already have their own expressions, which  propogate to the expression for $v$.

\begin{definition} The {\em free variables} are: (i) the non-2-core vertices with indegree zero in $\cald$; (ii) one specified vertex, $v_C$ on each core flippable cycle $C$.
\end{definition}

Each cluster is specified by an assignment to all of the 2-core variables not in any core flippable cycles.  If we fix such an assignment,  every other variable $v$ is now expressed as
\[v=z_v+\sum_{u\in\chi (v)} u,\]
where $\chi(v)$ is a subset of the free variables, and $z_v\in\{0,1\}$ is determined by the assigment to the 2-core variables not in any core flippable cycles; hence $z_v$ is fixed within each cluster.  Note that $\chi(v)$ is the same for every cluster.  So each cluster can be specified by fixing the
values of $z_v$, and then taking the set of solutions to the diagonal system formed by
the equations $v=z_v+\sum_{u\in\chi (v)} u$. If $\calf$ is the set of free variables, then clearly, there is one such solution for each of the $2^{|\calf|}$ assignments $\calf\rightarrow\{0,1\}$. Thus, each cluster has size exactly $2^{|\calf|}$.

Theorem~\ref{tc1}(a) now follows easily from our proof of Theorem~\ref{mt2}, by observing that if $u\in \chi(v)$ then $v\in R^+(u)$:

{\em Proof of Theorem~\ref{tc1}(a):}  Within a cluster, we can travel from any solution to any other solution by changing the free variables one-at-a-time.  We will argue that changing one free variable results in a change of at most $n^{O(\d)}$ variables.

Every time we remove a variable $v$, we express it in terms of the $r-1$ other vertices in its equation.  We also add edges in $\cald$ to $v$ from each of those vertices.  It follows that every non-2-core variable $v$ can be reached, in $\cald$, from
every variable in $\chi(v)$.  Therefore, for each non-2-core free variable, $u$, every $v$ with $u\in\chi(v)$ is in $R^+(u)$. In our proof of  Theorem~\ref{mt2}, we showed that \aas $|R^+(u)|\leq n^{O(\d)}$ for all $u$.  Therefore, changing $u$ results in a change of at most $|\chi^{-1}(u)|\leq|R^+(u)|\leq n^{O(\d)}$  variables.

Now consider changing a free variable $v_C$ in some core flippable cycle $C$.  This will change the variables of $C$.  Every non-2-core vertex $v$ with $v_C\in \chi(v)$ must be in $R^+(u)$ for some $u$ which shares a hyperedge with some vertex of $C$.  By Lemma~\ref{lflip}, \aas $|C|<n^{\d/2}\log n$.  A.a.s.\ each vertex of $C$ shares a hyperedge with fewer than $\log n$ non-2-core vertices, since the maximum degree in $\calh_{\dd}(n,p=c/n^{k-1})$ is \aas less than $\log n$.  Since \aas $|R^+(u)|\leq n^{O(\d)}$ for all $u$, changing $v_C$ results in a change of at most $|C|\times\log n\times  n^{O(\d)}= n^{O(\d)}$ variables.
\proofend

\subsection{Theorem~\ref{tc2}}

Theorem~\ref{mt2} says that there is at least one variable $u$ with $|R^+(u)|\geq n^{\Theta(\d)}$.  However, this does not immediately imply that each solution cluster is
not $n^{\Theta(\d)}$-connected.  For one thing, we require that there is a {\em free} variable $u$ with such a large  $R^+(u)$.  In fact, we actually require $\chi^{-1}(u)\subseteq R^+(u)$ to be that large. But even that would not suffice, since it would only imply that a step where $u$ is the only free variable changed would require changing $n^{\Theta(\d)}$ other variables.  It still leaves open the possibility that one could change $u$ using a step that also changes other free variables $w_1,...,w_t$ where $\cup\chi(w_i)$ intersects $\chi(u)$ and so not every variable in $\chi(u)$ is changed.

To prove Theorem~\ref{tc2} we prove that there is a free variable $u$, along with $n^{\d/20}$ other variables $v_1,...v_{n^{\d/20}}$ such that for each $i$, $\chi(v_i)=\{u\}$.  In order to move through every solution in a cluster, eventually there must be a step where $u$ is changed.  At that step, each of $v_1,...v_{n^{\d/20}}$ are changed as well, regardless of which other free variables are changed.  So the cluster is not $n^{\d/20}$-connected.

To do this,
we let $u^*$ be a non-core free vertex that is removed latest in the parallel stripping process.
Specifically, let $I^*$ be the largest value of $i$ such that $S_i$ contains a vertex with indegree zero in $\cald$, and let $u^*$ be some such vertex.

Our first step is to prove that there are no other free vertices within $n^{\d/20}$ levels of $u^*$:

\begin{lemma}\lab{lgap} A.a.s\ $u^*$ is the only free vertex in
$\cup_{i\geq I^*-n^{\d/20}} S_i$.
\end{lemma}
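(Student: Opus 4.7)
The plan is to show that in the topmost $n^{\d/20}$ stripping levels, no type-(i) free vertex other than $u^*$ can arise; type-(ii) free vertices lie inside $\calc_k$ on core flippable cycles and hence never appear in any $S_i$, so they contribute nothing to $\cup_i S_i$.

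The first step is structural. A vertex $v\in S_i$ has indegree $0$ in $\cald$ only if $d^+(v)=d^{++}(v)=0$, i.e.\ every $\hG_i$-neighbour of $v$ lies inside $S_i$ itself. Indeed, if $v$ had even one neighbour in $S_{i+1}\cup\cdots\cup \calc_k$, that neighbour would survive all of round $i$ and would leave $v$ with positive residual degree when $v$ reaches the front of $\msq$, forcing $v$ to be processed and hence to acquire indegree at least $r-1$. Call such a vertex a \emph{candidate}. This reduces the lemma to showing that the number of candidates in the window $\cup_{i\geq I^*-n^{\d/20}} S_i$ is exactly $1$ (namely $u^*$).

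I would then bound the expected number of candidates using the EXPOSURE machinery of Section~\ref{smd}. Once EXPOSURE has revealed the sets $S_j$ and the degree triples $(d_i(v),d^+(v),d^{++}(v))$, the number of candidates in level $i$ is simply $|\{v\in S_i:d^+(v)=d^{++}(v)=0\}|$. Using Lemma~\ref{llt2} to bound $|S_i|=O(n^{1-\d})$ in the window and Lemma~\ref{lcoresize} to control $|\hG_i|=\Theta(n)$, the conditional probability that a given $v\in S_i$ is a candidate is of order $(|S_i|/|\hG_i|)^{(r-1)d(v)} = n^{-\Omega(\d)}$, because every remaining hyperedge of $v$ must place all of its endpoints inside $S_i$. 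Summing over $v\in S_i$ and over the $n^{\d/20}$ levels just below $I^*$, and union-bounding over the $O(n^{\d/2}\log n)$ possible values of $I^*$ from Theorem~\ref{mt}(b), Markov's inequality yields the lemma. The same argument, applied to $S_{I^*}$ itself (where one candidate, namely $u^*$, is already known to exist), shows that a.a.s.\ no additional candidate shares the level $I^*$ with $u^*$.

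The main obstacle is that the naive application of this computation, using only the uniform bound $|S_i|\leq O(n^{1-\d})$, produces an expected count of order $n^{1-\d(r-1/20)}$, which is not $o(1)$ for very small $\d$. Closing this gap requires a sharper control on $|S_i|$ at the very top of the stripping process: either exploiting the per-step decrement $\ex[L_{t+1}\mid G_t]\leq L_t-Kn^{-\d/2}$ from Lemma~\ref{llt1} to show that $|S_i|$ shrinks linearly as $i$ approaches the stripping number (so that levels in the top $n^{\d/20}$ window are substantially smaller than the worst-case bound), or invoking the iterated branching-process recursion of Appendix~\ref{srec} to fold in the additional structural constraints implied by ``$v\in S_i$ and not in $S_{i'}$ for $i'<i$''. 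Either refinement drives the expected candidate count to $o(1)$ and completes the proof.
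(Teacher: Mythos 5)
Your structural reduction is incorrect for $r\geq 3$, and this is the first genuine gap. You claim that a vertex $v\in S_i$ has indegree $0$ in $\cald$ only if \emph{every} $\hG_i$-neighbour of $v$ lies in $S_i$. But for $k=2$ the vertex $v$ has a unique remaining hyperedge $e$ in $\hG_i$, and $e$ can be removed (giving $v$ degree $0$ and hence indegree $0$) while some other vertex $w\in e\cap S_i$ is being processed, \emph{even if $e$ also contains vertices in $S_{i+1}$ or $\calc_k$}. The hyperedge is torn down when any one of its $\msq$-vertices reaches the front of the queue; the non-$S_i$ vertices of $e$ survive, but the edge itself does not, so $v$ never reaches the front of $\msq$ with positive degree. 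The correct characterization, used in the paper, is at the hyperedge level: $S_i$ contains a type-(i) free vertex iff some hyperedge contains at least two vertices of $S_i$. Your ``candidate'' condition is neither necessary (for the reason above) nor sufficient (two degree-$1$ vertices of $S_i$ sharing a hyperedge cannot both be free — whichever is processed first acquires indegree $r-1$). This changes the order of the relevant probability from your $(|S_i|/n)^{r-1}$ per vertex to $\Theta(|S_i|^2/n)$ per level.

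You also acknowledge that even your own estimate does not close for small $\d$, and your proposed fixes are not developed. The paper's actual route is quite different and you would need all of its ingredients. Setting $\nu=\hf-\d/6$ and letting $i_1$ be the first level at which $\ld_{i_1}\leq n^\nu$, the paper first \emph{locates} $I^*$: it shows that between levels $i_1+n^{2\d/5}$ and $i_1+2n^{2\d/5}$ the $S_i$ have size $\Theta(n^\nu)$, so the per-level probability of a free vertex is $\Theta(n^{2\nu-1})=\Theta(n^{-\d/3})$, and summing over the $n^{2\d/5}$ levels plus a Chernoff bound yields \aas a free vertex in that range, hence $I^*>i_1+n^{\d/20}$. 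The second step is a union bound not over individual free vertices but over \emph{pairs} of levels $i_1\leq i<i'\leq i+n^{\d/20}$: there are $O(n^{11\d/20}\log n)$ such pairs, each pair both contains a free vertex with probability $O(n^{2(2\nu-1)})=O(n^{-2\d/3})$, and since $11/20<2/3$ the expected count is $o(1)$. The conclusion then follows because $I^*>i_1+n^{\d/20}$ guarantees that any second free level in your window would constitute such a forbidden close pair. Both the positioning argument for $I^*$ and the pair union bound (rather than a first-moment bound on individual free vertices in a worst-case window) are essential and are missing from your plan.
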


Given a non-core vertex $w$, we define $T(w)$ to be the set of vertices $v$ that can reach $w$ in $\cald$; i.e. the set of vertices $v$ such that $w\in R^+(v)$. For $u\in T(w)$, define $T(w,u)$ be the subgraph of $T(w)$ containing all vertices reachable from $u$; i.e. vertices on walks from $u$ to $w$.  We will prove:

\begin{lemma}\lab{lTw}  A.a.s. there is some $w\in S_{I^*-n^{\d/20}}$ such that
(a) $u^*\in T(w)$;
(b) no vertex of any core flippable cycle is in $T(w)$;
(c) the subgraph of $\cald$ induced by the vertices of $T(w,u^*)$ is a path.
\end{lemma}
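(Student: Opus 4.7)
My plan is to construct $w$ as the endpoint of a carefully built ``thin'' descending outpath from $u^*$ in $\cald$, and then verify the three required properties. I will work within the EXPOSURE and EDGE-SELECTION framework of Section~\ref{smd}: first EXPOSE the level sets and local degrees, which determines $I^*$ and $u^*$; by Lemma~\ref{lgap} I may assume $u^*$ is the unique in-degree-zero non-core vertex in $\cup_{i \ge I^*-n^{\delta/20}} S_i$.

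Step 1 (descending path). Starting at $p_0 := u^*$, I greedily pick $p_{j+1}$ to be an out-neighbour of $p_j$ in $\cald$ at level exactly $I^*-j-1$; iterating for $n^{\delta/20}$ steps gives $w := p_{n^{\delta/20}}$. Such a level-descending out-neighbour exists because $|S_{I^*-j-1}|$ is polynomially large (by Lemma~\ref{llt2} together with the $k$-core sandwich bound, it is of order $n^{1-\delta}$ in this range), while a Poisson-type first-moment estimate (similar to those in Section~\ref{smt1}) shows the fraction of vertices in the relevant $S_i$ with $d^+ = 0$ is small. EDGE-SELECTION makes the endpoint of each outedge roughly uniform over the next level given the exposed degrees. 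If a specific $p_j$ fails, I back-track and reroute via a different branch from $p_{j-1}$.

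Step 2 (properties (a) and (c)). By construction $u^* \in T(w)$, giving (a). For (c), condition on the greedy path; the remaining EDGE-SELECTION randomness gives a uniform configuration on the unexposed half-edges. A first-moment union bound rules out ``chord'' edges $p_i \to p_j$ with $j > i+1$: each corresponds to a hyperedge on two fixed vertices, existing with probability $O(p \cdot n^{r-2}) = O(1/n)$, and there are $O((n^{\delta/20})^2) = O(n^{\delta/10})$ candidate pairs, giving expected count $O(n^{-1+\delta/10}) = o(1)$. A similar argument handles alternative out-walks from $u^*$ to $w$ that deviate from and then re-enter the greedy path, using the tree-like local geometry of $\calh_r(n, c/n^{r-1})$. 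Hence \aas $T(w, u^*)$ is exactly the directed path $p_0 \to \cdots \to p_{n^{\delta/20}}$.

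For (b), Lemma~\ref{lflip} bounds the vertices lying on core flippable cycles by $O(n^{\delta/2}\log n)$. Adapting the upper bound on $|R^+(v)|$ from Theorem~\ref{mt2} to the reverse digraph, \aas $|T(w)| \le n^{O(\delta)}$. A union bound over each core flippable vertex $v^\circ$ controls the probability that $w \in R^+(v^\circ)$ for a uniform $w$ at level $I^*-n^{\delta/20}$, so rerouting the final step of the greedy construction among the $\Omega(n^{1-\delta})$ candidates at that level produces a $w$ with $T(w)$ disjoint from all core flippable vertices. The principal obstacle is the exact level-by-level descent of Step~1 combined with the correlations introduced as the greedy path is built and random edges are revealed; these are managed by deferring EDGE-SELECTION's randomness and exposing it only locally along the path, in the spirit of the recursive analysis in Appendix~\ref{srec}.
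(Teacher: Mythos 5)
Your approach is genuinely different from the paper's. You construct a specific level-by-level descending path from $u^*$ to a candidate $w$ and then try to verify the three properties locally for that path; the paper instead first proves a global structural fact — that a.a.s.\ \emph{every} $w\in\cup_{i\ge i_1}S_i$ has $T(w)\setminus\calc_2$ inducing a tree in $\cald$, with $|T(w)\setminus\calc_2|\le n^{3\delta/2}$ — and then simply takes $w$ to be any vertex of $S_{I^*-n^{\delta/20}}$ reachable from $u^*$ (some such $w$ always exists because every vertex in $S_i$ reaches a vertex in each lower level $S_j$). The tree property immediately gives both (a) and (c), and (b) follows from a clean union bound: since every such $T(w)$ is small, and by Lemma~\ref{lflip} the set of core-flippable vertices has total size $O(n^{\delta/2}\log n)$, a first-moment bound over the $O(n^{\nu+\delta/2})$ candidate $w$'s rules out any contact during Step 3 of EDGE-SELECTION.

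There are several gaps in your version. First, your level-size estimate is incorrect: near $I^*$ the levels are of order $n^{\nu}$ with $\nu=\tfrac12-\tfrac{\delta}{6}$ (this is precisely what makes free variables rare there, which is the engine of Lemma~\ref{lgap}), not $n^{1-\delta}$. Lemma~\ref{llt2} is only an upper bound and does not give a matching lower bound at these late iterations. Second, and more seriously, establishing (c) requires $T(w,u^*)$ to be \emph{exactly} the path vertices, and this requires ruling out \emph{any} second $u^*\to w$ walk in $\cald$, not just chords among the chosen $p_i$'s. Your ``similar argument handles alternative out-walks'' is the crux of the proof and is not an instance of the chord computation: branching out-walks can leave and rejoin the path many levels later, and controlling the expected number of such rejoinings requires a branching-process style bound on how large the set of walks emanating from $u^*$ (or entering $w$) can get, which is exactly what the paper supplies. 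Third, the bound ``$|T(w)|\le n^{O(\delta)}$ by reversing Theorem~\ref{mt2}'' does not transfer directly: in $\cald$ the in-degree of a non-free vertex is exactly $r-1$ and the in-degree of a free vertex is $0$, whereas out-degrees are essentially $\deg_H(v)-1$ and can be as large as $\Theta(\log n)$. The recursive scheme in Appendix~\ref{srec} is built on the out-direction and does not symmetrize; the paper must instead run a fresh branching-process argument with the Red-vertex bookkeeping (and in fact gets the sharper bound $n^{3\delta/2}$). Fourth, your chord and rejoining estimates are carried out as if hyperedges between fixed pairs are fresh $\calh_r(n,p)$ events; once you run EXPOSURE and fix the path $p_0,\ldots,p_j$, the remaining randomness is that of EDGE-SELECTION conditioned on the levels and local degree data, so the $O(1/n)$ per-pair probability has to be re-derived inside that framework (the paper does this by working with the configuration model at each SLOW-STRIP step). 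Finally, rerouting only the final step of the path need not make $T(w)$ avoid core flippable cycles, since the obstruction could be created many levels below $w$; the paper avoids this entirely by the union bound over \emph{all} $w$ at high levels.
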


{\em Proof of Theorem~\ref{tc2}(a)}  Consider $w$ from Lemma~\ref{lTw} and  any vertex $v\in T(w,u^*)$.
Since $T(w,u^*)$ induces a path in $\cald$, there is exactly one path from $u^*$ to $v$, and so $u^*\in \chi(v)$.

Since $T(v)\subseteq T(w)$, Lemma~\ref{lTw}(b) implies that no vertex of any core flippable cycle is in $T(v)$.  By Lemma~\ref{lgap}, $u^*$ is the only vertex in $\cup_{i\geq I^*-n^{\d/20}} S_i$
with indegree zero in $\cald$.
Therefore, $u^*$ is the only free variable in $T(v)$ and so $\chi(v)=\{u^*\}$.

Therefore, if any two solutions  in the same cluster differ on $u^*$ then they differ on all of the
$n^{\d/20}+1$ variables on the path from $u^*$ to $w$.  Since every cluster contains one solution for each setting of the free variables, this implies that every cluster is not $n^{\d/20}$-connected.
\proofend

The proofs of these Lemmas appear in Appendix~\ref{scon}.

\section{Future directions}\lab{sfd}
We have examined the solution clusters of $X_r(n,p=c/n^{r-1})$ when $c$ is within the clustering threshold; specifically $c=c_{r,2}+n^{-\d}$.  We showed that \aas the connectivity parameter of each cluster is $n^{\Theta(\d)}$, as opposed to $O(\log n)$ when $c$ is a constant greater than the threshold $c_{r,2}$.  We have also shown that \aas any two solutions in different clusters are not $n^{1-r\d}$-connected.   So if $\d$ is sufficiently small, solutions in the same cluster are $n^{\e}$-connected, while solutions in different clusters are not even $n^{1-\e}$-connected where $\e$ is small.

It is possible that the clusters are even more separated than we have shown. { We would like to know whether there is \aas a pair of solutions in different clusters that is $n^{1-z\d}$-connected for some constant $z>0$ .}

More importantly, we would like to see what happens for larger $\d$.  In particular, we would like to find out whether clusters arise as soon as the first non-empty 2-core appears in the underlying hypergraph. Is there always some $f(n)=o(g(n))$ such that if there is a 2-core then, under the cluster definitions from Section~\ref{ssc}, any two solutions in the same cluster are $f(n)$-connected, while any two solutions in different clusters are not $g(n)$-connected?  Or do we need to change the definition of clusters?  Or perhaps our notion of clustering falls apart if we are sufficiently close to the point when the 2-core is formed.

{Of course, instances of $r$-XORSAT can be solved in polynomial time, using global algorithms such as Gaussian elimination.  However, when $c$ is above the clustering threshold, random $r$-XORSAT seems to be very difficult for generic CSP solvers and local algorithms such as WalkSat\cite{young}.  It is natural to wonder whether such difficulties arise precisely when the 2-core appears.  Answering this question, and understanding exactly what these difficulties are, could provide insights into how it is that  clustering creates algorithmic difficulties for other random CSP's.  Resolving the issues discussed in the previous paragraph would be very helpful for this question.}

\newpage

\begin{center}
{\bf Appendix}
\end{center}

\section{The case $k=r=2$}\lab{sa1}

 The 2-core of the random graph $G(n,p=c/n)$ is well-studied\cite{tlcomp,tlcomp2,rw,DKLP,dklp2,frmix}. For $0<\d<\inv{3}$ and $p=1+n^{-\d}$ a.a.s.\ the largest component consists of a large 2-core $C_2$, with a Poisson Galton-Watson tree of branching parameter $1-n^{-\d}$ rooted at each vertex.  In addition, there are $\Theta(n)$ smaller components distributed essentially like Poisson Galton-Watson trees of branching parameter $1-n^{-\d}$, except that some of them contain a single cycle.  We can define the  stripping number to be the stripping number of the largest component, or to be the maximum of the stripping number of all components. Either way, it is of the order of the height of the tallest of those trees which is $\tdT(n^{\d})$.  Similarly, the maximum depth is of the order of the size of the largest tree, which is  $\tdT(n^{2\d})$.   If $p=1+n^{-\d}$ then we only have the smaller components, but again we have a stripping number of $\tdT(n^{\d})$ and a depth of  $\tdT(n^{2\d})$.

\section{The stripping number for supercritical $c$}\lab{sbsn}

This appendix repeats much of the material from Section~\ref{smt1}, but more formally. It also adds many more details.

{
In this section, we present the proof of Theorem~\ref{mt} for the case $c=c_{r,k}+n^{-\d}$. Note that this is the case that is relevant to our study of the clustering window for random $r$-XORSAT. To extend the proof to the range $c_{r,k}-n^{-\d}\leq c < c_{r,k}+n^{-\d}$, and thus complete the proof of, Theorem~\ref{mt}, we couple our random hypergraph $H=\calh_{\dd}(n,p=c/n^{\dd-1})$ to a random hypergraph $H'=\calh_{\dd}(n,p=(c_{r,k}+n^{-\d})/n^{\dd-1})$
and argue that the stripping number of $H$ is not much lower than the stripping number of $H'$; the details are in Section~\ref{scouple}.
}

We begin with a closer look at the size of the $k$-core when the density is  $c_{\dd,k}+o(1)$.  We define:
\begin{eqnarray*}
f_t(\la)&=&e^{-\la}\sum_{i\geq t}\frac{\la^i}{i!};\\
h(\mu)=h_{\dd,k}(\mu)&=&\frac{\mu}{f_k(\mu)^{\dd-1}}.
\end{eqnarray*}
Note that $f_t(\la)$ is the probability that a Poisson variable with mean $\la$ is at least $t$.
Thus $c_{\dd,k}=\inf_{\mu>0} (\dd-1)! h(\mu)$. Now for any $\dd,k\geq2, (\dd,k)\neq (2,2)$, we define
$\mu_{\dd,k}$ to be the value of $\mu$ that minimizes $h(\mu)$; i.e. the (unique) solution to:
\begin{equation}
c_{\dd,k}= (\dd-1)! h(\mu_{\dd,k}).\lab{murk}
\end{equation}
We define
\begin{eqnarray*}
\a=\a_{\dd,k}&=&f_k(\mu_{\dd,k})\\
\b=\b_{\dd,k}&=&\frac{1}{\dd}\mu_{\dd,k} f_{k-1}(\mu_{\dd,k})
\end{eqnarray*}

Theorem 1.7 of \cite{jhk}, along with Lemma~\ref{l:diff} below, yields the following bound on the size of the $k$-core:

\begin{lemma}\lab{lcoresize}
For  $\dd,k\geq2, (\dd,k)\neq (2,2)$, there exist {two positive constants} $K_1=K_1(r,k),K_2=K_2(r,k)$ such that: Consider any $0<\d<\hf$ and $c=c_{\dd,k}+ n^{-\d}$. Then $\aas$ the $k$-core of $\calh_{\dd}(n,p=c/n^{\dd-1})$ has
\begin{enumerate}
\item[(a)] $\a n +K_1n^{1-\d/2}+O(n^{3/4})$ vertices and
\item[(b)] $\b n +K_2n^{1-\d/2}+O(n^{{3/4}})$ edges.
\end{enumerate}
\end{lemma}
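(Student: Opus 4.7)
The plan is to combine the precise ``window'' result of Kim (Theorem~1.7 of \cite{jhk}) with a Taylor expansion of the defining relation \eqref{murk} about the critical point $\mu_{r,k}$; I would package the latter as the cited Lemma~\ref{l:diff}. Kim's theorem describes the $k$-core throughout the supercritical regime as a function of an internal Poisson parameter $\mu$, giving vertex count $n f_k(\mu) + \bigo(n^{3/4})$ and edge count $(n/r)\mu f_{k-1}(\mu) + \bigo(n^{3/4})$ \aas. At $\mu=\mu_{r,k}$ these specialize to exactly $\a n$ and $\b n$. So all that remains is to compute the shift $\mu-\mu_{r,k}$ that corresponds to the density shift $c-c_{r,k}=n^{-\d}$.

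Since $\mu_{r,k}$ is the minimizer of $h$, we have $h'(\mu_{r,k})=0$, and one verifies directly from the explicit form of $h$ that $h''(\mu_{r,k})>0$ for all $(r,k)\neq(2,2)$; this non-degeneracy is precisely what makes the window open at exponent $\hf$. Taylor expanding $c=(r-1)!\,h(\mu)$ therefore gives
\[n^{-\d} \;=\; c-c_{r,k} \;=\; \tfrac{1}{2}(r-1)!\,h''(\mu_{r,k})(\mu-\mu_{r,k})^2 \;+\; \bigo\!\bigl((\mu-\mu_{r,k})^3\bigr),\]
which inverts, on the branch $\mu>\mu_{r,k}$, to
\[\mu-\mu_{r,k} \;=\; \sqrt{\tfrac{2}{(r-1)!\,h''(\mu_{r,k})}}\;n^{-\d/2} \;+\; \bigo(n^{-\d}) \;=\; \Theta(n^{-\d/2}).\]

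Finally, I would plug this value of $\mu$ into Kim's two formulae and Taylor expand $f_k$ and $\mu\mapsto\mu f_{k-1}(\mu)/r$ around $\mu_{r,k}$. The zeroth-order terms deliver $\a n$ and $\b n$; the first-order terms deliver the asserted $K_1 n^{1-\d/2}$ and $K_2 n^{1-\d/2}$, with
\[K_1 \;=\; f_k'(\mu_{r,k})\sqrt{\tfrac{2}{(r-1)!\,h''(\mu_{r,k})}},\qquad K_2 \;=\; \tfrac{1}{r}\bigl(f_{k-1}(\mu_{r,k})+\mu_{r,k}f_{k-1}'(\mu_{r,k})\bigr)\sqrt{\tfrac{2}{(r-1)!\,h''(\mu_{r,k})}}.\]
The quadratic Taylor remainders contribute $\bigo(n^{1-\d})$, which for $\d\ge\frac14$ is swallowed by the $\bigo(n^{3/4})$ error inherited from Kim and for smaller $\d$ is still of strictly smaller order than the main correction $n^{1-\d/2}$, so the stated form is correct throughout $0<\d<\hf$. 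The only real obstacle is verifying $h''(\mu_{r,k})>0$ for all $(r,k)\neq(2,2)$; this is implicit in the core-threshold analyses of \cite{psw,mmcore,jhk}, but I would include an explicit computation so that the constants $K_1,K_2$ have transparent closed forms. Everything else is a mechanical Taylor expansion.
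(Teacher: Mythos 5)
Your approach---combining Kim's Theorem~1.7 with a quadratic Taylor expansion of $c=(r-1)!\,h(\mu)$ about $\mu_{r,k}$, where $h'(\mu_{r,k})=0$ and $h''(\mu_{r,k})>0$---is exactly the paper's, which packages the expansion as Lemma~\ref{l:diff} and then reads off the result from Lemma~\ref{lcoresize2}. Your error accounting is in fact slightly more careful than the paper's: as you observe, the Taylor remainder is $O(n^{1-\delta})$, so for $\delta<1/4$ the stated $O(n^{3/4})$ error ought to be $O(n^{1-\delta})$ (still $o(n^{1-\delta/2})$, so the leading correction and all downstream uses are unaffected), and your explicit $K_1,K_2$ correctly retain the $(r-1)!$ factor that the paper's proof of Lemma~\ref{l:diff} silently drops.
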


Next, we turn our attention to the parallel stripping process.
At any point during the process, a vertex is said to be {\em light} if it has degree less than $k$, and {\em heavy} otherwise.
So in each iteration, we delete all light vertices.

We restate the parallel stripping process in a slowed-down version, which will be more convenient to analyze.   Rather than removing all vertices of $S_i$ at once, we remove them one at a time.  When removing a vertex, we slow down further by removing one edge at a time.
To facilitate this, we maintain a queue $\msq$ containing all light vertices:

\begin{tabbing}
{\bf SLOW-STRIP}\\
\\
{\bf Input:}  A hypergraph $G$.\\
\\
Initialize: $t:=0,G_0:=G$, $\msq:=S_1$ is the set of all vertices of degree less than $k$ in $G$.\\
Whi\=le $\msq\neq\emptyset$:\\
\>Let $v$ be the next vertex in $\msq$.\\
\>Remove one of the hyperedges $e$ containing $v$.\\
\>If \=any vertex of $e$ now has degree 0\\
\>\>then remove that vertex from $G$ and from $\msq$.\\
\>If any other vertex of $e$ has its degree drop to below $k$\\
\>\>then add that vertex to the end of $\msq$.\\
\>$G_{t+1}$ is the resulting hypergraph; $t:=t+1$.\\
\end{tabbing}

Note that the vertices of $S_i$ are removed before the vertices of $S_{i+1}$.  Note also that when processing a vertex $v\in\msq$, all edges from $v$ are removed (and hence $v$ is removed) before moving to the next vertex of $\msq$.  So this procedure removes vertices in the same order as the parallel stripping process; in particular, if $t$ is the iteration when the first vertex of $S_i$ is selected then $G_{t}$ is the graph remaining after $i-1$ iterations of the parallel stripping process and at time $t$, $\msq=S_{i}$. It will be convenient to define $t(i)$ to be that iteration; i.e.:

\begin{definition}  We use $t(i)$ to denote the iteration of SLOW-STRIP at which iteration $i$ of the parallel stripping process begins; i.e. the iteration at which the first vertex of $S_i$ reaches the front of $\msq$.
\end{definition}

 Our key parameter is:
\[ L_t:= \mbox{ the total degree of the vertices in $\msq$ at iteration $t$ of SLOW-STRIP.}\]

We focus most of our analysis on a period where the size of the remaining graph is very close to $\a n$.  Lemma~\ref{l:B} below says that for any $\e>0$ there is a $B=B(\e)$ such that after step $B$  of the parallel process, or equivalently, step $t(B)$  of SLOW-STRIP, fewer than $\a n + \e n$ vertices remain.  This allows us to prove things such as:

\newtheorem*{llt1}{Lemma~\ref{llt1}}
\begin{llt1}  There are constants $B,K$ such that: If $c=c_{\dd,k}+n^{-\d}$, then \aas for every $t\geq t(B)$,
\[\ex(L_{t+1}{\mid G_t})\leq L_t-Kn^{-\d/2}.\]
\end{llt1}

{\em Remark:}  The key Lemma 32 of~\cite{amxor} implies that, when $c=c_{r,k}+\e$, we have \aas $\ex(L_{t+1}{\mid G_t})\leq (1-\z)L_t$ for a constant $\z>0$.  That difference is what causes the stripping number and depth to rise from $O(\log n)$ above the clustering threshold to $n^{\Theta(1)}$ inside the clustering threshold.  It is also the cause of most of the difficulties in this paper.

Some of our more delicate analysis requires us to be a sublinear distance away from the $k$-core.  Specifically, we often focus on Phase 2 of SLOW-STRIP, defined as:
\begin{definition} \lab{def:t0}
 $\g=3K_1n^{1-\d/2}$, and $t_0$ is the first iteration of SLOW-STRIP in which the number of vertices in the remaining graph is exactly $\a n+\g$.     We refer to iterations $t=0,....,t_0-1$ as {\em Phase 1} and the remaining iterations as {\em Phase 2}.
\end{definition}

Note that we can lower bound the number of iterations in Phase 2:

\begin{lemma}\lab{lp0}  If $c=c_{\dd,k} + n^{-\d}$  then \aas Phase 2 contains at least
$\g/3=K_1n^{1-\d/2}$ iterations.
\end{lemma}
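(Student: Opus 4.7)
The plan is to observe that Phase 2 terminates precisely when the remaining graph becomes the $k$-core, and that SLOW-STRIP removes at most one vertex per iteration. Hence the number of iterations in Phase 2 is at least the difference between the vertex count at the start of Phase 2 and the vertex count of the final $k$-core. Both quantities are controlled tightly: by Definition~\ref{def:t0}, Phase 2 starts at iteration $t_0$ with exactly $\alpha n + \gamma = \alpha n + 3K_1 n^{1-\delta/2}$ vertices remaining, and by Lemma~\ref{lcoresize}(a), the $k$-core a.a.s.\ has exactly $\alpha n + K_1 n^{1-\delta/2} + O(n^{3/4})$ vertices.

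First I would justify the ``at most one vertex per iteration'' claim from the pseudocode of SLOW-STRIP: each iteration removes a single hyperedge, and a vertex is removed only when its degree drops to $0$ as a consequence of that single deletion, so at most one vertex can disappear per iteration. (In fact, since a vertex has degree at least $1$ when it enters $\msq$, at least one iteration is spent on each vertex ever removed.) Then I would subtract the two vertex counts above to bound the number of Phase 2 iterations from below by
\[
\bigl(\alpha n + 3K_1 n^{1-\delta/2}\bigr) - \bigl(\alpha n + K_1 n^{1-\delta/2} + O(n^{3/4})\bigr) = 2K_1 n^{1-\delta/2} - O(n^{3/4}).
\]
Finally, I would use the assumption $\delta < \tfrac{1}{2}$, which gives $1-\delta/2 > 3/4$, so $O(n^{3/4}) = o(n^{1-\delta/2})$; for $n$ sufficiently large the lower bound exceeds $K_1 n^{1-\delta/2} = \gamma/3$, as required.

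There is no real obstacle: everything we need is already in Lemma~\ref{lcoresize} and in the definition of SLOW-STRIP. The only mild subtlety is ensuring that the error term in the core-size estimate is dominated by the gap $2K_1 n^{1-\delta/2}$, which is exactly why $3K_1$ (rather than $2K_1$ or $K_1$) was chosen in the definition of $\gamma$ and why the restriction $\delta < \tfrac{1}{2}$ suffices.
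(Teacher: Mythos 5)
Your proposal is correct and follows essentially the same approach as the paper's proof: both argue that at most one vertex is removed per iteration of SLOW-STRIP and then lower bound the iteration count by the gap between the vertex count at $t_0$ (namely $\a n+3K_1n^{1-\d/2}$) and the a.a.s.\ $k$-core size from Lemma~\ref{lcoresize}(a). Your write-up just makes the subtraction and the domination of the $O(n^{3/4})$ error term (using $\d<\tfrac12$) a bit more explicit.
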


\proofstart We delete at most one vertex per iteration. So, by Lemma~\ref{lcoresize}(a), \aas the $k$-core has fewer than $\a n + 2K_1n^{1-\d/2}$ vertices and so it
takes at least $K_1n^{1-\d/2}$ iterations to reach it.
\proofend

We will show that, during  Phase 2, $L_t$ is small.

\newtheorem*{llt2}{Lemma~\ref{llt2}}
\begin{llt2} If $c=c_{\dd,k}+ n^{-\d}$, then \aas for every $t\geq  t_0$:
\[L_{t}\leq O(n^{1-\d}).\]
\end{llt2}

We present the proofs of Lemmas~\ref{llt1} and~\ref{llt2} in Section~\ref{slt}.  We close this
section by showing how they imply our bounds on the stripping number:

{\em Proof of Theorem~\ref{mt} for $c=c_{\dd,k}+n^{-\d}$}
We begin with the lower bound on the stripping number for part (a).   We run SLOW-STRIP and recall that this can be viewed as also running the parallel stripping process slowly. At the beginning of Phase 2, exactly $\a n+\g=\a n+ 3K_1n^{1-\d/2}$ vertices remain in the graph. Therefore, by Lemma~\ref{lcoresize}(a), \aas we need to remove at least $K_1n^{1-\d/2}$ vertices before reaching the $k$-core.  By Lemma~\ref{llt2}, at most $|L_{t(j)}|=O(n^{1-\d})$ of them belong to $S_j$ for each $j$. Therefore, we require at least $K_1n^{1-\d/2}/O(n^{1-\d})=\Omega(n^{\d/2})$ iterations of the parallel stripping process to remove them all.

Now we turn to the upper bound in part (b). We will focus on the change in $L_{t(i)}$, the sum, over all $v\in S_i$ of the degree of $v$ in the graph remaining at the beginning of iteration $i$ of the parallel stripping process.

Between iterations $t(i)$ and $t(i+1)$ of SLOW-STRIP, all hyperedges from all vertices in $S_i$ must be deleted. One hyperedge is removed in each iteration, and it touches at most $\dd$ members of $S_i$.  It follows that $t(i+1)-t(i)\geq \inv{\dd} L_{t(i)}$.  So  Lemma~\ref{llt1} yields that there are constants $B,K$ such that for all $i> B$,
\[\ex(L_{t(i+1)})\leq L_{t(i)}-(t(i+1)-t(i))Kn^{-\d/2}\leq L_{t(i)}-\inv{\dd} L_{t(i)}Kn^{-\d/2}
=L_{t(i)}(1-\frac{K}{\dd}n^{-\d/2}).\]
It follows that
\[\ex(L_{t(i)})\leq n(1-\frac{K}{\dd}n^{-\d/2})^{i-B-1}.\]
Thus, for $i>B+\frac{2\dd}{K} n^{\d/2}\log n$, $\ex(L_{t(i)})=o(1)$ and so \aas $L_{t(i)}=0$.
I.e., \aas the stripping number is at most $B+\frac{2\dd}{K} n^{\d/2}\log n=O( n^{\d/2}\log n)$.
\proofend

\section{Bounding the maximum depth}\lab{s2} This appendix repeats much of the material from Section~\ref{smd}, but more formally. It also adds many more details.

 We start by noting:
\begin{lemma} \label{ls2} For any vertex $v\in S_i$, the depth of $v$ is at least $i$.
\end{lemma}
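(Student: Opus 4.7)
The plan is to prove the following auxiliary claim by induction on $j$, which immediately implies the lemma: \emph{For any $k$-stripping sequence $u_1,u_2,\ldots,u_m$ in $H$ and any $1\le j\le m$, we have $u_j\in S_1\cup S_2\cup\cdots\cup S_j$.}

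The base case $j=1$ is essentially the definition: $u_1$ must have degree less than $k$ in the original hypergraph $H$, so $u_1$ is deletable in the first round of the parallel stripping process, i.e.\ $u_1\in S_1$. For the inductive step, assume $\{u_1,\ldots,u_{j-1}\}\subseteq S_1\cup\cdots\cup S_{j-1}$. Let $G'$ be the hypergraph obtained from $H$ by removing $u_1,\ldots,u_{j-1}$ and their incident hyperedges; by the definition of a stripping sequence, $u_j$ has degree less than $k$ in $G'$. Now compare $G'$ with $\hG_j$, the hypergraph after $j-1$ iterations of the parallel process: the inductive hypothesis says $\{u_1,\ldots,u_{j-1}\}\subseteq S_1\cup\cdots\cup S_{j-1}$, so every edge deleted in forming $G'$ has also been deleted in forming $\hG_j$ (since $\hG_j$ is obtained by removing a superset of vertices along with their incident edges). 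Hence the degree of $u_j$ in $\hG_j$ is at most its degree in $G'$, which is strictly less than $k$. Therefore $u_j$ is removed in one of the first $j$ rounds of the parallel stripping process, i.e.\ $u_j\in S_1\cup\cdots\cup S_j$.

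To conclude the lemma, suppose $v\in S_i$ and let $u_1,\ldots,u_m=v$ be any $k$-stripping sequence ending at $v$. Applying the claim with $j=m$ gives $v\in S_1\cup\cdots\cup S_m$, and since the $S_\ell$'s are pairwise disjoint (each vertex is removed in exactly one round of the parallel process) we must have $i\le m$. Thus the shortest stripping sequence ending at $v$ has length at least $i$, as claimed.

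There is no real obstacle here beyond keeping the inductive bookkeeping clean; the one subtle point is the containment comparison between $G'$ and $\hG_j$, which relies on the observation that removing more vertices can only decrease remaining degrees, never increase them. This monotonicity is what makes the parallel process ``fastest possible'' and is the conceptual content of the lemma.
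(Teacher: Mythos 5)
Your proof is correct, and it takes a genuinely different route from the paper's. The paper inducts over the level index $i$: it shows that every vertex in $\cup_{j\geq i}S_j$ has depth at least $i$, arguing that a vertex $v\in\cup_{j\geq i+1}S_j$ still has degree $\geq k$ after rounds $1,\ldots,i-1$, so some neighbour (itself of depth $\geq i$, by the inductive hypothesis) must be stripped before $v$ can be. You instead induct over the position $j$ in an arbitrary stripping sequence $u_1,\ldots,u_m$, showing $u_j\in S_1\cup\cdots\cup S_j$ via the containment $\hG_j\subseteq G'$, where $G'$ is the graph after removing $u_1,\ldots,u_{j-1}$. The two inductions are dual formulations of the same fact (the parallel process is the fastest possible stripping order), but they lean on different observations: the paper's argument reasons locally about neighbour degrees, while yours reasons globally about monotonicity of the stripping process under removal of a superset of vertices. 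Your version is arguably a bit cleaner to verify (the containment argument is mechanical, and you correctly handle the case that $u_j$ is already gone from $\hG_j$), and it more directly exhibits the ``greedy is optimal'' structure; the paper's version is slightly more economical in statement. Both are complete and sound.
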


\proofstart  We prove by induction that every vertex $v\in\cup_{j\geq i} S_j$ has depth at least $i$.  This is trivial for $i=1$. Suppose it is true for $i$, and consider $v\in\cup_{j\geq i+1} S_j$.
Since $v\notin S_i$, $v$ has at least $k$ neighbours which are either in the $k$-core or in $\cup_{j\geq i} S_j$.  At least one of those neighbours must be removed before $v$ can be removed, and by our induction hypothesis, each such neighbour has depth at least $i$.  So any stripping sequence which removes $v$ must first include a sequence of length at least $i$ which removes a neighbour of $v$. Thus it must have length at least $i+1$; i.e. the depth of $v$ is at least $i+1$.
\proofend

Therefore, the lower bound in Theorem~\ref{mt}(b) provides the lower bound for Theorem~\ref{mt2}.

As we carry out SLOW-STRIP on an input hypergraph $H$, we create a directed graph $\cald=\cald(H)$ as follows.

\begin{definition}  The vertices of $\cald$ are the vertices of $H$. At each iteration of SLOW-STRIP, we choose the next vertex $v\in \msq$ and remove a hyperedge $e$ containing $v$.  For each of the other $\dd-1$ vertices $u\in e$, we add the directed edge $u\rightarrow v$ to $\cald$.  For any vertex $v\in H$, we define $R^+(v)$ to be the set of vertices reachable from $v$ in $\cald$.
\end{definition}

\begin{observation} For any $v$ that is not in the $k$-core of $H$, the vertices of $R^+(v)$, listed in the order that they are processed in SLOW-STRIP, form a $k$-stripping sequence ending with $v$.
\end{observation}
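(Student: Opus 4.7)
The plan is to order the vertices of $R^+(v) = \{w_1, \ldots, w_m\}$ by SLOW-STRIP processing time and verify two things: that $w_m = v$, and that deleting the $w_i$ in that order produces a valid $k$-stripping sequence.

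For the first claim, the key directional observation is that whenever an edge $u \to v'$ is inserted into $\cald$, the vertex $v'$ is currently being processed and the hyperedge $e$ containing both $u$ and $v'$ is still present in the graph. In particular, $u$ cannot yet have been fully processed (fully processing $u$ would have removed every hyperedge incident to $u$). Hence $u$'s processing begins strictly after $v'$'s, or $u$ is never processed, which happens only for $k$-core vertices. Chasing outgoing edges from $v$ and iterating this observation shows that every $w \in R^+(v) \setminus \{v\}$ is a non-core vertex processed strictly before $v$, so $v$ is the last vertex of $R^+(v)$ in the SLOW-STRIP order.

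For the second claim, I must show that when $w_i$ is deleted in the restricted sequence, its degree in the current restricted graph is less than $k$. The crucial closure lemma is the following: if a hyperedge $e \ni w_i$ is removed in SLOW-STRIP during the processing of some $u^* \in e \setminus \{w_i\}$ before $w_i$ itself is processed, then the edge $w_i \to u^*$ is inserted into $\cald$, so $u^* \in R^+(w_i) \subseteq R^+(v)$; and since $u^*$ is processed earlier than $w_i$, we have $u^* \in \{w_1, \ldots, w_{i-1}\}$. Reading this biconditionally, a hyperedge $e \ni w_i$ survives into the restricted graph at step $i$ exactly when it survives in the SLOW-STRIP graph at the moment $w_i$ is popped from $\msq$. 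Therefore the degree of $w_i$ in the restricted graph at step $i$ equals its SLOW-STRIP degree when popped, which is $< k$ by the definition of $\msq$.

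The whole argument hinges on being careful with edge directions in $\cald$: outgoing edges from $v$ lead to vertices processed \emph{earlier}, so $R^+(v)$ is really a set of ``ancestors'' in the processing order. Once this is kept straight, the closure lemma is the one non-trivial step, since it guarantees that restricting the process to $R^+(v)$ does not overlook any hyperedge-removal responsible for lowering some $w_i$'s degree below $k$.
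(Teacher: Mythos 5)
Your proof is correct. The paper does not actually prove this Observation---it is stated without proof in Appendix~C, and in Section~3 is dismissed with ``It is not hard to see that $R^+(v)$ forms a stripping sequence ending with $v$''---so your argument supplies exactly the missing details, and the closure lemma (that any hyperedge $e\ni w_i$ removed by SLOW-STRIP before $w_i$ is popped must have been removed while processing some $u^*\in R^+(w_i)\subseteq R^+(v)$, forcing $u^*\in\{w_1,\ldots,w_{i-1}\}$) is the right key step. One cosmetic imprecision: ``$u$ is never processed, which happens only for $k$-core vertices'' is not quite literal, since a non-core vertex can reach degree zero and be deleted from the graph without ever being popped from $\msq$; this does not affect your chase, though, because every vertex of $R^+(v)\setminus\{v\}$ has an incoming $\cald$-edge and hence genuinely was at the front of $\msq$ at some point, which is all the argument actually uses.
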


We will upper bound the depth of $v$ by upperbounding $|R^+(v)|$.
The key observation that allows us to do this is that we can expose the edges outside of the $k$-core as follows. Define
\[\imax \mbox{ to be the number of iterations carried out by the parallel stripping process.}\]

For each $1\leq i\leq \imax$, we are interested in the subhypergraph induced by  $S_i$ and (for $i<\imax$) in  the bipartite  subhypergraph induced by $(S_i,S_{i+1})$. It is possible that some hyperedges of the former are contained in hyperedges of the latter; in order to avoid dependency issues, we remove such edges from the former.  To be specific:

\begin{definition}

\begin{enumerate}
\item[(a)] The vertices of $\cals_i$ are $S_i$. For any hyperedge $e$ that includes at least two vertices of $S_i$, and no vertices of $S_{i+1}$, $e\cap S_i$ is a hyperedge of $\cals_i$.
\item[(b)] The vertices of $\cals_{i,i+1}$ are $S_i\cup S_{i+1}$, and for any hyperedge $e$ that includes at least one vertex of $S_i$ and  at least one vertex of $S_{i+1}$, $e\cap (S_i\cup S_{i+1})$ is a hyperedge of $\cals_{i,i+1}$.
\end{enumerate}
\end{definition}

For each $v\in S_i$, we need to keep track {of the number of the hyperedges containing $v$  that intersect $\cals_i,\cals_{i,i+1}$ in various ways.} So for every $a,b$, and any $v\in S_i$ we define $\la_{a,b}(v)$ to be the number of hyperedges $e$ of $\widehat{H}_i$  (the hypergraph remaining at the beginning of round $i$ of the parallel stripping process) that contain $v$ such that: $e$ contains $a$ vertices of $S_i$ and $b$ vertices of $S_{i+1}$; such a hyperedge has {\em Type $(a,b)$}.   Thus $v$ lies in $\la_{a,0}(v)$ $a$-edges in $\cals_i$ for $2\leq a\leq \dd$, and $\la_{a,b}(v)$ $(a,b)$-edges in $\cals_{i,i+1}$ for $1\leq a,b \leq \dd$ .
\medskip


{\bf EXPOSURE:}
\begin{enumerate}
\item Expose $\imax$.
\item Expose the vertices in $S_1,...,S_{\imax}$.
\item For each $0\leq i\leq \imax$ and each vertex $v\in S_i$, expose $\la_{a,b}(v)$, $1\leq a\leq r, 0\leq b\leq r-a$.
\end{enumerate}

Of course, this also exposes the vertices of the $k$-core $\calc_k=\calc_k(H)$.

\vspace{2ex}

{\bf EDGE-SELECTION:}
\begin{enumerate}
\item For each $1\leq i\leq \imax$, we expose $\cals_i$ by selecting a uniformly random hypergraph with vertex set $S_i$ such that each vertex $v$ lies in exactly $\la_{a,0}$ $a$-edges, $a=2,...,\dd$.
\item For each $1\leq i\leq \imax-1$, we expose $\cals_{i,i+1}$ by selecting a uniformly random bipartite hypergraph with bipartition $(S_i,S_{i+1})$ such that:
\begin{enumerate}
\item every $v\in S_i$ lies in exactly $\la_{a,b}$ $(a,b)$-edges, $a=1,...,\dd-1,b=1,...,\dd-a$;
\item every $u\in S_{i+1}$ lies in at least one hyperedge.
\end{enumerate}
\item For each $1\leq i\leq \imax$ and each $v\in S_i$, we choose the remaining vertices in all hyperedges containing $v$ uniformly from all vertices not removed during iterations $1,...,i+1$ of the parallel stripping process.  {I.e., for each $a+b<r$ and each} Type $(a,b)$ hyperedge $e$ containing $v$, we select $\dd-a-b$ uniformly random vertices from $\calc_k\cup_{j=i+2}^{\imax}S_i$ and add them to $e$.
\end{enumerate}

\begin{claim} The hyperedges selected in this manner are selected with the correct distribution.
\end{claim}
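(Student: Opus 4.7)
The proof is a symmetry argument exploiting that $\calh_r(n,p)$ is a product measure. First I would observe that EXPOSURE is a deterministic function of $H$: the levels $S_i$ and the counts $\la_{a,b}(v)$ are computed from $H$ via the parallel stripping process. Any two hypergraphs yielding the same EXPOSURE outcome have the same number of hyperedges, hence the same probability $p^m(1-p)^{\binom{n}{r}-m}$ under the product measure. Therefore, conditional on EXPOSURE, $H$ is \emph{uniformly} distributed over the set of hypergraphs consistent with the exposed data, and the claim reduces to checking that EDGE-SELECTION samples from this uniform distribution.

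Next I would partition the non-core edges of $H$ by \emph{home level} and \emph{type}: an edge $e$ has home level $i$ iff $i$ is the smallest index with $e \cap S_i \neq \emptyset$, and then has type $(a,b)=(|e \cap S_i|, |e \cap S_{i+1}|)$. The EXPOSURE data fixes, for each $v \in S_i$ and each $(a,b)$, the number $\la_{a,b}(v)$ of type-$(a,b)$ edges incident to $v$. The uniform distribution over consistent hypergraphs factorizes across home levels, because the degree constraints at level $i$ involve only vertices of $S_i$; and within a given level, the types $(a,b)$ decouple into independent sub-hypergraphs.

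Finally, I would verify the three sub-steps of EDGE-SELECTION. Step 1 samples the $S_i$-trace of type-$(a,0)$ edges ($a \geq 2$) as a uniform hypergraph on $S_i$ with the prescribed degrees, which is the correct marginal. Step 2 does the analogous thing for traces on $S_i \cup S_{i+1}$ of edges with $b \geq 1$; the side constraint that every $u \in S_{i+1}$ has trace-degree at least one is forced by the definition of $S_{i+1}$, since $u$ enters $S_{i+1}$ precisely when its degree drops after the $S_i$-edges are stripped. Step 3 selects the remaining ``tail'' vertices of each partial edge uniformly from $\calc_k \cup \bigcup_{j \geq i+2} S_j$, and also creates the type-$(1,0)$ edges from scratch. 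The main obstacle is justifying step 3: one must argue that, conditional on EXPOSURE and the traces, the tails are uniform and independent across edges. This reduces to two facts: (i) an edge's home level and type depend only on its trace on $S_i \cup S_{i+1}$, so EXPOSURE places no constraint on the identity of the tail vertices within $\calc_k \cup \bigcup_{j \geq i+2} S_j$; and (ii) the membership of a vertex $u$ in $S_j$ for $j \geq i+2$, or in $\calc_k$, is determined only by edges of home level $\geq j$, since all edges of lower home level have been stripped away by iteration $j-1$. Thus tails of level-$i$ edges play no role in any later stripping decision, and the proof concludes by permuting tails freely within $\calc_k \cup \bigcup_{j \geq i+2} S_j$ to witness the uniform distribution.
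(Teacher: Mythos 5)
Your proposal follows the same high-level strategy as the paper: both rest on the observation that under the product measure $\calh_r(n,p)$ any two hypergraphs with the same number of hyperedges are equiprobable, so conditional on EXPOSURE the law of $H$ is uniform over the consistent set, and both finish by a symmetry/exchange argument showing EDGE-SELECTION realizes that uniform law. The paper's version of the exchange is terse — it picks two possible outputs $E_1,E_2$ of EDGE-SELECTION, swaps $E_1$ for $E_2$ in a consistent hypergraph $H_1$, and simply asserts that the result $H_2$ is still consistent with EXPOSURE. You try to supply the missing justification by factorizing over home levels and permuting tails, which is exactly the content that assertion hides; that is a reasonable and arguably more honest route.

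However, your supporting fact (ii) is false as stated. Membership of $u$ in $S_j$ (for $j\ge i+2$) is \emph{not} determined only by edges of home level $\ge j$: $u\in S_j$ also requires that $u$ survives rounds $1,\ldots,j-1$, i.e.\ $\deg(u,\widehat H_{j'})\ge k$ for all $j'<j$, and $\deg(u,\widehat H_{j'})$ counts edges of home level $\ge j'$, including those of home level $<j$. So changing a tail of a level-$i$ edge \emph{does} change degrees that the stripping process looks at. The reason the permutation of tails is nonetheless safe is not independence but slackness: a tail $u$ of a level-$i$ edge lies in $\calc_k\cup\bigcup_{j\ge i+2}S_j$, so $u$ survives round $i+1$ and hence $\deg(u,\widehat H_{i+1})\ge k$. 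For every $j'\le i$ we then have $\deg(u,\widehat H_{j'})\ge\deg(u,\widehat H_{i+1})\ge k$ \emph{both before and after} adding or removing a single level-$i$ edge at $u$, while $\deg(u,\widehat H_{j'})$ for $j'\ge i+1$ is untouched (the level-$i$ edge is absent from those graphs). Thus $u$'s placement among the $S_j$'s and $\calc_k$, and all $\la_{a,b}$ counts, are preserved. Replace your fact (ii) with this monotonicity/slackness argument and the decomposition goes through; without that replacement the step would fail.
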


\proofstart  EXPOSURE simply exposes some properties.  Now suppose that EXPOSURE has been completed, and consider two sets of hyperedges $E_1,E_2$ chosen by EDGE-SELECTION. Then $|E_1|=|E_2|$, and both sets are equally likely to be chosen by EDGE-SELECTION.  Let $H_1$ be a hypergraph that is consistent with the choices of EXPOSURE and containing the hyperedges $E_1$.  Let $H_2$ be the graph obtained from $H_1$ by replacing $E_1$ with $E_2$.  Then $H_2$ is also consistent with the choices of EXPOSURE.  Furthermore, $H_1$ and $H_2$ have the same number of edges, and hence are equally likely to be chosen as $\calh_{\dd}(n,p=c/n^{\dd-1})$.  Thus $E_1$ is chosen with the correct distribution.
\proofend

This allows us to bound $\ex(|R^+(v)|)$ as follows.  We first run EXPOSURE. We prove that \aas the sets $S_i$ satisfy certain properties.  For each $u\in S_i$: We use $d_{\cals_i}(u)$ to denote the degree of $u$ in $\cals_i$, i.e. the number of hyperedges in $\cals_i$ containing $u$. We use $d^{(a)}(u)$ to denote the number of $a$-edges in $\cals_i$ containing $u$, and so $d_{\cals_i}(u)=\sum_{a=2}^rd^{(a)}(u)$. We use $d^+(u)$ to denote the number of neighbours that $u$ has in $S_{i+1}$; specifically, the number of vertices $v\in S_{i+1}$ that share an edge of $\cals_{i,i+1}$ with $u$.
Note that these values are determined by the values of $\la_{a,b}(u)$.

We use $\ld_i$ to denote the total degree, in  $\hG_i$, of the vertices of $S_i$. Thus $\ld_i=L_{t(i)}$.

\begin{lemma}\lab{lsi} There exist constants $B,Y_1,Y_2,Z_0,Z_1$, {dependent only on $r,k$,} such that \aas for every $ B\leq i< \imax$ with $\ld_i\ge n^{\d}\log^2 n$:
\begin{enumerate}
\item[(a)] if $\ld_i< n^{1-\d}$ then $(1-Y_1n^{-\d/2})\ld_i\leq \ld_{i+1}\leq  (1-Y_2n^{-\d/2})\ld_i$;
\item[(b)] if $\ld_i\geq n^{1-\d}$ then $(1-Y_1\sqrt{\frac{\ld_i}{n}})\ld_i\leq \ld_{i+1}\leq  (1-Y_2\sqrt{\frac{\ld_i}{n}})\ld_i$;
\item[(c)] $\sum_{j\ge i}\ld_i\le {Z_1}\ld_in^{\d/2}$.
\item[(d)] $\ld_i/(k-1)\le|S_i|\le 4\ld_i$;
\item[(e)] for each $2\leq a \leq r$ and $1\leq t\leq k-1$, the number of vertices $u\in S_i$ with $d^{(a)}(u)=t$ is between $Z_0|S_i|^{(a-1)t+1}/n^{(a-1)t}-\log^2 n$ and $Z_1|S_i|^{(a-1)t+1}/n^{(a-1)t}+\log^2 n$.
\item[(f)] $\sum_{u\in S_{i}} d^+(u) < |S_{i+1}|+Z_1\frac{|S_{i+1}|^2}{n}+\log^2n$.
\end{enumerate}
\end{lemma}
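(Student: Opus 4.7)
The plan is to split the lemma into its static structural content---parts (d), (e), (f)---and its dynamical content---parts (a), (b)---from which (c) will follow by geometric summation.

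For (d), every vertex of $S_i$ has degree between $1$ and $k-1$ in $\hG_i$ once $i$ exceeds the burn-in constant $B$ (absorbing initial isolated vertices and other early-round degeneracies), so $|S_i|\le \ld_i\le (k-1)|S_i|$. For (e) and (f), the key observation is that conditional on the partition $S_1,\dots,S_{\imax},\calc_k$ and the type counts $\la_{a,b}(\cdot)$ exposed by EXPOSURE, the internal edges of $\cals_i$ and of $\cals_{i,i+1}$ are distributed as uniform configurations with the prescribed incidences, as verified in the claim following EDGE-SELECTION. A direct occupancy computation then shows that the expected number of $u\in S_i$ with $d^{(a)}(u)=t$ is of order $|S_i|^{(a-1)t+1}/n^{(a-1)t}$, because each of the $t$ hyperedges contributing to $d^{(a)}(u)$ requires $a-1$ out of $r-1$ other endpoints to land in $S_i$, an event of probability $\Theta((|S_i|/n)^{a-1})$. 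Likewise, $\sum_{u\in S_i} d^+(u)$ has expectation $|S_{i+1}|+O(|S_{i+1}|^2/n)$, since the only excess over $|S_{i+1}|$ comes from edges meeting $S_{i+1}$ in at least two vertices, whose expected number is $O(|S_{i+1}|^2/n)$. In both cases, Azuma--Hoeffding or Talagrand concentration on the configuration gives deviations $O(\log n)$, and the $\log^2 n$ slack in the statement absorbs a union bound over the at most $O(n^{\d/2}\log n)$ relevant rounds guaranteed by Theorem~\ref{mt}.

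For (a) and (b) I would reap the fruits of Lemma~\ref{llt1}. A parallel round $i$ corresponds to exactly $E_i$ consecutive SLOW-STRIP iterations, where $E_i$ is the number of hyperedges of $\hG_i$ touching $S_i$ and satisfies $\ld_i/r\le E_i\le \ld_i$. Summing the per-step inequality $\ex(L_{t+1}\mid G_t)\le L_t-Kn^{-\d/2}$ across the round gives $\ex(\ld_{i+1}\mid \hG_i)\le \ld_i-(K/r)\ld_i\, n^{-\d/2}$, which is the upper bound in (a) with $Y_2:=K/r$. In the regime $\ld_i\ge n^{1-\d}$ of (b), the same argument applies with the stronger per-step drift $\ex(L_{t+1}-L_t\mid G_t)=-\Theta(\sqrt{\ld_i/n})$ that is already implicit in the proof of Lemma~\ref{llt1} when the remaining graph is still a linear distance above the $k$-core. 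The matching \emph{lower} bounds on $\ld_{i+1}$ come from a two-sided version of the per-step drift, $|\ex(L_{t+1}-L_t\mid G_t)|=O(n^{-\d/2})$ (respectively $O(\sqrt{\ld_i/n})$), obtained by symmetric occupancy bookkeeping on the deleted edge. Concentration of $\ld_{i+1}$ around its conditional expectation then follows from Azuma--Hoeffding applied to the filtration of SLOW-STRIP steps inside the round: each step changes $L$ by $O(1)$, so the deviation is $O(\sqrt{E_i}\log n)$, which the hypothesis $\ld_i\ge n^{\d}\log^2 n$ forces to be $o(\ld_i n^{-\d/2})$, i.e.\ negligible compared to the drift. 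A union bound over the $O(n^{\d/2}\log n)$ rounds completes (a) and (b).

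Part (c) drops out by geometric summation: iterating (a), $\sum_{j\ge i}\ld_j\le \ld_i\sum_{s\ge 0}(1-Y_2 n^{-\d/2})^s\le Z_1\ld_i n^{\d/2}$ with $Z_1:=1/Y_2$, and the contribution from the (b)-regime is swallowed by the same sum since the drift there is only larger. The main obstacle will be the two-sided drift estimate for (a) and (b): Lemma~\ref{llt1} as stated only controls the negative side, and reviving the positive side requires carefully tracking how many vertices of $S_{i+1}$ are freshly produced at each SLOW-STRIP step---essentially the same computation as in Lemma~\ref{llt1}, but one must ensure that the error term does not swallow the drift. This is what forces the cutoff $\ld_i\ge n^{\d}\log^2 n$ in the hypothesis: below this threshold, stochastic fluctuations become comparable to the drift, and the lemma (correctly) says nothing.
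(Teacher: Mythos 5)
Your treatment of (d), (e), (f) is in the spirit of the paper's configuration-model argument, although one detail of (d) is off: a vertex of $S_i$ may have degree $0$ in $\hG_i$, so $|S_i|\le\ld_i$ is false. The paper obtains the upper bound $|S_i|\le 4\ld_i$ \emph{from} part (f) (if the total degree were too small, the number of edges from $S_i$ into $S_{i-1}$ would have to exceed what (f) allows), not by pointing to minimum degree $1$.

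The real gap is in (a) and (b). You treat the two-sided drift estimate $\ex(L_{t+1}-L_t\mid G_t)=-\Theta\bigl(\max(n^{-\d/2},\sqrt{L_t/n})\bigr)$ as something that falls out of a local ``symmetric occupancy bookkeeping,'' citing Lemma~\ref{llt1} as already containing it implicitly. It does not. Lemma~\ref{llt1} gives only the one-sided bound $-Kn^{-\d/2}$, with a constant that does not involve $L_t$ at all; nothing in that proof produces $\sqrt{L_t/n}$. The dependence of the drift on $L_t$ is a global, not local, phenomenon. In the paper it emerges from a genuine bootstrap: the drift is governed by $\br(G_t)$, which by Lemma~\ref{lem:pi-br} satisfies $\br(G_t)=-\Theta(\pi(G_t)/n)$ where $\pi(G_t)$ is (roughly) the distance to the $k$-core; but $\pi(G_t)$ itself is controlled by the stopping time $\tau-t$, which is in turn of order $L_t/|\br(G_t)|$ because the drift is $\br(G_t)$. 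Substituting back gives the self-consistent relation $|\br(G_t)|^2\asymp L_t/n + n^{-\d}$, i.e.\ $|\br(G_t)|\asymp\max(\sqrt{L_t/n},n^{-\d/2})$ --- this is Lemma~\ref{lem:l-br}, which also requires the refined control of $\zeta_t$ in Lemma~\ref{l:zetaseq}. Without this self-consistency argument (or a replacement for it), neither the lower bounds in (a)--(b) nor the $\sqrt{\ld_i/n}$ rate in (b) is available, and the proposal does not close the gap that it itself flags as ``the main obstacle.''
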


The proof is deferred until Section~\ref{srec}.

{\em Remark:}  For (f), note that  2(b) in EDGE-SELECTION implies that the sum is at least $|S_{i+1}|$; so this condition says  the sum is not much more than that.

Next we pick
any $0\leq i\leq \imax$ and any $v\in S_i$.  We set $R_i:=\{v\}$ and for each $j<i$:
\begin{enumerate}
\item[(a)] We set $R'_j$ to be the set of all vertices $v\in S_j$ that are adjacent to $\cup_{\ell=i}^{j+1} R_{\ell}$.
\item[(b)] We set $R_j$ to be the union of the vertex sets of all components of $\cals_j$ that contain vertices of $R'_j$.
\end{enumerate}
We note that
$R^+(v)\subseteq \cup_{\ell=i}^{1} R_{\ell}$. (Each $R_{\ell}$ may contain some vertices that are not in $R^+(v)$ due to the directions of some of the edges in $\cald$.)

\begin{lemma}\lab{lrec} If Lemma~\ref{lsi}(a,b,c,d,e,f) hold then there are constants $B=B(r,k),Z_2=Z_2(r,k)>0$ such that for all $j\geq B$. { If $\ld_i\geq n^{\d}\log^2 n$ then: }

\[\ex(|R_j|{\mid} R_i,...,R_{j-1})\leq |R_{j+1}|+Z_2\frac{|S_j|}{n}\sum_{\ell=i}^{j+1}|R_{\ell}| +\log^2n.\]
\end{lemma}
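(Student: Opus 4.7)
The plan is to expose $R_j$ in two stages: first determine $R'_j \subseteq S_j$, the vertices adjacent in $\widehat{H}_j$ to $\bigcup_{\ell>j} R_\ell$, and then take the union of all components of $\cals_j$ meeting $R'_j$ to obtain $R_j$. Conditional on $R_i,\ldots,R_{j-1}$, the only edges that have been exposed in EDGE-SELECTION are those used to determine those sets, namely the bipartite hypergraphs $\cals_{\ell,\ell+1}$ and the Step~3 random slots attached to $S_\ell$ for $\ell\geq j+1$. Hence the edges in $\cals_j$, the edges in $\cals_{j,j+1}$, and the Step~3 slots attached to $S_j$ are all still distributed as in EDGE-SELECTION, independently of the conditioning.

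For the $\cals_{j,j+1}$ contribution to $R'_j$: by Lemma~\ref{lsi}(f) the total $S_j$-side degree of $\cals_{j,j+1}$ is at most $|S_{j+1}|(1+Z_1|S_{j+1}|/n)+\log^2 n$. Realising $\cals_{j,j+1}$ via a configuration model and using exchangeability of the half-edges on the $S_{j+1}$ side, the expected number of $S_j$-neighbours of a fixed $R_{j+1}\subseteq S_{j+1}$ is at most $|R_{j+1}|\bigl(1+O(|S_{j+1}|/n)\bigr)+O(\log^2 n)$; since Lemma~\ref{lsi}(d) combined with Lemma~\ref{lsi}(a,b) gives $|S_{j+1}|=O(|S_j|)$, this is $|R_{j+1}|+O(|S_j|/n)\cdot|R_{j+1}|+O(\log^2 n)$. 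For the Step~3 contribution from $S_j$: the total number of random slots, summed over hyperedges incident to $S_j$, is at most $r\ld_j=O(|S_j|)$ by Lemma~\ref{lsi}(d), and each slot is uniform on a set of size $(1+o(1))\alpha n$, so it lands in $\bigcup_{\ell\geq j+2}R_\ell$ with probability $O\bigl(\sum_{\ell\geq j+2}|R_\ell|/n\bigr)$; linearity of expectation yields a total contribution of $O(|S_j|/n)\cdot\sum_{\ell\geq j+2}|R_\ell|$. Combining these two estimates gives $\ex(|R'_j|\mid R_i,\ldots,R_{j-1}) \leq |R_{j+1}| + Z_3(|S_j|/n)\sum_{\ell=j+1}^{i}|R_\ell| + O(\log^2 n)$ for some constant $Z_3$.

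Finally I would bound the expansion $|R_j|-|R'_j|$ produced by the $\cals_j$-components meeting $R'_j$. Lemma~\ref{lsi}(e) shows that $\cals_j$ is extremely sparse: for each $a\geq 2$ and $t\geq 1$ the number of vertices with $d^{(a)}(u)=t$ is $O(|S_j|^{(a-1)t+1}/n^{(a-1)t})+\log^2 n$, so a uniformly random $S_j$-vertex has expected $\cals_j$-degree $O(|S_j|/n)$. A standard branching-process / first-moment argument then bounds the expected total size of $\cals_j$-components hitting $R'_j$ by $|R'_j|\bigl(1+O(|S_j|/n)\bigr)+O(\log^2 n)$, where the $\log^2 n$ absorbs a handful of atypically large components guaranteed by the $\log^2 n$ slack in Lemma~\ref{lsi}(e). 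Combining with the previous display and absorbing everything into a single constant $Z_2$ gives the claimed recursion.

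The main obstacle is the configuration-model estimate for the $\cals_{j,j+1}$ contribution: I need to show that the at-least-one-edge constraint on the $S_{j+1}$ side does not materially distort the leading-order count of $S_j$-neighbours of $R_{j+1}$. I would handle this by comparing to the unconstrained configuration model, where the $S_{j+1}$-side half-edges are exchangeable and the estimate is immediate, and then showing that the at-least-one constraint is satisfied with probability bounded away from zero, so conditioning on it distorts expectations by at most a constant factor, easily absorbed into $Z_2$. A minor nuisance is that Lemma~\ref{lsi}(f) gives only an aggregate bound on $S_j$-side degrees, so the concentration of $\sum_{w\in R_{j+1}}d^-(w)$ around its mean needs the exchangeability picture rather than a per-vertex estimate, but both issues are handled in the same step.
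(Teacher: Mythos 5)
Your overall decomposition matches the paper's: bound the contribution to $R'_j$ from the bipartite graph $\cals_{j,j+1}$ and from the Step~3 slots, then account for expansion via $\cals_j$-components. That structure is sound, and the observation that the relevant randomness is independent of $R_i,\ldots,R_{j+1}$ is also correct.

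However, the ``main obstacle'' paragraph — which you correctly identify as the crux — proposes a fix that does not work. You want to compute in the \emph{unconstrained} configuration model and then argue that conditioning on the at-least-one-edge constraint ``is satisfied with probability bounded away from zero, so conditioning on it distorts expectations by at most a constant factor.'' This is false. By Lemma~\ref{lsi}(f), $\sum_{u\in S_j}d^+(u)\approx |S_{j+1}|$, so in the unconstrained model we are essentially throwing $|S_{j+1}|$ half-edges into $|S_{j+1}|$ bins; the probability every bin is hit is of order $|S_{j+1}|!/|S_{j+1}|^{|S_{j+1}|}= e^{-\Theta(|S_{j+1}|)}$, super-exponentially small (with $|S_{j+1}|\geq \Omega(n^{\delta}\log^2 n)$). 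And even if the conditioning probability \emph{were} a constant $p_0\in(0,1)$, the resulting estimate would be $\ex|R_j|\leq p_0^{-1}|R_{j+1}|+\cdots$ with a coefficient strictly greater than $1$ on the leading term $|R_{j+1}|$, and the recursion of Lemma~\ref{lrec3} would then blow up as $(1/p_0)^{\Theta(n^{\delta/2}\log n)}$ over the levels, destroying the $n^{O(\delta)}$ bound. The paper sidesteps this entirely: it never leaves the constrained model, but instead observes that the uniform distribution on bipartite hypergraphs with given $S_j$-side degrees and the $\geq 1$ constraint on $S_{j+1}$ is invariant under relabeling the vertices of $S_{j+1}$. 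This vertex-permutation symmetry immediately gives $\pr(w\in N(u))=d^+(u)/|S_{j+1}|$ for each $w\in S_{j+1}$, which is exactly the identity you need — no comparison to an unconstrained model, and no distortion factor. (Your first paragraph gestures at exchangeability, but of ``half-edges,'' which is the unconstrained picture; the symmetry that actually holds is of \emph{vertices} in the constrained model.) A secondary but non-trivial point: the ``standard branching-process'' bound you invoke for $\cals_j$-component sizes is the content of the paper's equation~(\ref{erec}), whose proof occupies a substantial part of Section~\ref{srec} and involves a careful case analysis using Lemma~\ref{lsi}(e); it is not a one-liner for general $k$, though for $k=2$ it is trivial because every $\cals_j$-component has at most $r$ vertices.
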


\proofstart We take $B=B(r,k)$ to be at least as big as required for Lemma~\ref{lsi}. Suppose that we have exposed levels $R_i,...,R_{j+1}$ of $R^+(v)$ and now we are about to expose $R_j$.
{ Note that Lemma~\ref{lsi}(a,b,d) implies:
\[|S_j|/5k<|S_{j+1}|<5k|S_j|.\]}

Consider any vertex $u\in R'_j$.  We generate $\cals_i$ using the Configuration Model (see Section~\ref{scm}) and expose $C_j(u)$, the component of $u$ in $\cals_j$ using a graph search.  In Section~\ref{srec},  we will use Lemma~\ref{lsi}(e), and a standard branching process analysis to prove:

\begin{equation}\label{erec}
\ex(|C_j(u)|)\leq 1+2r\deg_{\cals_j}(u).
\end{equation}

{\em Remark:}  Note that when $k=2$ (the only case that is relevant to the clustering results in this paper), every vertex lies in at most one edge of $\cals_j$ and so trivially $|C_j(u)|\leq r$. Thus~(\ref{erec}) is not required for this case, and the remainder of the proof of this lemma can be simplified.

Thus, to bound $\ex(|R_j|)=\ex(\sum_{u\in R'_j}|C_j(u)|)$, we will focus on
$\ex\left(\sum_{u\in R'_j}1+2r\deg_{\cals_j}(u)\right)$.

Let $X_{a,t}$ denote the number of vertices $u\in S_j$ with $d^{(a)}(u)=t$.

 For all $u\in\cals_j$, we have $d_{\cals_j}(u),d^+(u)\leq k-1$, by the definition of $\cals_j$.
Since $|\ld_i|\geq n^{\d}\log^2 n$, Lemma~\ref{lsi}(e)   implies
\[
\sum_{u\in S_{j}}  d_{\cals_j}(u)=\sum_{u\in S_{j}} \sum_{a=2}^r d^{a}(u)
=\sum_{a=2}^r\sum_{t=1}^{k-1} tX_{a,t}<rkZ_1\frac{|S_{j}|^2}{n}<25rk^3Z_1\frac{|S_{j+1}|^2}{n}.
\]

This and Lemma~\ref{lsi}(f) imply:

\begin{eqnarray*}
\sum_{u\in S_{j}} d^+(u) (2rd_{\cals_j}(u)+1)&\leq &
\sum_{u\in S_{j}} d^+(u) + 2r\sum_{u\in S_{j}} d^+(u) d_{\cals_j}(u)\\
&\leq &|S_{j+1}|+Z_1\frac{|S_{j+1}|^2}{n} +\log^2n + 2r(k-1) \sum_{u\in S_{j}}  d_{\cals_j}(u)\\
&\leq& |S_{j+1}|+50r^2k^4Z_1\frac{|S_{j+1}|^2}{n} +\log^2n.
\end{eqnarray*}

Next we bound the sum of the $\cals_j$-degrees of the vertices that are adjacent to $\cup_{\ell=i}^{j+1} R_{\ell}$. Consider any $u\in \cals_j$. Let $\calr$ be a bipartite hypergraph on $(\cals_j,\cals_{j+1})$ selected during Step 2 of EDGE-SELECTION, and let $\calr'$ be obtained by permuting the vertices of $\cals_{j+1}$ on $\calr$.  Note that $\calr'$ is just as likely to be chosen in Step 2 as $\calr$ is, and so carrying out a uniformly random such permutation does not affect the distribution of the bipartite graph. It follows that each of the vertices of $S_{j+1}$ is equally likely to be one of the $d^+(u)$ neighbours of $u$ in $\cals_{j+1}$. Therefore, using $|R_{j+1}|\leq|S_{j+1}|$,

\bean
\ex\left(\sum_{u\in N(R_{j+1})\cap\cals_j}(2rd_{\cals_j}(u)+1)\right)
&=& \frac{|R_{j+1}|}{|S_{j+1}|}\sum_{u\in S_{j}} d^+(u) (2rd_{\cals_j}(u)+1)\\
&\leq& |R_{j+1}|\left(1+50r^2k^4Z_1\frac{|S_{j+1}|}{n}\right) +\log^2n.
\eean

Consider any $u\in S_j$. The sum over all $a,b$ of the number of Type $(a,b)$ hyperedges containing $u$ is at most $k-1$.  For each such hyperedge, $u$ selects {$r-a-b\leq r-1$} vertices from $\calc_k\cup_{\ell=\imax}^{j+2} R_{\ell}$, during Step 3 of EXPOSURE.
Therefore, noting $|S_j|<n$, we have:
\begin{eqnarray*}
\ex\left(\sum_{u\in N(\cup_{\ell=i}^{j+2} R_{\ell})\cap\cals_j}(2rd_{\cals_j}(u)+1)\right)
&\leq&\left(\sum_{u\in \cals_j}(2rd_{\cals_j}(u)+1)\right)
\times(k-1)(r-1) \frac{|\cup_{\ell=i}^{j+2} R_{\ell}|}{|\calc_k\cup_{\ell=\imax}^{j+2} R_{\ell}|}\\
&<&\left(|S_{j+1}|+50r^2k^4Z_1\frac{|S_{j+1}|^2}{n}\right)\frac{(k-1)(r-1)}{\a n}\sum_{\ell=i}^{j+2}|R_{\ell}|\\
&<&\frac{(50r^3k^5Z_1+1)|S_j|}{\a n}\sum_{\ell=i}^{j+2}|R_{\ell}|,
\end{eqnarray*}
This gives:

\[\ex(\sum_{u\in R'_j}(2r\deg_{\cals_j}(u)+1))
<|R_{j+1}|\left(1+50r^2k^4Z_1\frac{|S_{j+1}|}{n}\right)+\log^2n
+\frac{(50r^3k^5Z_1+1)|S_j|}{\a n}\sum_{\ell=i}^{j+2}|R_{\ell}|.\]

Applying ~(\ref{erec}) yields:
\begin{eqnarray*}
\ex(|R_j|)&\leq& \ex(\sum_{u\in R'_j}|C_j(u)|)\\
&\leq&\ex\left(\sum_{u\in R'_j}1+2r\deg_{\cals_j}(u)\right)\\
&\leq&|R_{j+1}|\left(1+50r^2k^4Z_1\frac{|S_{j+1}|}{n}\right) +\log^2n
+\frac{(50r^3k^5Z_1+1)|S_j|}{\a n}\sum_{\ell=i}^{j+2}|R_{\ell}|\\
&\leq& |R_{j+1}|+Z_2\frac{|S_j|}{n}\sum_{\ell=i}^{j+1}|R_{\ell}| +\log^2n,
\end{eqnarray*}
for some constant $Z_2=Z_2(r,k)$.
\proofend

For $v\in S_i$, we set $R(v)=\cup_{j=i}^B R_j$, where $B=B(r,k)$ is the constant from Lemma~\ref{lsi}.
In Section~\ref{srec} we augment Lemma~\ref{lrec} with a concentration argument and analyze the resultant recursive equation to prove:
\begin{lemma}\lab{lrec2}
There is a constant $X=X(r,k)$ such that for any $1\leq i\leq \imax$ and any $v\in S_i$:
\[\pr(|R(v)|>{n^{X\d}})<\inv{n^2}.\]
\end{lemma}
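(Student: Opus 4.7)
The plan is to iterate Lemma~\ref{lrec} to bound $a_j := \ex|R_j|$ for each $B\le j\le i$, and then apply a concentration argument to lift this expectation bound to a high-probability bound on $|R(v)|=\sum_{j=B}^i|R_j|$. Throughout I condition on the (w.h.p.) event that Lemma~\ref{lsi}(a)--(f) holds. Writing $\epsilon_j:=Z_2|S_j|/n$ and $b_j:=\sum_{\ell=j}^i a_\ell$, Lemma~\ref{lrec} becomes
$$
a_j\le a_{j+1}+\epsilon_j b_{j+1}+\log^2 n,\qquad a_i=1,
$$
and the global identity $\sum_j\epsilon_j=Z_2\sum_j|S_j|/n=O(1)$ holds because $\sum_j|S_j|$ is bounded by the number of non-core vertices.

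The first main task is to solve this recursion. I would telescope to $a_j\le 1+\sum_{k\ge j}\epsilon_k b_{k+1}+L\log^2 n$, swap the order of summation to $1+\sum_\ell a_\ell\bigl(\sum_{j\le k<\ell}\epsilon_k\bigr)+L\log^2 n$, and split the range $[B,i]$ at the threshold $J$ separating regime B ($\ld_j<n^{1-\delta}$, larger $j$) from regime A ($\ld_j\ge n^{1-\delta}$, smaller $j$ near $B$). In regime B, Lemma~\ref{lsi}(a,d) gives $|S_j|/n\le O(n^{-\delta})$ and $\ld$ decays geometrically with ratio at most $1-Y_2n^{-\delta/2}$, so the partial tail $\sum_{k\ge j}\epsilon_k$ within regime B is only $O(n^{-\delta/2})$; combined with the sharp estimate $\sum_{\ell\ge j}\ld_\ell\le Z_1\ld_jn^{\delta/2}$ of Lemma~\ref{lsi}(c) and the inductive hypothesis $a_\ell\le n^{O(\delta)}$, this closes the recursion in regime B and yields $a_j\le \poly(\log n)$ throughout, hence $b_J\le n^\delta\poly(\log n)$. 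In regime A, Lemma~\ref{lsi}(b) gives $\ld_{j+1}\le(1-Y_2\sqrt{\ld_j/n})\ld_j$ so only $O(n^{\delta/2})$ levels lie in it; iterating the recurrence once more from the regime-B boundary value $a_J=n^{O(\delta)}$ then yields $a_B,b_B=n^{O(\delta)}$.

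For concentration I would condition on the outcome of EXPOSURE, so the sets $S_\ell$, the parameters $\la_{a,b}(\cdot)$, and hence all the $\epsilon_j$ are frozen, and regard only the three random mechanisms of EDGE-SELECTION as random. Because $|R(v)|$ is determined by the local structure around the backward BFS tree rooted at $v$, a standard edge-exposure martingale, of the type used in~\cite{amxor}, concentrates $|R(v)|$ around $\ex|R(v)|$ up to an $n^{O(\delta)}$ additive error with probability at least $1-n^{-3}$. Taking $X$ large enough to absorb all $\poly(\log n)$ factors and the concentration slack, and union-bounding over the $O(L)$ levels and the starting vertex, yields $\pr(|R(v)|>n^{X\delta})<n^{-2}$.

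The main obstacle is the recursion itself. Because $\sum_j\epsilon_j$ is only $O(1)$ rather than $o(1)$, a naive iteration $b_j\le(1+Z_2)b_{j+1}+O(L\log^2 n)$ produces a catastrophic $(1+Z_2)^L$ blow-up that is far larger than any $n^{X\delta}$. The key point that saves us is that the \emph{partial} sums $\sum_{k\ge j}\epsilon_k$ within regime B are actually $O(n^{-\delta/2})$, and this, combined with the swapped-summation identity and Lemma~\ref{lsi}(c)'s sharp $\ld$-tail bound, is precisely what allows the recursion to close at a polynomial $n^{O(\delta)}$ level instead of blowing up.
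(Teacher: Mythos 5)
Your proposal inverts the paper's order of concentration and recursion-solving, and both halves have genuine gaps. The paper does \emph{not} bound $\ex|R(v)|$ and then concentrate; it first concentrates \emph{level by level}. In Lemma~\ref{lrconc}, McDiarmid's inequality is applied to $X=|R_j|-|R_{j+1}|$ \emph{conditioned} on the already-exposed levels $R_{j+1},\ldots,R_i$, so that $X$ depends only on the EDGE-SELECTION randomness at levels $j$ and $j+1$ and the Lipschitz constant is only $\log^4 n$. This produces a \emph{deterministic} recursion $|R_j|\le |R_{j+1}|+Z(\ld_j/n)\sum_{\ell\ge j+1}|R_\ell|+n^{\d}$ holding simultaneously for all $j$ with probability $1-n^{-3}$, which Lemma~\ref{lrec3} then solves deterministically. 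Your plan of concentrating $|R(v)|=\sum_j|R_j|$ around $\ex|R(v)|$ by ``a standard edge-exposure martingale'' does not go through: a single altered EDGE-SELECTION choice at level $\ell$ can add a vertex to $R_\ell$, which cascades through $R_{\ell-1},R_{\ell-2},\ldots$ down to level $B$, so the worst-case Lipschitz constant of $|R(v)|$ with respect to a single choice is not polylogarithmic — it can be comparable to $|R(v)|$ itself. McDiarmid/Talagrand needs a worst-case per-choice bound and does not see that cascades are small on average. And Markov alone cannot rescue it, since the target $n^{X\d}$ (with $X=X(r,k)$ independent of $\d$) is far below $n^2$ for small $\d$.

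The recursion-solving step also does not close as written. After the swap you get $a_j\le 1+\sum_{\ell>j}a_\ell E_{j,\ell}+L\log^2 n$ with $E_{j,\ell}=\sum_{j\le k<\ell}\epsilon_k$. Even with the sharp regime-B tail $\sum_{k\ge j}\epsilon_k=O(\ld_j n^{\d/2}/n)=O(n^{-\d/2})$, the quantity multiplying $\max_\ell a_\ell$ in a uniform induction is $\sum_{\ell>j}E_{j,\ell}\le L\cdot O(n^{-\d/2})=O(\log n)$, which is \emph{not} $o(1)$; iterating gains a factor $O(\log n)$ per level, i.e.\ $(\log n)^L$ over $L=\Theta(n^{\d/2}\log n)$ levels, vastly more than $n^{O(\d)}$. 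The paper avoids this by turning the cumulative-sum recursion into the \emph{second-order} linear recursion $t_j=(2+Z\ld_{i-j}/n)\,t_{j-1}-t_{j-2}+n^{\d}$ for $t_j=\sum_{\ell\le j}r_\ell$, factoring it through multipliers $1\le a_j\le 1+D\sqrt{\ld_{i-j}/n}$ (inequality~\eqn{induction}), and then using Lemma~\ref{lsi}(a,b) to show $\sum_j\sqrt{\ld_j/n}=O(\d\log n)$, whence $\prod_j a_j=n^{O(\d)}$. The square root arises from the quadratic characteristic equation of the second-order recursion — the multiplicative growth rate of $\sum_\ell|R_\ell|$ per level is $1+\Theta(\sqrt{\ld_j/n})$, not $1+\Theta(\ld_j/n)$ — and this is invisible to any first-order telescope. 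The global identity $\sum_j\epsilon_j=O(1)$, which your argument leans on, is therefore not the relevant quantity; $\sum_j\sqrt{\epsilon_j}$ is, and controlling it requires exactly the $\ld$-decay structure of Lemma~\ref{lsi}(a,b) that the paper deploys.
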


This yields our bound on the maximum depth as follows:

{\em Proof of Theorem~\ref{mt2}:} Lemma~\ref{ls2} and Theorem~\ref{mt}(a) imply that the maximum depth is at least $\Omega(n^{\d/2})$.
By Lemma~\ref{lrec2}, the expected number of vertices $v$ with $R(v)>n^{X\d}$ is $o(1)$, and so \aas there are no such vertices. Any vertices of $R^+(v)$ that are not contained in $R(v)$, must have been removed during the first $B$ rounds of the parallel stripping process, and must be within distance $B$ of some member of $R(v)$. A simple argument (eg. Propostion 33 and Lemma 34 of \cite{amxor}) bounds the number of vertices within distance $O(1)$ of a set $R(v)$ and yields that \aas for all $v$: $|R^+(v)|\leq \frac{Y}{2}(|R(v)| +\log n)<n^{2X\d}$, for some $Y=Y(B)=Y(r,k)$. This completes the proof since the depth of $v$ is at most $|R^+(v)|$.
\proofend

\section{The configuration model}\lab{scm}
We analyze random hypergraphs with a given degree sequence using the configuration model of  Bollob\'as~\cite{bb}. We are given the degree $d(v)$ of each vertex $v$; note that $\sum d(v)=kE$, where $E$ is the number of hyperedges. We represent each vertex $v$ as a bin containing $d(v)$ vertex-copies.  We take a uniform partition of these $kE$ vertex-copies such that each part has size exactly $\dd$.  We then contract the bins into vertices, and each part of the partition becomes a hyperedge.  Of course, this hypergraph might not be simple, so we require some conditions on our degree sequence.

\begin{definition}\lab{def:niceDeg}
A {degree sequence $\mathcal{S}$} is \emph{nice} if $\sum_v d(v)=\Theta(n)$, $\sum_v d(v)^2=O(n)$
and $d(v)=o(n^{1/24})$ for all $v$.
\end{definition}

The following  propostion is of a standard type; the hypergraph version used here is from~\cite{cc}:

\begin{proposition} If  $\mathcal{S}$ is {a nice degree sequence,} then there exists $\e>0$ such that the probability that
a random $\dd$-uniform hypergraph with degree sequence $\mathcal{S}$ drawn from the configuration model is simple is at least $\e$.
\end{proposition}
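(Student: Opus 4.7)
The plan is to identify the obstructions to simplicity, bound the expected number of each, and then run a method-of-moments / Poisson-convergence argument to show the joint probability that no bad event occurs is bounded below by a positive constant.

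First, let $L$ be the number of hyperedges containing two distinct vertex-copies from the same bin, and let $M$ be the number of unordered pairs of hyperedges whose vertex sets (as multisets of bins) are identical. A configuration produces a simple hypergraph exactly when $L = M = 0$. I would then compute the expectations. For $L$, each unordered pair of copies in a common bin $v$ ends up in a single hyperedge with probability $(r-1)/(rE-1)$, so
\[
\mathbb{E}[L] \;\le\; \sum_v \binom{d(v)}{2}\,\frac{r-1}{rE-1} \;=\; O\!\left(\frac{\sum_v d(v)^2}{n}\right) \;=\; O(1),
\]
using $\sum d(v) = \Theta(n)$ and $\sum d(v)^2 = O(n)$. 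For $M$, a given $r$-set of distinct bins $\{v_1,\dots,v_r\}$ contributes roughly $\prod_i \binom{d(v_i)}{2}$ ordered ways to pick two copies per bin, divided by the number of ways to complete the configuration on $2r$ specified copies; a standard count in the configuration model gives
\[
\mathbb{E}[M] \;=\; O\!\left(\frac{\bigl(\sum_v d(v)(d(v)-1)\bigr)^{r}}{\bigl(\sum_v d(v)\bigr)^{r}}\right) \;=\; O(1).
\]

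With the first moments bounded, I would then compute all joint factorial moments $\mathbb{E}\bigl[(L)_i (M)_j\bigr]$. For each fixed pair $(i,j)$, the dominant contribution comes from configurations in which the $i$ loops and $j$ multi-edge pairs occupy pairwise disjoint sets of bins; these contributions converge to $\lambda_1^i \lambda_2^j$, where $\lambda_1, \lambda_2$ are the limits of $\mathbb{E}[L]$ and $\mathbb{E}[M]$ respectively. By the (joint) method of moments, $(L,M)$ converges in distribution to a pair of independent Poisson random variables with means $\lambda_1, \lambda_2$. In particular,
\[
\mathbb{P}(L=0,\,M=0) \;\longrightarrow\; e^{-\lambda_1 - \lambda_2} \;>\; 0,
\]
so for all sufficiently large $n$ this probability is at least some positive constant $\varepsilon > 0$, proving the proposition.

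The main technical obstacle is the moment computation for $M$, and especially the higher factorial moments. One must show that configurations in which several of the purported multi-edge pairs share bins contribute negligibly compared to the "disjoint" configurations. This is where the strong bound $d(v) = o(n^{1/24})$ from Definition~\ref{def:niceDeg} is needed: it controls the error terms coming from overlapping structures of bounded size, since each shared bin introduces a factor of $O(d(v)^c / n)$ for some $c$ depending on the overlap pattern, and the uniform pointwise bound on $d(v)$ ensures these are $o(1)$. Everything else in the argument is routine: bounding first moments is elementary, and once the joint factorial moments are shown to match those of independent Poissons, the conclusion follows from the standard method-of-moments characterization of Poisson limits.
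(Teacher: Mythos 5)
The paper does not prove this proposition; it cites it to Cooper~\cite{cc}, and your method-of-moments sketch is precisely the standard argument used in that literature (Bollob\'as for graphs, Cooper and others for hypergraphs). Your decomposition into loop counts $L$ and coincident-hyperedge-pair counts $M$ is the right one, a configuration is simple iff $L=M=0$, and the first-moment bounds $\ex[L]=O(1)$, $\ex[M]=O(1)$ follow correctly from $\sum_v d(v)^2=O(n)$ and $\sum_v d(v)=\Theta(n)$. You also correctly identify the joint factorial-moment computation, and the role of the pointwise cap $d(v)=o(n^{1/24})$ in suppressing overlapping substructures, as the technical core that a complete proof would have to supply; that is exactly where the cited reference does its work, so the proposal is an outline rather than a full proof, but the outline is sound.

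One refinement: you write that $\lambda_1,\lambda_2$ are ``the limits'' of $\ex[L]$ and $\ex[M]$, but for a general sequence of nice degree sequences these expectations need not converge --- they are only guaranteed to be $O(1)$. That suffices for the stated conclusion: from any subsequence extract a further subsequence along which $\ex[L]$ and $\ex[M]$ converge to some $\lambda_1,\lambda_2\le C$ (with $C$ a uniform constant coming from the $O(1)$ bounds); the method of moments gives $\pr(L=M=0)\to e^{-\lambda_1-\lambda_2}\ge e^{-2C}$ along it, hence $\liminf_n \pr(L=M=0)\ge e^{-2C}>0$, which is exactly the existence of $\e>0$ claimed in the proposition. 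So the argument should be phrased via boundedness and a subsequence trick rather than asserting that the limits exist.
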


This immediately yields:
\begin{corollary}\lab{ccon}
If  $\mathcal{S}$ is a nice degree sequence then: If property $Q$ holds \whp\ for  hypergraphs with degree sequence $\mathcal{S}$ {drawn from} the configuration model, then $Q$ holds \whp\ for uniformly random simple hypergraphs with degree sequence $\mathcal{S}$.
\end{corollary}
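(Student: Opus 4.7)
The plan is a standard conditioning argument. First I would establish that the configuration model, conditioned on producing a simple hypergraph, yields the uniform distribution on simple $r$-uniform hypergraphs with degree sequence $\mathcal{S}$. This is a well-known symmetry property of the Bollob\'as configuration model: each simple hypergraph $H$ with degree sequence $\mathcal{S}$ is produced by exactly $\prod_v d(v)!$ of the equally likely perfect partitions of the vertex-copies into parts of size $r$ (one ordering per vertex-bin), so conditional on simplicity each simple $H$ arises with the same probability. I would state this briefly and then move to the main step.

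Next, let $\bar Q$ denote the negation of the property $Q$, let $\mathbb{P}_{\mathrm{cm}}$ denote the probability under the configuration model, and let $\mathbb{P}_{\mathrm{un}}$ denote the probability under the uniform simple distribution. By the hypothesis, $\mathbb{P}_{\mathrm{cm}}(\bar Q) = o(1)$. Let $\mathcal{S}\mathrm{imp}$ be the event that the configuration model yields a simple hypergraph. By the preceding Proposition (which applies because $\mathcal{S}$ is nice), there exists a constant $\varepsilon>0$, independent of $n$, with $\mathbb{P}_{\mathrm{cm}}(\mathcal{S}\mathrm{imp})\geq \varepsilon$. Combining these,
\[
\mathbb{P}_{\mathrm{un}}(\bar Q) \;=\; \mathbb{P}_{\mathrm{cm}}(\bar Q\mid \mathcal{S}\mathrm{imp}) \;\leq\; \frac{\mathbb{P}_{\mathrm{cm}}(\bar Q)}{\mathbb{P}_{\mathrm{cm}}(\mathcal{S}\mathrm{imp})} \;\leq\; \frac{o(1)}{\varepsilon} \;=\; o(1),
\]
so $Q$ holds \whp\ under the uniform simple distribution.

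There is no real obstacle here; the only step that requires any care is the identification of the conditional distribution with the uniform distribution on simple hypergraphs, which is the standard justification for the configuration model. The niceness of $\mathcal{S}$ is used only to invoke the preceding Proposition that provides the uniform lower bound $\varepsilon$ on the probability of simplicity; everything else is a one-line application of the definition of conditional probability.
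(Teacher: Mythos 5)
Your argument is correct and is precisely the (standard) reasoning the paper leaves implicit when it writes that the corollary ``immediately'' follows from the proposition: conditioning the configuration model on simplicity gives the uniform distribution on simple hypergraphs, and since the probability of simplicity is bounded below by $\varepsilon>0$ for a nice degree sequence, an event of probability $o(1)$ in the configuration model has probability $o(1)/\varepsilon = o(1)$ after conditioning. Nothing to add.
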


The degree sequences considered in this paper are all easily seen to be nice.  So Corollary~\ref{ccon} allows us to prove our theorems by proving that they hold for the configuration model.

\section{Bounds on $L_i$.}\lab{slt}
In this section, we prove Lemmas~\ref{llt1} and~\ref{llt2}.  So we focus on the case
$c=c_{r,k}+n^{-\d}$.

\subsection{Proof of Lemma~\ref{llt2}}\lab{sec:llt2}

We start by looking more closely at the size of the $k$-core, expanding some of the definitions from Section~\ref{sbsn}. For ease of notation, we drop most of the $\dd,k$ subscripts.

Recall:
\begin{eqnarray*}
f_t(\mu)&=&f_t(\mu)=e^{-\mu}\sum_{i\geq t}\frac{\mu^i}{i!},\quad h(\mu)=\frac{\mu}{f_k(\mu)^{\dd-1}},
\end{eqnarray*}
and
\begin{eqnarray*}
\a=\a_{\dd,k}&=&f_k(\mu_{\dd,k}),\quad \b=\b_{\dd,k}=\frac{1}{\dd}\mu_{\dd,k} f_{k-1}(\mu_{\dd,k}).
\end{eqnarray*}

For any $c$, we define $\mu(c)$ to be the largest solution to
\[c=(\dd-1)! h(\mu).\]
And so, note that $\mu_{\dd,k}=\mu(c_{\dd,k})$, as defined in Section~\ref{sbsn}.
Define
\begin{eqnarray*}
\a(c)&=&f_k(\mu(c)),\quad \b(c)=\frac{1}{r}\mu(c) f_{k-1}(\mu(c)).
\end{eqnarray*}

Theorem 1.7 of \cite{jhk} yields the size of the $k$-core,:

\begin{lemma}\lab{lcoresize2}
For $\dd,k\geq2, (\dd,k)\neq (2,2)$,  consider any $0<\d<\hf$ and $c\geq c_{\dd,k}+ n^{-\delta}$. Then $\aas$ the $k$-core of $\calh_{\dd}(n,p=c/n^{\dd-1})$ has
\begin{itemize}
\item $\a(c) n + O(n^{3/4})$ vertices and
\item $\b(c) n +O(n^{3/4})$ edges.
\end{itemize}
\end{lemma}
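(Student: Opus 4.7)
The proof is essentially a direct invocation of Theorem~1.7 of Kim~\cite{jhk}, which in the supercritical regime $c > c_{\dd,k}$ establishes both the existence of a linear-sized $k$-core in $\calh_{\dd}(n,p=c/n^{\dd-1})$ and its asymptotic size, with fluctuations of order $O(n^{3/4})$. The work of the proof is therefore to (i) check that the parameters $\a(c),\b(c)$ defined here match the ones coming out of Kim's analysis, and (ii) verify that the hypothesis $c \geq c_{\dd,k} + n^{-\d}$ with $\d < \hf$ is in the regime covered by his theorem.

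For the parameter match, the underlying heuristic, made rigorous in \cite{jhk} via the Poisson cloning model, is that after the parallel stripping process stabilizes, the degrees of the surviving vertices look like Poisson$(\mu)$ conditioned on being at least $k$, where $\mu$ is a fixed point of a one-dimensional map. The fraction of vertices that survive is then the unconditioned probability $\Pr[\mathrm{Po}(\mu)\geq k]=f_k(\mu)$, and the expected total degree per vertex is $\mu f_{k-1}(\mu)$, so the edge density is $\mu f_{k-1}(\mu)/\dd$. Eliminating $\mu$ via the relation between $c$ and the fixed point gives $c=(\dd-1)!\,h(\mu)$, and the dominant (largest) branch of the inverse is our $\mu(c)$. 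Substituting into the two density formulas yields exactly $\a(c)=f_k(\mu(c))$ and $\b(c)=\mu(c)f_{k-1}(\mu(c))/\dd$. This identification is routine once the standard parametric formulas for the $k$-core in the hypergraph model are in hand.

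For the error bound and the regime of validity, note that the strongest constraint on $c$ comes from the need for $\mu(c)$ to be well-separated from the saddle of $h$, i.e., from $\mu_{\dd,k}$. Under the hypothesis $c - c_{\dd,k}\geq n^{-\d}$ with $\d<\hf$, a standard Taylor expansion of $h$ around $\mu_{\dd,k}$ (this is the content of the forthcoming Lemma~\ref{l:diff}) shows that $\mu(c)-\mu_{\dd,k}=\Theta(n^{-\d/2})$, which is $\gg n^{-1/4}$; consequently $\a(c),\b(c)$ are bounded away from their critical values by more than the $O(n^{-1/4})$ noise, so Kim's $O(n^{3/4})$ absolute error applies uniformly. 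No further concentration argument is needed beyond Kim's.

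The only real obstacle, were one doing this from scratch rather than quoting, would be Kim's delicate concentration argument: near the saddle of $h$ the stripping dynamics are marginally stable, and controlling the vertex/edge counts to within $O(n^{3/4})$ (rather than the naive $\tilde O(\sqrt n)$ times an explosion factor) requires the Poisson cloning comparison of \cite{jhk}. Since Lemma~\ref{lcoresize2} is substantially coarser than Lemma~\ref{lcoresize} (which asks for the next-order $K_1n^{1-\d/2}$ correction), no additional refinement of Kim's bound is needed here; the statement is extracted verbatim.
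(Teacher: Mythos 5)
Your proposal matches the paper exactly: Lemma~\ref{lcoresize2} is stated in the paper as a direct consequence of Theorem~1.7 of Kim~\cite{jhk}, with no further proof given, and your identification of $\a(c)=f_k(\mu(c))$ and $\b(c)=\mu(c)f_{k-1}(\mu(c))/\dd$ as the relevant branch of the fixed-point map is precisely what the paper intends. The extra discussion you provide (why $\d<\hf$ places us in the regime covered by Kim, and why the $O(n^{3/4})$ error is uniform) is correct elaboration the paper leaves implicit.
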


\begin{lemma}\lab{l:diff} For $\dd,k\geq2, (\dd,k)\neq (2,2)$, there exist positive constants $K_1$, $K_2$ and $K_3$ such that for any $0<\d<\hf$ and $c= c_{\dd,k}+n^{\d}$,
\begin{eqnarray*}
\mu(c)-\mu_{\dd,k}&=&K_1 n^{-\d/2}+O(n^{-\d}),\\
{\a(c)}-\alpha&=&K_2n^{-\d/2}+O(n^{-\d}), \\
{\b(c)}-\beta&=&K_3n^{-\d/2}+O(n^{-\d}).
\end{eqnarray*}
\end{lemma}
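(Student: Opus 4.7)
The plan is to obtain all three expansions from a single Taylor expansion of $h$ around its minimizer $\mu_{r,k}$, then chain rule through $f_k$ and $\mu f_{k-1}$ for $\alpha(c)$ and $\beta(c)$.

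First, observe that $\mu_{r,k}$ is an interior critical point of $h$, so $h'(\mu_{r,k})=0$. A direct computation using $f_k'(\mu)=f_{k-1}(\mu)-f_k(\mu)$ gives
$$h'(\mu) \;=\; \frac{f_k(\mu)-(r-1)\mu\bigl(f_{k-1}(\mu)-f_k(\mu)\bigr)}{f_k(\mu)^{r}},$$
and differentiating a second time at $\mu_{r,k}$ yields a closed-form expression for $h''(\mu_{r,k})$ in terms of $f_{k-2},f_{k-1},f_k$ evaluated at $\mu_{r,k}$. The key structural input I would invoke here is the known fact (underlying the established $k$-core threshold formula~\eqref{krthreshold} and used throughout \cite{psw,mmcore,jhk}) that for every $(r,k)\neq(2,2)$ with $r,k\geq 2$ the function $h$ has a strict, non-degenerate minimum at $\mu_{r,k}$, i.e.\ $h''(\mu_{r,k})>0$. (The case $(2,2)$ is excluded precisely because $h''$ degenerates there, matching the smooth nature of the $2$-core transition in $G_{n,p}$.)

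Given this, I Taylor-expand
$$(r-1)!\,h(\mu) \;=\; c_{r,k} + \tfrac{1}{2}(r-1)!\,h''(\mu_{r,k})(\mu-\mu_{r,k})^2 + O\bigl((\mu-\mu_{r,k})^3\bigr),$$
and set this equal to $c_{r,k}+n^{-\delta}$. Solving this quadratic-in-$(\mu-\mu_{r,k})$ relation and taking the larger root (corresponding to $\mu(c)$ by definition) gives
$$\mu(c)-\mu_{r,k} \;=\; K_1\,n^{-\delta/2} + O(n^{-\delta}), \qquad K_1:=\sqrt{\tfrac{2}{(r-1)!\,h''(\mu_{r,k})}}>0.$$

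For the remaining two estimates, I simply chain-rule. Since $f_k'(\mu_{r,k})=f_{k-1}(\mu_{r,k})-f_k(\mu_{r,k})=e^{-\mu_{r,k}}\mu_{r,k}^{k-1}/(k-1)!>0$, a first-order Taylor expansion of $f_k$ gives
$$\alpha(c)-\alpha \;=\; f_k'(\mu_{r,k})\bigl(\mu(c)-\mu_{r,k}\bigr) + O\bigl((\mu(c)-\mu_{r,k})^2\bigr) \;=\; K_2\,n^{-\delta/2}+O(n^{-\delta}),$$
with $K_2:=K_1 f_k'(\mu_{r,k})>0$. Likewise, writing $g(\mu):=\mu f_{k-1}(\mu)$, the relation $h'(\mu_{r,k})=0$ rewrites as $g(\mu_{r,k})=\tfrac{r}{r-1}f_k(\mu_{r,k})$, and a short calculation shows $g'(\mu_{r,k})>0$ (this is where I would double-check signs carefully). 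Then
$$\beta(c)-\beta \;=\; \tfrac{1}{r}g'(\mu_{r,k})\bigl(\mu(c)-\mu_{r,k}\bigr)+O\bigl((\mu(c)-\mu_{r,k})^2\bigr) \;=\; K_3\,n^{-\delta/2}+O(n^{-\delta}),$$
with $K_3:=\tfrac{1}{r}K_1 g'(\mu_{r,k})>0$.

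The only step that is not completely routine is verifying $h''(\mu_{r,k})>0$ and $g'(\mu_{r,k})>0$ for all admissible $(r,k)$; the former I would argue by quoting the well-known non-degeneracy of the $k$-core threshold (with $(2,2)$ explicitly excluded), and the latter by an elementary manipulation that reduces to positivity of a Poisson tail probability. Everything else is just Taylor expansion and solving a quadratic, with the $O(n^{-\delta})$ error absorbing all cubic and higher terms because $(\mu(c)-\mu_{r,k})^3=O(n^{-3\delta/2})=o(n^{-\delta})$.
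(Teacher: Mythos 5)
Your proposal follows the same route as the paper: Taylor-expand $(r-1)!\,h(\mu)$ around the minimizer $\mu_{r,k}$ (using $h'(\mu_{r,k})=0$ and the non-degeneracy $h''(\mu_{r,k})>0$) to extract $\mu(c)-\mu_{r,k}=K_1 n^{-\delta/2}+O(n^{-\delta})$, then chain-rule through $f_k$ and $\mu\mapsto\mu f_{k-1}(\mu)$ to obtain the expansions for $\alpha(c)$ and $\beta(c)$, exactly as the paper does. One small slip in the $\beta$ step: the identity you write, $g(\mu_{r,k})=\tfrac{r}{r-1}f_k(\mu_{r,k})$, is not what $h'(\mu_{r,k})=0$ gives; the correct consequence is $\mu_{r,k}f_{k-1}(\mu_{r,k})=\tfrac{1+(r-1)\mu_{r,k}}{r-1}f_k(\mu_{r,k})$. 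This is harmless, however, because that identity is not needed at all: $g'(\mu)=f_{k-1}(\mu)+\mu f_{k-1}'(\mu)=f_{k-1}(\mu)+e^{-\mu}\mu^{k-1}/(k-2)!>0$ for every $\mu>0$ directly from $f_{k-1}'=f_{k-2}-f_{k-1}$, so $K_3>0$ follows unconditionally, matching the paper's terser ``similarly'' step.
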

\begin{proof} First we bound $y:=\mu(c)-\mu_{r,k}$. Recall that $\mu(c)$ is the larger root of $h(\mu)=c$ and $\mu_{\dd,k}$ is
the unique root of $h(\mu)=c_{\dd,k}$. As $h(x)$ is convex over $x\in (0,+\infty)$ and has derivative $0$ at $x=\mu_{\dd,k}$.
The Taylor expansion at $\mu=\mu_{\dd,k}$ gives
$$
c=h(\mu_{\dd,k})+\frac{h''(\mu_{\dd,k})}{2}y^2+O(y^3)=c_{\dd,k}+\frac{h''(\mu_{\dd,k})}{2}y^2(1+O(y)).
$$
Thus, $y=\sqrt{2/h''(\mu_{\dd,k})}n^{-\d/2}+O(n^{-\d})=K_1n^{-\d/2}+O(n^{-\d})$, by letting $K_1=\sqrt{2/h''(\mu_{\dd,k})}$.
Next, we bound
$$
f_k(\mu(c))-\alpha_{\dd,k}=f_k(\mu(c))-f_k(\mu_{\dd,k})=f_k'(\mu_{\dd,k})y+O(y^2).
$$
Recall that
$$
f_k(x)=e^{-x}\sum_{i\ge k} x^i/i!.
$$
Thus, $f'_k(x)=e^{-x}x^{k-1}/(k-1)!$. Hence, $f'_k(\mu_{\dd,k})>0$. It follows then that there is a constant $K_2>0$, such that
$$
f_k(\mu(c))-\alpha=K_2n^{-\d/2}+O(n^{-\d}).
$$
Similarly, there is a $K_3>0$ such that
\[
\frac{1}{\dd}\mu(c)f_{k-1}(\mu(c))-\beta=K_3n^{-\d/2}+O(n^{-\d}).\qedhere
\]
\end{proof}

Let $H=\calh_{\dd}(n,p=c/n^{\dd-1})$ and run the SLOW-STRIP algorithm on $H$.  Recall from Definition~\ref{def:t0} that
\[\g=3K_1n^{1-\d/2}.\]
and $t_0$ is the first iteration of SLOW-STRIP in which the number of vertices in the remaining graph is exactly $\a n+\g$. The Second Phase of SLOW-STRIP consists of iterations $t_0$ and greater. Lemma~\ref{lp0} enables us to assume that $\t\geq t_0$, which we often do, and this implies that the algorithm enters the second phase.

Recall that for every $t\ge 0$, $G_t$ is the hypergraph remaining at the beginning of iteration $t$ of SLOW-STRIP. The light vertices in $G_t$ are defined to be the vertices with degree less than $k$ and $L_t$ denotes the total degree of the light vertices in $G_t$. We define:
\[\tau \mbox{ is the iteration in which SLOW-STRIP halts.}\]

The following proposition follows immediately from Lemmas~\ref{lcoresize} and~\ref{lp0} and the fact that each light vertex takes at most $k-1$ steps to be removed in the SLOW-STRIP algorithm.

\begin{proposition}\lab{p:tau}
For $c=c_{r,k}+n^{-\d}$, \aas $t_0+\gamma/3\le\tau\le { t_0+k\gamma}$.
\end{proposition}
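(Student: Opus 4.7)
The plan is to prove the two bounds separately, each as a short charging argument that draws directly on the two lemmas cited in the paragraph preceding the proposition.

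For the lower bound $\tau \ge t_0 + \g/3$, there is nothing to do beyond invoking Lemma~\ref{lp0}. By Definition~\ref{def:t0}, Phase~2 is precisely the set of iterations $\{t_0, t_0+1, \ldots, \tau\}$, which has cardinality $\tau - t_0 + 1$. Lemma~\ref{lp0} states that \aas this phase contains at least $\g/3 = K_1 n^{1-\d/2}$ iterations, which immediately yields $\tau - t_0 + 1 \ge \g/3$ and hence the claimed lower bound.

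For the upper bound $\tau \le t_0 + k\g$, I would charge each iteration in $[t_0,\tau]$ to the vertex currently at the head of $\msq$. By the SLOW-STRIP pseudocode, once a vertex $v$ reaches the head of $\msq$, SLOW-STRIP keeps removing one incident hyperedge per iteration until $v$ has degree $0$, and only then advances to the next vertex. When $v$ first becomes the head, its current degree is at most $k-1$: either $v$ was already in $\msq$ at time $t_0$, in which case its degree was $<k$ by the definition of $\msq$ and has only decreased since; or $v$ was inserted into $\msq$ strictly after $t_0$, which by construction happens exactly when $v$'s degree drops from $k$ to $k-1$. In either case $v$ contributes at most $k-1$ iterations before being removed. (A single hyperedge removal can collaterally bring several vertices of that edge to degree zero at once; such collateral deletions only reduce the count of vertices that ever reach the head of $\msq$, so the bound is not affected.)

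Every vertex ever processed from the head of $\msq$ after $t_0$ lies in $G_{t_0}$ and is ultimately stripped, hence is not in $\calc_k$. Therefore, using Lemma~\ref{lcoresize}(a), \aas
\[|V(G_{t_0})| - |V(\calc_k)| = (\a n + \g) - \bigl(\a n + K_1 n^{1-\d/2} + O(n^{3/4})\bigr) = 2K_1 n^{1-\d/2} + O(n^{3/4}).\]
Combining the per-vertex bound with this count gives
\[\tau - t_0 \;\le\; (k-1)\bigl(2K_1 n^{1-\d/2} + O(n^{3/4})\bigr),\]
which is at most $3k K_1 n^{1-\d/2} = k\g$ for all sufficiently large $n$, since $2(k-1) < 3k$. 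Nothing subtle is anticipated; the only point that requires care is to verify that the head-of-$\msq$ charging actually accounts for every iteration, which is immediate from the SLOW-STRIP pseudocode.
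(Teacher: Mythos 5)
Your proof is correct and uses exactly the argument the paper has in mind: the lower bound is a direct citation of Lemma~\ref{lp0}, and the upper bound follows by charging iterations of SLOW-STRIP to the head vertex (at most $k-1$ per vertex) and counting the non-core vertices present at time $t_0$ via Lemma~\ref{lcoresize}(a). The paper states this as "follows immediately from Lemmas~\ref{lcoresize} and~\ref{lp0} and the fact that each light vertex takes at most $k-1$ steps to be removed"; you have simply spelled out the same charging argument in detail, with the arithmetic $2(k-1)K_1 n^{1-\d/2}(1+o(1)) < 3kK_1 n^{1-\d/2}=k\g$ checked explicitly.
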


 It will be convenient to define
\[\wtau=\min\{\tau,t_0+k\gamma\},\]
and so Proposition~\ref{p:tau} implies that \aas $\t=\wtau$.

The following proposition follows from the definition of $t_0$ and $\wtau$.

\begin{proposition}\lab{p:Gt} For $c=c_{r,k}+n^{-\d}$: Assume $\tau>t_0$.
A.a.s.\ for all $t_0\le t\le \wtau$, the number of vertices in $G_t$ is $\alpha n+O(\gamma)$ and the number of hyperedges in $G_t$ is $\beta n+O(\gamma)$.
\end{proposition}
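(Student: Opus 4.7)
The plan is to leverage three simple facts about SLOW-STRIP together with the $k$-core size estimates from Lemma~\ref{lcoresize}(b) and the length bound from Proposition~\ref{p:tau}. First, by construction $|V(G_{t_0})|=\alpha n+\gamma$. Second, each iteration of SLOW-STRIP removes \emph{exactly} one hyperedge (provided $\msq\neq\emptyset$, which holds through iteration $\tau$) and \emph{at most} one vertex. Third, $G_\tau=\calc_k(H)$, and by Proposition~\ref{p:tau} we have $\wtau-t_0\le k\gamma$ (and in fact $\tau-t_0\le k\gamma$ a.a.s.). The whole proof is then a bookkeeping argument.

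For the vertex bound: since SLOW-STRIP deletes at most one vertex per iteration,
\[
\alpha n+\gamma-k\gamma\;\le\;|V(G_t)|\;\le\;|V(G_{t_0})|\;=\;\alpha n+\gamma
\quad\text{for all }t_0\le t\le \wtau,
\]
so $|V(G_t)|=\alpha n+O(\gamma)$ as required, with no probabilistic input beyond $\wtau-t_0\le k\gamma$.

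For the hyperedge bound, the one-edge-per-iteration property gives
\[
|E(G_t)|=|E(G_{t_0})|-(t-t_0)\quad\text{for all }t_0\le t\le \wtau.
\]
So it suffices to show $|E(G_{t_0})|=\beta n+O(\gamma)$, after which the bound on $|E(G_t)|$ follows from $t-t_0\le k\gamma=O(\gamma)$. To estimate $|E(G_{t_0})|$, apply it at $t=\tau$: we get $|E(G_\tau)|=|E(G_{t_0})|-(\tau-t_0)$, so $|E(G_{t_0})|=|E(\calc_k(H))|+(\tau-t_0)$. By Lemma~\ref{lcoresize}(b), a.a.s.\ $|E(\calc_k(H))|=\beta n+K_2 n^{1-\d/2}+O(n^{3/4})$, and since $\d<\tfrac12$ implies $n^{3/4}=o(n^{1-\d/2})=o(\gamma)$, this is $\beta n+O(\gamma)$. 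Combining with Proposition~\ref{p:tau}'s estimate $\tau-t_0\le k\gamma$ yields $|E(G_{t_0})|=\beta n+O(\gamma)$, and hence the desired bound at every $t_0\le t\le\wtau$.

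There is no genuine obstacle here; the proposition is essentially an immediate corollary of Lemma~\ref{lcoresize}(b) together with the two trivial observations that SLOW-STRIP removes one hyperedge and at most one vertex per step, once we have Proposition~\ref{p:tau} to control the length of Phase~2. The only point requiring a sentence of care is the comparison $n^{3/4}=o(\gamma)$, which uses the hypothesis $\d<\tfrac12$ exactly.
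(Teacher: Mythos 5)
Your proof is correct and is essentially the paper's own argument: both track $|V(G_t)|$ directly from the definition of $t_0$, derive $|E(G_{t_0})|=\beta n + O(\gamma)$ by comparing against $|E(\calc_k(H))|$ from Lemma~\ref{lcoresize}(b) via the a.a.s.\ bound $\tau-t_0=O(\gamma)$, and then propagate both counts across $[t_0,\wtau]$ using $\wtau-t_0\le k\gamma$ together with the one-hyperedge-per-iteration (and at-most-one-vertex-per-iteration) property of SLOW-STRIP. The only surface difference is that the paper phrases the edge bookkeeping as ``at most $k$ hyperedges per deleted vertex'' whereas you phrase it as ``exactly one hyperedge per iteration,'' which is the slightly cleaner formulation.
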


{
\begin{proof} By definition, the number of vertices in $G_{t_0}$ is $\a n+\g$. Each time we delete a vertex, we delete at most $k$ hyperedges, and by Lemma~\ref{lcoresize}, \aas we reach the $k$-core within $O(\g)$ steps and it has $\b n+O(\g)$ hyperedges. So \aas the number of hyperedges in $G_{t_0}$ is $\b n+O(\g)$. We remove one hyperedge and at most one vertex  per step, and by the definition of $\wtau$, we take $O(\g)$ steps.
\end{proof}
}

Since every vertex of $L_t$ is not in the $k$-core, Proposition~\ref{p:Gt} and Lemma~\ref{lcoresize} yield a weaker bound than Lemma~\ref{llt2}:
\begin{equation}\label{elt2w}
\mbox{A.a.s.\ for every $t_0\le t\le \wtau$: }\qquad |L_t|\leq O(\g).
\end{equation}

Assume $\tau>t_0$. For any $t_0\le t\le \tau$,
let $N_t$ and $D_t$ denote the number of heavy vertices and the total degree of heavy vertices in $G_t$ respectively. By Proposition~\ref{p:Gt},
\begin{equation}
N_t=\alpha n+O(\gamma),\ \ D_t=\dd\beta n+O(\gamma). \lab{vertex-edge}
\end{equation}

Recall  the definition of a nice degree sequence from Definition~\ref{def:niceDeg}, and that it is easy to verify that \aas the degree sequence of $G_{t_0}$ is nice.  Given a (nice) degree sequence ${\bf d}$, let $\Gcal_{{\bf d}}$ denote a uniformly random $\dd$-uniform hypergraph with degree sequence ${\bf d}$. Then, conditional on the degree sequence of $G_{t_0}$ being ${\bf d}$, $G_{t_0}$ is distributed as $\Gcal_{{\bf d}}$. As discussed in Section~\ref{scm}, we may generate $\Gcal_{{\bf d}}$ using the configuration model. By Corollary~\ref{ccon}, in order to prove that $G_{t_0}$ satisfies some property $Q$ \aas, it is sufficient to prove that \aas the degree sequence of $G_{t_0}$ is in a set of degree sequences $\ds$ and \aas for every ${\bf d}\in \ds$, the random hypergraph drawn from the configuration model satisfies property $Q$. In what follows, we first characterise $\ds$. We need the following technical lemma for the monotonicity of the function $x f_{k-1}(x)/f_k(x)$.

\begin{lemma}\lab{l:gk}
Let $g_k(x)=xf_{k-1}(x)/f_k(x)$. Then for any $x>0$,
$g'_k(x)>0$.
\end{lemma}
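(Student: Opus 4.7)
The plan is to rewrite $g_k(x)$ probabilistically and then invoke a standard stochastic-monotonicity argument. If $X\sim\mathrm{Poi}(x)$, then $f_t(x)=\pr(X\ge t)$, and reindexing gives
\[xf_{k-1}(x)=e^{-x}\sum_{i\ge k-1}\frac{x^{i+1}}{i!}=\sum_{j\ge k} j\cdot\frac{e^{-x}x^j}{j!}=\ex[X\mathbf{1}(X\ge k)].\]
Hence $g_k(x)=\ex[X\mid X\ge k]$, and the claim $g_k'(x)>0$ becomes the assertion that this conditional mean is strictly increasing in the Poisson parameter $x$.

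The key step is to show strict monotone likelihood ratio (MLR) for the Poisson family restricted to $\{X\ge k\}$. For any $0<x_1<x_2$ and $j\ge k$,
\[\frac{\pr(\mathrm{Poi}(x_2)=j\mid X\ge k)}{\pr(\mathrm{Poi}(x_1)=j\mid X\ge k)}=\left(\frac{x_2}{x_1}\right)^{j}\cdot\frac{f_k(x_1)\,e^{x_1-x_2}}{f_k(x_2)},\]
where the factor $(x_2/x_1)^{j}$ is strictly increasing in $j$ and the remaining factor is independent of $j$. Strict MLR implies strict stochastic domination of the $\mathrm{Poi}(x_2)$-conditional law over the $\mathrm{Poi}(x_1)$-conditional law, and since $X$ is nonnegative integer valued we have $\ex[X\mid X\ge k]=\sum_{t\ge 0}\pr(X>t\mid X\ge k)$, so the domination yields $g_k(x_2)>g_k(x_1)$. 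Differentiability of $g_k$ on $(0,\infty)$ is immediate from $f_k(x)>0$, so we conclude $g_k'(x)>0$.

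If one prefers a direct calculation, the same conclusion can be reached by differentiating with the identities $f'_t(x)=f_{t-1}(x)-f_t(x)=e^{-x}x^{t-1}/(t-1)!$, which via the quotient rule reduces $g_k'(x)>0$ to the inequality
\[xf_{k-2}(x)f_k(x)+f_{k-1}(x)f_k(x)>xf_{k-1}(x)^{2}.\]
Writing $p_j=e^{-x}x^j/j!$ and substituting $f_{k-1}=f_k+p_{k-1}$ and $xf_{k-2}=xf_{k-1}+(k-1)p_{k-1}$, this collapses to $f_k(x)^{2}>p_{k-1}\sum_{i\ge k}(i-k)p_i$, which one can check term by term. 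The main obstacle in this direct route is the algebraic bookkeeping; the probabilistic argument sidesteps it entirely by reducing the statement to a well-known property of the Poisson family, so that is the route I would take.
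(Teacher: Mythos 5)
Your probabilistic reformulation $g_k(x)=\ex[X\mid X\ge k]$ for $X\sim\mathrm{Poi}(x)$ is correct, and the route through MLR and stochastic dominance is genuinely different from the paper's. The paper instead writes $g_k=h/f$ with $f=e^xf_k$, $h=xe^xf_{k-1}$, and establishes $fh'-f'h>0$ by pairing the $(i,j)$ and $(j,i)$ terms in the resulting double sum and invoking $\tfrac{i}{j}+\tfrac{j}{i}>2$. Your identification is the conceptual content behind that computation (the double-sum pairing is a Chebyshev/covariance argument in disguise), and it is a cleaner way to see why the inequality must hold.

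However, there is a genuine gap in the final step. What your MLR argument delivers is that $g_k$ is \emph{strictly increasing}. Combined with differentiability, that only gives $g_k'(x)\ge 0$, not $g_k'(x)>0$ at every point: a strictly increasing smooth function can have vanishing derivative at isolated points (e.g.\ $x\mapsto x^3$ at $0$). The lemma is applied in the paper precisely at specific values such as $\mu_{r,k}$, where the strict inequality $g_k'(\mu_{r,k})>0$ is used to invert $g_k$ with a Lipschitz estimate (Lemma~\ref{l:err}, and again when bounding $\la(\zeta_t)-\mu_{\dd,k}$), so ``strictly increasing'' is not a substitute. The gap is easy to fill in your framework: differentiate under the sum using the score-function identity for the one-parameter exponential family $p(j;x)\propto e^{-x}x^j/j!$ restricted to $\{j\ge k\}$. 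Since $\partial_x\log p(j;x)=j/x+\text{(const in $j$)}$, one gets
\begin{equation*}
g_k'(x)=\mathrm{Cov}\Bigl(Y,\;\partial_x\log p(Y;x)\Bigr)=\frac{1}{x}\,\mathrm{Var}(Y)>0,
\end{equation*}
where $Y$ is the conditioned variable; this is strictly positive because $Y$ is nondegenerate. With that one line the proof is complete. As a minor point, your ``direct calculation'' aside reduces the claim to $f_k(x)^2>p_{k-1}\sum_{i\ge k}(i-k)p_i$ and asserts one can ``check term by term''; that inequality is true but is not a termwise comparison — it requires the log-concavity of the Poisson weights $p_j$ together with an injection pairing $(0,j)$ against $(\ell,j-\ell)$ — so if you intend to include that route you should carry it out.
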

\begin{proof}
Set $f(x)=e^xf_k(x), h(x)=xe^xf_{k-1}(x)$.  So we wish to show that $f'(x)h(x)<f(x)h'(x)$.

\[f(x)=\sum_{i\geq k}\frac{x^i}{i!},\qquad f'(x)=\sum_{i\geq k}\frac{x^{i-1}}{(i-1)!}.\]

\[h(x)=\sum_{i\geq k}\frac{x^i}{(i-1)!},\qquad h'(x)=\sum_{i\geq k}\frac{ix^{i-1}}{(i-1)!}.\]

\[f(x)h'(x)=\sum_{i,j\geq k}\frac{ix^{i+j-1}}{(i-1)!j!}, \qquad
f'(x)h(x)=\sum_{i,j\geq k}\frac{x^{i+j-1}}{(i-1)!(j-1)!}.\]

When $i=j$, the contribution to each sum is the same:  $\frac{x^{2i-1}}{(i-1)!(i-1)!}$. When $i\neq j$, consider the contribution of $(i,j)$ plus the contribution of $(j,i)$.   The sum of these contributions to $f(x)h'(x)$ and $f'(x)h(x)$ is
\[x^{i+j-1}\left(\frac{i}{(i-1)!j!}+\frac{j}{i!(j-1)!}\right), \qquad x^{i+j-1}\frac{2}{(i-1)!(j-1)!}.\]
The contribution to $f(x)h'(x)$ is larger since $\frac{i}{j}+\frac{j}{i}>2$.
\end{proof}

Given positive integers $n$, $m$ and $k\ge 0$ such that $m\ge kn$, define $Multi(n,m,k)$, the {\em truncated multinomial distribution}, to be the probability space consisting of integer vectors ${\bf X}=(X_1,\ldots,X_n)$ with domain ${\mathcal I}_k:=\{{\bf d}=(d_1,\ldots,d_n):\ \sum_{i=1}^n d_i=m,\ d_i\ge k,\ \forall i\in[n]\}$, such that for any ${\bf d}\in {\mathcal I}_k$,
$$
\pr({\bf X}={\bf d})=\frac{m!}{n^m \Psi}\prod_{i\in [n]}\frac{1}{d_i! }=\frac{\prod_{i\in[n]}1/d_i!}{\sum_{{\bf d}\in{\mathcal I}_k}\prod_{i\in[n]}1/d_i!},
$$
where
$$
\Psi=\sum_{{\bf d}\in {\mathcal I}_k}\frac{m!}{n^m}\prod_{i\in [n]}\frac{1}{ d_i!}.
$$

It is well known that the degree distribution of the heavy vertices of $G_t$
is exactly $Multi(N_t,D_t,k)$, conditional on the values of $N_t$ and $D_t$. (See~\cite{CW} for details.) It was proved in~\cite[Lemma 1]{CW} that the truncated multinomial variables can be well approximated by the truncated Poisson random variables. 
The result can be stated as follows.
\begin{proposition}\lab{p:Poisson}
Given integers $k$, $N$ and $D$ with $D> kN$, assume ${\bf X}\sim Multi(N,D,k)$. For any $j\ge k$, let $\rho_j$ denote the proportion of $X$ that equals $j$. Then, for any $\eps>0$, a.a.s.\
\begin{equation}
\rho_j=e^{-\la}\frac{\la^j}{f_k(\la)j!}+O(n^{-1/2+\eps}), \lab{PoissonApprox}
\end{equation}
where $\la$ satisfies {$g_k(\la)=D/N$}.
\end{proposition}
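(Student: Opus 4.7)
The plan is to prove Proposition~\ref{p:Poisson} by coupling the truncated multinomial to a sequence of i.i.d.\ truncated Poissons with parameter $\la$, and then transferring a Chernoff-type concentration bound from the unconditional model to the conditional one via a local central limit theorem.

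Concretely, I would let $Y_1,\ldots,Y_N$ be independent copies of a random variable $Y$ with
\[
\pr(Y = j) = \pi_j := \frac{e^{-\la}\la^j}{f_k(\la)\,j!}\qquad (j\ge k),
\]
i.e.\ a Poisson$(\la)$ conditioned to be at least $k$. Two elementary computations come first. The mean is
\[
\ex[Y] = \frac{1}{f_k(\la)}\sum_{j\ge k} j\cdot\frac{e^{-\la}\la^j}{j!} = \frac{\la f_{k-1}(\la)}{f_k(\la)} = g_k(\la) = D/N,
\]
by the hypothesis on $\la$. And for any ${\bf d}\in{\mathcal I}_k$, the product $\prod_i\pi_{d_i}$ equals $e^{-N\la}\la^D/f_k(\la)^N\cdot\prod_i(1/d_i!)$, which depends on ${\bf d}$ only through $\prod_i(1/d_i!)$. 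Hence the conditional law of $(Y_1,\ldots,Y_N)$ given $\sum_iY_i=D$ is exactly $Multi(N,D,k)$; this is the coupling.

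Now for a fixed $j\ge k$, let $M_j = |\{i:Y_i=j\}|\sim \Bin(N,\pi_j)$. A standard Chernoff bound gives
\[
\pr\bigl(|M_j - N\pi_j| > N^{1/2+\eps}\bigr) \le 2\exp(-2N^{2\eps}),
\]
which is super-polynomially small. To transfer this bound to the conditional distribution I need a lower bound on $\pr(\sum_iY_i = D)$. Since the $Y_i$ are i.i.d., lattice (span $1$), with $\ex[Y]=D/N$, finite positive variance $\sigma^2=\sigma^2(\la,k)>0$, and exponential tails, the classical Gnedenko local limit theorem gives
\[
\pr\Bigl(\sum_{i=1}^N Y_i = D\Bigr) = \frac{1}{\sigma\sqrt{2\pi N}}\bigl(1+o(1)\bigr) = \Theta(N^{-1/2}).
\]
Combining with the trivial inequality $\pr(A\mid \sum Y_i = D)\le \pr(A)/\pr(\sum Y_i=D)$ and the Chernoff estimate, I conclude that a.a.s.\ $|M_j/N-\pi_j|\le N^{-1/2+\eps}$, which is precisely \eqn{PoissonApprox} after dividing by $N$ and absorbing the $\sqrt N$ factor into the $\eps$ slack.

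The main obstacle is the local CLT step: one must verify that $\sigma^2>0$ (which is clear because $Y$ is not concentrated on a single value) and that the usual hypotheses of Gnedenko's theorem (lattice distribution with span $1$, finite variance) apply, or alternatively derive the bound directly via Fourier analysis of the characteristic function of $Y$. Both are routine for truncated Poissons but are the only non-elementary ingredient; the rest of the argument is bookkeeping around the Poisson coupling and binomial concentration.
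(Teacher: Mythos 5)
The paper itself does not prove Proposition~\ref{p:Poisson}; it cites \cite[Lemma~1]{CW} and moves on. Your self-contained argument --- identifying $Multi(N,D,k)$ as i.i.d.\ truncated Poissons conditioned on their sum (correct, since the joint mass $\prod_i \pi_{d_i}$ depends on ${\bf d}$ only through $\prod_i 1/d_i!$ once $\sum d_i$ is fixed), applying a binomial Chernoff bound in the unconditioned model, and transferring to the conditioned model via a lattice local central limit theorem giving $\pr(\sum Y_i=D)=\Theta(N^{-1/2})$ --- is correct and is precisely the standard technique by which such truncated-multinomial-to-truncated-Poisson approximations are proved, so it matches the approach of the cited reference.
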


 By Lemma~\ref{l:gk}, {$g_k(x)=x f_{k-1}(x)/f_k(x)$} is an increasing function on $x>0$. It is easy to show that $\lim_{x\to 0} g_k(x)=k$. Hence, for any $D> kN$, there is a unique $\la$ that satisfies $\la f_{k-1}(\la)/f_k(\la)=D/N$ in the above proposition. {It is also easy to check from the definition of $\mu_{\dd,k}$ above~\eqn{murk} that
\begin{equation}\label{egmrk}
g_k(\mu_{\dd,k})=r\beta/\alpha.
\end{equation}
}
Define
\begin{equation}
p_{\dd,k}(j)=e^{-\mu_{\dd,k}}\frac{\mu_{\dd,k}^j}{f_k(\mu_{\dd,k})j!},\lab{pdk}
\end{equation}
for all integers $j\ge k$, and $p_{\dd,k}(j)=0$ for all $0\le j<k$.

For any $t_0\le t\le\wtau$ and any $j\ge 0$, define
\[\rho_t(j) \mbox{ to be the proportion of vertices with degree $j$ in $G_t$.}\]
Then we have the following lemma.

\begin{lemma}\lab{l:err} For $c=c_{r,k}+n^{-\d}$: Assume $\tau>t_0$.
For all $t_0\le t\le \wtau$ and any {fixed} integer $j\ge 0$, a.a.s.\
$ \rho_t(j)-p_{\dd,k}(j)=O(\gamma/n)$.
\end{lemma}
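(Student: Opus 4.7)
The plan is to apply Proposition~\ref{p:Poisson} just once, at iteration $t=t_0$, and then extend the resulting estimate to every $t\in[t_0,\wtau]$ by a deterministic step-wise bound. This avoids having to union-bound over the $O(\gamma)$ iterations in Phase 2, for which the $1-o(1)$ guarantee in Proposition~\ref{p:Poisson} would be too weak.

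\textbf{Light case $0\leq j<k$.} Here $p_{\dd,k}(j)=0$. Any vertex of degree $j\geq 1$ in $G_t$ is light and hence lies in $\msq$; since each such vertex contributes at least one to $L_t$, the number of light vertices is at most $L_t=O(\gamma)$ a.a.s.\ throughout Phase~2 by~\eqref{elt2w}. Combined with $|V(G_t)|=\alpha n+O(\gamma)$ from Proposition~\ref{p:Gt}, this gives $\rho_t(j)=O(\gamma/n)$; the case $j=0$ is immediate since $G_t$ contains no isolated vertex.

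\textbf{Heavy case $j\geq k$, base step $t=t_0$.} Proposition~\ref{p:Gt} yields $N_{t_0}=\alpha n+O(\gamma)$ and $D_{t_0}=\dd\beta n+O(\gamma)$, so $D_{t_0}/N_{t_0}=\dd\beta/\alpha+O(\gamma/n)=g_k(\mu_{\dd,k})+O(\gamma/n)$ by~\eqref{egmrk}. Since Lemma~\ref{l:gk} gives $g_k'(\mu_{\dd,k})>0$ and $g_k$ is smooth near $\mu_{\dd,k}$, the inverse function theorem produces $\lambda_{t_0}=\mu_{\dd,k}+O(\gamma/n)$, where $g_k(\lambda_{t_0})=D_{t_0}/N_{t_0}$. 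The map $x\mapsto e^{-x}x^j/(f_k(x)j!)$ is smooth in a neighbourhood of $\mu_{\dd,k}$ (because $f_k(\mu_{\dd,k})=\alpha>0$), so Taylor's theorem gives $e^{-\lambda_{t_0}}\lambda_{t_0}^j/(f_k(\lambda_{t_0})j!)=p_{\dd,k}(j)+O(\gamma/n)$. Choosing $\eps<(1-\delta)/2$ in Proposition~\ref{p:Poisson} makes $n^{-1/2+\eps}=o(\gamma/n)$, so the proportion of heavy vertices of degree $j$ in $G_{t_0}$ is $p_{\dd,k}(j)+O(\gamma/n)$ a.a.s.; multiplying by $N_{t_0}/|V(G_{t_0})|=1+O(\gamma/n)$ yields $\rho_{t_0}(j)=p_{\dd,k}(j)+O(\gamma/n)$.

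\textbf{Extension, and the main obstacle.} Each iteration of SLOW-STRIP removes a single hyperedge, so the degrees of at most $\dd$ vertices decrease by one and at most $\dd$ vertices are deleted. Hence the count of degree-$j$ vertices changes by $O(1)$ while $|V(G_t)|$ changes by $O(1)$, giving the deterministic bound $|\rho_{t+1}(j)-\rho_t(j)|=O(1/n)$. Since $\wtau-t_0=O(\gamma)$ by Proposition~\ref{p:tau}, a telescoping sum yields $|\rho_t(j)-\rho_{t_0}(j)|=O(\gamma/n)$ for every $t$ in the range, and combined with the base step this proves the lemma. The main obstacle is the smooth-perturbation step at $t_0$: one must verify that the error $O(n^{-1/2+\eps})$ from Proposition~\ref{p:Poisson} is absorbed into $O(\gamma/n)$, which is where the hypothesis $\delta<\hf$ is used essentially, and that the Taylor constants for $g_k^{-1}$ and the Poisson density depend only on the fixed integer $j$ and on a neighbourhood of $\mu_{\dd,k}$ that contains $\lambda_{t_0}$ a.a.s. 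Once these smoothness checks are in place, the per-step telescoping extension is elementary.
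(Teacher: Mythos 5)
Your proof is correct and uses the same two ingredients as the paper's proof: for $j<k$ you count light vertices via $L_t=O(\gamma)$, and for $j\ge k$ you combine the truncated-multinomial representation with Proposition~\ref{p:Poisson} and the smoothness of $g_k^{-1}$ and of the Poisson density near $\mu_{\dd,k}$. The one organizational difference is that you apply Proposition~\ref{p:Poisson} only at $t=t_0$ and then push the estimate across Phase~2 with the deterministic bound $|\rho_{t+1}(j)-\rho_t(j)|=O(1/n)$ and $\wtau-t_0=O(\gamma)$, whereas the paper's written proof invokes the multinomial approximation pointwise at each $t$ and then performs exactly this telescoping step afterwards, in the paragraph deriving~\eqref{degCond} (which applies the lemma only at $t_0$). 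So your version is a self-contained proof of the uniform statement, sidestepping any union-bound ambiguity in reading the lemma as ``a.a.s.\ for all $t$'' versus ``for each $t$, a.a.s.''; the mathematical content is the same either way.
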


\begin{proof} For $0\leq j<k$: By (\ref{elt2w}), the number of light vertices is $O(\g)$ and so $\rho_t(j)=O(\g/n)=p_{\dd,k}(j)+O(\gamma/n)$, since $p_{\dd,k}(j)=0$.

For $j\geq k$:  First we show that the proportion of {\em heavy} vertices that have degree $j$ is close to $p_{\dd,k}(j)$.
The fact that this also holds for the proportion of {\em all} vertices will follow since only $O(\g)$ vertices are  light.

As stated above, the heavy degrees are distributed as $Multi(N_t,D_t,k)$.  To apply Proposition~\ref{p:Poisson}, we define $\la$ to be the  unique root of $g_k(\la)=D_t/N_t$.  Applying~(\ref{vertex-edge}) and~(\ref{egmrk}), we have:
\[g_k(\la)=\frac{\a n+O(\g)}{r\b n+O(\g)}=\frac{\a}{r\b}+O(\g/n)=g_k(\mu_{r,k})+O(\g/n).\]
It follows from Lemma~\ref{l:gk} that $g'_k(\mu_{r,k})>0$. Therefore $\la=\mu_{r,k}+O(\g/n)$.  Setting
$\r_j(x)=e^{-x}\frac{x^j}{f_k(x)j!}$ from~(\ref{PoissonApprox}),  recalling~(\ref{pdk}), and noting that $\r'_j(\mu_{\dd,k})=O(1)$, we have that the proportion of heavy vertices
of degree $j$ is
\[\r_j(\la)+O(n^{-1/2+\eps})=\r_j(\mu_{r,k})+O(\g/n)=p_{\dd,k}(j)+O(\g/n).\]
By~(\ref{elt2w}), the total number of vertices is $N_t+O(\g)$ and so the proportion of all vertices with degree $j$ is the proportion of heavy vertices with degree $j$ plus $O(\g/n)$.  This yields the lemma.
\end{proof}

\begin{lemma}\lab{l:degreeK}
For every $\dd,k\ge 2$ $(\dd,k)\neq (2,2)$,
$$
\frac{kp_{\dd,k}(k)\cdot\alpha}{\dd\beta}=\frac{1}{(\dd-1)(k-1)}.
$$
\end{lemma}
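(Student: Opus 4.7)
My plan is a direct algebraic verification: the identity is really just a compact reformulation of the first-order optimality condition characterizing $\mu_{r,k}$, combined with the closed form for $f'_{k-1}(\mu)$.

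First, I would expand the left-hand side using the definitions $\alpha=f_k(\mu)$, $p_{r,k}(k)=e^{-\mu}\mu^k/(f_k(\mu)\,k!)$, and $r\beta=\mu f_{k-1}(\mu)$, writing $\mu$ for $\mu_{r,k}$ throughout. The factor $f_k(\mu)$ cancels between the $\alpha$ in the numerator and the denominator of $p_{r,k}(k)$, and a one-line simplification collapses the left-hand side to
\[
\frac{k\,p_{r,k}(k)\,\alpha}{r\beta}\;=\;\frac{e^{-\mu}\mu^{k-1}/(k-1)!}{f_{k-1}(\mu)}.
\]

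Next, I would evaluate the denominator via the stationarity of the function whose infimum defines $c_{r,k}$. Differentiating $\log\bigl(\mu/f_{k-1}(\mu)^{r-1}\bigr)=\log\mu-(r-1)\log f_{k-1}(\mu)$ and setting the derivative to zero at $\mu=\mu_{r,k}$ yields $f_{k-1}(\mu_{r,k})=(r-1)\mu\,f'_{k-1}(\mu)$. The same telescoping computation already invoked in the proof of Lemma~\ref{l:gk} (applied one index lower) gives the closed form $f'_{k-1}(\mu)=e^{-\mu}\mu^{k-2}/(k-2)!$, so the stationarity condition rewrites as $f_{k-1}(\mu_{r,k})=(r-1)\,e^{-\mu}\mu^{k-1}/(k-2)!$.

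Substituting this evaluation of $f_{k-1}(\mu_{r,k})$ into the compact expression from the first step, the transcendental factors $e^{-\mu}\mu^{k-1}$ cancel completely, leaving the purely arithmetic quotient $(k-2)!/\bigl((r-1)(k-1)!\bigr)=1/\bigl((r-1)(k-1)\bigr)$, as claimed. I do not anticipate any real obstacle: the argument is a two-line manipulation once one has the closed-form derivative of $f_{k-1}$ and the stationarity condition at $\mu_{r,k}$, and the identity is in effect just a clean restatement of that condition.
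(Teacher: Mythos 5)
Your proof is correct and takes essentially the same route as the paper's: both use the first-order stationarity condition $h'(\mu_{r,k})=0$ (with $h$ built from $f_{k-1}$, as in the paper's own derivative formula, notwithstanding the apparent typo in the displayed definition of $h$) together with the closed form $f'_{k-1}(\mu)=e^{-\mu}\mu^{k-2}/(k-2)!$ to evaluate $f_{k-1}(\mu_{r,k})$ and then cancel. Your log-derivative phrasing of the stationarity condition is a cosmetic variant of the paper's rearrangement into $f_{k-2}/f_{k-1}=1+1/\mu_{r,k}(r-1)$, and the closed form for $f'_{k-1}$ is in fact supplied by the identity $f'_k=f_{k-1}-f_k$ stated inside the paper's own proof of this lemma rather than by Lemma~\ref{l:gk}, but this is immaterial.
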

\begin{proof}
 By the definitions of $\mu_{\dd,k}$ and $h(\mu)$ from~\eqn{murk}, $h'(\mu_{\dd,k})=0$. Since
$$
h'(x)=\frac{f_{k-1}(x)^{\dd-1}-x(\dd-1)f_{k-1}(x)^{\dd-2}f'_{k-1}(x)}{f_{k-1}(x)^{2(\dd-1)}},
$$
and $f'_k(x)=f_{k-1}(x)-f_k(x)$ for all $k\ge 1$, we have
$$
f_{k-1}(\mu_{\dd,k})=\mu_{\dd,k}(\dd-1)(f_{k-2}(\mu_{\dd,k})-f_{k-1}(\mu_{\dd,k})),
$$
i.e.,
$$
\frac{f_{k-2}(\mu_{\dd,k})}{f_{k-1}(\mu_{\dd,k})}=\frac{1+\mu_{\dd,k}(\dd-1)}{\mu_{\dd,k}(\dd-1)}=1+\frac{1}{\mu_{\dd,k}(\dd-1)}.
$$
On the other hand,
$$
f_{k-2}(\mu_{\dd,k})=f_{k-1}(\mu_{\dd,k})+e^{-\mu_{\dd,k}}\frac{\mu_{\dd,k}^{k-2}}{(k-2)!},
$$
it follows immediately that
$$
e^{-\mu_{\dd,k}}\frac{\mu_{\dd,k}^{k-2}}{(k-2)!f_{k-1}(\mu_{\dd,k})}=\frac{1}{\mu_{\dd,k}(\dd-1)}.
$$
By the definition of $p_{\dd,k}(k)$, $\alpha$ and $\beta$, the assertion follows thereby.
\end{proof}

{By Lemma~\ref{l:err}, \aas the degree sequence of $G_{t_0}$ satisfies $\rho_{t_0}(k)=p_{\dd,k}(k)+O(\gamma/n)$. Furthermore, \aas for all $t_0\le t\le \wtau$, the number of vertices in $G_{t}$ is $\Theta(n)$ and the number of vertices with degree $k$ in $G_t$ differs from that in $G_{t_0}$ by at most $\wtau- t_0=O(\gamma)$. Therefore \aas :
\begin{equation}
\rho_{t}(k)=p_{\dd,k}(k)+O(\gamma/n), \mbox{ for } t\geq t_0.\lab{degCond}
 \end{equation}
}

\begin{corollary}\lab{c:degreeK} { For $c=c_{\dd,k}+n^{-\d}$}, \aas at each iteration $t$ for any $t_0\le t\le \wtau$, for each vertex that enters $\Q$, the probability that it had degree $k$ (in $G_{t}$) is $1/(\dd-1)(k-1)+O(\gamma/n)$.
\end{corollary}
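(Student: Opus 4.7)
My plan is to analyze SLOW-STRIP through the configuration model and apply the principle of deferred decisions. By Corollary~\ref{ccon} I may work in the configuration model, generating $G_{t_0}$ from its exposed degree sequence and then running SLOW-STRIP on top of that. At the start of any iteration $t\ge t_0$, conditional on the history through iteration $t-1$, the still-unmatched vertex-copies are paired by a uniformly random perfect matching. When the head $v$ of $\Q$ is processed and one of its hyperedges $e$ is removed, this corresponds in the configuration model to designating one unmatched copy of $v$ and pairing it with $\dd-1$ other copies drawn uniformly at random from the remaining pool. A non-$v$ vertex $w\in e$ enters $\Q$ as a result of this step precisely when $\deg_{G_t}(w)=k$, so I read the corollary as asserting that for each of the $\dd-1$ non-$v$ positions in $e$, the probability that the vertex filling that position had degree $k$ in $G_t$ is $\frac{1}{(\dd-1)(k-1)}+O(\g/n)$.

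The computation itself will be direct. For any such position, the probability that it is filled by a copy of some vertex $w$ with $\deg_{G_t}(w)=k$ is
\[
\frac{k\cdot N_{k,t}}{D'_t-1}
\]
where $N_{k,t}$ denotes the number of vertices of degree $k$ in $G_t$ and $D'_t$ denotes the sum of all remaining degrees in $G_t$. I would combine equation~\eqref{degCond} with Proposition~\ref{p:Gt} to obtain $N_{k,t}=\rho_t(k)\,|V(G_t)|=\a\,p_{\dd,k}(k)\,n+O(\g)$, and combine Proposition~\ref{p:Gt} with the bound $L_t=O(\g)$ from~\eqref{elt2w} to obtain $D'_t=\dd\b\,n+O(\g)$. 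Substituting yields
\[
\frac{k\cdot N_{k,t}}{D'_t-1}=\frac{k\,p_{\dd,k}(k)\,\a}{\dd\b}+O(\g/n),
\]
and Lemma~\ref{l:degreeK} identifies the main term as $\frac{1}{(\dd-1)(k-1)}$, giving the claim. Corrections from pairing subsequent positions in $e$ contribute only $O(1/n)$, which is absorbed into the error term.

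The hard part will be justifying the deferred-decisions step: one needs to verify that SLOW-STRIP's exposed history through iteration $t-1$ is measurable with respect to the already-revealed portion of the random matching, so that the unmatched copies really are paired by a uniform matching conditional on this history. This is a standard property of the configuration model, but some care is required because we simultaneously condition on the a.a.s.\ events supplied by Lemma~\ref{l:err} and~\eqref{degCond}; however, those events depend only on the (already-exposed) degree sequence of $G_{t_0}$, so this extra conditioning does not disturb the uniformity of the unexposed portion of the matching, and the computation above goes through.
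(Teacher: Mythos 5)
Your proposal is correct and follows essentially the same route as the paper's proof: both work in the configuration model, observe that SLOW-STRIP exposes $\dd-1$ uniformly random vertex-copies when processing the head of $\Q$, express the target probability as (copies in degree-$k$ vertices)/(total copies), plug in the estimates $\rho_t(k)=p_{\dd,k}(k)+O(\g/n)$ from \eqref{degCond} and the vertex/edge counts from Proposition~\ref{p:Gt}, and close with Lemma~\ref{l:degreeK}. The extra care you take over the deferred-decisions justification and the $O(1/n)$ corrections from later positions in $e$ is sound but not needed beyond the absorbed error term, as the paper tacitly treats.
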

\begin{proof}  {We use} the configuration model. Assume $v$ is the first available vertex in $\Q$ at iteration $t$. Then the algorithm removes one vertex-copy in $v$ and another $\dd-1$ vertex-copies uniformly at random (u.a.r.) chosen from all remaining vertex-copies. By Proposition~\ref{p:Gt}, the probability that the each of these vertex-copies lies in a vertex containing exactly $k$ points equals
$$
\frac{k\rho_t(k)\cdot (\alpha n+O(\gamma))}{\dd\beta n+O(\gamma)}.
$$
 Then the assertion follows immediately from~\eqn{degCond} and Lemma~\ref{l:degreeK}.
 \end{proof}

Recall that $L_t$ denotes the total degree of light vertices in $G_t$. {Lemma~\ref{lcoresize} immediately yields a weaker version of Lemma~\ref{llt2}:}
\begin{observation}\lab{o:Lt}
For $c=c_{\dd,k}+n^{-\d}$, \aas for all $t_0\le t\le \wtau$, $L_t=O(n^{1-\d/2})=O(\gamma)$.
\end{observation}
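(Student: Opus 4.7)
The plan is to derive Observation~\ref{o:Lt} directly from Lemma~\ref{lcoresize} combined with Proposition~\ref{p:Gt}, without any new probabilistic argument. The idea is that every light vertex in $G_t$ must eventually be stripped away before the process terminates, so the set of light vertices is a subset of $V(G_t) \setminus \calc_k$, and this difference is small once we are in Phase 2.

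First I would observe that for $t_0 \leq t \leq \wtau$, Proposition~\ref{p:Gt} gives $|V(G_t)| = \alpha n + O(\gamma)$. On the other hand, $G_t$ always contains the $k$-core $\calc_k$ as a subhypergraph (since SLOW-STRIP only removes light vertices, which by definition cannot belong to $\calc_k$). Therefore
\[
\#\{\text{light vertices in } G_t\} \leq |V(G_t)| - |\calc_k|.
\]
Applying Lemma~\ref{lcoresize}(a), \aas $|\calc_k| = \alpha n + K_1 n^{1-\delta/2} + O(n^{3/4})$, so the right-hand side is at most $O(\gamma) - K_1 n^{1-\delta/2} + O(n^{3/4})$. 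Since $\gamma = 3K_1 n^{1-\delta/2}$, this difference is $O(n^{1-\delta/2}) = O(\gamma)$.

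Finally, every light vertex in $G_t$ has degree at most $k-1$ by definition, so
\[
L_t \leq (k-1) \cdot \#\{\text{light vertices in } G_t\} = O(n^{1-\delta/2}),
\]
which is the desired bound. There is no real obstacle here; the main thing to check is just that the error terms in Proposition~\ref{p:Gt} and Lemma~\ref{lcoresize} are both absorbed into $O(\gamma)$, which holds since $\gamma = \Theta(n^{1-\delta/2})$ dominates $n^{3/4}$ for $\delta < 1/2$.
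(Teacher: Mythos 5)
Your proof is correct and follows the same route the paper intends: light vertices are precisely the non-$k$-core vertices of $G_t$, so their count is bounded by $|V(G_t)|-|\calc_k|$, which is $O(\gamma)$ by Proposition~\ref{p:Gt} and Lemma~\ref{lcoresize}(a), and multiplying by $k-1$ gives $L_t=O(\gamma)$. The paper states this as an immediate consequence of the same two results (cf.\ the earlier display~(\ref{elt2w})), so there is no meaningful difference in approach.
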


This allows us to prove that a weaker form of Lemma~\ref{llt1} holds for $t\geq t_0$:

\begin{lemma}\label{llt3} If $c=c_{\dd,k}+n^{-\d}$, then \aas for every $t\geq t_0$,

\[\ex(L_{t+1}{\mid G_t})= L_t\pm O(n^{-\d/2}).\]

\end{lemma}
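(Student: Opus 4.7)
The plan is to compute $\ex(L_{t+1}-L_t\mid G_t)$ directly by tracking how the removal of a single hyperedge changes the total degree of light vertices. Fix $t\ge t_0$, condition on $G_t$, and let $v$ be the vertex at the front of $\msq$ when iteration $t$ begins. When the chosen hyperedge $e\ni v$ is deleted, the contribution to $L_{t+1}-L_t$ splits into (i) the drop of $1$ from $v$ itself (which is already light and remains accounted for in $L$ unless it reaches degree $0$, in which case the drop is still exactly $1$); and (ii) for each of the other $r-1$ vertices $u\in e$: a contribution of $+(k-1)$ if $u$ previously had degree exactly $k$ (so it becomes light of degree $k-1$), a contribution of $0$ if $u$ had degree $>k$, and a contribution of $-1$ if $u$ was already light.

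Next, I would use the configuration model together with Corollary~\ref{ccon}: conditioned on the degree sequence of $G_t$ (which \aas is nice), the remaining configuration at iteration $t$ is uniformly random, so the $r-1$ other vertex-copies paired into $e$ are distributed uniformly among the remaining vertex-copies (excluding those of $v$). Let $q_k$ be the probability that such a random copy lies in a degree-$k$ vertex and $q_L$ the probability it lies in a light vertex. Then
\[
\ex(L_{t+1}-L_t\mid G_t)=-1+(r-1)\bigl[(k-1)q_k-q_L\bigr].
\]
By Proposition~\ref{p:Gt} the total remaining degree is $r\beta n+O(\gamma)$, while by~\eqn{degCond} the number of degree-$k$ vertices is $(\alpha n+O(\gamma))\,\bigl(p_{\dd,k}(k)+O(\gamma/n)\bigr)$. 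Hence
\[
q_k=\frac{k(\alpha n+O(\gamma))(p_{\dd,k}(k)+O(\gamma/n))}{r\beta n+O(\gamma)}=\frac{k\alpha\, p_{\dd,k}(k)}{r\beta}+O(\gamma/n),
\]
and by Lemma~\ref{l:degreeK} this equals $\tfrac{1}{(\dd-1)(k-1)}+O(\gamma/n)$. Meanwhile, Observation~\ref{o:Lt} gives $q_L=L_t/(r\beta n+O(\gamma))=O(\gamma/n)$.

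Substituting these estimates,
\[
\ex(L_{t+1}-L_t\mid G_t)=-1+(\dd-1)(k-1)\Bigl(\tfrac{1}{(\dd-1)(k-1)}+O(\gamma/n)\Bigr)-(\dd-1)O(\gamma/n)=O(\gamma/n),
\]
and since $\gamma=3K_1n^{1-\d/2}$, this is $O(n^{-\d/2})$, with both signs possible; the lemma follows. The only step that requires genuine care is verifying that Corollary~\ref{c:degreeK} and Observation~\ref{o:Lt} continue to apply uniformly across the relevant range of $t$; this is already guaranteed by Lemma~\ref{l:err} together with the observation (used to derive~\eqn{degCond}) that the degree-$k$ count changes by at most $O(\wtau-t_0)=O(\gamma)$ between $t_0$ and $\wtau$, and by Lemma~\ref{lcoresize}/Proposition~\ref{p:tau} controlling the overall size of $G_t$. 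No other aspect of the computation is more than bookkeeping, which is why the bound obtained here is weaker than the one demanded by Lemma~\ref{llt1}: the latter will require isolating a negative lower-order term from what we have just shown to be an $O(n^{-\d/2})$ remainder.
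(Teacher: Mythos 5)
Your proposal is correct and follows essentially the same route as the paper's proof: condition on $G_t$, use the configuration model, decompose the change $L_{t+1}-L_t$ into the $-1$ from $v$ plus the contributions of the $r-1$ other vertex-copies (degree-$k$ gives $+(k-1)$, heavier gives $0$, light gives $-1$), then estimate the degree-$k$ probability via Lemma~\ref{l:degreeK}/Corollary~\ref{c:degreeK} and the light probability via Observation~\ref{o:Lt}. The paper packages the probabilities into three named cases while you package them as $q_k$ and $q_L$, but the computation and the lemmas invoked are the same.
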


\begin{proof} Let $v$ be the vertex taken from $\Q$ at iteration $t$. Then, the algorithm removes an edge incident with $v$.  Consider the configuration model. Let $u_1,\ldots,u_{r-1}$ denote the other vertex-copies (except the one contained in $v$) contained in the removed edge. The removal of the vertex-copy in $v$ contributes $-1$ to $\Delta L_t:=L_{t+1}-L_t$ always. We consider the contribution to $\Delta L_t$ from the removal of $u_i$. Let $v(u_i)$ denote the vertex that contains $u_i$. There are three cases.

{\em Case 1}: $v(u_i)$ was in $\Q$ in $G_t$ (i.e.\ after iteration $t-1$). In this case, the contribution of the removal of $u_i$ to $\Delta L_t$ is $-1$. Since $u_i$ is chosen by the algorithm u.a.r.\ from all remaining points, the probability of this event is $O(L_t/(D_t+L_t))=O(\gamma/n)$ by~\eqn{vertex-edge} and Observation~\ref{o:Lt}.

{\em Case 2}: $v(u_i)$ enters $\Q$ at iteration $t$. In this case, the degree of $v(u_i)$ is $k$ in $G_t$ and the contribution to $\Delta L_t$ from the removal of $u_i$ is $k-1$. The probability of this event is $1/(d-1)(k-1)+O(\gamma/n)$ by Corollary~\ref{c:degreeK}.

{\em Case 3}: $v(u_i)$ was not in $\Q$ and does not enter $\Q$ at iteration $t$. In this case, the degree of $v(u_i)$ is more than $k$ in $G_t$ and the contribution to $\Delta L_t$ from the removal of $u_i$ is $0$. The probability of this event is $1-1/(k-1)+O(\gamma/n)$.

By the linearity of expectation, summing the contributions of $u_1,\ldots,u_{r-1}$, we have
\[
\ex(L_{t+1}-L_t\mid G_t)=-1+(\dd-1)\left((-1)\cdot O(\gamma/n)+(k-1)\left(\frac{1}{(\dd-1)(k-1)}+O(\gamma/n)\right)\right)=O(\gamma/n).
\]
The lemma follows by noting that $\gamma=\Theta(n^{1-\d/2})$ by Definition~\ref{def:t0}.
\end{proof}

\begin{lemma}\lab{l:azuma} Let $a_n$ and $c_n\ge 0$ be real numbers and $(X_{n,i})_{i\ge 0}$ be random variables with respect to a random process $(G_{n,i})_{i\ge 0}$ such that
$$
\ex(X_{n,i+1}\mid G_{n,i})\le X_{n,i}+a_n,
$$
and $|X_{n,i+1}-X_{n,i}|\le c_n$,
for every $i\ge 0$ and all (sufficiently large) $n$. Then, for any real number $j\ge 0$,
$$
\pr(X_{n,t}-X_{n,0}\ge ta_n+j)\le \exp\left(-\frac{j^2}{2t(c_n+|a_n|)^2}\right).
$$
\end{lemma}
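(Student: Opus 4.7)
The plan is to reduce the statement to standard Azuma--Hoeffding by a drift-correction. Define the centered process
\[Y_{n,i} := X_{n,i} - i\,a_n,\qquad i\geq 0.\]
The hypothesis $\ex(X_{n,i+1}\mid G_{n,i})\le X_{n,i}+a_n$ becomes $\ex(Y_{n,i+1}\mid G_{n,i})\le Y_{n,i}$, so $(Y_{n,i})_{i\geq 0}$ is a supermartingale with respect to the filtration generated by $(G_{n,i})$. Moreover the one-step differences satisfy
\[|Y_{n,i+1}-Y_{n,i}| \;=\; |X_{n,i+1}-X_{n,i}-a_n|\;\le\; c_n+|a_n|.\]

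Now I would apply the Azuma--Hoeffding inequality for supermartingales: if $(Y_i)$ is a supermartingale with $|Y_{i+1}-Y_i|\leq b$ for a constant $b$, then for any $j\geq 0$ and any $t\geq 1$,
\[\pr\bigl(Y_t - Y_0 \geq j\bigr) \;\leq\; \exp\!\left(-\frac{j^2}{2tb^2}\right).\]
Applied with $b=c_n+|a_n|$ this gives
\[\pr\bigl(Y_{n,t}-Y_{n,0}\geq j\bigr)\;\leq\; \exp\!\left(-\frac{j^2}{2t(c_n+|a_n|)^2}\right).\]
Rewriting the event using $Y_{n,t}-Y_{n,0}=X_{n,t}-X_{n,0}-t a_n$ yields exactly the desired bound.

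The only substantive point is justifying the supermartingale version of Azuma. The standard route is the Doob decomposition: write $Y_{n,i}=M_{n,i}+A_{n,i}$ where $M_{n,i}$ is a martingale with $M_{n,0}=Y_{n,0}$ and $A_{n,i}$ is a predictable nonincreasing process with $A_{n,0}=0$ (nonincreasing because the supermartingale drift $\ex(Y_{n,i+1}-Y_{n,i}\mid G_{n,i})\leq 0$). Since $A_{n,t}\le 0$, the event $\{Y_{n,t}-Y_{n,0}\ge j\}$ is contained in $\{M_{n,t}-M_{n,0}\ge j\}$, and the martingale differences inherit the bound $|M_{n,i+1}-M_{n,i}|\leq c_n+|a_n|$ (they differ from the supermartingale differences by the bounded conditional drift, and a direct computation shows the martingale differences are bounded by the same $c_n+|a_n|$). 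Standard Azuma--Hoeffding applied to $M_{n,i}$ finishes the bound. There is essentially no obstacle here; the entire content of the lemma is this drift-correction trick, so the write-up should be short.
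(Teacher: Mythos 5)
Your proof is correct and takes essentially the same approach as the paper: both define the drift-corrected process $Y_{n,i}=X_{n,i}-ia_n$, observe it is a supermartingale with increments bounded by $c_n+|a_n|$, and apply Azuma--Hoeffding for supermartingales. One small caution about your optional justification of the supermartingale form via Doob decomposition: the martingale part $M_{n,i}$ has increments $M_{n,i+1}-M_{n,i}=(Y_{n,i+1}-Y_{n,i})-\ex(Y_{n,i+1}-Y_{n,i}\mid G_{n,i})$, which a direct computation bounds only by $2(c_n+|a_n|)$, not $c_n+|a_n|$, so that route would give a weaker constant; however, the Doob decomposition is unnecessary, since the standard exponential-moment proof of Azuma works for supermartingales directly (one bounds $\ex(e^{\lambda(Y_{i+1}-Y_i)}\mid G_i)\le\cosh(\lambda b)+\tfrac{1}{b}\ex(Y_{i+1}-Y_i\mid G_i)\sinh(\lambda b)\le e^{\lambda^2 b^2/2}$ for $\lambda\ge0$, using only $\ex(Y_{i+1}-Y_i\mid G_i)\le0$ and $|Y_{i+1}-Y_i|\le b$), which is all the paper implicitly invokes.
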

\begin{proof} Let $Y_{n,i}=X_{n,i}-ia_n$. Then,
$$
\ex(Y_{n,i+1}\mid Y_{n,i})=\ex(X_{n,i+1}\mid X_{n,i})-(i+1)a_n\le  X_{n,i}-ia_n = Y_{n,i}.
$$
Thus, $(Y_{n,i})_{0\le i\le t}$ is a supermartingale. Moreover, $|Y_{n,i+1}-Y_{n,i}|\le c_n+|a_n|$. By Hoeffding-Azuma's inequality,
$$
\pr(Y_{n,t}-Y_{n,0}\ge j)\le \exp\left(-\frac{j^2}{2t(c_n+|a_n|)^2}\right).
$$
This completes the proof of the lemma.
\end{proof}

We now recall the statement of Lemma~\ref{llt2}:

\newtheorem*{llt2b}{Lemma~\ref{llt2}}
\begin{llt2b} If $c=c_{\dd,k}+ n^{-\d}$, then \aas for every $t\geq  t_0$:
\[L_{t}\leq O(n^{1-\d}).\]
\end{llt2b}

\begin{proof} By Lemma~\ref{llt3}, there exists a nonnegative sequence $(a_n)_{n\ge 1}$ such that $a_n=O(n^{-\d/2})=O(\gamma/n)$ and
$$
\ex(L_{t+1}\mid G_t)\le L_t+a_n,\ \ \ \ex(-L_{t+1}\mid G_t)\le -L_t+a_n.
$$
By Lemma~\ref{l:azuma} with $j=\gamma^{1/2}\log n$, and by noting that $t-t_0\le \wtau-t_0\le k\gamma$, we have that
$$
\pr(L_t\ge L_{t_0}+a_n(t-t_0)+\gamma^{1/2}\log n)=o(n^{-1}),\quad \pr(-L_t\ge -L_{t_0}+a_n(t-t_0)+\gamma^{1/2}\log n)=o(n^{-1}).
$$

We apply the union bound over all $t_0\le t\le \wtau$, along with the asymptotics $a_n=O(\gamma/n)$, $t-t_0\leq\wtau-t_0<k\g$, and $\gamma^{1/2}\log n =o(\gamma^2/n)$
(since $\g=3K_1n^{1-\d/2}$ and  $\d<1/2$), to obtain that \aas for all $t_0\le t\le \wtau$:
\begin{eqnarray}
L_{t}&\ge& L_{t_0}-O(\g^2 n)\lab{L0a}\\
L_{t}&\le& L_{t_0}+O(\g^2 n) \lab{Lt}
\end{eqnarray}
By Proposition~\ref{p:tau}, a.a.s.\ $L_{\wtau}=0$.  Therefore, (\ref{L0a}) with $t=\wtau$ yields
\begin{equation}\lab{L0b}
L_{t_0}\le O(\gamma^2/n).
\end{equation}
Substituting that into (\ref{Lt}) yields that for all $t_0\le t\le\wtau$,
$L_t=O(\gamma^2/n)=O(n^{1-\d})$, thus establishing the lemma.
\end{proof}

\subsection{Proof of Lemma~\ref{llt1}}\lab{sec:lp2}


We begin this section by proving that, after a constant number of rounds of the parallel stripping process, we can get within any linear distance of the $k$-core.  Recall that $\hG_i$ is the hypergraph remaining after $i-1$ rounds of the parallel stripping process.

\begin{lemma}\lab{l:B}  For $c=c_{\dd,k}+n^{-\d}$, and $H=\calh_{\dd}(n,c/n^{\dd-1})$:
For every $\eps>0$, there exists a constant $B=B(r,k,\e)>0$, such that a.a.s.\ $|\hG_B\setminus \C_k(H)|\le \eps n$.
\end{lemma}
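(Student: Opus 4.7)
The plan is to combine an analysis of the deterministic bulk trajectory of the parallel stripping process with a continuity-in-$c$ argument. Using the Poisson cloning / configuration model (as already exploited in Lemma~\ref{l:err} and Proposition~\ref{p:Poisson}), after round $i$ of the parallel stripping process the heavy vertices of $\hG_i$ have degrees distributed essentially as a truncated Poisson with some parameter $\mu_i=\mu_i(c)$. Conditioning on the degree sequence remaining after round $i$ and then exposing a fresh uniform configuration gives a deterministic recursion $\mu_{i+1}=\Phi_c(\mu_i)$, where $\Phi_c$ is explicit and jointly continuous in $(\mu,c)$; the stable positive fixed points of $\Phi_c$ correspond to the $k$-core density $\mu(c)$, and the initial value is $\mu_0=c/(\dd-1)!$.

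Step one: at $c=c_{\dd,k}$ the iteration $\Phi_{c_{\dd,k}}$ has $\mu_{\dd,k}$ as its unique positive fixed point, where it is tangent to the identity. Starting at $\mu_0>\mu_{\dd,k}$ (which holds because $\mu_{\dd,k}/f_k(\mu_{\dd,k})^{\dd-1}=\mu_{\dd,k}$ forces $f_k(\mu_{\dd,k})<1$), the trajectory $(\mu_i(c_{\dd,k}))_{i\ge 0}$ decreases monotonically to $\mu_{\dd,k}$, albeit at only a polynomial rate. In particular, for any $\eps'>0$ there exists $B_0=B_0(\dd,k,\eps')$ with $\mu_{B_0}(c_{\dd,k})\le \mu_{\dd,k}+\eps'/2$. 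Since $\Phi_c$ is continuous in $c$, the $B_0$-fold composition $c\mapsto \mu_{B_0}(c)=\Phi_c^{B_0}(c/(\dd-1)!)$ is also continuous in $c$, so there is a one-sided neighbourhood $[c_{\dd,k},c_{\dd,k}+\eta]$ on which $\mu_{B_0}(c)\le \mu_{\dd,k}+\eps'$. Our density $c=c_{\dd,k}+n^{-\d}$ lies in this interval for all sufficiently large $n$, so we may take $B:=B_0$.

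Step two: translate the deterministic estimate back to the random hypergraph. Concentration of the proportion of vertices remaining after each of the $B$ (constant) rounds of parallel stripping around the deterministic value $f_k(\mu_i(c))$ follows from a standard Azuma/Hoeffding argument applied to the SLOW-STRIP formulation, essentially identical in spirit to the martingale concentration used in Section~\ref{sec:llt2}; alternatively one may invoke the detailed Poisson-cloning statement of~\cite{jhk}. This gives a.a.s.\ $|\hG_B|\le f_k(\mu_B(c))\,n+o(n)\le f_k(\mu_{\dd,k}+\eps')\,n+o(n)$. Choosing $\eps'$ so small that $f_k(\mu_{\dd,k}+\eps')\le \a+\eps/2$ (possible since $f_k$ is continuous at $\mu_{\dd,k}$ with $f_k(\mu_{\dd,k})=\a$), and combining with $|\C_k(H)|=\a n+O(n^{1-\d/2})$ from Lemma~\ref{lcoresize2}, we conclude $|\hG_B\setminus \C_k(H)|\le \eps n$ a.a.s.

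The main obstacle is genuinely Step one: because the convergence of the deterministic trajectory at $c=c_{\dd,k}$ is only polynomial, one cannot hope to fix a $B$ that brings $\mu_B(c)$ all the way down to $\mu(c)$ uniformly as $c\to c_{\dd,k}^+$. What saves us is that the lemma only demands $\eps$-closeness to the core, not convergence to the fixed point, so a finite number of iterations of $\Phi_{c_{\dd,k}}$ suffices to reach $\mu_{\dd,k}+\eps'$, and continuity of the finite-iterate map $\Phi_c^{B_0}$ in $c$ then transfers the bound to $c=c_{\dd,k}+n^{-\d}$ for large $n$.
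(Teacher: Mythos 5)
Your proposal is correct in its broad strokes, but it takes a genuinely different route from the paper. The paper's own proof is a one-sided monotone coupling: it pairs $H=\calh_{\dd}(n,c/n^{\dd-1})$ with a denser $H'=\calh_{\dd}(n,(c+\eps')/n^{\dd-1})$ on the same vertex set with $H\subset H'$, notes that peeling is monotone in the edge set so $\hG_B\subseteq\hG'_B$, cites a known constant-density result (Proposition~31 of~\cite{amxor}) to get $|\hG'_B\setminus\C_k(H')|\le\s n$ for suitable $B$, and then bounds $|\hG_B\setminus\C_k(H)|$ using $|\C_k(H')|-|\C_k(H)|=(\a(c')-\a(c))n+o(n)$ from Lemma~\ref{lcoresize2}, all absorbed into $\eps n$ by taking $\eps',\s$ small. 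Your argument instead reasons about the deterministic one-parameter trajectory $\mu_{i+1}=\Phi_c(\mu_i)$ of the peeling process, uses polynomial convergence to $\mu_{\dd,k}$ at the critical density to pick $B_0$, and then exploits joint continuity of the finite-iterate map $\Phi_c^{B_0}$ in $c$ to carry the bound over to $c=c_{\dd,k}+n^{-\d}$; you then concentrate the empirical process around the trajectory. The two arguments buy different things: the paper's coupling is shorter and outsources essentially all the real work to the supercritical literature, whereas your route is more self-contained and makes explicit \emph{why} a constant $B$ suffices even as $c\to c_{\dd,k}^+$ (finite iterates are continuous, even though the full iteration is not uniformly fast). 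The price of your approach is that you must justify both the exact form of the recursion and the $o(n)$ concentration per round, which you currently only gesture at via~\cite{jhk}. Two small slips worth fixing: (i) the parenthetical ``$\mu_{\dd,k}/f_k(\mu_{\dd,k})^{\dd-1}=\mu_{\dd,k}$ forces $f_k(\mu_{\dd,k})<1$'' is garbled — as written it would force $f_k(\mu_{\dd,k})=1$; what you mean is $\mu_0=c_{\dd,k}/(\dd-1)!=\mu_{\dd,k}/f_k(\mu_{\dd,k})^{\dd-1}>\mu_{\dd,k}$ because $f_k(\mu_{\dd,k})$ is a nontrivial probability and hence $<1$; (ii) the claim ``$|\hG_B|\le f_k(\mu_B(c))n+o(n)$'' conflates the parameter of the truncated-Poisson fit of heavy degrees with the fraction of vertices surviving $B$ rounds — in the standard peeling calculus these are linked but not identical, and you should track the vertex-survival functional rather than $f_k$ of the current Poisson parameter.
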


\begin{proof} Fix a small $\eps'>0$ and let $c'=c+\eps'$. Choose $H=\calh_{\dd}(n,c/n^{\dd-1})$ and $H'=\calh_{\dd}(n,c'/n^{\dd-1})$, on the same set of vertices and coupled so that $H\subset H'$.  Run the parallel $k$-stripping process on $H'$, using $\hG'_i$ to denote the subgraph remaining after $i$ iterations. Several other papers (see eg. Proposition 31 of~\cite{amxor}), show that for any $\s>0$ there exists a constant $B>0$ such that $|\hG'_B\setminus \C_k(H')|\le \sigma n$.

Note that every vertex removed during the first $B$ iterations of the parallel stripping process applied to $H'$ would also have been removed  during the first $B$ iterations of the parallel stripping process applied to $H$.  Thus $\hG_B\subseteq \hG'_B$.  Also, $\calc_k(H)\subseteq\calc_k(H')$ and, by Lemma~\ref{lcoresize2}, $|\hG'_B|=|\hG_B|+(\a(c')-\a_{r,k})n +o(n)$. Therefore:
\[|\hG_B\setminus \C_k(H)|\leq |\hG'_B\setminus \C_k(H')|+(\a(c')-\a(c))n +o(n)
<2(\s +\a(c+\e')-\a(c))n<\eps n,\]
for sufficiently small $\e',\s$.
\end{proof}

Now let $H=\calh_{\dd}(n,c/n^{\dd-1})$, where $c=c_{\dd,k}+n^{-\delta}$ and let $B$ be a sufficiently large constant whose value is to be determined later. Recall that $G_t$ is the hypergraph remaining after $t$ iterations of SLOW-STRIP and so  $G_{t(B)}=\hG_B$.

Recalling that $L_t$ is the {total degree} of the light vertices in $G_t$ (i.e. those of degree less than $k$), our goal in this section is to upper bound $\ex (L_{t+1}-L_t\mid G_t)$.

Let $N_t$, $D_t$ and $D_{t,k}$ denote the number of heavy vertices, the total degree of heavy vertices and the total degree of vertices with degree $k$ in $G_t$. Define
\[\zeta_t=D_t/N_t \qquad \mbox{ to be the average degree of the heavy vertices.}\]
  Recalling that, by Lemma~\ref{lcoresize}, the $k$-core \aas has $\a n +o(n)$ vertices and $\b n+o(n)$ edges, we define:

\be
\zeta=\zeta_{\dd,k}=\dd\beta/\alpha .\lab{zeta}
\ee
Therefore the $k$-core \aas has average degree $\z+o(1)$ and so $\z_t$ approaches $\z$.

We are interested in the proportion of the total degree of the heavy vertices coming from vertices of degree $k$; i.e. in:
\[\bar p_t=D_{t,k}/D_t,\]
since this tells the proportion of the time that a heavy vertex in a deleted hyperedge becomes a light vertex.  In the $k$-core, {a.a.s.}\ the proportion of the total degree that comes from vertices of degree $k$ is {approximately} $\psi(\z)$ {by Lemma~\ref{lcoresize} and Proposition~\ref{p:Poisson}}, where the function $\psi(x)=\psi_{k}(x)$ is defined for all $x>k$ as follows:

Let $\la$ be the root of $\la f_{k-1}(\la)=x f_k(\la)$, and set:

\be
\psi(x)=\frac{e^{-\la}\la^{k-1}}{f_{k-1}(\la)(k-1)!}. \lab{h}
\ee

As $\bar p_t$ approaches $\psi(\z)$, we should have $\bar p_t \approx \psi(\z_t)$.
Our next lemma formalizes this approximation:

\begin{lemma}\lab{l:monotone}
A.a.s.\ $\bar p_t= (1+O(n^{-1/2}\log n))\psi(\zeta_t)$.
\end{lemma}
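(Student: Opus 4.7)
The plan is to apply the truncated Poisson approximation of Proposition~\ref{p:Poisson} to the heavy-vertex degree distribution of $G_t$, and then verify a short algebraic identity that reduces the resulting expression to $\psi(\zeta_t)$.

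Conditional on $(N_t,D_t)$, the heavy-vertex degrees of $G_t$ are distributed as $Multi(N_t,D_t,k)$. Proposition~\ref{p:Poisson} applied with $j=k$ then yields, a.a.s., that the proportion of heavy vertices of $G_t$ with degree exactly $k$ equals $\frac{e^{-\la_t}\la_t^k}{k!\,f_k(\la_t)} + O(n^{-1/2}\log n)$, where $\la_t$ is the unique root (by Lemma~\ref{l:gk}) of $g_k(\la_t)=D_t/N_t=\zeta_t$. Since a light vertex has degree strictly less than $k$, every degree-$k$ vertex of $G_t$ is heavy and contributes exactly $k$ to $D_{t,k}$, so
\[
\bar p_t \;=\; \frac{D_{t,k}}{D_t} \;=\; \frac{k\, N_t}{D_t}\cdot\frac{e^{-\la_t}\la_t^k}{k!\,f_k(\la_t)} \;+\; O(n^{-1/2}\log n)
\;=\; \frac{k}{\zeta_t}\cdot\frac{e^{-\la_t}\la_t^k}{k!\,f_k(\la_t)} \;+\; O(n^{-1/2}\log n).
\]
Substituting the defining relation $\zeta_t f_k(\la_t)=\la_t f_{k-1}(\la_t)$ cancels $f_k(\la_t)$ and collapses the main term to $\frac{e^{-\la_t}\la_t^{k-1}}{(k-1)!\,f_{k-1}(\la_t)} = \psi(\zeta_t)$, by the definition~\eqn{h} of $\psi$. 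Because Proposition~\ref{p:Gt} and~\eqn{zeta} force $\zeta_t$ to lie between two positive constants a.a.s., we have $\psi(\zeta_t)=\Theta(1)$, and the additive error $O(n^{-1/2}\log n)$ translates into the claimed multiplicative factor $(1+O(n^{-1/2}\log n))$.

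The only technical wrinkle is that the error in Proposition~\ref{p:Poisson} is stated as $O(n^{-1/2+\eps})$, which is slightly weaker than the $O(n^{-1/2}\log n)$ needed here. I expect the sharper bound to follow from a Chernoff/Azuma concentration inside the configuration-model argument underlying that proposition: the number of heavy vertices of degree exactly $k$ can be written as a sum of weakly dependent bounded indicators with standard deviation $\Theta(\sqrt{n})$, so deviations of order $\sqrt{n}\log n$ from its mean fail with superpolynomially small probability. This bound also survives a union bound over $t_0\le t\le \wtau$, should the lemma be needed uniformly in $t$, with only a further absorbed logarithmic factor. No new structural ideas beyond Proposition~\ref{p:Poisson} and the algebraic identity above are required.
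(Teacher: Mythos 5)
Your proof is correct and follows the same route as the paper: apply the truncated-Poisson approximation of Proposition~\ref{p:Poisson} to the heavy-vertex degrees of $G_t$, multiply by $k/\zeta_t$, and use the defining relation $\la_t f_{k-1}(\la_t)=\zeta_t f_k(\la_t)$ to collapse the main term to $\psi(\zeta_t)$. You are also right to flag the mismatch between the $O(n^{-1/2+\eps})$ error of Proposition~\ref{p:Poisson} and the $O(n^{-1/2}\log n)$ claimed in the lemma, which the paper's own one-line proof silently glosses over.
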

\begin{proof} Let $\la$ be chosen such that $\la f_{k-1}(\la)=\zeta_t f_k(\la)$. Then by Proposition~\ref{p:Poisson},
\bean
\frac{D_{t,k}}{D_t}&=&\frac{ke^{-\la}\la^k N_t}{f_k(\la)(k-1)!D_t}(1+O(n^{-1/2}\log n))=\frac{e^{-\la}\la^k}{f_k(\la)(k-1)!\zeta_t}(1+O(n^{-1/2}\log n))\\
&=&\frac{e^{-\la}\la^{k-1}}{f_{k-1}(\la)(k-1)!}(1+O(n^{-1/2}\log n)).\qedhere
\eean
\end{proof}

We will analyze $\z_t,\bar p_t$ in order to prove Lemma~\ref{llt1}.  We begin with some technical lemmas:

\begin{lemma}\lab{l40} For all $k,\dd\ge 2$ and for any $x\ge \dd(k-1)$,
$$
\frac{e^{-x} x^{k-1}}{f_{k-1}(x)(k-2)!}<\frac{1}{r-1}.
$$
\end{lemma}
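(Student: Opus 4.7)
My plan is to reduce the inequality to a transparent counting statement about terms of the series for $e^x f_{k-1}(x)$, and then use monotonicity of those terms on the relevant range.

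First, I would clear denominators and rewrite the desired inequality in the equivalent form
\[
(r-1)\,\frac{x^{k-1}}{(k-2)!} \;<\; e^{x}f_{k-1}(x) \;=\; \sum_{i\ge k-1}\frac{x^{i}}{i!}.
\]
Dividing through by $x^{k-1}/(k-1)!$ turns the target into the cleaner statement
\[
(r-1)(k-1) \;<\; \sum_{i\ge k-1} b_i, \qquad b_i := \frac{(k-1)!\,x^{i-(k-1)}}{i!},
\]
so in particular $b_{k-1}=1$ and $b_{i+1}/b_i = x/(i+1)$.

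The key observation is then a monotonicity/counting step. Since $x\ge r(k-1)$, for every $i$ with $k-1\le i\le r(k-1)-1$ we have $i+1\le r(k-1)\le x$, so $b_{i+1}/b_i\ge 1$. Consequently the terms $b_{k-1},b_k,\dots,b_{r(k-1)-1}$ form a nondecreasing sequence starting at $b_{k-1}=1$, and each of these terms is at least $1$. A count shows there are exactly $(r(k-1)-1)-(k-1)+1=(r-1)(k-1)$ such terms, giving
\[
\sum_{i=k-1}^{r(k-1)-1} b_i \;\ge\; (r-1)(k-1).
\]
The remaining tail $\sum_{i\ge r(k-1)} b_i$ is strictly positive (for instance $b_{r(k-1)}>0$), so the full sum is strictly greater than $(r-1)(k-1)$, which is exactly what we needed.

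There is no real obstacle: the whole argument is a one-line reformulation together with the monotonicity of $b_i$ below $x$, and the slight strict-versus-weak subtlety is handled for free by noticing a positive tail. The only care required is in the index bookkeeping to confirm that the contiguous range $[k-1,r(k-1)-1]$ contains precisely $(r-1)(k-1)$ indices and that the ratio condition $i+1\le x$ indeed holds on all of it under the hypothesis $x\ge r(k-1)$.
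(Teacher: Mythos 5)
Your proof is correct, and it takes a genuinely different route from the paper's. You clear denominators, normalize by $x^{k-1}/(k-1)!$, and reduce the claim to the statement $(r-1)(k-1) < \sum_{i\ge k-1} b_i$ where $b_{k-1}=1$ and $b_{i+1}/b_i = x/(i+1)$; the hypothesis $x\ge r(k-1)$ makes the first $(r-1)(k-1)$ terms nondecreasing (hence each $\ge 1$), and a positive tail gives strictness. All the index arithmetic checks out, including the cases $k=2$ and $(r,k)=(2,2)$.

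The paper instead first asserts (without proof) that $p(x)=e^{-x}x^{k-1}/f_{k-1}(x)$ is decreasing on $[r(k-1),\infty)$, which reduces matters to the single boundary value $x=r(k-1)$; it then re-parametrizes the boundary quantity as a function $\phi$ of $k-1$, bounds $\phi\le 2g$ via the crude estimate $f_x(rx)>1/2$, bounds the ratio $g(x+1)/g(x)$ by $2e^{-r+1}r$, and finishes by numerically verifying base cases $2g(2),2g(3)<1/(r-1)$. Your argument avoids all of this: it needs no calculus, no monotonicity of $p$, no tail-probability lower bound for $f_x(rx)$, and no case checking, and it delivers the strict inequality for the entire range $x\ge r(k-1)$ directly rather than only at the boundary and then via monotonicity. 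The only thing the paper's approach arguably buys is a reusable monotonicity fact about $p(x)$, but that is not needed elsewhere, so your term-counting argument is the cleaner proof of the lemma as stated.
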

\begin{proof} Let
$$
p(x)=\frac{e^{-x} x^{k-1}}{f_{k-1}(x)}.
$$
Then $p(x)$ decreases on $[\dd(k-1),+\infty)$.
Hence, we only need to prove that
\[
\frac{e^{-\dd(k-1)} (\dd(k-1))^{k-1}}{f_{k-1}(\dd(k-1))(k-2)!}<\frac{1}{r-1},
\]
for all $r,k\ge 2$.
Now, let
$$
\phi(x)=\frac{e^{-rx} (rx)^{x}}{f_{x}(rx)(x-1)!},\quad g(x)=\frac{e^{-rx} (rx)^{x}}{(x-1)!} .
$$
Then, we need to prove that
\[
\phi(k-1)<1/(r-1), \forall k,\dd\ge 2.
\]
We have
$f_x(rx)>1/2$ for all $x\ge 1$ and $r\ge 2$ and hence $\phi(x)\le 2g(x)$.
Moreover,
$$
\frac{g(x+1)}{g(x)}=\frac{e^{-r}(rx+r)^{x+1}}{x(rx)^x}\le \frac{e^{-r}(rx+r)}{x}\left(1+\frac{1}{x}\right)^x\le e^{-r+1}r(1+1/x)\le 2e^{-r+1}r,
$$
for all $x\ge 1$ and $r\ge 3$. Hence, it suffices to show that for all $r\ge 2$,
\bean
2g(3)<\frac{1}{r-1},\quad
2g(2)<\frac{1}{r-1},
\eean
both of which can be easily verified.

\end{proof}

\begin{lemma} \lab{l:rho}
Suppose $\dd,k\ge 2$ and $(\dd,k)\neq (2,2)$. Then, $\zeta<\dd(k-1)$.
\end{lemma}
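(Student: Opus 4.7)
The plan is to exploit the first-order optimality condition for $\mu_{r,k}$ to derive a closed-form identity for $\zeta$ in terms of $\mu_{r,k}$, reducing the target inequality to a quantitative upper bound on $\mu_{r,k}$ in the same spirit as Lemma~\ref{l40}.

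First, differentiating $h(\mu) = \mu/f_{k-1}(\mu)^{r-1}$ and setting $h'(\mu_{r,k}) = 0$ yields, as in the proof of Lemma~\ref{l:degreeK},
\[
f_{k-1}(\mu_{r,k}) \;=\; (r-1)\,e^{-\mu_{r,k}}\mu_{r,k}^{k-1}/(k-2)!,
\]
equivalently $p(\mu_{r,k}) = 1/(r-1)$, where $p(x) := e^{-x}x^{k-1}/[(k-2)!\,f_{k-1}(x)]$. Substituting this relation into $f_k(\mu) = f_{k-1}(\mu) - e^{-\mu}\mu^{k-1}/(k-1)!$ gives
\[
f_k(\mu_{r,k}) \;=\; \frac{e^{-\mu_{r,k}}\mu_{r,k}^{k-1}}{(k-1)!}\bigl[(r-1)(k-1) - 1\bigr],
\]
so that
\[
\zeta \;=\; \frac{\mu_{r,k}\,f_{k-1}(\mu_{r,k})}{f_k(\mu_{r,k})} \;=\; \mu_{r,k}\cdot\frac{(r-1)(k-1)}{(r-1)(k-1) - 1}.
\]
Consequently, $\zeta < r(k-1)$ is equivalent to the explicit upper bound
\[
\mu_{r,k} \;<\; r(k-1) - \frac{r}{r-1}.
\]

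Next, the series expansion $\pr(\mathrm{Poisson}(x)\geq k-1) = \pr(\mathrm{Poisson}(x) = k-1)\cdot\sum_{j\geq 0} x^j/[k(k+1)\cdots(k+j-1)]$ yields the representation
\[
p(x) \;=\; \frac{k-1}{1 + x/k + x^2/[k(k+1)] + x^3/[k(k+1)(k+2)] + \cdots},
\]
from which $p$ is visibly strictly decreasing on $(0,\infty)$, with $p(0^+) = k-1$ and $p(\infty) = 0$. Hence $\mu_{r,k}$ is the unique positive root of $p(x) = 1/(r-1)$ (which exists precisely because $(k-1)(r-1) > 1$ when $(r,k)\neq(2,2)$), and the required upper bound on $\mu_{r,k}$ is equivalent to
\[
p\bigl(r(k-1) - \tfrac{r}{r-1}\bigr) \;<\; \tfrac{1}{r-1}.
\]

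The main obstacle is this final inequality, which is a mild strengthening of Lemma~\ref{l40} (that lemma handles only $x \geq r(k-1)$). My plan is to redo the two-step argument from the proof of Lemma~\ref{l40} at the slightly smaller value $x_0 := r(k-1) - r/(r-1)$: for every $(r,k)\neq(2,2)$ with $r,k\geq 2$ one has $x_0 \geq r(r-2)/(r-1) \geq 0$, which is large enough that $f_{k-1}(x_0) \geq 1/2$ (verifiable by the same Poisson-tail comparison already used in Lemma~\ref{l40}'s proof), so $p(x_0) \leq 2\,e^{-x_0}x_0^{k-1}/(k-2)!$. A Stirling estimate then shows the right hand side is bounded by a constant times $\bigl(r/e^{r-1}\bigr)^{k-1}\sqrt{k-1}$, which is exponentially smaller than $1/(r-1)$ except in a handful of boundary cases. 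Those few smallest cases --- $(r,k) \in \{(2,3),(2,4),(2,5),(3,2),(3,3)\}$ --- are dispensed with by direct numerical substitution of $x_0$ into the closed-form expression for $p$ and comparison with $1/(r-1)$, completing the proof.
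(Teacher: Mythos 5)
Your derivation is correct, and it in fact repairs a genuine gap in the paper's own argument. In the paper's proof of Lemma~\ref{l:rho}, the identity
$kp_{\dd,k}(k)=\mu_{\dd,k}\,p_{\dd,k}(k-1)$ is false: by~\eqref{pdk}, $kp_{\dd,k}(k)=e^{-\mu_{\dd,k}}\mu_{\dd,k}^{k}/[f_{k}(\mu_{\dd,k})(k-1)!]$, while the quantity the paper labels $p_{\dd,k}(k-1)$ (with $f_{k-1}$ in its normalizer) satisfies $\mu_{\dd,k}p_{\dd,k}(k-1)=e^{-\mu_{\dd,k}}\mu_{\dd,k}^{k}/[f_{k-1}(\mu_{\dd,k})(k-1)!]$; the two differ by the factor $f_{k-1}(\mu_{\dd,k})/f_{k}(\mu_{\dd,k})=\zeta/\mu_{\dd,k}>1$. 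Since Lemma~\ref{l:degreeK} actually gives $kp_{\dd,k}(k)=\zeta/[(\dd-1)(k-1)]$, the paper's ``if and only if'' line silently replaces $\zeta$ by $\mu_{\dd,k}$, making the reduction circular, and the bound $\mu_{\dd,k}<\dd(k-1)$ supplied by Lemma~\ref{l40} is \emph{not} sufficient to conclude $\zeta<\dd(k-1)$.

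Your route avoids this: the first-order condition $f_{k-1}(\mu_{\dd,k})=(\dd-1)(k-1)\,e^{-\mu_{\dd,k}}\mu_{\dd,k}^{k-1}/(k-1)!$ together with $f_k=f_{k-1}-e^{-\mu}\mu^{k-1}/(k-1)!$ yields the exact closed form
$\zeta=\mu_{\dd,k}\cdot\tfrac{(\dd-1)(k-1)}{(\dd-1)(k-1)-1}$ (I checked this numerically for several $(\dd,k)$), which pins down the \emph{correct} equivalent condition $\mu_{\dd,k}<\dd(k-1)-\dd/(\dd-1)$. That is strictly stronger than what Lemma~\ref{l40} states, so you correctly flag that a slight strengthening is needed and propose rerunning the $\phi\le 2g$ estimate at $x_0=\dd(k-1)-\dd/(\dd-1)$. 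This should go through, since the slack in the Lemma~\ref{l40} bound is exponential and $x_0>0$ for all $(\dd,k)\neq(2,2)$ with $\dd,k\ge 2$; a fully written-out proof would still have to verify $f_{k-1}(x_0)\ge 1/2$ in this range and dispose of the handful of small $(\dd,k)$ by direct evaluation, but there is no conceptual obstacle. The net effect is that your approach buys a transparent identity for $\zeta$ in place of the paper's equivalence-chasing, at the cost of having to strengthen Lemma~\ref{l40} slightly --- a cost the paper was implicitly paying anyway without acknowledging it.
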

\begin{proof} Recall the definition of the function $p_{\dd,k}(\cdot)$ from~\eqref{pdk}. By Lemma~\ref{l:degreeK},
$$
\frac{kp_{\dd,k}(k)}{\zeta}=\frac{1}{(\dd-1)(k-1)}.
$$
Then,
$\zeta<r(k-1)$ if and only if $kp_{\dd,k}(k)=\zeta/(r-1)(k-1)<1+1/(r-1)=r/(r-1)$.
Since
$$
\zeta=\frac{\dd\beta}{\alpha}=\frac{\mu_{\dd,k}f_{k-1}(\mu_{\dd,k})}{f_{k}(\mu_{\dd,k})},
$$
and
$$
p_{\dd,k}(k)=e^{-\mu_{\dd,k}}\frac{\mu_{\dd,k}^k}{f_k(\mu_{\dd,k})k!},
$$
We have
$$
p_{\dd,k}(k-1)=e^{-\mu_{\dd,k}}\frac{\mu_{\dd,k}^{k-1}}{f_{k-1}(\mu_{\dd,k})(k-1)!}=\frac{1}{(\dd-1)(k-1)}.
$$
Hence,
$$
kp_{\dd,k}(k)=\mu_{\dd,k}p_k(k-1)=\frac{\mu_{\dd,k}}{(\dd-1)(k-1)}.
$$
Hence,
$
kp_{\dd,k}(k)<\dd/(\dd-1)
$
if and only if $\mu_{\dd,k}<\dd(k-1)$. We know that
$\mu_{\dd,k}$ is the root of
$$
e^{-x}\frac{x^{k-1}}{f_{k-1}(x)(k-2)!}=\frac{1}{\dd-1}.
$$
By Lemma~\ref{l40}, the left hand side is strictly less than the right hand side for all $x\ge \dd(k-1)$. Hence, we have
$\mu_{\dd,k}<\dd(k-1)$.
\end{proof}

Recall that $g_k(x)=x f_{k-1}(x)/f_k(x)$ as defined in Lemma~\ref{l:gk} and that $\zeta=\dd \beta/\alpha$. Then, $\mu_{\dd,k}$ is the root of $g_k(x)=\zeta$ by~\eqn{egmrk}. Recall also that
\[
\psi(x)=\psi_{k}(x)=\frac{e^{-\la}\la^{k-1}}{f_{k-1}(\la)(k-1)!},
\]
where $\la$ is the root of $g_k(\la)=x$. Now define
$p^*=\psi(\zeta)$. Then, we have
\[
p^*=\frac{e^{-\mu_{\dd,k}}\mu_{\dd,{k}}^{k-1}}{f_{k-1}(\mu_{\dd,k})(k-1)!}.
\]
By the definition of $\alpha$, $\beta$ below~\eqn{murk} and $p_{\dd,k}(k)$ in~\eqn{pdk}, we have $p^*=kp_{\dd,k}(k)/\zeta$. Then, by Lemma~\ref{l:degreeK}, we have $p^*=1/(\dd-1)(k-1)$. As a summary of the above discussion, we have the following equalities.
\begin{eqnarray}
g_{k}(\mu_{\dd,k})&=&\zeta=\dd\beta/\alpha;\lab{relation1}\\
p^*&=&\psi(\zeta)=\frac{1}{(\dd-1)(k-1)}.\lab{relation2}
\end{eqnarray}

We wish to bound the expected change in $L_t$.  We begin by bounding  $\z_t$ over the next two lemmas.

\begin{lemma}\lab{l:zetaMartingale}
There is a sufficiently large constant $B$ for which: {For any $0<\d<\hf$ and $c=c_{r,k}+n^{-\d}$,} there are constants $\rho_1>0$ and $\rho_2>0$ such that for every $t\ge t(B)$,
$$
-\frac{\rho_1}{n}\le \ex(\zeta_{t+1}-\zeta_t\mid G_t)\le -\frac{\rho_2}{n}.
$$
\end{lemma}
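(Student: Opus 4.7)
The plan is to compute $\ex(\zeta_{t+1}-\zeta_t\mid G_t)$ directly from the configuration-model description of one SLOW-STRIP step, and then evaluate the result at the fixed point $\zeta_t=\zeta$ using~\eqref{relation2}. Since $|\Delta D_t|,|\Delta N_t|\le rk$ and $N_t=\Theta(n)$ (by Lemma~\ref{l:B} for $t(B)\le t\le t_0$ and Proposition~\ref{p:Gt} for $t_0\le t\le\wtau$), a one-line Taylor expansion gives
\[
\zeta_{t+1}-\zeta_t=\frac{\Delta D_t-\zeta_t\Delta N_t}{N_t}+O(n^{-2}),
\]
where $\Delta D_t=D_{t+1}-D_t$ and $\Delta N_t=N_{t+1}-N_t$.

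Next I compute $\ex(\Delta D_t\mid G_t)$ and $\ex(\Delta N_t\mid G_t)$ via the configuration model. At iteration $t$ the algorithm removes one edge $e$ incident to a light vertex $v\in\msq$; $e$ consists of one vertex-copy of $v$ together with $r-1$ other vertex-copies chosen uniformly from the remaining $D_t+L_t-1$ points. The $v$-copy is light and does not affect $D_t$ or $N_t$. Using Observation~\ref{o:Lt} ($L_t=O(\gamma)$), each of the remaining $r-1$ vertex-copies lies in a heavy vertex of degree exactly $k$ with probability $\bar p_t(1+O(\gamma/n))$ (whereupon that vertex becomes light, contributing $-k$ to $\Delta D_t$ and $-1$ to $\Delta N_t$), in a heavy vertex of degree $>k$ with probability $(1-\bar p_t)(1+O(\gamma/n))$ (contributing $-1$ to $\Delta D_t$ only), and in a light vertex with probability $O(\gamma/n)$ (contributing nothing). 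Applying Lemma~\ref{l:monotone} to replace $\bar p_t$ by $\psi(\zeta_t)(1+O(n^{-1/2}\log n))$, summing, and substituting into the Taylor expansion yields the core identity
\[
\ex(\zeta_{t+1}-\zeta_t\mid G_t)=\frac{r-1}{N_t}\Bigl[\psi(\zeta_t)\bigl(\zeta_t-(k-1)\bigr)-1\Bigr]+O(\gamma/n^{2}).
\]

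At $\zeta_t=\zeta$, identity~\eqref{relation2} gives $\psi(\zeta)(\zeta-(k-1))-1=(\zeta-r(k-1))/((r-1)(k-1))$, which is a strictly negative constant by Lemma~\ref{l:rho}. By continuity of $\psi$, there exist $\eta>0$ and constants $c_1\ge c_2>0$ such that $\psi(x)(x-(k-1))-1\in[-c_1,-c_2]$ for every $x\in[\zeta-\eta,\zeta+\eta]$; combined with $N_t=\Theta(n)$ this delivers the desired two-sided bound on the drift once we know $\zeta_t$ lies in this window.

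The main obstacle is therefore propagating $\zeta_t\in[\zeta-\eta,\zeta+\eta]$ to every $t\ge t(B)$: the drift estimate is only informative when $\zeta_t$ is already in the window, so this invariant must be established independently. For Phase 2 ($t\ge t_0$) it follows immediately from Lemma~\ref{l:err}, which gives $\rho_t(j)=p_{\dd,k}(j)+O(\gamma/n)$ and hence $\zeta_t=\zeta+O(\gamma/n)$. For the intermediate range $t(B)\le t<t_0$, I would choose $B$ via Lemma~\ref{l:B} so that $|\hG_B\setminus\calc_k(H)|\le\eps n$ with $\eps$ small, which (using Lemma~\ref{lcoresize}) places $\zeta_{t(B)}$ well inside the window. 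Since the degree sequence of $G_t$ remains truncated-multinomial (Proposition~\ref{p:Poisson}) while both $N_t$ and $D_t$ are pinned close to $|\calc_k|$ and $\dd|E(\calc_k)|$ as non-core vertices are peeled off, a bootstrap---combining the deterministic bound $|\Delta\zeta_t|=O(1/n)$ with a Hoeffding--Azuma concentration identical in shape to the one in the proof of Lemma~\ref{llt2}---confirms that a.a.s.\ $\zeta_t$ remains within $\eta$ of $\zeta$ throughout, completing the proof.
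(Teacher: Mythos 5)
Your proposal follows essentially the same route as the paper: split the step into the $r-1$ randomly chosen vertex-copies, compute the drift in $\zeta_t$ in terms of $\bar p_t\approx\psi(\zeta_t)$, observe that the leading coefficient reduces (via \eqref{relation2}) to the strictly negative constant $\bigl(\zeta-r(k-1)\bigr)/\bigl((r-1)(k-1)\bigr)$ using Lemma~\ref{l:rho}, and then confine $\zeta_t$ to a small window around $\zeta$ via Lemma~\ref{l:B}. Two small notes: you invoke Observation~\ref{o:Lt} to claim an $O(\gamma/n^{2})$ error for all $t\ge t(B)$, but that observation only covers Phase~2 ($t\ge t_0$); for $t(B)\le t<t_0$ you only have $L_t=O(\eps n)$ from Lemma~\ref{l:B}, giving an $O(\eps/n)$ error, which is still harmless once $\eps$ is chosen small (this is exactly how the paper proceeds). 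Also, the closing ``bootstrap'' martingale argument for keeping $\zeta_t$ near $\zeta$ is unnecessary: once $|V(G_{t(B)})\setminus\C_k|\le\eps n$, the counts $N_t$ and $D_t$ are automatically within $O(\eps n)$ of the core values for all later $t$, which already pins $\zeta_t=\zeta+O(\eps)$ deterministically.
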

\begin{proof} Let $\eps>0$ be be a small constant whose value is to be determined later. By Lemma~\ref{l:B}, there is a constant $B=B(r,k,\eps)$, such that  a.a.s.\  $|V(G_{t(B)})\setminus \C_k|\le \eps n$.  Consider any $t\ge t(B)$, and let $v$ be the vertex taken from $\Q$ during iteration $T$. Consider the configuration model. SLOW-STRIP removes one vertex-copy from $v$ and another $\dd-1$ vertex-copies $u_1,\ldots,u_{\dd-1}$ uniformly at random chosen from the remaining ones.  We will split the single step into $\dd-1$ substeps $(T_{t,i})_{1\le i\le \dd-1}$, such that $u_i$ is removed in step $T_{t,i}$ for all $1\le i\le \dd-1$ and let $T_{t,0}=t$. We consider the contribution of $u_i$ to $\ex(\zeta_{{t+1}}-\zeta_{t})$.  If $u_{i+1}$ is contained in a light vertex, then its contribution is $0$, if it is contained in a heavy vertex (which occurs with probability at least $1/2$ if $\eps$ is sufficiently small), then its contribution is
{(extending the definition of $D_t,N_t, \bar p_t$ to $D_{T_{t,i}}, N_{T_{t,i}}, \bar p_{T_{t,i}}$ in the obvious manner):}
\be
\frac{D_{T_{t,i}}-1}{N_{T_{t,i}}}(1-\bar p_{T_{t,i}}) + \frac{D_{T_{t,i}}-k}{N_{T_{t,i}}-1} \bar p_{T_{t,i}}-\frac{D_{T_{t,i}}}{N_{T_{t,i}}}.\lab{expect1}
\ee
Ignoring the subscript, the above expression equals
\bea
&&\frac{D}{N}\Big((1-D^{-1})(1-\bar p)+\bar p(1-k/D)(1+1/N+O(n^{-2}))-1\Big)\non\\
&&\quad=\frac{1}{N}\left(-1+\bar p\left(\frac{D}{N}-(k-1)+O(n^{-1})\right)\right).\lab{expect2}
\eea
By Lemma~\ref{lcoresize} and {since $|V(G_{t(B)})\setminus \C_k|\le \eps n$}, we have
a.a.s.\ $D=d\beta n+O(\eps n)$, $N=\alpha n+O(\eps n)$ and so $\zeta_t=\dd\beta /\alpha +O(\eps){=\z +O(\eps)}$. Recall from~\eqn{relation2} that
\[
 p^*=\psi(\zeta)=\frac{1}{(d-1)(k-1).}
 \]
 By Lemma~\ref{l:monotone}, for all $t\ge {t(B)}$ and $0\le i\le d-1$,
\[\bar p_{{T_{t,i}}}=(1+O(n^{-1/2}\log n))\psi(\zeta_t)=\psi(\zeta)+O(\eps)=p^*+O(\eps).
\]
 We {substitute $\bar p=p^*$, $D=r\beta n$, $N=\alpha n$ into (\ref{expect2}), and obtain:}

 \be
 -1+\frac{1}{(\dd-1)(k-1)}\left(\frac{\dd\beta}{\alpha}-(k-1)+O(n^{-1})\right)= -1+\frac{1}{(\dd-1)(k-1)}\left(\zeta-(k-1)+O(n^{-1})\right).\lab{expect3}
 \ee

By Lemma~\ref{l:rho}, there is a $\sigma>0$ such that $\zeta<\dd(k-1)-\sigma$. Hence,
\bea
-1+\frac{1}{(\dd-1)(k-1)}\left(\zeta-(k-1)+O(n^{-1})\right)&<&-1+\frac{1}{(\dd-1)(k-1)}((\dd-1)(k-1)-\sigma/2)\nonumber\\
&<&-\sigma/2(\dd-1)(k-1).\nonumber
\eea
Since in every step, the quantity of variables in~\eqn{expect2} (e.g.\ $\bar p$ and $D/N$) differs from that in~\eqn{expect3} by $O(\eps)$, by choosing $\eps>0$ sufficiently small, there exists {constants $\rho_1',\rho_2'>0$ such that a.a.s.\ }
$$
 {-\frac{\rho_1'}{n}\le }\frac{D_{T_{t,i}}-1}{N_{T_{t,i}}}(1-\bar p_{T_{t,i}}) + \frac{D_{T_{t,i}}-k}{N_{T_{t,i}}-1} \bar p_{T_{t,i}}-\frac{D_{T_{t,i}}}{N_{T_{t,i}}}\le -\frac{\rho_2'}{n},
$$
for all $t\ge 0$ (here we also use that $N=\Theta(n)$).

{So the contribution of each $u_i$ to $\ex(\zeta_{{t+1}}-\zeta_{t})$ lies between  $-\rho_1'/n$ and $-\rho_2'/n$.  Since there are $r-1$ $u_i$'s, the lemma follows with $\r_1=(r-1)\r_1',\r_2=(r-1)\r_2'$.}
\end{proof}

In the next lemma, we obtain a coarse bound on $\zeta_t$. We will refine this bound in the later part of this paper (See Lemma~\ref{l:zetaseq}).

\begin{lemma} \lab{l:zetaT} There exists constant $B$ such that {for $c=c_{r,k}+n^{-\d}$} and for every $\eps>0$,
a.a.s.\ $\zeta_t\ge \zeta_{\tau}-n^{-1/2+\eps}$, for all $t\ge t(B)$.
\end{lemma}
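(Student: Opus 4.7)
The plan is to apply the Azuma-Hoeffding inequality (Lemma~\ref{l:azuma}) to the process $(\zeta_t)_{t\geq t(B)}$, using the drift estimate from Lemma~\ref{l:zetaMartingale}: since $\ex(\zeta_{t+1}-\zeta_t\mid G_t)\leq -\rho_2/n$, the expected change over $\tau-t$ steps is nonpositive, so typically $\zeta_\tau\leq \zeta_t$; the lemma merely asserts this down to fluctuations of order $n^{-1/2+\eps}$.

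First I would verify the bounded-difference condition $|\zeta_{t+1}-\zeta_t|=O(1/n)$ throughout $[t(B),\tau]$. A single SLOW-STRIP step removes one hyperedge, so $|N_{t+1}-N_t|\leq r$ and $|D_{t+1}-D_t|\leq k(r-1)$ (heavy vertices may become light when their degree first dips below $k$, but at most $r-1$ do so per step). Combined with $N_t=\Theta(n)$---which follows from Lemma~\ref{l:B}, Lemma~\ref{lcoresize}, and the fact that the $k$-core is contained in the heavy vertices---a direct expansion of $D_{t+1}/N_{t+1}-D_t/N_t$ yields the claimed bound.

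For each pair $(t,s)$ with $t(B)\leq t\leq t+s\leq t_0+k\gamma$, I apply Lemma~\ref{l:azuma} to $(\zeta_{t+i})_{i\geq 0}$ with $a_n=-\rho_2/n$, $c_n=O(1/n)$ and $j=n^{-1/2+\eps}$. Since $s\leq t_0+k\gamma-t(B)=O(n)$ and $c_n+|a_n|=O(1/n)$, the exponent is $-\Omega(n^{2\eps})$, so
\[\pr\left(\zeta_{t+s}-\zeta_t\geq -s\rho_2/n+n^{-1/2+\eps}\right)\leq e^{-\Omega(n^{2\eps})}.\]
Because $-s\rho_2/n\leq 0$, this implies $\pr(\zeta_{t+s}\geq\zeta_t+n^{-1/2+\eps})\leq e^{-\Omega(n^{2\eps})}$. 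A union bound over at most $O(n^2)$ such pairs $(t,s)$ gives that a.a.s.\ $\zeta_{t'}\leq \zeta_t+n^{-1/2+\eps}$ for every $t(B)\leq t\leq t'\leq t_0+k\gamma$. Specializing $t'=\tau$ and using Proposition~\ref{p:tau} ($\tau\leq t_0+k\gamma$ a.a.s.) yields the conclusion.

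The main technical wrinkle is that Lemma~\ref{l:zetaMartingale} supplies the drift bound only on a high-probability event, whereas Azuma formally requires the bound to hold pointwise. I would handle this by a standard stopping-time modification: let $\sigma$ be the first iteration $t\geq t(B)$ at which either the drift estimate or the bounded-differences bound fails, apply Azuma to the stopped process $(\zeta_{t\wedge\sigma})$, and note that $\sigma>t_0+k\gamma$ a.a.s., so the stopped process agrees with $\zeta_t$ throughout the relevant range. No single step of the argument is delicate on its own; the only place where one must be a bit careful is in propagating the a.a.s.\ qualifiers through to ensure the hypotheses of Azuma are genuinely met along every relevant trajectory.
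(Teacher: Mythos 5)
Your proposal is correct and follows essentially the same route as the paper: apply Azuma to the supermartingale $\zeta_t$ using the drift bound from Lemma~\ref{l:zetaMartingale}, the $O(1/n)$ bounded differences, and a union bound. You are slightly more explicit than the paper in handling the randomness of $\tau$ (union bounding over all pairs $(t,s)$ rather than fixing $t_1=\tau$) and in converting the a.a.s.\ drift bound into a pointwise one via a stopping-time truncation, but these are presentational refinements of the same argument rather than a different approach.
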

\begin{proof} By Lemma~\ref{l:zetaMartingale}, a.a.s. $\zeta_t$ is a supermartingale. Moreover, $\zeta_{t+1}-\zeta_t$ is at most $O(1/n)$. By Azuma's inequality, we have that for all $t_1>t_2\ge t(B)$ and for any $j>0$,
$$
\pr(\zeta_{t_1}-\zeta_{t_2}\ge j)\le \exp\left(-\Omega\left(\frac{j^2}{(t_1-t_2)n^{-2}}\right)\right).
$$
Since $\tau=O(n)$, for each $t(B)\le t\le \tau$
$$
\pr(\zeta_{\tau}-\zeta_t\ge n^{-1/2+\eps}) \le \exp(-\Omega(n^{2\eps})),
$$
and so the probability that there is $t$ with $\zeta_t<\zeta_{\tau}-n^{-1/2+\eps}$ is at most $O(n\exp(-\Omega(n^{2\eps})))=o(1)$. \end{proof}

And finally, we come to our proof of Lemma~\ref{llt1}, which we restate.

\newtheorem*{llt1b}{Lemma~\ref{llt1}}
\begin{llt1b}  There are constants $B,K$ such that: If $c=c_{\dd,k}+n^{-\d}$, then \aas for every $t\geq t(B)$,
\[\ex(L_{t+1}{\mid G_t})\leq L_t-Kn^{-\d/2}.\]
\end{llt1b}

\begin{proof}
 Consider the configuration model. In each step $t(B)\le t\le\tau$, the algorithm removes a vertex-copy in a light vertex and another $\dd-1$ vertex-copies $u_1,\ldots,u_{\dd-1}$ chosen uniformly from all remaining ones. For each $1\le i\le \dd-1$, let $h_{t,i}$ denote the probability that $u_i$ is contained in a light vertex. Recall that $\bar p_t=D_{t,k}/D_t$, the proportion of contribution to the total degree of heavy vertices from the vertices with degree $k$.  Then
\begin{equation}
\ex(L_{t+1}-L_t\mid G_t)=-1+\sum_{i=1}^{\dd-1} \Big(-h_{t,i}+(1-h_{t,i})(k-1) (\bar p_t+O(n^{-1}))\Big),\lab{barpt}
\end{equation}
where $O(n^{-1})$ accounts for the change of $\bar p_t$ caused by the removal of the first $i-1$ points. The above expression is maximized when $h_{t,i}=0$ for all $1\le i\le \dd-1$. Thus,
\begin{equation}
\ex(L_{t+1}-L_t\mid G_t)\le -1 + (\dd-1)(k-1)\bar p_t+O(n^{-1}).\lab{barpt2}
\end{equation}
Given $\zeta'>k$, let $\lambda(\zeta')$ denote the unique $\lambda$ satisfying $\la f_{k-1}(\la)/f_k(\la)=\zeta'$.
By Lemma~\ref{l:monotone},
$$
\bar p_t=\frac{e^{-\la(\zeta_t)}\la(\zeta_t)^{k-1}}{f_k(\la(\zeta_t))(k-1)!}(1+O(n^{-1/2}\log n)).
$$

By Lemma~\ref{lcoresize}, a.a.s.\
$$
\zeta_{\tau}=\frac{\mu(c)f_{k-1}(\mu(c))}{f_k(\mu(c))}+O(n^{-1/4}).
$$
By Lemma~\ref{l:diff}, there is a constant $K_1>0$ such that $\mu(c)-\mu_{\dd,k}\ge K_1n^{-\d/2}$. Recall that $\zeta=\dd\beta/\alpha=\mu_{\dd,k}f_{k-1}(\mu_{\dd,k})/f_k(\mu_{\dd,k})$. Then by Lemmas~\ref{l:gk} and~\ref{l:zetaT}, there is a constant $K_2$, such that a.a.s.\ for all $t\ge t(B)$, $\zeta_{t}-\zeta\ge\zeta_{\tau}-n^{-1/3}-\zeta \ge K_2n^{-\d/2}$, as $n^{-1/3}=o(n^{-\d/2})$.  By Lemma~\ref{l:degreeK},
$$
\frac{e^{-\la(\zeta)}\la(\zeta)^{k-1}}{f_k(\la(\zeta))(k-1)!}
=\frac{e^{-\mu_{\dd,k}}\mu_{\dd,k}^{k-1}}{f_k(\mu_{\dd,k})(k-1)!}=\frac{1}{(\dd-1)(k-1)}.
$$
Since a.a.s.\ for all $t\ge t(B)$, $\zeta_t-\zeta\ge K_2n^{-\d/2}$, we have a.a.s.\ $\la(\zeta_t)-\la(\zeta)=\la(\zeta_t)-\mu_{\dd,k}\ge K_3 n^{-\d/2}$ for some constant $K_3>0$ by Lemma~\ref{l:gk}. Now by Lemma~\ref{l:monotone},
\begin{equation}
\bar p_t\le (1+O(n^{-1/2}\log n))\frac{e^{-(\mu_{\dd,k}+K_3n^{-\d/2})}(\mu_{\dd,k}+K_3n^{-\d/2})^{k-1}}{f_k(\mu_{\dd,k}+K_3n^{-\d/2})(k-1)!}
<\frac{1}{(\dd-1)(k-1)}-K_4n^{-\d/2}.\lab{br1}
\end{equation}
It follows then that a.a.s.\ there is a $K>0$ such that
\[
\ex(L_{t+1}-L_t\mid G_t)\le -Kn^{-\d/2}.\qedhere
\]
\end{proof}


We close this section by proving a monotone property of $\psi(x)$, which will be useful in Section~\ref{slsi}.

\begin{lemma}\lab{l2:monotone}
$\psi_k(x)$ is a strict decreasing function on $x>k$.
\end{lemma}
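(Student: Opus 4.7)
The plan is to change variables from $x$ to $\lambda$, using that $\lambda$ is an increasing function of $x$, and then show that $\psi_k$ viewed as a function of $\lambda$ is strictly decreasing. Concretely: by Lemma~\ref{l:gk}, $g_k$ is strictly increasing on $(0,\infty)$, and $\lim_{\lambda\to 0^+} g_k(\lambda)=k$ (compare leading terms: $f_{k-1}(\lambda)\sim \lambda^{k-1}/(k-1)!$ and $f_k(\lambda)\sim \lambda^k/k!$). Hence $g_k$ maps $(0,\infty)$ bijectively and strictly increasingly onto $(k,\infty)$, so the inverse map $x\mapsto \lambda(x)$ is strictly increasing on $x>k$. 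It therefore suffices to show that
\[
\phi(\lambda)\;:=\;\frac{e^{-\lambda}\lambda^{k-1}}{f_{k-1}(\lambda)(k-1)!}
\]
is strictly decreasing in $\lambda$ on $(0,\infty)$.

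The key observation is a probabilistic reinterpretation. If $Y\sim\mathrm{Poisson}(\lambda)$, then $e^{-\lambda}\lambda^{k-1}/(k-1)!=\pr(Y=k-1)$ and $f_{k-1}(\lambda)=\pr(Y\geq k-1)$, so
\[
\phi(\lambda)\;=\;\pr(Y=k-1\mid Y\geq k-1).
\]
Taking reciprocals,
\[
\frac{1}{\phi(\lambda)}\;=\;\sum_{i\geq k-1}\frac{\lambda^i/i!}{\lambda^{k-1}/(k-1)!}\;=\;\sum_{j\geq 0}\frac{(k-1)!\,\lambda^j}{(k-1+j)!}.
\]
This is a power series in $\lambda$ with nonnegative coefficients whose $j=1$ term is $\lambda/k$, so it is strictly increasing on $\lambda>0$. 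Consequently $\phi(\lambda)$ is strictly decreasing on $(0,\infty)$, and combining with the monotonicity of $\lambda(x)$ gives that $\psi_k(x)=\phi(\lambda(x))$ is strictly decreasing on $x>k$.

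I do not anticipate any real obstacle: the whole argument is a one-line power series comparison once one notices the conditional-probability rewriting. The only minor point to be careful about is the endpoint behaviour $g_k(0^+)=k$, which guarantees that the range $x>k$ is exactly parametrised by $\lambda>0$ and rules out any degenerate boundary case.
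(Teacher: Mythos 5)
Your proof is correct, and the key step is genuinely different from the paper's. Both arguments first reduce, via the monotonicity of $g_k$ from Lemma~\ref{l:gk}, to showing that $\phi(\lambda)=e^{-\lambda}\lambda^{k-1}/\bigl(f_{k-1}(\lambda)(k-1)!\bigr)$ is strictly decreasing in $\lambda>0$; you are slightly more careful in also checking $g_k(0^+)=k$ and $g_k(\lambda)\to\infty$, so that $\lambda\mapsto g_k(\lambda)$ really does parametrise $x>k$. But from there the two proofs diverge. The paper differentiates $\phi$ directly, which leads to the sign analysis of
\[
(k-1)\sum_{j\geq k-1}\frac{\lambda^j}{j!}-\lambda\sum_{j\geq k-2}\frac{\lambda^j}{j!},
\]
split into the cases $\lambda\geq k-1$ and $\lambda<k-1$, the latter handled by a geometric-series comparison. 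You instead pass to the reciprocal and observe
\[
\frac{1}{\phi(\lambda)}=\sum_{j\geq 0}\frac{(k-1)!\,\lambda^j}{(k-1+j)!},
\]
a power series with nonnegative coefficients and a strictly positive $j=1$ coefficient, hence strictly increasing. This avoids differentiation and case analysis entirely, and makes the strictness of the monotonicity immediate; it is a cleaner route to the same conclusion. (The probabilistic gloss, $\phi(\lambda)=\pr(Y=k-1\mid Y\geq k-1)$ for $Y\sim\mathrm{Poisson}(\lambda)$, is not needed for the argument but explains where the identity comes from.) Both proofs buy the same statement; yours is shorter and more transparent.
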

\begin{proof} Let $g(\la)=e^{-\la}\la^{k-1}/f_{k-1}(\la)(k-1)!$. We will prove that $g(\la)$ is a decreasing function on $\la>0$.
It then follows that $\psi_k(x)$ is a decreasing function of $x$ by Lemma~\ref{l:gk}.
In order to show that $g'(\la)<0$, it only requires to show that for every $\la>0$,
$$
(k-1)\sum_{j\ge k-1}\frac{\la^j}{j!}-\la\sum_{j\ge k-2}\frac{\la^j}{j!}=\frac{\la^{k-2}}{(k-2)!}\left(-\la+(k-1-\la)\sum_{j\ge k-1}\frac{\la^{j-k+2}}{[j]_{j-k+2}}\right)
$$
is negative. It is trivially true if $\la\ge k-1$. Now assume that $\la<k-1$. Since for every $j\ge k-1$, we have
$$
\frac{\la^{j-k+2}}{[j]_{j-k+2}}\le\left(\frac{\la}{k-1}\right)^{j-k+2},
$$
where the inequality is strict expect for $j=k-1$. Thus,
\[
-\la+(k-1-\la)\sum_{j\ge k-1}\frac{\la^{j-k+2}}{[j]_{j-k+2}}<-\la+(k-1-\la)\sum_{j\ge 0}\left(\frac{\la}{k-1}\right)^{j+1}=-\la+(k-1-\la)\cdot\frac{\la}{k-1}\cdot\frac{1}{1-\frac{\la}{k-1}}=0.
\]
This completes the proof that $g(\la)$ is a decreasing function on $\la>0$.
\end{proof}

\section{Bounding the stripping number below the supercritical case.}\lab{scouple}
In this section, we complete the proof of Theorem~\ref{mt} by showing that the lower bound on
the stripping number holds for all $c_{r,k}-n^{-\d}\leq c<c_{r,k}+n^{-\d}$.

So consider any such $c$ and let  $H=\calh_{\dd}(n,p=c/n^{\dd-1})$.  We also consider a second random hypergraph $H'=\calh_{\dd}(n,p=(c_{r,k}+n^{-\d})/n^{\dd-1})$, and couple the two so that $H\subseteq H'$.  Consider the following stripping procedure to find the $k$-core of $H$:

\begin{enumerate}
\item Run Phase 1 of SLOW-STRIP on $H'$, thus obtaining $G'_{t_0}\subseteq H'$.
\item For each vertex $v$ removed from $H'$ in Step 1, we also remove $v$ from $H$.
We call the remaining hypergraph $G_{t_0}$.
\item Run SLOW-STRIP on $G_{t_0}$.
\end{enumerate}

Note that since $H\subseteq H'$, the $k$-core of $H$ is contained in the $k$-core of $H'$, and so the $k$-core of $H$ is contained in  $G_{t_0}=H'_{t_0}$. In other words, this is a valid way to obtain the $k$-core of $H$.

Note also that this is not equivalent to running SLOW-STRIP on $H$, since doing so could remove a different (larger) set of vertices in Phase 1, and thus yield a different subgraph $G_{t_0}$.  Nevertheless,
we still have $G_{t_0}\subseteq H$ and so the stripping number of $H$ is at least  the stripping number of $G_{t_0}$.

The number of hyperedges in $H'$ but not in $H$ is distributed exactly like the binomial $BIN({n\choose r},(c_{r,k}+n^{-\d}-c)/n^{r-1})$ and so is highly concentrated around its mean which is less than $\frac{n^r}{r!}\times 2n^{1-r-\d}=2n^{1-\d}/r!$. In particular, standard concentration bounds (eg. the Chernoff Bound) imply that \aas :
\begin{equation}\label{ehh'}
\mbox{the number of edges in $H'$ but not $H$ is at most }2 n^{1-\d}.
\end{equation}

We use $G_{i}$ to denote the subgraph of $H$ remaining after $i-t_0$ iterations of SLOW-STRIP on $G_{t_0}$.  We define $L_i, D_i, N_i$ as we did in Sections~\ref{sbsn} and~\ref{sec:llt2}. Similarly we define $L'_{t_0}, D'_{t_0}, N'_{t_0}$ to be the values of the same parameters for $G'_{t_0}$.

 Since $G'_{t_0}$ is the result of carrying out Phase 1 of SLOW-STRIP on $H'=\calh_{\dd}(n,p=(c_{r,k}+n^{-\d})/n^{\dd-1})$, \aas $L'_{t_0}, D'_{t_0}, N'_{t_0}$ satisfy the bounds in (\ref{vertex-edge}) and (\ref{L0b}).

Applying Proposition~\ref{p:tau} to $H'$ tells us that applying SLOW-STRIP to $G'_{t_0}$
would take  at least $\g/3$ iterations.  Each iteration removes one hyperedge, and the removed hyperedge is not in the $k$-core of $G'_{t_0}$ and hence not in the $k$-core of $G_{t_0}$.  By~(\ref{ehh'}), at most $2n^{1-\d}=o(\g)$ of those  hyperedges are not in $G_{t_0}$. Therefore,
SLOW-STRIP takes at least $\g/3-o(\g)>\g/4$ iterations on $G_{t_0}$.  It takes at most $k-1$ iterations to remove a vertex, and so the parallel stripping process, applied to $G_{t_0}$ removes at least $\g/4(k-1)$ vertices.

Since $L_i, D_i, N_i$ represent sums of the degrees of vertices, and since the number of edges in
$G'_{t_0}\bk G_{t_0}$ is at most the number of edges in $H'\bk H$, which by~(\ref{ehh'}) is \aas  less than
$2n^{1-\d}=O(\g^2/n)=o(\g)$, and applying  (\ref{vertex-edge}) and (\ref{L0b}) to $L'_{t_0}, D'_{t_0}, N'_{t_0}$, we have:
\begin{eqnarray*}
N_{t_0}&=&N'_{t_0}+o(\g)=\a n+O(\g)\\
D_{t_0}&=&D'_{t_0}+o(\g)=r\b n+O(\g)\\
L_{t_0}&=&L'_{t_0}+O(\g^2/n)=O(\g^2/n)
\end{eqnarray*}

Thus, the analysis of Section~\ref{sec:llt2} applies to the running of SLOW-STRIP to $G_{t_0}$, and yields the conclusion of Lemma~\ref{llt3}; i.e.  \aas for every $ t_0\leq t\leq t+\g/4$,
\[\ex(L_{t+1}{\mid G_t})= L_t \pm O(n^{-\d/2}).\]
The  same analysis as in the proof of Lemma~\ref{llt2} yields that (\ref{Lt}) holds; i.e. \aas for every $ t_0\leq t\leq t+\g/4$,
\begin{equation}\lab{ecoup1} L_{t}\leq  L_{t_0}+O(\g^2 n)=O(n^{1-\d}).
\end{equation}

The rest of the proof follows like the case $c=c_{r,k}+n^{-\d}$ from Section~\ref{sbsn}. We argued above that
SLOW-STRIP takes at least $\g/3-o(\g)>\g/4$ iterations on $G_{t_0}$.  It takes at most $k-1$ iterations to remove a vertex, and so the parallel stripping process, applied to $G_{t_0}$ removes at least $\g/4(k-1)$ vertices.  By~(\ref{ecoup1}) each iteration $i$ removes  $|L_{t(i)}|=O(n^{1-\d})$ vertices. So there must be at least $\g/O(n^{1-\d})=\Omega(n^{\d/2})$ iterations of the parallel stripping process. I.e. the stripping number of $G_{t_0}$ is at least $\Omega(n^{\d/2})$ and hence so is the stripping number of $H$.
\proofend

\section{Proof of Lemma~\ref{lsi}}\lab{slsi}

Recall that $G_t$ is the hypergraph remaining after $t$ iterations of SLOW-STRIP, and $\hG_i$ is the hypergraph remaining after $i-1$ iterations of the parallel stripping process.  Recall also that $t(i)$ is the iteration of SLOW-STRIP corresponding to the beginning of iteration $i$ of the parallel stripping process.  So $G_{t(i)}=\hG_i$.
Recall that $\tau$ denotes the step when SLOW-STRIP terminates.

Let $K_1>0$ be the constant from Lemma~\ref{lcoresize}, and define
\[\pi(G_t):=|V(G_t)|-\alpha n-\frac{K_1}{2} n^{1-\delta/2}.\]
So $\pi(G_t)$ is approximately the number of non-core vertices in $G_t$ plus $\frac{K_1}{2} n^{1-\delta/2}$.

 Recall the definitions of $L_t$, $D_t$, $N_t$, $D_{t,k}$, $\zeta_t$ and $\bar p_t$ from Section~\ref{sec:lp2} below Lemma~\ref{l:B}. We define the following parameter of $G_t$:
\[\br(G_t)=-1+(\dd-1)(k-1)\bar p_t.\]
By~\eqn{barpt} (since $h_{t,i}=O(L_i/n)$) and~\eqn{barpt2}, we have that a.a.s.\ for every $t(B)\le t\le \tau$,
\begin{equation}
\ex(L_{t+1}-L_t\mid G_t)=\br(G_t)+O(L_t/n),\quad \ex(L_{t+1}-L_t\mid G_t)\le\br(G_t)+O(n^{-1}).\lab{br}
\end{equation}
The second part of (\ref{br}) above is applied when we only require an upper bound on $\ex(L_{t+1}-L_t\mid G_t)$. However, in some cases we need a lower bound as well, and we will use the first part.

By~(\ref{br1}), there is a constant $K>0$ ($K=(r-1)(k-1)K_4$) such that
\begin{equation}
\mbox{a.a.s.\ for every}\ t(B)\le t\le \tau,\ \ \br(G_t)\le -Kn^{-\d/2}. \lab{brupper}
\end{equation}
Then, using Lemma~\ref{lcoresize}, it is easy to check  that
\be
\mbox{a.a.s.}\ \br(G_{\tau}=\C_k(H))=-\Theta(n^{-\d/2}). \lab{brcore}
\ee

In what follows, we will prove some relations between $\br(G_t)$ and $\pi(G_t)$ and between $\br(G_t)$ and $L_t$. We list below a few facts that we will use in our proofs.

Since $c=c_{r,k}+n^{-\d}, \d<\hf$, we can assume that there is a 2-core on a linear number of vertices. At each step of SLOW-STRIP, we remove at most one hyperedge. Thus there is a constant $Q=Q(r,k)$ such that
in every step, the average degree of the heavy vertices is changed by at most $\pm Q/n$; i.e.
\begin{equation}\lab{ezt}
\zeta_{t+1}-\zeta_{t}=O(1/n) \mbox{ uniformly for all } 0\leq t<\tau.
\end{equation}
Therefore,
$|\zeta_t-\zeta_{\tau}|\le \eps/2$ for all $t\ge t(B)$ by choosing sufficiently large $B$ by Lemma~\ref{l:B}. We also know that a.a.s.\ $|\zeta_{\tau}-\zeta|=o(1)$ by Lemma~\ref{lcoresize}. Hence, for all $t\ge t(B)$, $|\zeta_t-\zeta|\le |\zeta_t-\zeta_{\tau}|+|\zeta_{\tau}-\zeta|\le \eps$. This immediately gives (Fa) below.
\begin{description}
\item[(Fa)] For every $\eps>0$, we can choose $B$ sufficiently large such that \aas for all $t\ge t(B)$, $|\zeta_t-\zeta|<\eps$.
\item[(Fb)] For every $t\ge t(B)$, $\bar p_t=(1+O(n^{-1/2}\log n)) \psi(\zeta_t)$, by Lemma~\ref{l:monotone}.
\item[(Fc)] We can choose $\eps>0$ sufficiently small so that there are $c_1,c_2>0$ such that $-c_1<\psi'(x)<-c_2$ for all $x$ such that $|x-\zeta|<\eps$ by Lemma~\ref{l2:monotone} and since $\z>k$.
\end{description}
 By (Fa) and (Fc), we may assume $B$ is chosen so that
\begin{description}
\item[(Fc')] $-c_1<\psi'(\zeta_t)<-c_2$ uniformly for all $t\ge t(B)$.
\end{description}
By~(\ref{ezt}) we have
\begin{description}
\item[(Fd)] $\zeta_i=\zeta_t+O((i-t)/n)$ uniformly for every $0\le t\le i\le \tau$.
\end{description}

In the next lemma, we prove a more precise form of (Fd).

\begin{lemma}\lab{l:zetaseq}
A.a.s.\ for every $t(B)\le t<i\le\tau$,
\begin{enumerate}
\item[(a)] $\zeta_i\le \zeta_t+O(\log n/n)$;
\item[(b)] if $i-t\ge \log n$, then $\zeta_i=\zeta_t-\Theta((i-t)/n)$ uniformly.
\end{enumerate}
\end{lemma}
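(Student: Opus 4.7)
The plan is to apply the Hoeffding--Azuma inequality to two drift-corrected versions of $\zeta_t$ and combine the resulting one-sided bounds. By Lemma~\ref{l:zetaMartingale}, $-\rho_1/n\le \ex(\zeta_{t+1}-\zeta_t\mid G_t)\le -\rho_2/n$ uniformly for $t\ge t(B)$, with positive constants $\rho_1,\rho_2$. Hence $X_t:=\zeta_t+(\rho_2/n)\,t$ is a supermartingale and $Y_t:=\zeta_t+(\rho_1/n)\,t$ is a submartingale for $t\ge t(B)$. By~\eqref{ezt}, both processes have increments bounded by $O(1/n)$ uniformly for $t\in[t(B),\tau]$; this uses only $N_t=\Theta(n)$, which is a.a.s.\ by Proposition~\ref{p:Gt}.

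Applying Azuma to $X_t$ gives, for every fixed $t(B)\le t<i\le\tau$,
\[
\pr\!\big(X_i-X_t\ge \lambda\big)\le \exp\!\big(-\Omega(\lambda^2 n^2/(i-t))\big).
\]
Taking $\lambda=C\sqrt{(i-t)\log n}/n$ with $C$ a sufficiently large constant makes the right-hand side $o(n^{-3})$, so a union bound over the $O(n^2)$ admissible pairs (using $\tau=O(n)$ a.a.s.) yields, a.a.s.,
\[
\zeta_i-\zeta_t\le -\frac{\rho_2(i-t)}{n}+\frac{C\sqrt{(i-t)\log n}}{n}\qquad\text{for all }t(B)\le t<i\le\tau.
\]
Viewing the right-hand side as a function of $s=i-t\ge 0$, its maximum is attained at $s^\ast=\Theta(\log n)$ and equals $\Theta(\log n/n)$; this proves part (a).

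A symmetric argument, applied to the submartingale $Y_t$ to bound its lower tail, gives a.a.s., uniformly over the same range,
\[
\zeta_i-\zeta_t\ge -\frac{\rho_1(i-t)}{n}-\frac{C\sqrt{(i-t)\log n}}{n}.
\]
For part (b), once $i-t\ge C_0\log n$ with $C_0$ chosen so that $C\sqrt{(i-t)\log n}\le (\rho_2/2)(i-t)$, the square-root error terms in both inequalities are dominated by the linear drift; combining them yields $(\rho_2/2)(i-t)/n\le \zeta_t-\zeta_i\le 2\rho_1(i-t)/n$, i.e.\ $\zeta_i=\zeta_t-\Theta((i-t)/n)$ uniformly. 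The threshold $C_0\log n$ is of the same order as the stated $\log n$ threshold, and the implicit constant can be absorbed. The whole argument is essentially routine Azuma plus a union bound; the main care needed is in simultaneously invoking the a.a.s.\ events on which Lemma~\ref{l:zetaMartingale}, Proposition~\ref{p:Gt} and $\tau=O(n)$ all hold, and observing that the increment bound $|\zeta_{t+1}-\zeta_t|=O(1/n)$ is deterministic once $N_t=\Theta(n)$ (since a single hyperedge removal changes $D_t$ and $N_t$ each by $O(1)$).
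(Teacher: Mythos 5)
Your proof is correct and follows essentially the same approach as the paper: invoke Lemma~\ref{l:zetaMartingale} for the two-sided drift bound, apply Azuma (the paper's Lemma~\ref{l:azuma}) to the drift-corrected process, and take a union bound over the $O(n^2)$ pairs $(t,i)$. The only presentational difference is that you establish part (a) by optimizing the Azuma bound over $s=i-t$ directly, whereas the paper first proves (b) and then gets (a) from (b) combined with the deterministic increment bound $|\zeta_{j+1}-\zeta_j|=O(1/n)$ for the complementary range $i-t<\log n$; both routes give the same conclusion, and both proofs share the (harmless) feature that the threshold in (b) is really $K\log n$ for a constant $K$ depending on $\rho_1,\rho_2$ and the increment bound, rather than literally $\log n$.
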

\begin{proof}
 By Lemma~\ref{l:zetaMartingale}, a.a.s.\ there exist two constants $\rho_1>\rho_2>0$ such that for every $t\ge 0$,
$$
-\frac{\rho_1}{n}\le \ex(\zeta_{t+1}- \zeta_t\mid G_t)\le -\frac{\rho_2}{n}.
$$
Moreover, $|\zeta_{t+1}-\zeta_t|=O(1/n)$.
By Lemma~\ref{l:azuma}, for every $i>t$ and $j\ge 0$,
\bea
\pr(\zeta_{i}\ge \zeta_{t}-(i-t)\rho_2/n+j)&\le& \exp(-\Omega(j^2 n^2/(i-t))), \lab{1} \\
\pr(\zeta_{i}\le \zeta_{t}-(i-t)\rho_1/n-j)&\le& \exp(-\Omega(j^2 n^2/(i-t))). \lab{2}
\eea
Then by the union bound (by taking $j=(i-t)\rho_2/2n$ in~\eqn{1} and taking $j=(i-t)\rho_1/2n$ in~\eqn{2} for each $t$ and each $i\ge t+\log n$), we obtain (b). Part (a) follows by (b) and the fact that for each $i\le t+\log n$, we always have $\zeta_i=\zeta_t+O(\log n/n)$.
\end{proof}



{The next Lemma essentially says that if we can bound the expected change in $L_i$ then we can show $L_i$ is concentrated.  Recall that $\t$ is the stopping time of SLOW-STRIP; i.e.\ the first iteration $t$ for which $L_t=0$.}

\begin{lemma}\lab{lem:Lconcentration} If we have functions ${1>}a=a(n)>b=b(n)\geq \Theta(n^{-\d/2})$ such that:
\begin{enumerate}
\item[(i)] $L_{t}\ge n^{\d}\log^2 n$ and
\item[(ii)] $-a\leq \ex(L_{i+i}-L_i|G_i)\leq -b$ for every $i\ge t$.
\end{enumerate}
 Then, with probability at least $1-o(n^{-1})$,
 \begin{itemize}
 \item[(a)] $L_t-2a(i-t)<L_i<L_t-\hf  b(i-t)$ for all $i\geq t+ n^{\d}\log^{1.5} n$;
 \item[(b)] $L_i<2L_t$ for all $i\geq t$;
 \item[(c)] $t+\inv{2a}L_t<\tau<t+\frac{2}{b}L_t$.
  \end{itemize}
\end{lemma}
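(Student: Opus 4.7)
The argument is a routine application of Azuma's inequality, already packaged in Lemma~\ref{l:azuma}, once one observes the key point that \emph{each iteration of SLOW-STRIP removes a single hyperedge}, and consequently the total degree contributed by that hyperedge's vertices to $\msq$ changes by only a bounded amount. More precisely, if the removed hyperedge contains $v\in\msq$ and $r-1$ other vertices, each of the $r-1$ others either stays outside $\msq$, or enters $\msq$ with new degree $k-1$, or was already in $\msq$ with degree less than $k$ and loses one unit of degree; the same holds for $v$. Hence $|L_{i+1}-L_i|\le C$ for some constant $C=C(r,k)$. Combined with hypothesis~(ii), this lets us set up two supermartingales: $Y_i=L_i+bi$ satisfies $\ex(Y_{i+1}\mid G_i)\le Y_i$, and $Z_i=-L_i-ai$ satisfies $\ex(Z_{i+1}\mid G_i)\le Z_i$. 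Both have one-step increments bounded by $C+1$ (using $a,b<1$).

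For part (a), apply Lemma~\ref{l:azuma} to $Y_{t+j}$ with $X_{n,0}=L_t+bt$, $a_n=0$, $c_n=C+1$, and deviation parameter $j=\tfrac{1}{2}b(i-t)$. Since $Y_i-Y_t\ge L_t-\tfrac{1}{2}b(i-t)-L_t$ corresponds to $L_i\ge L_t-\tfrac{1}{2}b(i-t)$, we obtain
\[
\pr\!\Bigl(L_i\ge L_t-\tfrac12 b(i-t)\Bigr)\le \exp\!\Bigl(-\Omega\bigl(b^2(i-t)/(C+1)^2\bigr)\Bigr).
\]
For $i-t\ge n^{\d}\log^{1.5}n$ and $b\ge\Theta(n^{-\d/2})$, the exponent is $\Omega(\log^{1.5}n)$, so the probability is $o(n^{-2})$. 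A symmetric application to $Z_{t+j}$ with deviation $j=a(i-t)$ gives $\pr(L_i\le L_t-2a(i-t))=o(n^{-2})$. Since $\tau=O(n)$ trivially (one hyperedge is removed per iteration and $|E(H)|=O(n)$ \aas), a union bound over all $i\in[t,\tau]$ yields~(a) with probability $1-o(n^{-1})$.

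Part~(c) follows from (a) by sandwiching $\tau$. If $\tau\ge t+2L_t/b$, then applying the upper bound of~(a) at $i=t+\lceil 2L_t/b\rceil$ (which lies in the regime $i-t\ge n^{\d}\log^{1.5}n$, since $2L_t/b\ge 2n^{3\d/2}\log^2 n$) gives $L_i<L_t-\tfrac12 b\cdot 2L_t/b=0$, contradicting $L_i\ge 0$. Conversely, if $\tau\le t+L_t/(2a)$, then at $i=t+\lfloor L_t/(2a)\rfloor-1$ (which, since $L_t/(2a)\ge L_t/2\ge \tfrac12 n^{\d}\log^2 n\gg n^{\d}\log^{1.5}n$, is in the regime where (a) applies) the lower bound of~(a) gives $L_i>L_t-2a\cdot L_t/(2a)=0$, so the process cannot have stopped by step $i$; contradiction.

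Part~(b) splits according to whether $i-t<n^{\d}\log^{1.5}n$ or not. For $i-t\ge n^{\d}\log^{1.5}n$, part~(a) already gives $L_i<L_t<2L_t$. For $i-t<n^{\d}\log^{1.5}n$, Lemma~\ref{l:azuma} applied to $Y_{t+j}$ with deviation $j=L_t$ yields
\[
\pr(L_i\ge 2L_t)\le \pr\!\bigl(Y_i-Y_t\ge L_t\bigr)\le \exp\!\Bigl(-\Omega\bigl(L_t^2/\bigl((i-t)(C+1)^2\bigr)\bigr)\Bigr),
\]
and the ratio $L_t^2/(i-t)\ge (n^{\d}\log^2 n)^2/(n^{\d}\log^{1.5}n)=n^{\d}\log^{2.5}n$, so each term is $o(n^{-3})$. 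A union bound over the at most $n^{\d}\log^{1.5}n$ such values of $i$ gives $o(n^{-1})$, completing the proof. The argument presents no real obstacle beyond careful bookkeeping of the three regimes above; the only conceptual point is to notice that the hypotheses (i) and (ii) are exactly what one needs to make the Azuma exponent $b^2(i-t)$ (respectively $a^2(i-t)$) of order at least $\log^{1+\Omega(1)}n$.
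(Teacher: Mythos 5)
Your proof follows essentially the same route as the paper's: both are applications of the packaged Azuma inequality (Lemma~\ref{l:azuma}), using the bounded one-step increment of $L_i$ (which you bound by a constant $C(r,k)$, the paper by $(r-1)(k-1)$) together with hypothesis~(ii) to set up the two drift-corrected supermartingales. Parts (a) and the ``large $i-t$'' half of (b) are fine. Two small differences from the paper: for the range $i-t<n^{\d}\log^{1.5}n$ in (b), the paper uses the deterministic bound $L_{j+1}<L_j+(r-1)(k-1)$ (which gives $L_i<2L_t$ immediately for $i<t+L_t/(r-1)(k-1)$) rather than a second Azuma application; and for the lower bound in (c), the paper re-applies Lemma~\ref{l:azuma} once with deviation $j=\tfrac12 L_t$, whereas you argue by contradiction from (a).

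That last step is where your write-up has a gap. You argue: if $\tau\le t+L_t/(2a)$, then evaluating the lower bound of (a) at $i=t+\lfloor L_t/(2a)\rfloor-1$ gives $L_i>0$, a contradiction. But (a) only controls $L_i$ for $i\le\tau$; under your hypothesis for contradiction, $\tau$ may well be smaller than this $i$, in which case $L_i$ is not in the domain of (a) --- the very thing you are trying to establish ($\tau>i$) is what is needed to invoke (a) at $i$. There is also a benign off-by-one: even granting the step, it only yields $\tau\ge t+\lfloor L_t/(2a)\rfloor$, which is weaker than the stated $\tau>t+L_t/(2a)$. Both issues are easy to patch. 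The cleanest fix is to note that $L$ decreases by at most $r$ per iteration of SLOW-STRIP, so $\tau-t\ge L_t/r>n^{\d}\log^{1.5}n$ holds \emph{deterministically}; then apply the lower bound of (a) directly at $i=\tau$ to get $0=L_{\tau}>L_t-2a(\tau-t)$, i.e.\ $\tau>t+L_t/(2a)$, with no contradiction argument needed. (Alternatively, observe that the submartingale $L_i+ai$ used for the lower bound of (a) retains its drift condition after $\tau$, so that half of (a) extends to the stopped process $L_{\min(i,\tau)}$, which also closes the gap. The paper's separate Azuma application with $j=\tfrac12 L_t$ is yet another way to avoid the issue.) The same deterministic observation $\tau-t\ge L_t/r>n^{\d}\log^{1.5}n$ also quietly justifies that $L_i$ is defined for the range $i-t<n^{\d}\log^{1.5}n$ in your proof of (b).
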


\begin{proof}   We start with the upper bound in part (a).
We will apply Lemma~\ref{l:azuma} with $X_{n,\ell}=L_{t+\ell}-L_t$. So we can take $a_n=-b$.  At each step of SLOW-STRIP, we add at most $r-1$ vertices to $\msq$, each of degree at most $k-1$. So $L_i$ increases by at most $(r-1)(k-1)$ and decreases by at most $r$, and so we can take $c_n=(r-1)(k-1)$.

Setting $\ell=i-t\geq n^{\d}\log^{1.5} n$, if $L_i\geq L_t-\hf  b(i-t)$ then $X_{n,\ell}-X_{n,0}=L_i-L_t\geq \ell a_n +\hf\ell b$.  By Lemma~\ref{l:azuma} with $j=\hf\ell b$, the probability of this is at most
\[\exp\left(\frac{(\hf\ell b)^2}{2\ell((r-1)(k-1) + |b|)^2}\right)
\leq\exp(-\Omega(b^2\ell))\leq\exp(-\Omega(\log^{1.5}n))=o(n^{-2}).\]
The lower bound in part (a) is nearly identical, but this time we apply Lemma~\ref{l:azuma} with $X_{n,\ell}=L_t-L_{t+\ell}$,  $a_n=a$ and $j=\ell a$.  Applying the union bound for the at most $n$ choices for $i$ shows that (a) holds with probability at least $1-o(n^{-1})$.

 For part (b): If $i<t+L_t/(r-1)(k-1)$ then the fact that $L_{j+1}<l_j+(r-1)(k-1)$ implies that $L_i<2L_t$.  If $i\geq t+L_t/(r-1)(k-1)>  t+ n^{\d}\log^{1.5} n$, then part (a) implies $L_i<L_t$.

For part (c): Part (a) implies $L_i<0$ for $i\geq t+\inv{2b}L_t>t+n^{\d}\log^{1.5} n$; this yields the upper bound on $\t$.  For the lower bound on $\t$, we apply the same argument used for the lower bound in part (a) with $\ell\leq\inv{2a}L_t$  but with $j=\hf L_t$.  This time we get
\[\pr(L_{t+\ell}<\hf L_t)<\exp(-\Omega(L_t^2/\ell))\leq\exp(-\Omega(aL_t))=o(n^{-2}),\]
thus providing the upper bound on the stopping time.

\end{proof}

In the following lemmas, we will link $\pi(G_t)$ and $L_t$ with $\br(G_t)$ respectively.

\begin{lemma} \lab{lem:pi-br} There are two constants $C_1,C_2>0$ such that
a.a.s.\ $-C_1\pi(G_t)/n\le \br(G_t)\le -C_2\pi(G_t)/n$ for every $t\ge t(B)$.
\end{lemma}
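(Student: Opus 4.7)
My plan is to establish separately that $\br(G_t)=-\Theta(\zeta_t-\zeta)$ and $\pi(G_t)/n=\Theta(\zeta_t-\zeta)$, and then combine.

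For the first reduction, fact (Fb) (Lemma~\ref{l:monotone}) gives $\bar p_t=\psi(\zeta_t)(1+O(n^{-1/2}\log n))$; combining with the identity $\psi(\zeta)=1/((\dd-1)(k-1))$ from~\eqref{relation2} and the two-sided bound $\psi'(\zeta_t)\in[-c_1,-c_2]$ from (Fc'), the mean value theorem yields
\[
\br(G_t)=(\dd-1)(k-1)\bigl(\psi(\zeta_t)-\psi(\zeta)\bigr)+O(n^{-1/2}\log n)=-\Theta(\zeta_t-\zeta).
\]
The error is absorbed since, by Lemmas~\ref{l:zetaT} and~\ref{l:diff}, a.a.s.\ $\zeta_t-\zeta\ge\zeta_\t-\zeta-n^{-1/2+\eps}=\Omega(n^{-\d/2})\gg n^{-1/2}\log n$ for $\d<1/2$.

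For the second reduction, I would write $\pi(G_t)-\pi(G_\t)=|V(G_t)|-|V(G_\t)|$ and use $\pi(G_\t)=(K_1/2)n^{1-\d/2}+O(n^{3/4})$ from Lemma~\ref{lcoresize}. The goal is to sandwich $|V(G_t)|-|V(G_\t)|$ between two constant multiples of $n(\zeta_t-\zeta_\t)$. The lower bound $|V(G_t)|-|V(G_\t)|\ge N_t-N_\t$ is immediate since the number of light vertices is nonnegative, while the upper bound $|V(G_t)|-|V(G_\t)|\le\dd(\t-t)$ holds because each iteration of SLOW-STRIP removes at most $\dd$ vertices. The two central estimates are then $N_t-N_\t=\Theta(\t-t)$ and $\zeta_t-\zeta_\t=\Theta((\t-t)/n)$: the per-step expected drop in $N_t$ equals $(\dd-1)\psi(\zeta_t)(1+o(1))$, which by (Fa) and continuity of $\psi$ stays between two positive constants on $[t(B),\t]$, and by Lemma~\ref{l:zetaMartingale} the per-step expected drop in $\zeta_t$ lies in $[\rho_2/n,\rho_1/n]$. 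Given $O(1)$ and $O(1/n)$ per-step variation respectively, an Azuma--Hoeffding argument of exactly the kind used in Lemma~\ref{l:zetaseq} delivers both estimates a.a.s.\ uniformly whenever $\t-t\ge\log^2 n$; the short regime $\t-t<\log^2 n$ is handled by trivial deterministic bounds on both sides of the target identity.

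Putting the pieces together with $\zeta_\t-\zeta=\Theta(n^{-\d/2})$ from Lemma~\ref{l:diff},
\[
\pi(G_\t)+\Theta(n(\zeta_t-\zeta_\t))\;\le\;\pi(G_t)\;\le\;\pi(G_\t)+O(n(\zeta_t-\zeta_\t)),
\]
and the bound $\zeta_t-\zeta\ge\zeta_\t-\zeta=\Omega(n^{-\d/2})$ lets us absorb $\pi(G_\t)=\Theta(n^{1-\d/2})$ into the main term, collapsing this to $\pi(G_t)=\Theta(n(\zeta_t-\zeta))$ uniformly on $[t(B),\t]$. Combined with the first reduction, $\br(G_t)=-\Theta(\pi(G_t)/n)$. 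The main obstacle will be the two-sided Azuma--Hoeffding argument for $N_t$ uniformly in $t$, in particular verifying that the drift of $N_t$ stays bounded away from zero throughout $[t(B),\t]$ and not just in the immediate vicinity of the $k$-core.
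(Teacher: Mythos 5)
Your proof is correct, but it takes a genuinely different and somewhat more laborious route than the paper's.

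\textbf{Comparison.} Both you and the paper ultimately hinge on the Taylor expansion of $\psi$ via (Fb) and (Fc') to translate between $\br(G_t)$ and $\zeta_t$, and both use Lemma~\ref{l:zetaseq}(b) to translate between $\zeta_t-\zeta_\t$ and $\t-t$. The difference lies in how $\pi(G_t)$ is connected to the rest. The paper observes that $\pi(G_t)-\pi(G_\t)$ \emph{is} the number of vertices removed on $[t,\t]$, and that number is sandwiched between $(\t-t)/k$ and $\t-t$ (up to the constant $\dd$) by a purely deterministic, non-probabilistic counting: one edge removed per step, at most $k-1$ steps charged to each light vertex at the head of $\msq$. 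That single combinatorial observation gives $\pi(G_t)-\pi(G_\t)=\Theta(\t-t)$ for free, with no concentration argument. You instead lower bound $\pi(G_t)-\pi(G_\t)$ by $N_t-N_\t$ and then prove $N_t-N_\t=\Theta(\t-t)$ via a fresh Azuma argument with drift $(\dd-1)\psi(\zeta_t)(1+o(1))$. This is sound --- the drift is bounded in $(0,1)$ uniformly over $[t(B),\t]$ by (Fa) and continuity of $\psi$, bounded increments hold, and $\t-t\ge\log^2 n$ is comfortably enough for the union bound --- but it recreates probabilistically what the paper gets combinatorially. Your obstacle paragraph at the end correctly identifies the burden you've taken on; the paper's route makes that burden evaporate. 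On the plus side, your organization is arguably more symmetric: you reduce both sides ($\br$ and $\pi/n$) to the common pivot $\zeta_t-\zeta$ and then merge, whereas the paper first establishes $\t-t=\Theta(\pi(G_t))$ and splits cases at $t'$ (where $\pi(G_t)\approx K_1 n^{1-\d/2}$); your case split at $\t-t\ge\log^2 n$ plays the same role. One small point worth making explicit when you write it up: you need $\zeta_t>\zeta$ uniformly a.a.s.\ before the MVT step can deliver a sign; you do supply this via Lemma~\ref{l:zetaT} together with $\zeta_\t-\zeta=\Theta(n^{-\d/2})$ from Lemmas~\ref{lcoresize} and~\ref{l:diff}, but it should be stated up front rather than buried in the error-absorption remark.
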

\begin{proof} Recall that $\zeta_t$ denotes the average degree of heavy vertices in $G_t$. Let $t'$ be the maximum integer such that $\pi(G_{t'})\ge K_1n^{1-\d/2}$. Note that $\pi(G_t)$ is a non-increasing function of $t$. Hence, for all $t\le t'$, $\pi(G_t)\ge K_1n^{1-\d/2}$.

By Lemma~\ref{lcoresize}, a.a.s.\ $\pi(G_{\tau})\sim (K_1/2)n^{1-\d/2}$. Since for every $t$, $|\pi(G_t)-\pi(G_{t+1})|\le 1$ as at most one vertex is removed in each step, we have that a.a.s.\ for all $t\le t'$, $\tau-t\ge \pi(G_{t})-\pi(G_{\tau})=\Omega(\pi(G_t))$. In particular, $\tau-t'\ge \pi(G_{t'})-\pi(G_{\tau})\ge (K_1/3) n^{1-\d/2}$. On the other hand, for every $t$, $\tau-t\le k(\pi(G_t)-\pi(G_{\tau}))\le k\pi(G_t)$, since every light vertex in the queue $\Q$ takes less than $k$ steps to be removed. So a.a.s.\ for every $t\le t'$, $\tau-t=\Theta(\pi(G_t))$ uniformly for all $t$ and so
\be
(K_1/3)n^{1-\d/2}\le \tau-t'\le \tau-t=\Theta(\pi(G_t)),\lab{t}
 \ee
uniformly for all $t$.

If $t\leq t'$ then $\pi(G_t)/n\ge K_1n^{-\d/2}$ and so $\t-t\geq\Theta(n^{1-\d/2})>\log n$ by~\eqn{t}.
Thus by Lemma~\ref{l:zetaseq}(b)
we have $\zeta_t=\zeta_{\tau}+\Theta((\tau-t)/n)$. By (Fb) and (Fc'),
a.a.s.\ for each such $t$,
\[\bar p_t-\bar p_{\tau}=\Theta(1)(\psi(\zeta_t)-\psi(\zeta_{\tau}))=-\Theta(1)(\zeta_t-\zeta_{\tau})=-\Theta((\tau-t)/n)=-\Theta(\pi(G_t)/n),
\]
where the constants in the asymptotic notations above are uniform for all $t$. So our lemma holds for all $t\le t'$, by the definition of $\br(G_i)$ and the fact that, by~\eqn{br1},  a.a.s.\ $\br(G_{\tau})=O(n^{-\d/2})=O(\pi(G_t)/n)$.

Now we consider $t>t'$. All constant bounds involved in the asymptotic notations below will be uniform for all $t$. As we proved before, a.a.s.\ $\tau-t'=O(n^{1-\d/2})$. Thus, for all $t'\le t\le\tau$, $\tau-t\le \tau-t'=O(n^{1-\d/2})$. Then by~(\ref{ezt}) we have
\be
\zeta_{\tau}-\zeta_t=O(n^{-\d/2}) \lab{11}
\ee
By (Fb) and (Fc') 
we have {that for all $t(B)\leq t<i\leq\t$:
\be
\br(G_t)-\br(G_i)=\Theta(1)(\bar p_t-\bar p_{i})=\Theta(1)(\psi(\zeta_t)-\psi(\zeta_{i}))=\Theta(1)(\zeta_{i}-\zeta_{t}).\lab{br-zeta}
\ee
}
Then, by Lemma~\ref{l:zetaseq}(a) and~\eqn{11}, there is a constant $C>0$ such that
\[
-Cn^{-\d/2}\le\br(G_t)-\br(G_{\tau})\le C\log n/n.
\]
Since $\br(G_{\tau})=-\Theta(n^{-\d/2})$ a.a.s.\ by~\eqn{brcore}, we have that a.a.s.\ for all $t>t'$, $\br(G_t)=-\Theta(n^{-\d/2})$. The definition of $t'$ and Lemma~\ref{lcoresize} imply that  $\pi(G_t)/n=\Theta(n^{-\d/2})$ for all $t>t'$.  This yields our lemma.
\end{proof}


The following is a key lemma to prove Lemma~\ref{lsi}(a,b).

\begin{lemma}\lab{lem:l-br} There are constants $D_1,D_2,B>0$ such that
a.a.s.\ for all $t\geq t(B)$,
\begin{enumerate}
\item[(a)] for all $t$ such that $L_t\ge n^{1-\d}$, $-D_1\sqrt{L_t/n}\le \br(G_t)\le -D_2\sqrt{L_t/n}$.
\item[(b)] for all $t$ such that $L_t<n^{1-\d}$, $-D_1n^{-\d/2}\le\br(G_t)\le -D_2n^{-\d/2}$.
\end{enumerate}
\end{lemma}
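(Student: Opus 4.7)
The plan is to reduce the desired bounds on $\br(G_t)$ to equivalent bounds on $\pi(G_t)$ via Lemma~\ref{lem:pi-br}, which asserts $|\br(G_t)|=\Theta(\pi(G_t)/n)$. Thus it suffices to establish the corresponding statements about $\pi(G_t)$: (a$'$) $\pi(G_t)=\Theta(\sqrt{nL_t})$ when $L_t\ge n^{1-\d}$, and (b$'$) $\pi(G_t)=\Theta(n^{1-\d/2})$ when $L_t<n^{1-\d}$.

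Part (b$'$) is the easier one. The lower bound $\pi(G_t)\ge\Omega(n^{1-\d/2})$ is immediate from $\pi(G_t)\ge\pi(G_\tau)\sim(K_1/2)n^{1-\d/2}$ (Lemma~\ref{lcoresize}) together with monotonicity of $\pi$. For the matching upper bound, I will bound $\tau-t$: whenever $L_i\le Cn^{1-\d/2}$ the correction $O(L_i/n)$ in~\eqn{br} is $o(n^{-\d/2})$, so~\eqn{brupper} yields the uniform drift bound $\ex(L_{i+1}-L_i\mid G_i)\le-\tfrac12 Kn^{-\d/2}$, and Lemma~\ref{lem:Lconcentration}(c) then gives $\tau-t\le O(L_t\cdot n^{\d/2})=O(n^{1-\d/2})$. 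Because $\pi$ drops by at most one per iteration, $\pi(G_t)\le\pi(G_\tau)+(\tau-t)=O(n^{1-\d/2})$. The very small regime $L_t<n^{\d}\log^2 n$, outside the scope of Lemma~\ref{lem:Lconcentration}, is handled by the trivial estimate $\tau-t\le(k-1)L_t$, since each queue member needs at most $k-1$ iterations to be processed.

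The core of (a$'$) is the following drift calculation. Apply Azuma--Hoeffding to the martingale $M_s:=L_{t+s}-L_t-\sum_{j=t}^{t+s-1}\ex(L_{j+1}-L_j\mid G_j)$, whose increments are $O(1)$; taking $s=\tau-t$ and using $L_\tau=0$ gives
\[ L_t=-\sum_{i=t}^{\tau-1}\ex(L_{i+1}-L_i\mid G_i)\;\pm\;O\!\left(\sqrt{(\tau-t)\log n}\right). \]
Substituting $\ex(L_{i+1}-L_i\mid G_i)=-\Theta(\pi(G_i)/n)+O(L_i/n)$ from~\eqn{br} and Lemma~\ref{lem:pi-br} converts the right-hand side into $\Theta(\tfrac1n\sum_i\pi(G_i))$ plus lower-order corrections. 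The bound $\sum_i\pi(G_i)\le(\tau-t)\pi(G_t)\le(k-1)\pi(G_t)^2$ (using $\tau-t\le(k-1)(\pi(G_t)-\pi(G_\tau))$, since each processed light vertex triggers at most $k-1$ edge removals) yields $L_t\le O(\pi(G_t)^2/n)$, hence $\pi(G_t)\ge\Omega(\sqrt{nL_t})$. For the matching upper bound on $\pi(G_t)$, let $t^*$ be the largest index with $\pi(G_{t^*})\ge\pi(G_t)/2$; since $\pi$ drops by at most one per step, $t^*-t\ge\pi(G_t)/2$ and $\pi(G_i)\ge\pi(G_t)/2$ throughout $[t,t^*]$, so $\sum_i\pi(G_i)\ge\pi(G_t)^2/4$, giving $L_t\ge\Omega(\pi(G_t)^2/n)$ and hence $\pi(G_t)\le O(\sqrt{nL_t})$.

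The main technical obstacle will be ensuring that all error terms are dominated by the $\Theta(\pi(G_t)^2/n)$ main term. In the regime of (a$'$) we have $L_t\ge n^{1-\d}\gg n^{\d}\log^2 n$ (using $\d<\hf$), and by choosing the constant $B$ in Lemma~\ref{l:B} large enough we may assume $\pi(G_i)/n$ and hence $|\br(G_i)|$ stay below any preset small constant, so that Lemma~\ref{lem:Lconcentration}(b) applies with drift parameter $a<1$ and $b=\Omega(n^{-\d/2})$; this gives $L_i\le 2L_t$ on $[t,\tau]$. The correction $\sum O(L_i/n)=O((\tau-t)L_t/n)=O(\pi(G_t)L_t/n)$ is then at most a small constant fraction of $L_t$ and can be absorbed into the implicit constants in $\Theta(\cdot)$. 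The Azuma error $O(\sqrt{(\tau-t)\log n})=O(\sqrt{\pi(G_t)\log n})$ is $o(\pi(G_t)^2/n)$ because $\pi(G_t)\ge\Omega(n^{1-\d/2})$ and $\d<\hf$. Finally, a union bound over $t(B)\le t<\tau$ upgrades all the pointwise high-probability estimates to hold simultaneously \aas.
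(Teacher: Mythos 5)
Your strategy — reducing to $\pi(G_t)=\Theta(\sqrt{nL_t})$ via Lemma~\ref{lem:pi-br} and then integrating the drift $\ex(L_{i+1}-L_i\mid G_i)=\br(G_i)+O(L_i/n)=-\Theta(\pi(G_i)/n)+O(L_i/n)$ against the monotonicity of $\pi$ — is a reasonable reformulation of what the paper does, and rests on the same ingredients (Lemma~\ref{lem:pi-br}, the drift estimate~\eqn{br}, a supermartingale deviation bound). Where the paper first bounds $\tau-t$ through Lemma~\ref{lem:Lconcentration}(c) and then feeds $\pi(G_t)-\pi(G_\tau)=\Theta(\tau-t)$ into Lemma~\ref{lem:pi-br}, you instead sum the drift directly and show $\sum_{i=t}^{\tau}\pi(G_i)=\Theta(\pi(G_t)^2)$; these are two bookkeepings of the same quadratic relation, so I would call this essentially the paper's approach with slightly different accounting.

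There is, however, a concrete error in your handling of (b$'$) in the regime $L_t<n^{\d}\log^2 n$. You write ``$\tau-t\le(k-1)L_t$, since each queue member needs at most $k-1$ iterations to be processed,'' but this only counts iterations spent on vertices already in $\msq$ at time $t$. Processing them triggers a cascade of new light vertices entering $\msq$, and because the offspring mean of this cascade is $1-\Theta(n^{-\d/2})$, the process can run for $\Theta(L_t\, n^{\d/2})$ further iterations, not $O(L_t)$ — controlling this cascade is exactly what Lemmas~\ref{llt1} and~\ref{lem:Lconcentration} are for. The conclusion you want is still true: letting $t'$ be the last time with $L_{t'}\ge n^{\d}\log^2 n$, you have $\tau-t\le\tau-t'=O(L_{t'}n^{\d/2})=O(n^{3\d/2}\log^2 n)=o(n^{1-\d/2})$ since $\d<\hf$, and this is in effect what the paper does by splitting at $t_1$ and showing $\br(G_t)$ barely moves after $t_1$. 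Two smaller points worth spelling out: your Azuma bound is applied at the random time $\tau-t$, so you need either a deterministic (but $G_t$-measurable) cap such as $T=k\pi(G_t)$ on $\tau-t$ or a maximal inequality; and your lower bound $\sum_i\pi(G_i)\ge\pi(G_t)^2/4$ tacitly assumes $\pi(G_t)>2\pi(G_\tau)$, which fails near the boundary of regimes (a) and (b) and needs a short separate argument there.
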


\begin{proof} We take $B$ large enough so that the relevant preceeding results hold.

Let $t_1$ be the smallest $t$ such that  $L_{t+1} <n^{\d}\log^2 n$. We first prove the lemma for all $t\le t_1$.

 We have $\br(G_t)\leq -Kn^{-\d/2}$ by~\eqn{brupper}.  By Lemma~\ref{l:zetaseq}(a), a.a.s.\ for all $t(B)\leq t<i\leq\t$, we have $\zeta_i\le \zeta_t+O(\log n/n)$. Thus by (Fb) and (Fc') and the definition of $\br(G_i)$,  a.a.s.\ for all $t(B)\leq t<i\leq\t$,
\bean
\br(G_i)&=&\br(G_t) -\Theta(1)(\bar p_t-\bar p_i)=\br(G_t)-\Theta(1)(\psi(\z_t)-\psi(\z_i))=\br(G_t)-\Theta(1)(\z_i-\z_t)\\
&\ge& \br(G_t)-\Theta\left(\frac{\log n}{n}\right)\ge 2\br(G_t),
\eean
since $\log n/n=o(\br(G_t))$. By~\eqn{br}, a.a.s.\ for every $t(B)\leq t<i\leq\t$,
\[
 2\br(G_t)-O(L_i/n)\leq \ex(L_{i+1}-L_i\mid G_i)\leq \br(G_i)+O(n^{-1}).
\]
We know that a.a.s.\ for every $t(B)\leq t<i\leq\t$, we have $\br(G_t),\br(G_i)=-\Omega(n^{-\d/2})$  by~\eqn{brupper}. So,  if $t\leq t_1$ then $L_t\geq n^{\d}\log^2 n$ and so:
\be
2\br(G_t)-AL_i/n\le \ex(L_{i+1}-L_i\mid G_i)\le -Kn^{-\d/2},\lab{ebr*}
\ee
for some constants $A,K>0$, uniformly for all  $t(B)\leq t<i\leq t_1$. Applying the union bound to Lemma~\ref{lem:Lconcentration}(b) yields a.a.s.\ for all $t(B)\leq t<i\leq t_1$, $L_i\le 2L_t$. So, the left hand side of~\eqn{ebr*} is at least $2\br(G_t)-2AL_t/n$.

So we can apply Lemma~\ref{lem:Lconcentration}(c) with $a=-2\br(G_t)+2AL_t/n$, to show that with probability $1-o(n^{-1})$, our stopping time $\t\geq t+ 4A L_t/(|\br(G_t)|+L_t/n)$. The union bound shows that this holds for all $t(B)\leq t\leq t_1$.

We delete a vertex at least once every $k-1$ steps of SLOW-STRIP, and so  $|V(G_t)\setminus V(G_{\tau})|\geq (\t-t)/(k-1)$. Therefore,  applying Lemma~\ref{lcoresize}, we have that \aas for all $t(B)\leq t\leq t_1$:
\bea
\pi(G_t)&=&|V(G_t)|-\alpha n-\frac{K_1}{2} n^{1-\delta/2}
\geq|V(G_t)| -|V(G_{\tau})|+\frac{K_1}{3}n^{1-\d/2}\nonumber\\
&\geq&\frac{4A L_t}{(k-1)(|\br(G_t)|+L_t/n)}+\frac{K_1}{3}n^{1-\d/2}. \lab{epgt}
\eea

By Lemma~\ref{lem:pi-br}, there exists a constant $C_2>0$ such that a.a.s.\ for all $t(B)\leq t\leq t_1$,
\[
\br(G_t)\le -C_2\frac{\pi(G_t)}{n}\le-C_2 \left(\frac{4A L_t/n}{(k-1)(|\br(G_t)|+L_t/n)}+\frac{K_1}{3}n^{-\d/2}\right),
\]
and so
\be
|\br(G_t)|\ge C_2 \max\left(\frac{4A L_t/n}{(k-1)(\br(G_t)+L_t/n)},\frac{K_1}{3}n^{-\d/2}\right).\lab{sigma1}
\ee
Taking $B$ large enough that $L_t/n$ is sufficiently small for $t\geq t(B)$ (by Lemma~\ref{l:B}), there is a constant $D_1>0$, such that a.a.s. for all $t(B)\leq t\leq t_1$:
\be\lab{egoal1}
|\br(G_t)|\geq D_1\max\left( \sqrt{\frac{L_t}{n}},n^{-\d/2}\right).
\ee
This yields the upper bounds in our lemma for $t\leq t_1$.
Next we prove the lower bounds; i.e., we wish to prove that for some constant $D_2>0$ for all $t(B)\leq t<i\leq t_1$:
\be\lab{egoal}
|\br(G_t)|\leq D_2\max\left( \sqrt{\frac{L_t}{n}},n^{-\d/2}\right).
\ee

Let $A_2,A_3$ be the implicit constants in (Fd),~\eqn{br-zeta}, and set $A_1=1/(2A_2A_3)$.
Applying~\eqn{br-zeta},~(Fd) we get that for any $t(B)\leq t\leq i\le t+A_1|\br(G_t)|n$:
\[\br(G_i)-\br(G_t)\leq A_3(\z_{t}-\z_i)\leq A_2A_3((i-t)/n)\leq A_1A_2A_3|\br(G_t)|=-\hf\br(G_t),\]
and so $\br(G_i)\le \hf\br(G_t)$. So, by~\eqn{br}:
\be\lab{eggg}
\ex(L_{i+1}\mid G_i)\le L_i+\frac{1}{2}\br(G_t)+O(1/n),\quad \forall t(B)\leq t< i\le t+A_1|\br(G_t)|n.
\ee
We can assume $|\br(G_t)|> \sqrt{(8/A_1)L_t/n}$, as otherwise~(\ref{egoal}) holds with $D_2=\sqrt{8/A_1}$.
Therefore we have (i) $8L_t/|\br(G_t)|<A_1|\br(G_t)|n$ and (ii) the RHS of~(\ref{eggg}) is at most $\frac{1}{4}\br(G_t)$.
We wish we could apply Lemma~\ref{lem:Lconcentration}(c),  with $b=\frac{1}{4}\br(G_t)$ (and with $a=-2\br(G_t)+2AL_t/n$, as argued above), to show that with probability $1-o(n^{-1})$,  the stopping time
\be\lab{esay*}
\tau<t+\frac{2}{b}L_t=t+8L_t/|\br(G_t)|<t+A_1|\br(G_t)|n.
\ee
But we cannot because~\eqn{eggg} does not hold for all $i>t$.  Nevertheless, it is straightforward to adapt the proof of Lemma~\ref{lem:Lconcentration} to obtain~\eqn{esay*};
the main point is that~\eqn{eggg} holds for all $i$ up to the RHS of~\eqn{esay*}.
Taking the union bound shows that \aas this holds for all $t(B)\leq t\leq t_1$ for which $|\br(G_t)|> \sqrt{(8/A_1)L_t/n}$.

Since we remove at most one vertex during each iteration of SLOW-STRIP, it follows that $G_t$ contains at most $8L_t/|\br(G_t)|$ non-core vertices.
Recalling that $\pi(G_t)$ is approximately the number of non-core vertices in $G_t$ plus $\hf K_1 n^{-\d/2}$, this implies that
$\pi(G_t)\le A_4(L_t/|\br(G_t)|+ n^{1-\d/2})$ for some constant $A_4>\max\{8,\hf K_1\}$.
Now, by Lemma~\ref{lem:pi-br}, we have that there is a constant $C_1>0$ such that a.a.s.\ for all $t(B)\leq t\leq t_1$ with $|\br(G_t)|> \sqrt{(8/A_1)L_t/n}$ we have:
\[\br(G_t)\ge -C_1\pi(G_t)/n \ge -C_1A_4\left(\frac{L_t/n}{|\br(G_t)|}+ n^{-\d/2}\right).\]
This implies~(\ref{egoal}) for $t(B)\leq t\leq t_1$ with $D_2=D'_2:=\max\{\sqrt{8/A_1},C_1A_4\}$.

Now we consider $t>t_1$. Since $L_{t_1}\ge n^{\d}\log^2 n$, by~\eqn{esay*}, a.a.s.\ $\tau-t_1<8L_{t_1}/|\br(G_{t_1})|$. By the definition of $t_1$ and Lemma~\ref{lsi}(a), we also have a.a.s.\ $L_{t_1}<2n^{\d}\log^2 n$. (\ref{egoal1}) says that a.a.s.\ $|\br(G_{t_1})|\ge D_1 n^{-\d/2}$. So we obtain: a.a.s.\ for all $t>t_1$,
\[
\tau-t<\tau-t_1<(16/D_1)n^{3\d/2}\log^2n.
\]
This implies that a.a.s.\ for all $t_1<t\le \tau$, $t-t_1<(16/D_1)n^{3\d/2}\log^2n$. Then, by (Fd) and~\eqn{br-zeta}, a.a.s.\ for all $t_1<t\le \tau$,
\[|\br(G_t)-\br(G_{t_1})|\leq A_2A_3(t-t_1)/n\le A_2A_3(16/D_1)n^{3\d/2-1}\log^2n=o(n^{-\d/2})=o(\br(G_{t_1})),
 \]
 as $\d<1/2$. Since~\eqn{egoal1},~\eqn{egoal} hold for $t=t_1$ and for some $D_1=D_1',D_2=D_2'$, they hold for $t>t_1$, and hence for all $t(B)\leq t\leq\t$   by taking $D_1=D'_1/2$ and $D_2=2D'_2$.
\end{proof}

And now we can prove Lemma~\ref{lsi}, which we restate.  Recall that $\ld_i=L_{t(i)}$, and that $\imax$ is the number of iterations carried out by the parallel stripping process.

\newtheorem*{lsi}{Lemma~\ref{lsi}}
\begin{lsi} There exist constants $B,Y_1,Y_2,Z_0,Z_1$ dependent only on $r,k$, such that \aas for every $ B\leq i< \imax$ with $\ld_i\ge n^{\d}\log^2 n$:
\begin{enumerate}
\item[(a)] if $\ld_i< n^{1-\d}$ then $(1-Y_1n^{-\d/2})\ld_i\leq \ld_{i+1}\leq  (1-Y_2n^{-\d/2})\ld_i$;
\item[(b)] if $\ld_i\geq n^{1-\d}$ then $(1-Y_1\sqrt{\frac{\ld_i}{n}})\ld_i\leq \ld_{i+1}\leq  (1-Y_2\sqrt{\frac{\ld_i}{n}})\ld_i$;
\item[(c)] $\sum_{j\ge i}\ld_i\le Z_1\ld_in^{\d/2}$.
\item[(d)] $\ld_i/(k-1)\le|S_i|\le 4\ld_i$;
\item[(e)] for each $2\leq a \leq r$ and $1\leq t\leq k-1$, the number of vertices $u\in S_i$ with $d^{(a)}(u)=t$ is between $Z_0|S_i|^{(a-1)t+1}/n^{(a-1)t}-\log^2 n$ and $Z_1|S_i|^{(a-1)t+1}/n^{(a-1)t}+\log^2 n$.
\item[(f)] $\sum_{u\in S_{i}} d^+(u) < |S_{i+1}|+Z_1\frac{|S_{i+1}|^2}{n}+\log^2n$.
\end{enumerate}
\end{lsi}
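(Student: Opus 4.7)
The plan is to derive parts (a)--(c) from the concentration machinery of Section~\ref{slt}, part (d) from an elementary degree count, and parts (e)--(f) from a configuration-model computation after conditioning on the data revealed by EXPOSURE. The main obstacle will be (e)--(f), where we need quantitative control that is uniform in $i$ across the wide range of scales $|S_i|\in[n^{\delta}\log^2 n,\Theta(n)]$.

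For (a) and (b), observe that during SLOW-STRIP iterations $[t(i),t(i+1))$ every hyperedge incident to $S_i$ is removed --- one per iteration --- and each such hyperedge touches between $1$ and $r$ vertices of $S_i$, so $\ld_i/r \le t(i+1)-t(i)\le \ld_i$. Combining Lemma~\ref{lem:l-br} with~\eqref{br} yields $-a\le \ex(L_{t+1}-L_t\mid G_t)\le -b$ throughout this window, with $a,b=\Theta(n^{-\delta/2})$ in case (a) and $a,b=\Theta(\sqrt{\ld_i/n})$ in case (b); uniformity holds because $\sqrt{L_t/n}$ stays within a constant factor of the relevant rate for all $t$ in the window, using that $L_t$ moves by $O(1)$ per step and that both regimes of Lemma~\ref{lem:l-br} agree up to constants at the interface $L_t=n^{1-\delta}$. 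Invoking Lemma~\ref{lem:Lconcentration}(a) at $t=t(i)$ then gives, with probability $1-o(n^{-1})$,
\[
\ld_i - 2a\bigl(t(i+1)-t(i)\bigr) \;<\; \ld_{i+1} \;<\; \ld_i - \tfrac12 b\bigl(t(i+1)-t(i)\bigr),
\]
and substituting the two-sided bound on $t(i+1)-t(i)$ yields (a) and (b); the hypothesis $\ld_i\ge n^{\delta}\log^2 n$ ensures $t(i+1)-t(i)\ge n^{\delta}\log^{1.5} n$, which is the condition to apply that part. A union bound over the $O(n)$ values of $i$ finishes these parts. Part (c) follows by summing $\sum_{j\ge i}\ld_j\le r\sum_{j\ge i}(t(j+1)-t(j))=r(\tau-t(i))$ and applying Lemma~\ref{lem:Lconcentration}(c) to obtain $\tau-t(i)\le 2\ld_i/b$, which is $O(\ld_i n^{\delta/2})$ in case (a) and $O(\sqrt{n\ld_i})\le O(\ld_i n^{\delta/2})$ in case (b) (using $\ld_i\ge n^{1-\delta}$). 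Part (d) is immediate: every $v\in S_i$ has degree in $\hG_i$ at least $1$ (else it would have been stripped at an earlier round) and at most $k-1$ (since it is light), so $|S_i|\le \ld_i\le (k-1)|S_i|$.

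For (e) and (f) we condition on the sets $S_1,S_2,\ldots$ and the degree sequence of $\hG_i$ revealed by EXPOSURE and generate the remaining structure via the configuration model; Corollary~\ref{ccon} transfers the conclusions back to the uniform random hypergraph. For a fixed $u\in S_i$ of degree $d\le k-1$, each of its $d$ vertex-copies is matched uniformly into the remaining configuration, so the probability that a given hyperedge at $u$ has Type $(a,0)$ equals
\[
\binom{r-1}{a-1}\, p_i^{a-1}(1-p_i-p_{i+1})^{r-a} + O(n^{-1}),
\]
where $p_i=\Theta(|S_i|/n)$ and $p_{i+1}=\Theta(|S_{i+1}|/n)$ are the fractions of remaining vertex-copies in $S_i$ and $S_{i+1}$ (the $\Theta(1)$ factors come from the truncated-Poisson degree profile established in Section~\ref{sec:llt2}). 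Summing over $u$ weighted by the (bounded) degree distribution, the expected number of $u\in S_i$ with $d^{(a)}(u)=t$ is $\Theta(|S_i|\cdot (|S_i|/n)^{(a-1)t})=\Theta(|S_i|^{(a-1)t+1}/n^{(a-1)t})$, with prefactors pinned between two constants $Z_0,Z_1$. Concentration to within $\log^2 n$ follows from McDiarmid's permutation inequality~\cite{cm} applied to the configuration-model matching, since each swap of two points changes the count by $O(1)$. For (f), each $v\in S_{i+1}$ contributes at least $1$ to $\sum_{u\in S_i}d^+(u)$ because $v$ became light only after an edge to $S_i$ was removed; the excess equals the number of ``extra'' $S_i$-neighbours of vertices in $S_{i+1}$, whose expectation is $O(|S_{i+1}|^2/n)$ by a second-moment computation pairing two edge-incidences at a common $v\in S_{i+1}$ with distinct partners in $S_i$, again with $\log^2 n$ concentration. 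A union bound over the $O(n)$ values of $i$ concludes the proof.
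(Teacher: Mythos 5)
Your overall architecture mirrors the paper's: (a)--(c) via Lemma~\ref{lem:l-br} and the supermartingale concentration of Lemma~\ref{lem:Lconcentration}, and (e)--(f) via the configuration model after conditioning on the EXPOSURE data. But two steps have genuine gaps.

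\textbf{Part (d).} You assert that every $v\in S_i$ has degree at least $1$ in $\hG_i$ ``else it would have been stripped at an earlier round,'' and conclude $|S_i|\le\ld_i$. This fails: a vertex $v$ can be heavy (degree $\ge k$) in $\hG_{i-1}$ and then have \emph{all} of its hyperedges removed along with $S_{i-1}$, landing it in $\hG_i$ with degree $0$. Such a $v$ is removed only in round $i$ of the parallel process (it was not light at any earlier round), so it belongs to $S_i$ while contributing nothing to $\ld_i$. Controlling these degree-$0$ vertices is exactly why the paper proves only the weaker bound $|S_i|\le 4\ld_i$, and does so \emph{via part (f)}: if $\ld_i<|S_i|/4$ then the total degree of $S_i$ in $\hG_{i-1}$ is $\ge k|S_i|$ while in $\hG_i$ it is $<|S_i|/4$, forcing at least $(k-\tfrac14)|S_i|\ge\tfrac74|S_i|$ edge-incidences from $S_{i-1}$ to $S_i$, contradicting the bound $\sum_{u\in S_{i-1}}d^+(u)<|S_i|+Z_1|S_i|^2/n+\log^2 n$ from (f). Your proof must go through (f) (or an equivalent device) for the upper bound of (d); it is not ``immediate.''

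\textbf{Part (e), lower bound.} You treat the type probabilities as stationary with $p_i=\Theta(|S_i|/n)$ throughout round $i$. But the exposure is sequential: as SLOW-STRIP processes $S_i$, the supply of unmatched vertex-copies inside $S_i$ drains to zero by the end of the round, so for the later-processed vertices the probability of a Type-$(a,0)$ hyperedge can be $o(|S_i|^{a-1}/n^{a-1})$, breaking the lower bound. The paper sidesteps this by stopping at the step $t'$ at which only $\tfrac{1}{8(r-1)(k-1)}|S_i|$ vertices of $S_i$ have been processed; up to that point at most $\tfrac18|S_i|$ vertex-copies of $S_i$ have been consumed, so $\Omega(|S_i|)$ remain and the per-step probability is still $\Omega(|S_i|^{a-1}/n^{a-1})$. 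A comparable device is needed in your argument. (Your use of McDiarmid in place of the paper's domination-plus-Chernoff for the upper bounds is a harmless cosmetic difference, but the drained-supply issue for the lower bound is substantive.) One more small point: in (a)--(b), the uniformity of the drift over the window $[t(i),t(i+1))$ is obtained in the paper by showing $\br(G_j)=\br(G_{t(i)})+O(\ld_i/n)$ via (Fd) and~\eqref{br-zeta}; your appeal to ``$L_t$ moving by $O(1)$ per step'' does not by itself control $\br$, though the fix is easy.
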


{\bf Proof of Lemma~\ref{lsi}:} We take $B$ large enough so that the relevant preceeding results hold. We first prove (a), and so we have  $ n^{\d}\log^2 n\leq \ld_i< n^{1-\d}$. We will apply Lemma~\ref{lem:Lconcentration}(a) with $t:=t(i)$.

Since SLOW-STRIP  removes at least one and at most $r$ vertex-copies from $S_i$ in every iteration $t(i)\le j<t(i+1)$, we have $t(i+1)-t(i)=\Theta(\ld_i)$, uniformly over $i$. So
$t(i+1)>t(i)+n^{\d}\log^{1.5}n$.  Furthermore, by (Fd),
for all $t(i)\le j\le t(i+1)-1$, $\zeta_j-\zeta_{t(i)}=O((j-t(i))/n)=O(\ld_i/n)$, and so by~\eqn{br-zeta}, we have $\br(G_j)=\br(G_{t(i)})+O(\ld_i/n)=\br(G_{t(i)})+O(n^{-\d})$.  Lemma~\ref{lem:l-br}(b) says:
\[-D_1n^{-\d/2}\le\br(G_t)\le -D_2n^{-\d/2}.\]
Therefore, for all $t(i)\le j\le t(i+1)-1$, applying~(\ref{br}) we have
\be\lab{eabab}
-2D_1n^{-\d/2}\le\ex(L_{j+1}-L_j\mid G_j)\le -\hf D_2n^{-\d/2}.
\ee
{The same argument applies for every $i'>i$ and so~\eqn{eabab} holds for every $j\geq t(i)$.}
This allows us to apply  Lemma~\ref{lem:Lconcentration}(a) to prove that with probability $1-o(n^{-1})$,
\[L_{t(j+1)}=L_{t(j)}-\Theta(n^{-\d/2})(t(j+1)-t(j))=L_{t(j)}(1-\Theta(n^{-\d/2})).\]
Taking the union bound over all $i$ yields part (a).

The proof of part (b) is nearly identical, applying Lemma~\ref{lem:l-br}(a)  rather than Lemma~\ref{lem:l-br}(b).

Now we prove part (c). Recalling that $\t$ is the  stopping time of  SLOW-STRIP, we observe that $\sum_{j\ge i}\ld_i\le \dd (\tau-t(i))$, since the total degree in $\msq$ decreases by at most $\dd$ in every iteration of SLOW-STRIP.  As argued in the proofs of parts (a,b), we can apply Lemma~\ref{lem:Lconcentration} - if $\ld_i<n^{1-\d}$ we use $b=\Theta(n^{-\d/2})$;  if $\ld_i\geq n^{1-\d}$ we use $b=\Theta(\sqrt{\ld_i/n})\geq \Theta(n^{-\d/2})$.  Lemma~\ref{lem:Lconcentration}(c) yields that with probability $1-o(n^{-1})$,
$\t-t(i)\leq \frac{2}{b}\ld_i\leq \Theta(n^{\d/2})\ld_i$, and so
\[\sum_{j\ge i}\ld_i\le \dd (\tau-t(i))  \leq \Theta(n^{\d/2})\ld_i.\]
Taking the union bound over all $i$ yields part (a).

Next, we prove parts (d,e,f). We consider running SLOW-STRIP from step $t(i)$ to $t(i+1)-1$; i.e.\ iteration $i$ of the parallel stripping process.  We expose the degree sequence of $G_{t(i)}$ (i.e.\ $\hG_i$) and then use the configuration model.  The maximum degree is at most the maximum degree in $\H_r(n,p=c/n^{r-1})$ which is \aas less than $\log n$.

For part (f):  For any $v\in S_{i+1}$, let $d^-(v)$ denote the number of neighbours $v$ has in $S_{i}$. Then, $\sum_{u\in S_i}d^+(u)=\sum_{v\in S_{i+1}}d^-(v)$. We will use the configuration model in our analysis. In each step of SLOW-STRIP, a vertex-copy is removed from $S_i$, together with another $\dd-1$ vertex-copies chosen uniformly at random. A vertex $u$ in $V(G_{t(i)})\setminus S_i$ is called {\em active} if $u\in S_{i+1}$ and at least two vertex-copies of $u$ were removed from step $t(i)$ to $t(i+1)-1$. We estimate the total number $X$ of vertex-copies removed from active vertices. This upper bounds $\sum_{v\in S_{i+1}}d^-(v)-|S_{i+1}|=\sum_{u\in S_i}d^+(u)-|S_{i+1}|$. Let $X_j$, $j\ge 2$, denote the number of active vertices from which exactly $j$ vertex-copies are removed. Then $X=\sum_{2\le j\le \log n}jX_j$ (since the maximum degree is at most $\log n$). A vertex $u$ can be counted in $X_j$ only if its degree is between $j$ and $k+j-1$ (so that it may be in $S_{i+1}$). A vertex-copy randomly chosen by SLOW-STRIP lies inside $u$ with probability at most $O((k+j-1)/n)=O(j/n)$. There are at most $k\dd|S_i|$ vertex-copies in total that were randomly removed by SLOW-STRIP from step $t(i)$ to $t(i+1)-1$. So, the probability that a vertex $u$ (whose degree is between $j$ and $k+j-1$) is counted in $X_j$ is at most $\binom{k\dd|S_i|}{j}\times O((j/n)^j)=(O(k\dd|S_i|/n))^j$. Hence, for every $2\le j\le \log n$, $X_j$ is stochastically dominated by $\Bin(n,(O(k\dd|S_i|/n))^j)$. (There is some dependency between the events that two vertices $u,u'$ count in $X_j$, but it goes in the right direction for us; i.e.\ conditioning that $u$ counts in $X_j$ decreases the probability that $u'$ counts in $X_j$.)  Applying the standard Chernoff bound (see eg.\ the statement provided in~\cite{mrbook}), it is easy to show that there is a constant $Z>0$ such that a.a.s.\ for every $2\le j\le \log n$, $X_j\le (Zk\dd|S_i|)^j/n^{j-1}+\log n\le (1/2)^{j-2}(Zk\dd|S_i|)^2/n+\log n$, by choosing $B>0$ sufficiently large so that for all $i\ge B$, $Zk\dd|S_i|/n<1/2$. Now, we have a.a.s.\
\bean
\sum_{u\in S_i}d^+(u)-|S_{i+1}|&\le &X=\sum_{2\le j\le \log n}jX_j\le \log^2n+\sum_{2\le j\le \log n}j(1/2)^{j-2}(Zk\dd|S_i|)^2/n\\
&<&8(Zk\dd|S_i|)^2/n+\log^2 n.
\eean
Part (f) follows by choosing $Z_1=8Z^2$.

For part (d). The lower bound is trivial since every vertex in $S_i$ has degree at most $k-1$ in $G_{t(i)}$. The upper bound follows from (f): Every vertex in $S_{i}$ has degree at least $k$ in $G_{t(i-1)}$. So if $\ld_i<\inv{4}|S_i|$, i.e. if the total degree in $G_{t(i-1)}$ of the vertices of $S_i$ is less than $\inv{4}|S_i|$, then the number of edges from $S_i$ to $S_{i-1}$ must be at least $(k-\inv{4})|S_i|\geq \frac{7}{4}|S_i|>|S_i|+Z_1\frac{|S_i|^2}{n}$, if we take $B$ sufficiently large that $|S_i|< \frac{3}{4Z_1}n$ for $i\geq B$.

Finally, we prove part (e): fix $2\le a\le \dd$. When  SLOW-STRIP removes a hyperedge incident with $u\in S_i$, it removes a vertex-copy from $u$ and another $\dd-1$ vertex-copies chosen uniformly at random. This edge contains exactly $a$ vertices in $S_i$ iff exactly $a-1$ of these $\dd-1$ vertex-copies are contained in vertices in $S_i$. This occurs with probability at most $O(|S_i|^{a-1}/n^{a-1})$. Thus, the probability that $u$ is contained in exactly $t$ of those edges is $O(|S_i|^{t(a-1)}/n^{t(a-1)})$, {regardless of the choices of the hyperedges containing the vertices of $S_i$ that preceded $u$ in $\msq$}. Then, the total number of such vertices inside $S_i$ is dominated by $\Bin(k|S_i|,c_1|S_i|^{t(a-1)}/n^{t(a-1)})$ for some constant $c_1>0$. The upper bound in this claim follows by choosing $Z_1$ sufficiently large and by applying the Chernoff bound to $\pr(\Bin(k|S_i|,c_1|S_i|^{t(a-1)}/n^{t(a-1)})>Z_1|S_i|^{t(a-1)+1}/n^{t(a-1)}+\log^2 n)$ and taking the union bound over $i$. Now we prove the lower bound. Assume $Z_0>0$ is sufficiently small. Let $t'$ denote the step when {$\inv{8(r-1)(k-1)}|S_i|$} of the vertices of $S_i$ have been removed from the queue $\Q$. { By part (d), the total number of vertex-copies in $S_i$ at time $t(i)$ is $\ld_i>\inv{4}|S_i|$.  At most $(r-1)(k-1)\times\inv{8(r-1)(k-1)}|S_i|\leq\inv{8}|S_i|$ of those copies are used during steps $t(i)$ to $t'$. So when processing any $u$ from $\Q$ between these steps, at least $\inv{8}|S_i|$ of those vertex-copies remain and so the probability that $u$ is contained in exactly $t$ edges in $S_i$ with size $a$ is $\Omega(|S_i|^{t(a-1)}/n^{t(a-1)})$.} Hence, the number of such vertices in $S_i$ dominates $\Bin(|S_i|/4,c_2|S_i|^{t(a-1)}/n^{t(a-1)})$ for some constant $c_2>0$. Then the lower bound follows by applying the Chernoff bound and taking the union bound.


\section{Bounding $|R_i|$.}\lab{srec}

In this section, we will prove Lemma~\ref{lrec2}, which bounds $|R(v)|$.  We can assume conditions (a,b,c,d,e,f) of Lemma~\ref{lsi}.
We choose any $B\leq i\leq \imax$ and $v\in S_i$.  We recall the definition of $R_j$ from Section~\ref{s2} and that $R^+(v)\cap S_j\subseteq R_j$.
We have $R(v)=\cup_{j=i}^B R_j$.

We begin by completing the proof of Lemma~\ref{lrec} by proving (\ref{erec}), which we first restate.

Recall that $C_j(v)$ is the component of $\cals_j$ containing $v$. (\ref{erec}) states that if $\ld_j\geq n^{\d}\log^2 n$ then for any $v\in\cals_j$:
\[
\ex(|C_j(v)|)\leq 1+2rd_{\cals_j}(v).\]

{\bf Proof of (\ref{erec}):}
We follow the arguments from \cite{mr1} for analyzing the sizes of components in a random graph on a given degree sequence.  We expose $C_j(v)$ using a graph search. We begin by initializing $W=\{v\}$; $W$ is a set of vertices known to be in $C_j(v)$.  We initialize $U$ to be the  $d_{\cals_j}(v)$ hyperedges of $\cals_j$ containing $v$; $U$ is a set of  {\em unexposed hyperedges} - these are hyperedges, containing vertices of $W$, in which we have not yet exposed the other vertices. $Y_s$ counts the size of $U$ after $s$ iterations, so $Y_0=d_{\cals_j}(v)$.

At each iteration, we choose a hyperedge $e\in U$ and expose the other vertices of $e$; i.e. the vertices that are not in $W$.
For  each $w\in e$, if $w\notin W$: (a) we add the other $d_{\cals_j}(w) - 1$ hyperedges of $\cals_j$ containing $w$ to $U$, and (b) we add $w$  to $W$.
We halt when $U=\emptyset$.  When we halt, $W=C_j(v)$.

We wish to bound $\ex(Y_{s+1}-Y_s)$. The edge $e$ selected during iteration $t$ has size $a\leq r$.  We expose $\cals_j$ using the configuration model.  So for any vertex $w\notin W$, the probability that $w\in e$ is at most $(a-1)d^{(a)}(w)/\sum_{x\in S_j\bk W} d^{(a)}(x)$ (it is slightly less than this since some members of $W$ could be in $e$).  Summing over all $w\in S_j\bk W$, and accounting for the fact that $e$ is removed from $U$ yields
\begin{equation}\label{exx}
\ex(Y_{t+1}-Y_t)\leq -1 + \sum_{w\in S_j\bk W}(d_{\cals_j}(w)-1)\frac{(a-1)d^{(a)}(w)}{\sum_{x\in S_j\bk W} d^{(a)}(x)}.
\end{equation}

 As in the proof of Lemma~\ref{lrec}, we use
$X_{a,t}$ to denote the number of vertices $u\in S_j$ with $d^{(a)}(u)=t$.
We set $\la_{a,t}=|S_j|^{(a-1)t+1}/n^{(a-1)t}$.  We assume that condition (e) of Lemma~\ref{lsi} holds and so:
\[Z_0\la_{a,t}-\log^2 n\leq X_{a,t}\leq Z_1\la_{a,t}+\log^2 n.\]

{\em Case 1: $\la_{a,1}< \log^3n$.}  Therefore, $|S_j|\leq n^{\frac{a-1}{a}+o(1)}$. As computed in the proof of Lemma~\ref{lsi}(c),
$\ex(X_{a,t})=O(\la_{a,t})$.  So for all $t\geq 2$,
\[\ex(X_{a,t})=O\left(\left(\frac{|S_i|}{n}\right)^{(a-1)(t-1)}\la_{a,1}\right)
=O(n^{-\frac{1}{a+1}+o(1)}),\qquad\mbox{ since } a\geq 2.\]
Similar simple calculations show that the expected number of vertices $u\in S_j$ with
$d^{(a)}(u)=1$ and $d_{\cals_j}(u)\geq 2$ is
\[\sum_{a'\neq a}\sum_{t=1}^{k-1}O(\la_{a,1}\la_{a',t}/|S_j|)
=O(\la_{a,1}\la_{2,1}/|S_j|)=n^{-\inv{a+1}+o(1)}.\]
By Theorem~\ref{mt}, there are at most $n^{\d/2}\log n$ levels $S_i$. We take $\d<\frac{1}{a+1}$. Therefore, the probability that there is a vertex $w$ in any $S_i$ of size at most $n^{\frac{a-1}{a}+o(1)}$ such that  $d^{(a)}(w)\geq 1$ and $d_{\cals_j}(w)\geq 2$ is
$n^{-\frac{1}{a+1}+o(1)}n^{\d/2}\log n=o(1)$.
So \aas there are no such vertices $w$ and so whenever we are in Case 1, we can assume that
\begin{center}
(A1): for all $w\in S_j$ either $d^{(a)}(w)=0$ or $d_{\cals_j}(w)=1$.
\end{center}
 Therefore~(\ref{exx}) yields:
\[\ex(Y_{s+1}-Y_s)\leq -1+o(1)<-\hf,\]
where the $o(1)$ term accounts for the probability that our assumption (A1) fails (here we also use the fact that $|Y_{s+1}-Y_s|$ is always bounded).

{\em Case 2: $\la_{a,1}\geq \log^3n$}.  Then $X_{a,1}\geq Z_0\la_{a,1}-\log^2n>\hf Z_0\la_{a,1}$.
By Lemma~\ref{l:B}, we take $B=B(r,k)$ to be large enough that $|S_j|<\frac{Z_0}{10rk^3Z_1}n$, for $j\geq B$. Then:
\begin{eqnarray*}
\sum_{t=2}^{k-1}X_{a,t}<\sum_{t=2}^{k-1}(Z_1\la_{a,t}+\log^2n)
&=&Z_1\sum_{t=2}^{k-1}\left(\frac{|S_j|}{n}\right)^{(a-1)(t-1)}\la_{a,1}+(k-2)\log^2n\\
&<&kZ_1\frac{|S_j|}{n}\la_{a,1}+(k-2)\log^2n,\qquad\mbox{ since } a\geq 2\\
&<&\frac{Z_0}{8rk^2}\la_{a,1}\\
&<&\frac{X_{a,1}}{4rk^2}.
\end{eqnarray*}
Since $d_{\cals_j}(w)\leq k-2$ we have:
\[\sum_{w\in S_j}(d_{\cals_j}(w)-1)d^{(a)}(w)
\leq(k-3)\sum_{t=2}^{k-1}tX_{a,t}<k^2\sum_{t=2}^{k-1}X_{a,t}<\frac{X_{a,1}}{4r}
\leq\frac{1}{4r}\sum_{x\in S_j} d^{(a)}(x).\]
Furthermore $\sum_{x\in S_j} d^{(a)}(x)\geq X_{a,1}>\frac{Z_0}{2}\log^3n$, and so if $|W|<r\log^2n$ then 
\[
\sum_{x\in  W} d^{(a)}(x){<kr\log^2 n}<\hf \sum_{x\in S_j} d^{(a)}(x).
\]
 Thus, we will have the inequality
 \[\sum_{w\in S_j\setminus W}(d_{\cals_j}(w)-1)d^{(a)}(w)
<\frac{1}{2r}\sum_{x\in S_j\setminus W} d^{(a)}(x)\]
 holds as long as $|W|<r\log^2n$.
 Therefore, for $0\leq s\leq \log^2 n$ ({in which range $|W|\le r\log^2 n$ is guaranteed}),
we have
\[\ex(Y_{s+1}-Y_s)
< -1+r\frac{\sum_{w\in S_j{\setminus W}}(d_{\cals_j}(w)-1)d^{(a)}(w)}{\hf\sum_{x\in {S_j\setminus W}} d^{(a)}(x)}
\leq -\hf.\]
A.a.s. $Y_0=d_{\cals_j}(v)<\log n$ (as \aas the maximum degree in $H_r(n,p=c/n^{r-1})$ is less than $\log n$). Now define $Y'_s=Y_s$ for all $0\le s\le \log^2 n$ and $Y'_s=Y'_{s-1}-1$ for all $s>\log^2 n$. Note that $|Y_{s+1}-Y_s|$ is bounded always and so is $|Y'_{s+1}-Y'_s|$. By applying Lemma~\ref{l:azuma}, with probability at least $1-o(n^{-2})$, $Y'_t$ becomes $0$ before $t$ reaches $\log^2 n$. Hence, with probability at least $1-o(n^{-2})$, $Y_s=Y'_s$ for all $0\le s\le \tau<\log^2n$, where $\tau$ denotes the first iteration $t$ that $Y_t=0$.  It follows from $\ex(Y'_{s+1} -Y'_s)<-1/2$ for all $s\ge 0$ that the expected number of iterations before $Y'_s=0$ is at most $2Y'_0=2Y_0=2d_{\cals_j}(v)$ (via standard methods, e.g.\ the optional stopping theorem). Since $\tau=O(n)$ always and we have proved that $\pr((Y_{t\wedge \tau})_{t\ge 0}\neq (Y'_{t\wedge \tau})_{t\ge 0})\le\pr(\tau>\log^2 n)=o(n^{-2})$, it follows that $\ex\tau \le 2d_{\cals_j}(v)+o(n^{-1})$. Since each iteration adds at most $r-1<r$ new vertices to $W$, (\ref{erec}) follows.
\proofend

Having completed the proof of Lemma~\ref{lrec}, we now augment it with a concentration argument to prove:

\begin{lemma}\lab{lrconc} If Lemma~\ref{lsi}(a,b,c,d,e,f) hold, then there are constants $B=B(r,k),Z=Z(r,k)>0$ such that with probability at least
$1-n^{-3}$: for all $j\geq B$ with $\ld_j>n^{\d}\log^2n$, we have

\[|R_j(v)|\leq |R_{j+1}(v)|+Z\frac{\ld_j}{n}\sum_{\ell=i}^{j+1}|R_{\ell}(v)|
+ n^{\d}.\]
\end{lemma}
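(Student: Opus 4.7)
The plan is to upgrade each ingredient of the expectation bound in Lemma~\ref{lrec} to a high-probability bound, and then union-bound over the at most $n$ relevant levels. Throughout we condition on EXPOSURE and on the events of Lemma~\ref{lsi}(a)--(f), and we reveal $R_j$ for $j=i-1,i-2,\ldots,B$ sequentially, applying EDGE-SELECTION only to hyperedges incident to the already-exposed part of $R(v)$. Fix a level $j$ with $\ld_j\geq n^\d\log^2 n$. We may assume $|R_{j+1}|\leq n^{2\d}$ and $\sum_{\ell\geq j+1}|R_\ell|\leq n$, since otherwise the claimed bound is vacuous after enlarging $Z$.

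\emph{First stage: concentration of $|R'_j|$}, where $R'_j\subseteq S_j$ is the set of vertices adjacent to $\bigcup_{\ell\geq j+1} R_\ell$. The proof of Lemma~\ref{lrec} decomposes $\ex|R'_j|$ into two contributions. The first comes from the back-edges of the random bipartite hypergraph $\cals_{j,j+1}$ that connect $R_{j+1}$ to $S_j$; by the symmetry of Step~2 of EDGE-SELECTION this count is stochastically dominated by a sum of exchangeable $\{0,1\}$ indicators of total mean $(1+o(1))|R_{j+1}|$, so a Chernoff-style bound for negatively associated indicators gives deviation $\leq n^\d/4$ with probability $\leq\exp(-\Omega(n^\d/\log n))=o(n^{-4})$. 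The second contribution comes from Step~3 of EDGE-SELECTION, where each $v\in S_j$ picks $O(1)$ independent uniform neighbours from $\calc_k\cup\bigcup_{\ell\geq j+2}S_\ell$; here the relevant count is a sum of genuinely independent Bernoullis with mean $Z'\frac{|S_j|}{n}\sum_{\ell\geq j+2}|R_\ell|$, and the standard Chernoff bound again gives deviation $\leq n^\d/4$ with failure probability $o(n^{-4})$.

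\emph{Second stage: concentration of $\sum_{u\in R'_j}|C_j(u)|$}, the total component contribution inside $\cals_j$. Given $R'_j$, we reveal $\cals_j$ via a single configuration-model graph search that processes the components of all vertices of $R'_j$ in turn, as in the proof of (\ref{erec}). Two facts from that proof carry over: each step changes the number $Y_s$ of unexplored half-edges by at most $O(rk)$, and the drift satisfies $\ex(Y_{s+1}-Y_s\mid Y_s)\leq -\tfrac12$ as long as the pool of unexposed $S_j$-half-edges has not shrunk by a constant factor --- a condition which Lemma~\ref{lsi}(d,e) keeps in force throughout the amalgamated search since $|R'_j|\leq |S_j|$ and $|S_j|/n=o(1)$. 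Azuma's inequality applied to $Y_s$ along its hitting time of $0$, starting from $Y_0\leq (k-1)|R'_j|$, then yields $\sum_{u\in R'_j}|C_j(u)|\leq 3(k-1)|R'_j|+n^\d/2$ with failure probability $o(n^{-4})$.

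Combining both stages gives the stated inequality at level $j$ with conditional failure probability $o(n^{-4})$; union-bounding over the at most $n$ relevant levels yields the overall $1-n^{-3}$ bound. The main obstacle is the Stage-two drift estimate: the branching computation behind (\ref{erec}) must be shown to apply uniformly along an exploration aggregating up to $|R'_j|$ starting vertices rather than a single one. This is exactly where we need Lemma~\ref{lsi}(e) quantitatively --- the proportion of multi-edge vertices in $\cals_j$ is $O(|S_j|/n)$, which is $o(1)$ uniformly in $j\geq B$, keeping the drift bounded away from zero throughout the entire aggregated search and making the Azuma concentration valid.
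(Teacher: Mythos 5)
Your proof takes a genuinely different route from the paper. The paper applies McDiarmid's inequality (the permutation/certificate variant of Talagrand) directly to $X=|R_j|-|R_{j+1}|$, which is always nonnegative: after conditioning on the a.a.s.\ events that every component of $\cals_j$ has size $\le\log^2 n$ and every degree is $\le\log^2 n$, each trial or permutation swap in EDGE-SELECTION moves $X$ by at most $\log^4 n$, the certificate parameter is $q=1$, and the concentration comes out to deviation $\approx\sqrt{\ex X}\log^4 n$, which is absorbed into the $n^{\d}$ term. That one-step argument sidesteps any need to track how a branching exploration degrades as it consumes the pool.

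Your two-stage decomposition is a sensible alternative in spirit, but as written it has two genuine gaps. First, the reduction ``we may assume $|R_{j+1}|\le n^{2\d}$'' is not valid: when $|R_{j+1}|>n^{2\d}$ the bound does not become trivial, since $|R_j|$ could be much larger than $|R_{j+1}|$ a priori and the correction term $Z\frac{\ld_j}{n}\sum_\ell|R_\ell|$ can easily be $o(1)$. Yet your Stage-2 concentration radius depends on $|R'_j|$, which this assumption was meant to tame. Second, and more serious, the drift estimate for the amalgamated configuration-model search is not justified. In the proof of~(\ref{erec}), the bound $\ex(Y_{s+1}-Y_s)\le-\tfrac12$ requires $\sum_{x\in W}d^{(a)}(x)<\tfrac12\sum_{x\in S_j}d^{(a)}(x)$, and this is secured there only because a single component has $|W|<r\log^2 n$, which comfortably beats $X_{a,1}\gtrsim\log^3 n$. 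In your aggregated search, $|W|$ can grow to order $|R'_j|\log^2 n$, which need not be small compared with $X_{a,1}$ (for instance $X_{a,1}\approx |S_j|^a/n^{a-1}$ can be as small as polylogarithmic when $|S_j|$ is near $n^{(a-1)/a}$). Thus ``$|R'_j|\le|S_j|$ and $|S_j|/n=o(1)$'' does not guarantee the drift stays below $-\tfrac12$; as the pool is depleted unevenly by degree, the effective branching ratio can drift upward. You correctly identified this as the main obstacle, but the appeal to Lemma~\ref{lsi}(e) as stated does not close it. Your Stage~1 appeal to negative association of the back-edge indicators is also unproven, though that part can plausibly be patched via the same permutation symmetry the paper exploits. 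Overall, the paper's direct McDiarmid argument is the cleaner path precisely because it avoids having to re-derive a uniform drift estimate along a long exploration.
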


\proofstart  We will take $B$ from Lemma~\ref{lsi}, $Z_2$ from  Lemma~\ref{lrec}, and set $Z=8Z_2$.
By Lemma~\ref{lsi}(d), it suffices to prove that with probability at least
$1-n^{-5}$, we have
\begin{equation}\label{econc}
|R_j(v)|\leq |R_{j+1}(v)|+\frac{Z}{4}\frac{|S_j|}{n}\sum_{\ell=i}^{j+1}|R_{\ell}(v)|
+ n^{\d}.
\end{equation}

By Lemma~\ref{lrec}, we have
\[\ex(|R_j|\mid R_i,...,R_{j-1})\leq |R_{j+1}|+Z_2\frac{|S_j|}{n}\sum_{\ell=i}^{j+1}|R_{\ell}|+\log^2n.\]

We will apply McDiarmid's Inequality, a variation of Talagrand's Inequality to show that $|R_j(v)|$ is concentrated.
We use the version stated in ~\cite{mrbook}:

\noindent{\bf McDiarmid's Inequality}\cite{cm}
{\em Let $X$ be a non-negative random variable determined by
independent trials $T_1,...,T_{m}$ and independent permutations $\Pi_1,...,\Pi_{m'}$.
We call the outcome of one trial $T_{\imath}$, or the mapping of a single element in a permutation $\Pi_{\imath}$, a {\em choice}.
Suppose that for every
set of possible outcomes of the trials and permutations, we have:
\begin{enumerate}
\item[(i)] changing the outcome of any one trial can affect $X$
by at most $\varrho$;
\item[(ii)] interchanging two elements in any one permutation can affect $X$
by at most $\varrho$; and
\item[(iii)] for each $s>0$, if $X\geq s$ then there is a set
of at most $qs$ choices whose outcomes certify that $X\geq s$.
\end{enumerate}
Then for any $t\geq0$, we have
\[\pr(|X-\ex(X)|>t+25\varrho\sqrt{q\ex(X)}+128\varrho^2q)\leq 4e^{-\frac{t^2}{32\varrho^2q(\ex(X)+t)}}.\]
}

Because every vertex in $R_{j+1}$ has at least one neighbour in $S_j$, we always have $|R_j|\geq |R_{j+1}|$.  So we will apply McDiarmid's Inequality to
\[X=|R_j|-|R_{j+1}|.\]
We will show below that we can take $\varrho=\log^4 n$ and $q=1$.
Setting $t=\hf\max\{\ex(X),n^{\d}-\log^2n\}$, we have $25\varrho\sqrt{\ex(X)}+128\varrho^2)< t$.
So McDiarmid's Inequality yields:
\begin{equation}\label{emd}
\pr(X>2t)\leq 4e^{-\frac{t^2}{32\log^4 n(\ex(X)+t)}}<e^{-\d n/50}=o(n^{-5}).
\end{equation}

In step 1 of EDGE-SELECTION, we expose the components of the subhypergraph induced by $\cals_j$.
We take $B$ to be as large as required in the proof of~(\ref{erec}) above.  Then, a straightforward argument, such as that in~\cite{mr1}, shows
the probability that at least one component of $\cals_j$ has size greater than $\log^2 n$ is
$o(n^{-5})$.  So we will assume that every component has size at most $\log^2 n$.

The degree of any vertex in $\calh_{\dd}(n,p=c/n^{\dd-1})$ is distributed like the binomial variable $BIN({n-1\choose r-1},p)$ and so with probability at least $1-o(n^{-5})$, every degree is less than $\log^2 n$.

In step 2 of EDGE-SELECTION, we first choose a uniformly random bipartite hypergraph
$(\cals_j,\cals_{j+1})$ and then take a random permutation of the vertices in $\cals_{j+1}$.
That permutation will be $\Pi_1$ in applying McDiarmid's Inequality (with $m'=1$).

The degree of each vertex is distributed like a truncated Poisson with mean less than 2 (for sufficiently large $B$) and truncated as being at least one.  With probability at least $1-o(n^{-6})$, every degree is less than $\log^2 n$.  So each vertex in $R_{j+1}$ is linked to
at most $\log^2 n$ vertices in $\cals_j$, each of which lies in a component of size at most $\log^2 n$.  Thus each {\em choice} affects $X$ by at most $\varrho=\log^4 n$.

In step 3 of EDGE-SELECTION, we choose a set of $r-a-b$ uniformly random vertices for each Type $(a,b)$ hyperedge containing a vertex of $\cals_j$.  These  are the trials $T_1,...,T_m$ in
applying McDiarmid's Inequality (with $m$ being the total number of Type $(a,b)$ hyperedges over all $a,b$).  Each trial results in at most
one vertex of $\cals_j$ being adjacent to $\cup_{\ell=i}^{j+2} R_{\ell}$ and that vertex lies in a component of $\cals_j$ with size at most $\log^2 n$. So each trial affects $X$ by at most  $\varrho=\log^4 n$.

Finally, we note that if $X\geq s$ then there are at most $s$ {\em choices} whose outcomes
certify $X\geq s$: sets of vertices chosen in step 3 where one of them is in $\cup_{\ell=i}^{j+2} R_{\ell}$, or vertices of $R_{j+1}$ which, in the permutation of step 2, are mapped to a vertex whose neighbours in $\cals_j$ have components in $\cals_j$ of sizes totalling at least 2. This
justifies taking $q=1$, and completes the proof of~(\ref{emd}). So with probability at least $1-n^{-5}$ we have:
\bean
|R_j(v)|&=&|R_{j+1}(v)| +X\leq |R_{j+1}(v)| +\ex(X)+2t\leq |R_{j+1}(v)| + 2\ex(X)+n^{\d}\\
&\leq& |R_{j+1}(v)|+\frac{Z}{4}\frac{|S_j|}{n}\sum_{\ell=i}^{j+1}|R_{\ell}(v)|
+ n^{\d}.
\eean

\proofend

Because Lemma~\ref{lrconc} holds only for $\ld_j>n^{\d}\log^2n$, we define
\[i_0 \mbox{ to be the smallest integer such that }|\ld_{i_0+1}|< n^{\d}\log^2n.\]
We will prove:
\begin{lemma}\lab{lrec3}
There is a constant $X'=X'(r,k)>0$ such that for any $1\leq i\leq i_0$ and any $v\in S_i$:
\[\pr(|R(v)|>n^{X'\d})<\inv{n^2}.\]
\end{lemma}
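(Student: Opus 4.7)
The proof iterates the recursive bound from Lemma~\ref{lrconc}. Since $i\leq i_0$, every $j\in[B,i]$ satisfies $\ld_j\geq n^{\d}\log^2 n$, so the bound of Lemma~\ref{lrconc} holds simultaneously for all such $j$ with probability at least $1-n^{-3}$, which exceeds $1-n^{-2}$; on this event I will establish the \emph{deterministic} bound $|R(v)|\leq n^{X'\d}$.

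Write $U_j:=|R_j|$, $T_j:=\sum_{\ell=j}^{i}U_\ell$ (so that $|R(v)|=T_B$), and $a_j:=Z\ld_j/n$. The recursion reads $U_j\leq U_{j+1}+a_j T_{j+1}+n^{\d}$. Telescoping gives $U_j\leq 1+(i-j)n^{\d}+\sum_{\ell=j}^{i-1}a_\ell T_{\ell+1}$, and summing over $j'\in[j,i]$ yields
\[
T_j \;\leq\; (i-j+1) \;+\; \tfrac{1}{2}n^{\d}(i-j)^2 \;+\; \sum_{\ell=j}^{i-1}(\ell-j+1)\,a_\ell\,T_{\ell+1}.
\]
The plan is to prove, by downward induction on $j$, that $T_j\leq C\,n^{X'\d}$ for constants $C$ and $X'=X'(r,k)$, from which the lemma follows.

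The key structural input is Lemma~\ref{lsi}. Split $[B,i-1]$ into Regime~A ($\ld_j\geq n^{1-\d}$, near $j=B$) and Regime~B ($\ld_j<n^{1-\d}$, near $j=i$). Iterating the upper bound of Lemma~\ref{lsi}(b) with $x_j:=\sqrt{\ld_j/n}$ gives $1/x_{j+1}-1/x_j\geq Y_2/2$, whence $\ld_{B+k}=\Theta(n/k^2)$; so Regime~A spans only $O(n^{\d/2})$ steps and $\sum_{\mathrm{Reg.\,A}} a_\ell=Z\sum_k O(1/k^2)=O(1)$. In Regime~B, $a_\ell\leq Zn^{-\d}$ and by Theorem~\ref{mt} there are at most $O(n^{\d/2}\log n)$ steps, giving $\sum_{\mathrm{Reg.\,B}} a_\ell=O(n^{-\d/2}\log n)=o(1)$. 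Hence $\sum_{j=B}^{i-1}a_j=O(1)$, and the additive contribution $(i-B)n^{\d}=O(n^{3\d/2}\log n)=n^{O(\d)}$ already lies within the target.

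The hard part will be controlling the double sum $\sum(\ell-j+1)\,a_\ell\,T_{\ell+1}$. The naive bound $T_{\ell+1}\leq T_j$ is too weak, since at $j=B$ one has $\sum_\ell(\ell-B+1)a_\ell=\Theta(\log^2 n)$: Regime~A contributes $\sum_k(k+1)\cdot O(1/k^2)=O(\log n)$, while Regime~B contributes $\sum_m m\cdot Zn^{-\d}\cdot[m\leq O(n^{\d/2}\log n)]=O(\log^2 n)$. The resolution is to use an inductive ansatz of the form $T_j\leq C\,n^{X'\d}\,g(i-j)$, where $g$ is a low-degree polynomial chosen so that the weighted moments $\sum(\ell-j+1)^{\beta}(i-\ell)^{\gamma}a_\ell$—evaluated explicitly using $\ld_{B+k}=\Theta(n/k^2)$ in Regime~A and $a_\ell\leq Zn^{-\d}$ in Regime~B—are absorbed back into the ansatz, with $X'$ taken large enough to accommodate the resulting constant factors. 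This closing step is analogous in spirit to the proof of Lemma~\ref{lsi}(c) but carried out at one higher moment, and yields $T_B\leq n^{X'\d}$ for a suitable $X'=X'(r,k)$.
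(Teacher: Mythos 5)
Your reduction to the Volterra-type inequality $T_j \leq (i-j+1) + \tfrac12 n^\d(i-j)^2 + \sum_{\ell=j}^{i-1}(\ell-j+1)a_\ell T_{\ell+1}$ is correct and in fact equivalent to the paper's recursion for $t_j=\sum_{m\le j}r_m$; your regime estimates ($\ld_{B+k}=\Theta(n/k^2)$, Regime~A having $O(n^{\d/2})$ steps, $\sum a_\ell=O(1)$) are also right. The gap is the final closing step: the claimed ansatz $T_j\le Cn^{X'\d}g(i-j)$ with $g$ a low-degree polynomial cannot work. To see this, try to close the induction at $j=B$. The Regime~A part of the kernel sum is $\sum_{\kappa\ge1}\kappa\,a_{B+\kappa}\,g(N-\kappa)$ with $a_{B+\kappa}=\Theta(1/\kappa^2)$ and $N=i-B$; with $g(m)=(m+1)^d$ this is $\gtrsim g(N)\sum_{\kappa\le N/d}1/\kappa = g(N)\,\Theta(\log(N/d))$, which for fixed $d$ is $\Theta(\d\log n)\,g(N)\gg g(N)$. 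No constant-degree polynomial (and no $g$ depending only on $i-j$ — a constant-rate exponential $e^{cm}$ forces $c\ge\Omega(1)$ and hence $g(i-B)=e^{\Omega(n^{\d/2}\log n)}$) absorbs this logarithmic feedback, and the prefactor $n^{X'\d}$ is irrelevant to the inductive step since it appears on both sides. So the ``analogous in spirit to Lemma~\ref{lsi}(c) at one higher moment'' step is precisely where the argument breaks.

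What actually closes is an ansatz of the form $g_j=\exp\!\bigl(D\sum_{\ell\ge j}\sqrt{a_\ell}\bigr)$, which depends on the $a_\ell$'s and not just on $i-j$: then $g_{\ell+1}/g_j\approx\bigl((j-B)/(\ell-B)\bigr)^{D'}$ in Regime~A, and $\sum_\kappa\kappa\,\kappa^{-2}(p/\kappa)^{D'}=O(1/D')<1$ for $D'$ large, so the induction closes; and $g_B=\exp(D\sum\sqrt{\ld_\ell/n})$ is $n^{O(\d)}$ by telescoping $\ld_{j+1}\le(1-Y_2\sqrt{\ld_j/n})\ld_j$ down to $\ld_{\ell_0+1}>\tfrac12 n^{1-\d}$. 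This $\sqrt{a_\ell}$-dependence is the whole point of the paper's factored (second-order) recurrence $t_j-a_jt_{j-1}=b_j(t_{j-1}-a_{j-1}t_{j-2})+n^\d$, whose ``eigenvalue'' $a_j\le 1+D\sqrt{\ld_{i-j}/n}$ is what produces the $\exp(D\sum\sqrt{\ld/n})$ and hence the $n^{O(\d)}$ bound. In short: the setup and the regime bookkeeping are sound, but the missing idea is that the feedback is governed by $\sqrt{a_\ell}$, not $a_\ell$, and this cannot be recovered with a polynomial-in-$(i-j)$ ansatz.
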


This yields Lemma~\ref{lrec2} as follows:

\newtheorem*{lrec2}{Lemma~\ref{lrec2}}
\begin{lrec2}
There is a constant $X=X(r,k)>0$ such that for any $1\leq i\leq \imax$ and any $v\in S_i$:
\[\pr(|R(v)|>n^{X\d})<\inv{n^2}.\]
\end{lrec2}

\proofstart Lemma~\ref{lrec3} implies this lemma for all $i\leq i_0$.  So consider some $i>i_0$, $v\in S_i$
and note that $R(v)\subseteq \left(\cup_{\ell\geq i_0+1} S_{\ell}\right) \cup\left(\cup_{u\in S_{i_0}}R(u)\right)$.

By Lemma~\ref{lsi}(c), and the fact that {$\inv{4}|S_{\ell}|\leq \ld_{\ell}\leq (k-1)|S_{\ell}|$ (by Lemma~\ref{lsi}(d))}, we have
\[\sum_{\ell\ge i_0+1}|S_{\ell}|\leq4\sum_{\ell\ge i_0+1}|\ld_{\ell}|\le 4Z_1\ld_{i_0+1}n^{\d/2}<\frac{4Z_1}{k-1}|S_{i_0+1}|n^{\d/2}n^{3\d}{=o(n^{6\d})}.\]
By Lemma~\ref{lsi} (a,b,d), we have $|S_{i_0}|<5|S_{i_0+1}|= O(n^{\d}\log^2n)$. These, and Lemma~\ref{lrec3} yield:
\[|R(v)|\leq  n^{6\d} + O(n^{\d\log^2n})\times n^{X'\d}<n^{X\d},\]
for $X=X'+6$.
\proofend

{\em Proof of Lemma~\ref{lrec3}:}
 Lemma~\ref{lrconc} bounds $|R_j|$ in terms of $|R_{\ell}|$ for values of $\ell$ that are {\em larger} than $j$.  We will perform a change of indices in order to analyze this using a recursive equation where values are bounded by values with {\em smaller} indices.  The value $r_0$ will correspond to $R_i$, and  $r_j$ will correspond to $R_{i-j}$.  Specifically:

\begin{eqnarray*}
r_0&=&1\\
r_{j}&=&r_{j-1}+Z\frac{|\ld_{i-j}|}{n}\sum_{\ell=0}^{j-1}r_{\ell}+n^{\d},\quad \forall j\ge 1.
\end{eqnarray*}
Thus, by Lemma~\ref{lrconc} we have
\[|R_{i-j}(v)|\leq r_i ,\quad \forall \ 0\leq j\leq i-B.\]
It will be convenient to define:
\[t_j=\sum_{\ell=0}^j r_j,\]
and so
\begin{equation}\lab{rec}
r_{j}=r_{j-1}+Z\frac{\ld_{i-j}}{n}t_{j-1}+n^{\d},\quad \forall j\ge 1,
\end{equation}
and
\[|R(v)|<t_{i-B}.\]

We solve the recurrence~\eqn{rec}. Since $r_j=t_j-t_{j-1}$, we have
$$
t_{j}-t_{j-1}= t_{j-1}-t_{j-2}  + Z\frac{\ld_{i-j}}{n}t_{j-1}
+ n^{\d}, \quad \forall j\ge 1,
$$
where $t_0=1$ and $t_{-1}=0$.
We solve this recurrence. It will be helpful to find sequences $(a_j)_{j\ge 0}$ and $(b_j)_{j\ge 1}$ that satisfy
\begin{equation}
t_{j}-a_jt_{j-1}=b_j(t_{j-1}-a_{j-1}t_{j-2})+n^{\d},\quad \forall j\ge 1.\lab{ezzzz}
\end{equation}
Rearranging gives
\[t_{j}-t_{j-1}=(a_j-1+b_j)t_{j-1}+(-a_{j-1}b_j)t_{j-2}+n^{\d},\]
so we require that for all $j\ge 1$,
\bean
a_j-1+b_j&=&1+ Z\frac{\ld_{i-j}}{n}\\
 -a_{j-1}b_j&=&-1,
\eean
with the initial condition $a_{0}=b_{1}=1$.

So we define $a_j,b_j$ recursively as:
\[
b_j=1/a_{j-1},\quad a_j=2-\frac{1}{a_{j-1}}+ Z\frac{\ld_{i-j}}{n},
\]
and~(\ref{ezzzz}) holds. Since $a_j+\inv{a_{j-1}}\ge 2$ for each $j$ and $a_{0}=1$, it follows that $a_j\ge 1$ for every $j\ge 0$.   Next, we show that
there is a constant $D=D(r,k)>0$  such that for every $j\ge 1$,
\begin{equation}
1+Z\frac{\ld_{i-j}}{n}\le a_j\le 1+D\sqrt{\frac{\ld_{i-j}}{n}}.\lab{induction}
\end{equation}
The lower bound follows immediately from $a_{j-1}\geq 1$ and the recursion
$a_j=2-\frac{1}{a_{j-1}}+ Z\frac{\ld_{i-j}}{n}$.  We prove the upper bound
by induction. By taking  $D\geq Z$ we ensure that~\eqn{induction} holds for $j= 1$. Now
 assume that $j\ge 2$ and that~\eqn{induction} holds for $j-1$, and so:
\[
\frac{1}{a_{j-1}}\geq 1-D\sqrt{\frac{\ld_{i-j+1}}{n}}.\]
Since $a_j=2+ Z\frac{\ld_{i-j}}{n}-1/a_{j-1}$, we have
\[ a_j\le 1+Z\frac{\ld_{i-j}}{n}+D\sqrt{\frac{\ld_{i-j+1}}{n}}.
\]
Since $i-j\leq i\leq i_0$, we have $\ld_{i-j}>n^{\d}\log^2 n$. So
  by Lemma~\ref{lsi}(b), we have that for all $0\le j\le i-B$, $\ld_{i-j+1}\le (1-Y_2\sqrt{\ld_{i-j}/n})\ld_{i-j}$ for constant $Y_2=Y_2(r,k)>0$.
Thus,
\bean
1+Z\frac{\ld_{i-j}}{n}+D\sqrt{\frac{\ld_{i-j+1}}{n}}&\le & 1+Z\frac{\ld_{i-j}}{n}+\left(1-\frac{Y_2}{2}\sqrt{\ld_{i-j}/n}\right)D\sqrt{\frac{\ld_{i-j}}{n}}\\
&=&1+D\sqrt{\frac{\ld_{i-j}}{n}}-\left(\frac{DY_2}{2}-Z\right)\frac{\ld_{i-j}}{n}\le 1+D\sqrt{\frac{\ld_{i-j}}{n}}
\eean
where the last inequality holds
by choosing $D>2Z/Y_2$. Thus,~\eqn{induction} holds also for $j$ and thus it holds for every $j\le i-B$.

Now we continue to solve the recurrence~\eqn{rec}. Let $c_j=t_{j}-a_jt_{j-1}$. Then, since $b_j=1/a_{j-1}\le 1$ for every $j\ge 1$, we have
\[
c_{j}=b_j c_{j-1}+n^{\d} \le c_{j-1}+n^{\d}\le c_{0} +j n^{\d}=1+j n^{\d}.
\]
 Since $j=O(n^{\d/2}\log n)<n^{\d}$ by Theorem~\ref{mt}(b), this yields
\be
t_{j}-a_jt_{j-1} \le 1+j n^{\d}\le U:=n^{\d}\log^2n.\lab{rec2}
\ee

Let $\ell_0$ be the maximum integer for which $\ld_{\ell_0}\ge n^{1-\d}$.  Again applying Theorem~\ref{mt}(b), we have $\ell_0 =O(n^{\d/2}\log n)$.

Now is a good time to recall that our goal is to show $t_{i-B}=n^{O(\d)}$.

\eqn{rec2} says $t_{j}\leq 1+ a_jt_{j-1} + U$. Applying this
recursively yields that for every $1\le \ell\le i-B$,
\be
t_{i-B}\le t_{i-B-\ell}\prod_{h=0}^{\ell-1}a_{i-B-h}+U\left(1+\sum_{h_2=0}^{\ell-2}\prod_{h=0}^{h_2} a_{i-B-h}\right).
\lab{rec3}
\ee

Since $a_h\ge 1$ for each $h$, we have $1+\sum_{h_2=0}^{\ell-2}\prod_{h=0}^{h_2} a_{i-B-h}\le \ell\prod_{h=0}^{\ell-2} a_{i-B-h}$. Now taking $\ell=\ell_0-B+2$ in~(\ref{rec3}) and noting that
$a_{i-\ell_0-1}<U\ell_0$ by~\eqn{induction}  yields
\be
t_{i-B}\le   t_{i-\ell_0-2}\left(\prod_{j=0}^{\ell_0-B+1} a_{i-B-j}\right)+U \ell_0 \prod_{j=0}^{\ell_0-B} a_{i-B-j}\le (1+t_{i-\ell_0-2})U \ell_0 \prod_{j=0}^{\ell_0-B} a_{i-B-j}.\lab{ezz}
\ee
Again applying~\eqn{induction}, we have:
\be
\prod_{j=0}^{\ell_0-B} a_{i-B-j}\le\exp\left(D\sum_{j=0}^{\ell_0-B} \sqrt{\frac{\ld_{j+B}}{n}}\right)
=\exp\left(D\sum_{j=B}^{\ell_0} \sqrt{\frac{\ld_j}{n}}\right).
\lab{ezz2}
\ee

Next, we bound $\exp\left(D\sum_{j=B}^{\ell_0} \sqrt{\frac{\ld_j}{n}}\right)$. By Lemma~\ref{lsi}(b), we have for all $j>i_0$,
\[\ld_{j+1}\leq \exp\left(-Y_2\sqrt{\frac{\ld_{j}}{n}}\right)\ld_j,\]
and so
\[\ld_{\ell_0+1}\le\exp\left(-Y_2\sum_{j=B}^{\ell_0}\sqrt{\frac{\ld_j}{n}}\right)\ld_B.
\]
By the definition of $\ell_0$ and Lemma~\ref{lsi}(a), we have $\ld_{\ell_0+1}>\hf n^{1-\delta}$, so
\[
\exp\left(Y_2\sum_{j=B}^{\ell_0}\sqrt{\frac{\ld_j}{n}}\right)\le \frac{\ld_B}{\ld_{\ell_0+1}}< \frac{n}{\ld_{\ell_0+1}}=O(n^{\delta}),
\]
and so
\[
\exp\left(D\sum_{j=B}^{\ell_0}\sqrt{\frac{\ld_j}{n}}\right)=n^{O(\d)},
\]
since both $D$ and $Y_2$ are positive constants.
This, (\ref{ezz}),~(\ref{ezz2}), $\ell_0=O(n^{\d/2}\log n)$ and $U=n^{\d}\log^2n$ yield
\be
t_{i-B}=n^{O(\d)}t_{i-\ell_0-2}.
\lab{rec4}
\ee
It only remains to show that $t_{i-\ell_0-2}=n^{O(\d)}$.
The same recursion that produced~(\ref{rec3})  yields that for every $1\le \ell\le i-\ell_0-2$,
\[
t_{i-\ell_0-2}\le t_{i-\ell_0-2-\ell}\prod_{h=0}^{\ell-1}a_{i-\ell_0-2-h}+U\left(1+\sum_{h_2=0}^{\ell-2}\prod_{h=0}^{h_2} a_{i-\ell_0-2-h}\right).
\]
Arguing as for~\eqn{ezz}, this time taking $\ell=i-\ell_0-2$ yields
\be
t_{i-\ell_0-2}\le (1+t_0)Ui\prod_{j=0}^{i-\ell_0-4} a_{i-\ell_0-2-j}
=2Ui\prod_{j=0}^{i-\ell_0-4} a_{i-\ell_0-2-j}
=n^{O(\d)}\prod_{j=0}^{i-\ell_0-4} a_{i-\ell_0-2-j},
\lab{eq:t}
\ee
and arguing as for~\eqn{ezz2} yields
\be
\prod_{j=0}^{i-\ell_0-4} a_{i-\ell_0-2-j}\leq\exp\left(D\sum_{j=\ell_0+2}^{i-2} \sqrt{\frac{\ld_j}{n}}\right).
\lab{eq:tt}
\ee

By  the definition of $\ell_0$ we have $\ld_{\ell_0+2}< n^{1-\d}$.
Since $i\le i_0$, for every $\ell_0+2\le j\le i-2$, we have $\ld_{j}\ge n^{\d}\log^2 n$, and so we can apply Lemma~\ref{lsi}(a) to show
\[
\ld_{j}\le (1-Y_2n^{-\d/2})^{j-(\ell_0+2)}\ld_{\ell_0+2}
\le (1-Y_2n^{-\d/2})^{j-(\ell_0+2)}n^{1-\d}, \quad \forall\ \ell_0+2\le j\le i-2,
\]
which implies that
\[
\sum_{j=\ell_0+2}^{i-2}\sqrt{\frac{\ld_j}{n}}\le n^{-\d/2}\sum_{j\ge 0} (1-Y_2n^{-\d/2})^j =O(1).
\]
This proves that $t_{i-\ell_0-2}=n^{O(\d)}$ by~\eqn{eq:t} and~\eqn{eq:tt}. So by~\eqn{rec4} it completes our proof that $t_{i-B}=n^{O(\d)}$. Since $|R(v)|< t_{i-B}$, this proves the lemma.
\proofend

\section{Connectivity of the clusters}\lab{scon}
In this section, we complete the proof of Theorem~\ref{tc2} by proving Lemmas~\ref{lflip}, \ref{lgap} and~\ref{lTw}.  We also describe how to adapt the proof of Theorem~2 from \cite{amxor} to prove Theorem~\ref{tc1}(b).  Note that by choosing $\k$ to be sufficently large, we can take $\d$ to be as small a constant as we wish.

First, we prove that not many vertices lie on flippable cycles.

\newtheorem*{lflip}{Lemma~\ref{lflip}}
\begin{lflip}  For $k\geq 3$, $0<\d<\hf$ and $c=c_{\dd,2}+n^{-\d}$, \aas
the total sizes of all flippable cycles in $\calh_{\dd}(n,p=c/n^{k-1})$ is $O(n^{\d/2}\log n)$.
\end{lflip}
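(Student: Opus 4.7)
The plan is a first-moment calculation in the configuration model of the 2-core, combined with Markov's inequality. I would first condition on the degree sequence of the 2-core: by Lemma~\ref{lcoresize} and Proposition~\ref{p:Poisson}, a.a.s.\ it has $\alpha(c)n + O(n^{1-\d/2})$ vertices and $\beta(c)n + O(n^{1-\d/2})$ hyperedges, with degrees distributed as truncated Poisson$(\mu(c))$ conditioned on being $\ge 2$, where $\mu(c) = \mu_{r,2} + K_1 n^{-\d/2} + O(n^{-\d})$ by Lemma~\ref{l:diff}. In particular, the number of degree-$2$ vertices satisfies $N_2 = p^*(c)\alpha(c)n(1+o(1))$ with $p^*(c) = e^{-\mu(c)}\mu(c)^2/(2 f_2(\mu(c)))$. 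Conditional on the degree sequence the 2-core is uniform, so by Corollary~\ref{ccon} I may work in the configuration model on the 2-core.

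Fix $t \ge 2$. A core flippable cycle of length $t$ corresponds to an unordered cyclically arranged set $\{v_1,\dots,v_t\}$ of distinct degree-$2$ vertices together with hyperedges $e_1,\dots,e_t$ satisfying $v_i \in e_i\cap e_{i+1}$ (indices mod $t$). In the configuration model, for a fixed ordered tuple the probability of this structure factors approximately as a product over the $t$ half-edge pairing constraints, each contributing $\approx 2(r-1)/(r\beta(c)n)$; the total is $\approx (2(r-1)/(r\beta(c)n))^t$. Multiplying by $(N_2)_t$ and dividing by the $2t$-fold symmetry of rotations and reflections yields
\[
\ex\bigl[\#\text{FC of length }t\bigr] = \frac{\rho^t}{2t}\bigl(1+o(1)\bigr),\qquad \rho=\rho(c):=\frac{2(r-1)N_2}{r\beta(c)n}.
\]
Substituting the Poisson expressions simplifies $\rho$ to $F(\mu(c))$ where $F(\mu) = (r-1)\mu/(e^\mu-1)$. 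The relation $kp_{r,k}(k)\alpha/(r\beta)=1/((r-1)(k-1))$ from Lemma~\ref{l:degreeK} specialized to $k=2$ is precisely $F(\mu_{r,2})=1$, and $F$ is easily checked to be strictly decreasing on $\mu>0$. By Lemma~\ref{l:diff} we then have $\rho = 1 - c_0 n^{-\d/2} + O(n^{-\d})$ for a constant $c_0=c_0(r)>0$, hence
\[
\sum_{t\ge 2} t\cdot\ex\bigl[\#\text{FC of length }t\bigr] = \frac{1+o(1)}{2}\sum_{t\ge 2}\rho^t = O\!\left(\frac{1}{1-\rho}\right) = O(n^{\d/2}).
\]
Markov's inequality then yields the claimed a.a.s.\ bound $O(n^{\d/2}\log n)$.

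The main obstacle will be justifying the configuration-model probability estimate uniformly in $t$. For moderate $t$ (say $t\le n^{1/2}$) the successive half-edge pairing constraints are essentially independent to leading order, so the estimate holds up to a $1+o(1)$ factor; for $t > n^{\d/2}\log n$, the expected count $\rho^t/(2t)$ is already $n^{-\omega(1)}$, so such long cycles a.a.s.\ do not exist and contribute negligibly to the first-moment sum. A minor subtlety, if the statement is interpreted as ``all flippable cycles'' rather than ``core flippable cycles,'' is that flippable cycles not contained in the 2-core have all their vertices stripped; a separate and easier first-moment count using the tree-like non-core region (or a direct comparison with the core structure) shows their total size is also $O(n^{\d/2}\log n)$ a.a.s.
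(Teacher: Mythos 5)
Your argument is essentially the same as the paper's: both work in the configuration model on the 2-core, and both reduce the problem to showing that the key ratio $\rho = 2(r-1)Q_2/\Lambda$ (which you compute cleanly as $(r-1)\mu(c)/(e^{\mu(c)}-1)$) is at most $1 - \Theta(n^{-\d/2})$, then sum a geometric series. The bookkeeping differs slightly: the paper bounds, for each $a\geq n^{\d/2}\log n$, the expected number of \emph{collections} of core flippable cycles with total size exactly $a$ by the product $\prod_{\ell=1}^a \frac{(r-1)(2Q_2-2\ell+2)}{\Lambda-2\ell+1}\leq (1-Kn^{-\d/2})^{2a}$ and applies a union bound over $a$; you compute the expected total size $\sum_t t\cdot \ex[\#\mathrm{FC}_t] \approx \sum_t \rho^t/2 = O(n^{\d/2})$ and apply Markov. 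Both are first-moment arguments yielding the same bound, and your explicit identity $\rho=F(\mu)$ with $F(\mu_{r,2})=1$ (equivalent to the paper's use of Lemma~\ref{l:degreeK}) is a nice way to see why the threshold $c_{r,2}$ is exactly where $\rho$ crosses $1$. One small caution: the paper's exact product avoids the need to justify your ``$(1+o(1))$ uniformly in $t$'' step; your acknowledgment that long cycles contribute negligibly is correct, but the paper's formulation sidesteps the issue entirely, which is a bit cleaner if you were to write this up rigorously.
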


\proofstart
We follow very similar analysis to that in \cite{amxor}, but being more careful with the $o(n)$ terms.  We prove that \aas there is no collection of flippable cycles whose total size is $a$ for any $a\geq n^{\d/2}\log n$.  We condition on the degree sequence of the $2$-core and we generate a random $2$-core using the configuration model.

 Recall the definitions of $\mu(c)$, $\alpha(c)$ and $\beta(c)$ from the beginning of Section~\ref{sec:llt2}. By Lemma~\ref{lcoresize2}, \aas the $2$-core contains $Q$ vertices, with $Q=\alpha(c) n+ O(n^{3/4})$, and the total degree of the $2$-core is $\La=\dd\beta(c)n+O(n^{3/4})$.
Conditional on the values of $Q$ and $\La$, the degree distribution of the $2$-core follows the truncated multinomial $Multi(Q,\La,2)$. Let $Q_2$ denote the number of vertices in the 2-core with degree $2$. By Proposition~\ref{p:Poisson}, for any $\e>0$:
\be\lab{er2}
Q_2=\frac{e^{-\mu(c)} \mu(c)^2}{2f_2(\mu(c))}\a(c) n+O(n^{-\hf+\e}).
\ee

The number of choices of $a$ vertices {with degree $2$} is  ${Q_2\choose a}$.  We arrange them into oriented cycles; the number of ways to do this is  less than the number of permutations; i.e. is less than $a!$.

Given such an arrangement, we label the two vertex copies of vertex $i$ as $y_i,z_i$; there are $2^a$ choices.  Now for each pair of vertices $i,j$ where $j$ follows $i$ on an oriented cycle, the pair $\{y_i,z_j\}$ must lie in an $\dd$-tuple of our configuration.

We process those $a$ pairs in sequence, halting if we find that a pair does not lie in a $\dd$-tuple.
When processing $\{y_i,z_j\}$ we expose whether they appear in the same $\dd$-tuple, and we do not
expose any other members of that $\dd$-tuple.  So prior to processing the  $\ell$th pair, $\{y_i,z_j\}$, we have exposed information about exactly $2\ell-2$ copies.   Each of the remaining copies, other than $y_i$, is equally likely to be one of the $\dd-1$ copies in the same $\dd$-tuple as $y_i$ (and for $\dd\geq 3$, the exposed copies also have positive probability of being in that $\dd$-tuple).  So the probability that $y_i,z_j$ are in the same $\dd$-tuple is at most $(\dd-1)/(\La-2\ell+1)$. So the probability that there is some collection of flippable cycles with total size $a$ is at most
\[{Q_2\choose a}a!2^a\prod_{\ell=1}^a\frac{\dd-1}{\La-2\ell+1}
=\prod_{\ell=1}^a\frac{(\dd-1)(2Q_2-2\ell+2)}{\La-2\ell+1}.\]

We will prove below that
\be
\frac{2(r-1)Q_2}{\La}\leq 1-Kn^{-\d/2},\lab{eq2}
\ee
for some constant $K>0$.  Since $2(r-1)\leq 2$, it follows that $(\dd-1)(2Q_2-2\ell+2)/(\La-2\ell+1)<1-K n^{-\d/2}$ for each $\ell$ and so
$\ex(X_a)\leq(1-Kn^{-\d/2})^{2a}=o(1/n)$ for $a>\frac{2}{K}n^{\d/2}\log n$.  Summing over the fewer than $n$ values of  $a>\frac{2}{K}n^{\d/2}\log n$ yields the lemma.

It only remains to prove~(\ref{eq2}). Recalling the definitions of $\a(c),\b(c)$ from Section~\ref{sec:llt2} and substituting into~(\ref{er2}) yields:
\[ \frac{2Q_2}{\La}=\frac{2 \rho_2\alpha(c) }{\dd\beta(c)}+O(n^{-1/2+\eps})=e^{-\mu(c)}\frac{\mu(c)}{f_1(\mu(c))}+O(n^{-1/2+\eps}).\]

 From Lemma~\ref{l:degreeK} (see also~\eqn{pdk}),
\[
e^{-\mu_{\dd,2}}\frac{\mu_{\dd,2}}{f_1(\mu_{\dd,2})}= \frac{2\alpha p_{\dd,2}(2)}{\dd \beta}=\frac{1}{(\dd-1)}.
\]
It is easy to check that the derivative of
$e^{-\mu}\frac{\mu}{f_1(\mu)}$ with respect to $\mu$ is strictly negative in a small neighbourhood of $\mu_{\dd,2}$.   By Lemma~\ref{l:diff}, $\mu(c)= \mu_{r,2}+ K_1n^{-\d/2}+o(n^{-\d/2})$. Therefore,
\[e^{-\mu(c)}\frac{\mu(c)}{f_1(\mu(c))}<e^{-\mu_{\dd,2}}\frac{\mu_{\dd,2}}{f_1(\mu_{\dd,2})}- K_4n^{-\d/2}=\frac{1}{(\dd-1)}- K_4n^{-\d/2},\]
 for some constant $K_4>0$. This implies~(\ref{eq2}).
 \proofend

Recall the context from Section~\ref{sct}, particularly that we are considering the 2-core of $\calh_r(n,p=c/n^{k-1})$ where $c=c_{2,k}+n^{-\d}$ for some sufficiently small constant $\d>0$.

Given a non-2-core vertex $w$, recall that $T(w)$ is the set of vertices $v$ that can reach $w$ in $\cald$; i.e. the set of vertices $v$ such that $w\in R^+(v)$. For $u\in T(w)$, $T(w,u)$ is the subgraph of $T(w)$ containing all vertices reachable from $u$; i.e. vertices on paths from $u$ to $w$.

Recall that $I^*$ is the last iteration of the parallel stripping process during which a free variable is removed, and $u^*$ is a free variable removed during iteration $I^*$.

\newtheorem*{lgap}{Lemma~\ref{lgap}}
\begin{lgap}
 A.a.s\  $u^*$ is the only free vertex in
$\cup_{i\geq I^*-n^{\d/20}} S_i$.
\end{lgap}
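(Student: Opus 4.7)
First I would characterise free vertices: since $v\in S_i$ gains positive indegree in $\cald$ exactly when $v$ is processed by SLOW-STRIP with at least one surviving edge, $v$ is free iff its degree in $\hG_i$ drops to $0$ before $v$ reaches the front of $\msq$. For $k=2$ this happens iff either (i) $v$ has degree $0$ in $\hG_i$ (so $v$ lost $\ge 2$ edges during iteration $i-1$), or (ii) $v$'s unique edge $e$ in $\hG_i$ contains another vertex of $S_i$; such a vertex is itself light with $e$ as its sole edge, so processing it first drives $v$'s degree to $0$.

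Using this characterisation I would bound $\ex[|F_i|\mid\mathrm{EXPOSURE}]=O(|S_i|^2/n)$, where $F_i$ denotes the set of free vertices in $S_i$. Case (ii) is controlled by Lemma~\ref{lsi}(e) with $a=2$ and $t=1$, which bounds the number of $v\in S_i$ whose edge in $\cals_i$ contains another $S_i$-vertex by $O(|S_i|^2/n+\log^2 n)$; a routine configuration-model calculation bounds the contribution from edges in $\cals_{i,i+1}$ with $\ge 2$ $S_i$-vertices by $O(|S_i|^2|S_{i+1}|/n^2)$, which is lower order. Case (i) is controlled by Lemma~\ref{lsi}(f) applied at level $i-1$: a degree-$0$ vertex in $S_i$ has $d^-(v)\ge 2$, and $\sum_{v\in S_i}(d^-(v)-1)\le O(|S_i|^2/n+\log^2 n)$.

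The main obstacle will be the a priori estimate $|S_{I^*}|^2/n\le Cn^{-\delta/2}\log n$ a.a.s. By Lemma~\ref{lsi}(a,b), $|S_j|$ decays geometrically at rate $1-\Theta(n^{-\delta/2})$, so $\sum_{j>i}|S_j|^2/n=\Theta(n^{\delta/2})\cdot |S_i|^2/n$. If $|S_{I^*}|^2/n\ge Kn^{-\delta/2}$ with $K=\omega(1)$, then the expected total free count above $I^*$ would be $\Theta(K)$. Since Step~1 of EDGE-SELECTION draws the $\cals_j$ independently across $j$, the indicators $\mathbf{1}[|F_j|\ge 1]$ are essentially independent given EXPOSURE, so a Chebyshev/second-moment argument yields $\pr(\sum_{j>I^*}|F_j|=0)\le e^{-\Omega(K)}$; taking $K=C\log n$ and union-bounding over the $O(n^{\delta/2}\log n)$ candidate values of $I^*$ contradicts the maximality of $I^*$.

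Given this bound the conclusion follows routinely. For $j\in[I^*-n^{\delta/20},I^*-1]$, Lemma~\ref{lsi}(a,b) gives $|S_j|^2/n\le(1-Y_2 n^{-\delta/2})^{-2n^{\delta/20}}|S_{I^*}|^2/n=(1+O(n^{-9\delta/20}))|S_{I^*}|^2/n=O(n^{-\delta/2}\log n)$, so Markov yields $\pr(|F_j|\ge 1)=O(n^{-\delta/2}\log n)$ and summing over the $n^{\delta/20}$ such $j$ gives $O(n^{-9\delta/20}\log n)=o(1)$. Similarly $\ex[\binom{|F_{I^*}|}{2}\mid\mathrm{EXPOSURE}]=O(|S_{I^*}|^4/n^2)=o(1)$, so a.a.s.\ $u^*$ is the unique free vertex in $S_{I^*}$. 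Combining, a.a.s.\ $u^*$ is the only free vertex in $\bigcup_{i\ge I^*-n^{\delta/20}}S_i$.
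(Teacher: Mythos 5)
Your proposal takes a related but genuinely different route to the a priori localisation of $I^*$, and it is interesting to compare the two.  The paper never directly estimates $|S_{I^*}|$: it fixes a deterministic stopping time $i_1$ (the first level with $\ld_{i_1}\le n^{1/2-\delta/6}$), then shows $I^*>i_1+n^{\delta/20}$ by a lower-tail concentration argument across the window $[i_1+n^{2\delta/5},i_1+2n^{2\delta/5}]$ where all levels have size $\Theta(n^{1/2-\delta/6})$, and finally applies a first-moment bound to pairs of levels $i_1\le i<i'\le i+n^{\delta/20}$.  You instead try to bound $|S_{I^*}|^2/n=O(n^{-\delta/2}\log n)$ directly by an exponential union bound over candidate values of $I^*$, then Markov-bound the free count in the window $[I^*-n^{\delta/20},I^*]$.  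Your approach also explicitly rules out two free vertices \emph{within} $S_{I^*}$ via $\ex\binom{|F_{I^*}|}{2}=o(1)$, a case the paper's pair-of-levels union bound does not literally cover, so this is a genuine improvement in completeness.

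There are, however, a few real gaps to flag.  First, the conditioning is off: for $k=2$, whether $S_j$ contains a free vertex is essentially determined by EXPOSURE alone (a vertex is degree-$0$ in $\widehat H_j$ iff all $\lambda_{a,b}(v)=0$, and your case (ii) is equivalent to $\sum_{a\ge 2}\sum_b\lambda_{a,b}(v)\ge 1$, all of which are EXPOSURE data).  So the indicators $\mathbf 1[|F_j|\ge 1]$ are \emph{not} random ``given EXPOSURE'', and the ``essentially independent given EXPOSURE'' step as stated does not make sense.  The concentration has to run over the configuration-model randomness revealed step-by-step during SLOW-STRIP (this is in fact what the paper does: it appeals to ``the calculations from the proof of Lemma~\ref{lsi}(e)'' rather than to the a.a.s.\ statement).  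Second, and related: you invoke Lemma~\ref{lsi}(e,f) for the bound $\ex[|F_i|]=O(|S_i|^2/n)$, but those lemmas have additive $\log^2 n$ error terms that dominate $|S_i|^2/n\approx n^{-\delta/2}\log n$ in exactly the regime you need; you must instead re-derive the bare first-moment computation underlying those lemmas.  Third, both the a priori estimate and the final Markov step condition on the random quantity $I^*$; to make this rigorous you should instead define a level $I_0$ deterministically from EXPOSURE (e.g.\ the last $j$ with $|S_j|^2/n\ge C n^{-\delta/2}\log n$), prove $I^*>I_0$ a.a.s., and then bound free-vertex counts for all levels $j>I_0-n^{\delta/20}$ before intersecting events — this is precisely the structure the paper uses with $i_1$.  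None of these issues is fatal to the overall idea, but as written the proof has real holes around the probability space bookkeeping.
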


\begin{proof}
Choose $\n=\hf - \d/6$. Run the SLOW-STRIP algorithm and let $i_1$ denote the first iteration in which  $L_{t(i_1)}\leq n^{\n}$; set $t_1=t(i_1)$.   Our first step will be to show
that \aas $I^*> i_1+n^{\d/20}$.
Let $i_2= i_1+n^{2\d/5}$ and $i_3= i_1+2n^{2\d/5}$.

 By Lemma~\ref{lsi}(c),  a.a.s.\ $\sum_{j\geq t(i_1)} L_{j}=O(L_{t_1}n^{\d/2})=O(n^{\hf+\d/3})=o(n^{1-\d})$ for small $\d$.
Thus, \aas at any iteration $t\geq t_1$ of SLOW-STRIP, the total degree of vertices in $\msq$ is $o(n^{1-\d})$ and so $L_t<n^{1-\d}$.

Therefore, we can apply Theorem~\ref{lsi}(a) ({recursively}) to obtain that for all $i_1\leq i\leq i_3$:
\be\lab{br2}
\ld_{i}>\left(1-Y_1n^{-\d/2}\right)^{2n^{2\d/5}}\ld_{i_1}=(1-o(1))\ld_{i_1}.
\ee
This is valid since in each recursion $i_1\le i\le i_3$, we have $\ld_i=\Omega(\ld_{i_1})\ge n^{\d}\log^2 n$ and so the assumption of Theorem~\ref{lsi} is satisfied. Theorem~\ref{lsi}(a) also implies that for all ${i_1\le i\le i_3}$, $\ld_i=O(n^{\n})$.
Since a.a.s.\ for every $i$, $\ld_i=\Theta(|S_i|)$ by Theorem~\ref{lsi}(d), we have a.a.s.\ $|S_i|=\Theta(n^{\n})$ for all $i_2\leq i\leq i_3$.

Recall that $S_i$ contains a free variable iff some hyperedge contains two vertices of $S_i$.
The calculations from the proof of Lemma~\ref{lsi}(e) say the probability that such a hyperedge exists is $\Theta(|S_i|^2/n)$.  So the expected number of free variables created between iterations $i_2$ and $i_3$ is $\Omega(n^{2\n-1}\cdot n^{2\d/5})>n^{\d/20}$. Applying the Chernoff bound as in Lemma~\ref{lsi}(e), shows that a.a.s.\ there is a free variable formed between iterations $i_2,i_3$. Hence, a.a.s.\ $I^*\ge i_2>i_1+n^{\d/20}$.

Next we show that \aas there are no two levels  $i_1\leq i<i'\leq i+n^{\d/20}$ such that $S_i,S_{i'}$ both contain a free variable.  By Theorem~\ref{mt}, \aas the total number of iterations in the parallel stripping process is at most $O(n^{\d/2}\log n)$, and so the number of pairs of levels that are  within distance at most $n^{\d / 20}$ of each other is at most $O(n^{11\d/20} \ln n)$. So the expected number of  such pairs $S_i,S_{i'}$, each containing a free variable  is $O(n^{11\d/20} \ln n)\times O(n^{2(2\n-1)})<n^{-\d/10}=o(1)$.  So a.a.s.\ there is no such pair of levels $S_i,S_{i'}$.

Since $I^*>i_1+n^{\d/20}$, if there was another level $S_i$ containing a free variable with $i\geq I^*-n^{\d/20}$, then this would form a pair $(i,i')$ as above.  So \aas there is no such $S_i$, thus proving the lemma.
\end{proof}

\newtheorem*{lTw}{Lemma~\ref{lTw}}
\begin{lTw}  A.a.s. there is some $w\in S_{I^*-n^{\d/20}}$ such that
\begin{enumerate}
\item[(a)] $u^*\in T(w)$;
\item[(b)] no vertex of any core flippable cycle is in $T(w)$;
\item[(c)] the subgraph of $\cald$ induced by the vertices of $T(w,u^*)$ is a path.
\end{enumerate}
\end{lTw}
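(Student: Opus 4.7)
My plan is to build $w$ by a greedy downward walk of length $\ell:=n^{\d/20}$ in $\cald$ starting at $u^*$, imposing restrictions to force (b) and (c). The walk is enabled by: every $v\in S_i$ with $i\ge 2$ had degree $\ge k$ in $\hG_{i-1}$ and $<k$ in $\hG_i$, so at least one hyperedge of $v$ lost a vertex to $S_{i-1}$, giving $v$ an out-edge in $\cald$ to some vertex of $S_{i-1}$. Set $v_0:=u^*$ and inductively choose $v_{j+1}\in S_{I^*-j-1}$ among the out-neighbours of $v_j$, subject to restrictions (R1) and (R2) below; set $w:=v_\ell$. The walk is well-defined because the proof of Lemma~\ref{lgap} shows $I^*>i_1+n^{\d/20}$ a.a.s., yielding (a).

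For (b), let $\Phi$ be the set of vertices on core flippable cycles. By Lemma~\ref{lflip}, $|\Phi|=O(n^{\d/2}\log n)$ a.a.s. Each $c\in\Phi$ has degree at most $O(\log n)$ in the hypergraph a.a.s., so $c$'s $\cald$-out-neighbours are at most $O(\log n)$ non-2-core vertices, each with $|R^+(\cdot)|\le n^{O(\d)}$ by Theorem~\ref{mt2}; hence $\mathcal B:=\bigcup_{c\in\Phi}R^+(c)$ has $|\mathcal B|\le n^{O(\d)}$ and (b) is equivalent to $w\notin\mathcal B$. The level sizes $|S_{I^*-j}|$ for $j\le\ell$ remain polynomially larger than $n^{O(\d)}$ (from Lemma~\ref{lsi}(a,d) and the level-size estimates in the proof of Lemma~\ref{lgap}, using $\d$ small), so $\mathcal B$ occupies a vanishing fraction of $S_{I^*-j-1}$ and the first restriction (R1) ``$v_{j+1}\notin\mathcal B$'' leaves many valid candidates at each step.

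For (c) I impose restriction (R2): $v_{j+1}$ has no $\cald$-in-edge from any $v_i$ with $i<j$, and the hyperedge producing $v_j\to v_{j+1}$ contains no $v_i$ with $i<j$ besides $v_j$. After the full path is constructed I separately check that no ``forward shortcut'' $v_j\to v_i$ with $i-j\ge 2$ exists; under EDGE-SELECTION each such event has probability $O(1/n)$ per pair, so summing over the $O(\ell^2)$ pairs gives total failure probability $o(1)$, forcing the induced subgraph on $\{v_0,\ldots,v_\ell\}$ to be exactly the constructed path. The hardest remaining step is excluding off-path vertices $v\in R^+(u^*)\cap T(w)$; my plan is a first-moment bound on the expected number of internally-disjoint pairs of directed $u^*\to w$ paths of length $\ell$ in $\cald$, using $|R^+(u^*)|\le n^{O(\d)}$ (Theorem~\ref{mt2}) and the fact that each length-$\ell$ path lands at a specific $w\in S_{I^*-\ell}$ with probability $O(|S_{I^*-\ell}|^{-1})$ under EDGE-SELECTION's essentially uniform random bipartite matchings. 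Showing this expectation is $o(1)$ for $\ell=n^{\d/20}$ --- the main obstacle --- reduces to a delicate computation tracking how two vertex-disjoint paths split and merge across the $\ell$ levels, ultimately controlled by the rapid shrinkage of forward branching in $\cald$ that is driven by the $\ld_i$ decrease in Lemma~\ref{lsi}(a).
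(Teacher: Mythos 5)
Your proposal takes a genuinely different route from the paper, but it has two gaps, one of which you acknowledge is the crux and leave unresolved.

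The paper does not construct $w$ by a greedy walk. Instead it takes $w$ to be \emph{any} vertex of $S_{I^*-n^{\d/20}}$ reachable from $u^*$ (such a $w$ exists trivially), and then proves that a.a.s.\ \emph{for every} $w\in\cup_{i\ge i_1}S_i$ simultaneously, the backward-exploration set $T(w)\setminus\calc_2$ induces a tree in $\cald$ and contains no vertex of a core flippable cycle. The tree bound comes from running a subcritical branching process (branching parameter $1-z$, $z=\Theta(n^{-\d/2})$, using the drift estimate~\eqref{br2}) which caps $|T(w)\setminus\calc_2|$ at $n^{3\d/2}$; the acyclicity and the avoidance of flippable cycles then follow from union bounds over all $w$ using that small size. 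Property (c) then falls out automatically: once $T(w)\setminus\calc_2$ is a tree, the unique undirected tree path from $u^*$ to $w$ is the only possible directed walk (directions in $\cald$ are strictly monotone in removal time), so $T(w,u^*)$ is exactly that path. No specific path has to be built.

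Your restriction (R1) is shaky: at each step $v_j$ has at most $r-1$ out-neighbours into $S_{I^*-j-1}$ and possibly only one, so you cannot simply argue that the sparse set $\mathcal B$ ``leaves many valid candidates.'' You need either a union bound over all $w$ (the paper's route) or a much more careful accounting of how many out-neighbours $v_j$ actually has and the correlation between being an out-neighbour of $v_j$ and being in $\mathcal B$. The more serious gap is the one you flag yourself: (R2) and the ``forward-shortcut'' check only force the \emph{induced subgraph on $\{v_0,\dots,v_\ell\}$} to be a path, not $T(w,u^*)$ itself. $T(w,u^*)$ can pick up vertices off your chosen path that branch out of and then rejoin the path across several levels, and your proposed first-moment bound over ``internally-disjoint pairs of $u^*\to w$ paths'' does not cover those configurations (the extra vertices need not lie on a second internally-disjoint path). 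Controlling all such contaminations is precisely what the paper's ``$T(w)\setminus\calc_2$ is a.a.s.\ a tree for every $w$'' statement accomplishes in one stroke; without something of comparable strength your (c) does not close.
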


\begin{proof} As in the previous proof, we let  $\n=\hf - \frac{\d}{6}$. We run  SLOW-STRIP  and let $i_1$ denote the first iteration in which  $L_{t(i_1)}\leq n^{\n}$; set $t_1=t(i_1)$.
Note that, since $k=2$, each iteration of SLOW-STRIP deletes a vertex $v\in\msq$. As usual,
we use the configuration model. $\calc_2$ is the 2-core.

We will prove that \aas for every $w\in \cup_{i\geq i_1}S_i$, {$T(w)\setminus \calc_2$} induces a tree in $\cald$.
This will prove parts (a,c) by letting $w$ be any vertex that $u^*$ can reach in $S_{I^*-n^{\d/20}}$ (such $w$ exists since every $v\in S_i$ can reach at least one vertex in $S_j$ for every $j<i$).

We start by bounding the size of  each $T(w)\setminus\calc_2$.

Fix some $w\in S_i$, with $i\geq i_1$. We maintain a set $\calt(w)$ as follows.

Initially $\calt(w):=\{w\}$. Whenever we delete a  vertex $v\in\msq$ such that $v\in\calt(w)$:
(i) each neighbour of $v$ that has degree at most $2$, and hence is in or will enter $\msq$, is placed in $\calt(w)$; (ii) each neighbour of $v$ that has degree greater than $2$ is coloured Red.
Every time a Red vertex enters $\msq$, it is placed in $\calt(w)$.  Thus, when we finish SLOW-STRIP, $\calt(w)=T(w)\setminus\calc_2$.

We will analyze $\calt(w)$ using a branching process.  {When a vertex $v\in\calt(w)$ is deleted by SLOW-STRIP, we say that we are {\em processing } $v$.  If a vertex $u$ is added to $\calt(w)$ while $v$ is being processed then we consider $u$ to be an {\em offspring} of $v$.
If a vertex $u$ is added to $\calt(w)$ during the deletion of a vertex not in $\calt(w)$ (and so $u$ must be Red), then we consider $u$ to be an {\em offspring} of the most recently processed member of $\calt(w)$.
Note that the offspring of $v$ are not neccessarily adjacent to $v$ in $\cald$. }

In other words: we say that $u$ is an {\em offspring} of a vertex $v\in \calt(w)$, if $u$ entered $\calt(w)$ between the iterations of SLOW-STRIP ranging from the time we remove $v$ up until just before the next iteration where we remove a member of $\calt(w)$.

There are two scenarios
under which $u$ can become an offspring of $v$: (i) at the time we delete $v$, $u$ is a neighbour of $v$ and $u$ has degree at most 2; (ii) on the step after $v$ is deleted, $u$ is Red  and  $u$ enters $\msq$ before the next member of $\calt(w)$ is deleted.  For case (ii) to occur, $u$ must be the neighbour of another vertex $v'\notin\calt(w)$ that is removed from $\msq$, and the degree of $u$ must drop below 2 when $v'$ is removed.

It will be convenient to consider $\calt'(w)\subseteq\calt(w)$, which differs from $\calt(w)$ in that only the first $n^{2\d}$ vertices to be coloured Red can enter $\calt'(w)$.  We will show that, in fact, $\calt'(w)=\calt(w)$.

When removing  $v\in\calt'(w)$,
the analysis used for~\eqn{br} shows that the expected number of offspring created under scenario (i) is $1+\br(G_t)+O(L_t/n)$. It follows from~\eqn{br2} that this is $1-\Theta(n^{-\d/2})$ for all $t\geq t(i)\geq t(i_1)$.

The number of iterations until the next member of $\calt'(w)$ is removed is at most $\t-t_1=O(n^{\n+\d/2})$, by Lemma~\ref{lsi}(c). If $u$ is an offspring of $v$ created under scenario (ii), then $u$ must be one of the first $n^{2\d}$ Red vertices. Furthermore during one of those iterations,  exactly two vertex-copies of $u$ remain and one of them is selected.  The total number of such vertex-copies over all choices of $u$ is  at most $2n^{2\d}$, and since there are a linear number of vertex-copies to choose from, the expected number of offspring of $v$ created under scenario (ii) is at most
\[O(n^{\n+\d/2})\times O(n^{2\d}/n)=O(n^{\n+5\d/2-1})=o(n^{-\d/2}),\]
for sufficiently small $\d$.

Therefore, the total expected number of offspring of $v$, in $\calt'(w)$, is $1-\Theta(n^{-\d/2})+o(n^{-\d/2})=1-z$ for some $z=\Theta(n^{-\d/2})$.  So $\calt'(w)$
grows like a Galton-Watson branching process with branching parameter $1-z$.  The probability that such a branching process has size at least $x$ drops quickly as $x$ exceeds $\Theta(z^{-2})$ (see, eg.~\cite{bbckw}), and in particular, $\Pr(|\calt'(w)|>n^{3\d/2})=o(1/n)$.
So \aas $|\calt'(w)|\leq n^{3\d/2}$ for every $w$.

Note that at most $r-1$ Red vertices are formed each time a member of $\calt'(w)$ is removed.
So \aas the number of Red vertices is at most $(r-1)n^{3\d/2}<n^{2\d}$ and so $\calt(w)=\calt'(w)$ for all $w$. Therefore \aas
\[|T(w)\setminus\calc_2| \leq n^{3\d/2} \mbox{ for every } w\in \cup_{i\geq i_1}S_i.\]

Now we show that \aas each $T(w)\setminus\calc_2$ induces a tree in $\cald$.

{\bf Observation:}  If $T(w)\cap\calc_2$ does not induce a tree in $\cald$, then there must have been an iteration when we deleted some $v\in\calt(w)$ and one of the $r-1$ neighbours of $v$ was in $\calt(w)$.

Each vertex has degree at most 1 after entering $\calt(w)$.
When we delete $v$, we choose $r-1$ uniform vertex-copies as its neighbours. Each vertex of $\calt(w)$ has entered $\msq$ and so has at most 2 copies remaining. So the probability that we choose  a copy of a vertex in $\calt(w)$  is  $|\calt(w)|/\Theta(n)=O(n^{3\d/2 - 1})$.
So the probability that this happens during the deletion of at least one member of $\calt(w)$ is $O(n^{3\d/2})\times O(n^{3\d/2 - 1})=O(n^{3\d - 1})$.

 Multiplying by the $O(n^{\n+\d/2})$ choices for $w\in\cup_{i\geq i_1}S_i$ (by Lemma~\ref{lsi}(c))) we get $O(n^{\n+7\d/2-1})=o(1)$
for $\d$ sufficiently small.
Therefore \aas $T(w)\cap\calc_2$ induces a tree in $\cald$ for every $w\in\cup_{i\geq i_1}S_i$. This proves (a,c) as described above.

To prove (b), we will show that \aas no $w\in \cup_{i\geq i_1}S_i$ has the vertex of any core flippable cycle in $T(w)$.  It will suffice to show that no $w'\in \cup_{i\geq i_1}S_i$ ever selects a vertex of a core flippable cycle as a neighbour during Step 3 of EDGE-EXPOSURE.  There are $O(n^{\n+\d/2})$ choices for $w'$, as stated above.  The total sizes of the core flippable cycles is $O(n^{\d/2}\log n)$ by Lemma~\ref{lflip}. So the probability that some such $w'$ selects a vertex of a core flippable cycle during Step 3 of EDGE-EXPOSURE is $O(n^{\n+\d/2})\times O(n^{\d/2}\log n)/n=O(n^{\n+\d-1}\log^2n)=o(1)$ for small $\d$.
\end{proof}

And finally, we describe the proof that solutions in different clusters are not
$\Theta(n^{1-r\d})$-connected.

{\em Proof of Theorem~\ref{tc1}(b)}
This follows the same argument as the proof of Theorem 2 of~\cite{amxor}.  The only change is to  Lemma~51 of~\cite{amxor}, where instead of proving that \aas there is no non-empty linked set of size less than $\a n$, we prove that \aas there is none of size less than $n^{1-r\d}$.
(Caution:  in~\cite{amxor} the usage of $k,r$ is inverted from that of this paper.)

As in~\cite{amxor}, we use $X_a$ to denote the number of linked sets $S$ with $|\Gamma(S)|=a$ (see~\cite{amxor} for definitions).
Property (iii) at the beginning of the proof of Lemma~51, is equivalent to saying $\frac{2(r-1)Q_2}{\La}\leq 1-\z$, for some $\z>0$, in the notation of this paper.  Instead, we have $\frac{2(r-1)Q_2}{\La}\leq 1-Kn^{-\d/2}$ by(~\ref{eq2}).  This results in replacing $Z_1=\Theta(1)$ from the proof of Lemma~51 of~\cite{amxor} with $Z_1=\Theta(n^{\d/2})$.
In the notation of~\cite{amxor}), this yields
\[\ex(X_a)<\left(\frac{\Theta(an^{r\d})}{n}\right)^{a/2r},\]
and it then follows easily that $\ex\left(\sum_{a=1}^{Zn^{1-r\d}}X_a\right)=o(1)$
 for sufficiently small $Z=Z(r)>0$. This proves the theorem.
\proofend


\begin{thebibliography}{99}

\bibitem{abg} L. Addario-Berry, N. Broutin and C. Goldschmidt. {\em The continuum limit of critical random graphs.}


\bibitem{aco}
D. Achlioptas and A. Coja-Oghlan.
{\em Algorithmic barriers from phase transitions.}
 In 49th Annual IEEE Symposium on Foundations of Computer Science,
               FOCS 2008, October 25-28, 2008, Philadelphia, PA, USA, pages 793--802. IEEE Computer Society, 2008.

\bibitem{aminfede}
D. Achlioptas, A. Coja-Oghlan and F. Ricci-Tersenghi
{\em On the solution-space geometry of random constraint satisfaction problems.}
{Random Struct. Algorithms},
{\bf 38} (3),
251-268 (2011).

\bibitem{amxor} D. Achlioptas and M. Molloy. {\em The solution space geometry of random linear equations.}  Random Structures and Algorithms (to appear).

\bibitem{art}
D. Achlioptas and F. Ricci-Tersenghi
{\em Random formulas have frozen variables.}
SIAM J. Comput.,
{\bf 39} (1),
260-280 (2009).


\bibitem{azuma} K. Azuma. {\em Weighted sums of certain dependent random variables.}
Tokuku Math. J. {\bf 19} (1967), 357~-~367.


\bibitem{bmz} A. Braunstein, M. Mezard and R. Zecchina. {\em Survey propagation: an algorithm for satisfiability}.   Random Structures and Algorithms {\bf 27} (2005), 201~-~226.


\bibitem{bb} B. Bollob\'{a}s. {\em A probabilistic proof of an asymptotic formula for the number of labelled regular graphs.} Europ. J. Combinatorics {\bf 1} 311-316 (1980).

\bibitem{bbcore} B. Bollob\'{a}s. {\em The evolution of sparse graphs.} Graph Theory and Combinatorics. Proc. Cambridge Combinatorial Conf. in honour of Paul Erd\H{̈o}s (B. Bollob ́as, ed.), Academic Press (1984), 35~–~57.

\bibitem{bb3} B. Bollob\'{a}s. {\em The evolution of random graphs.} Transactions of the AMS {\bf 286} (1984), 257~–~274.

\bibitem{bbckw}  B. Bollob\'{a}s, C. Borgs, J. Chayes, J. Kim and D. Wilson. {\em The scaling window of the 2-SAT transition.}  Random Structures and Algorithms {\bf 18} (2001), 201~-~256.

\bibitem{CW} Julie Cain and Nicholas Wormald, Encores on cores,
{\em Electron. J. Combin.}, 13 (2006), no. 1, Research Paper 81, 13 pp.

\bibitem{csw} J. Cain, P. Sanders and N. Wormald. {\em The random graph threshold for $k$-orientiability and a fast algorithm for optimal multiple-choice allocation.} Proceedings of 18th  SODA (2007), 469~-~476.

\bibitem{nc} N. Calkin. {\em Dependent sets of constant weight binary vectors.} Combinatorics, Probability and Computing, {\bf 6} (1997), 263~-~271.

\bibitem{cher} H. Chernoff. {\em A measure of asymptotic efficiency for tests of a
hypothesis based on the sum of observations.} Ann. Math. Statist. {\bf 23} 493~-~509 (1952).


\bibitem{cdmm} S. Cocco, O. Dubois, J. Mandler and R. Monasson.  {\em Rigorous decimation-based construction of ground pure states for spin glass models on random lattices.}
Phys. Rev. Lett. {\bf 90} (2003), 047205.


\bibitem{coalg}  A. Coja-Oghlan. {\em A better algorithm for random {k-SAT}.} SIAM Journal on Computing {\bf 39} (2010), 2823~-~2864.

\bibitem{coind} A. Coja-Oghlan and C. Efthymiou. {\em On independent sets in random graphs.} Proc. 22nd SODA (2011), 136~-~144.

\bibitem{cop} A. Coja-Oghlan, K. Panagiotou. {\em Catching the $k$-NAESAT threshold.} Proc. 44th STOC (2012), 899~-~908.

\bibitem{cop2} A. Coja-Oghlan, K. Panagiotou. {\em Going after the $k$-SAT threshold.} Proc. 45th STOC (2013), 705~-~714.

\bibitem{cov} A. Coja-Oghlan, D. Vilenchik. {\em Chasing the k-colorability threshold.} 	Proc. FOCS (2013).

\bibitem{cz} A. Coja-Oghlan and L. Zdeborova. {\em The condensation transition in random hypergraph 2-coloring.} Proc. 23rd SODA (2012), 241~-~250.

\bibitem{cc} C. Cooper. {\em The cores of random hypergraphs with a given degree sequence.}
 Random Structures Algorithms {\bf 25} (2004), 353~-~375.

\bibitem{cdxor} N. Creignou and H. Daud\'e.  {\em Satisfiability threshold for random XOR-CNF formulas.} Discrete Appl. Math. {\bf 96-97} (1999), 41~-~53.
\bibitem{daud}
H. Daud{\'e}, M. M{\'e}zard, T. Mora, and R. Zecchina.
{\em Pairs of SAT assignments and clustering in random boolean formulae.}
Theoretical Computer Science,
{\bf 393} (1-3),
260-279 (2008).

\bibitem{dmcore} A. Dembo and A. Montanari. {\em Finite size scaling for the core of large random hypergraphs.} Annals of Applied Probability, {\bf 18} (2008), 1993~-~2040.

\bibitem{cuc}
M. Dietzfelbinger, A. Goerdt, M. Mitzenmacher, A. Montanari, R. Pagh and M. Rink.
{\em Tight thresholds for cuckoo hashing via XORSAT.} Arxiv:0912.0287v3.
Also in Proceedings of  Automata, Languages and Programming, 37th International
               Colloquium, ICALP 2010, Bordeaux, France, July 6-10, 2010,
Part I, pages 213~-~225. Springer, 2010.


\bibitem{DKLP} J. Ding,  J.H. Kim,  E. Lubetzky and Y. Peres.
{\em Anatomy of a young giant component in the random graph.}
Random Structures Algorithms {\bf 39} (2011), 139~-~178.


\bibitem{dklp2} J. Ding,  J.H. Kim,  E. Lubetzky and Y. Peres.
{\em Diameters in supercritical random graphs via first passage percolation.} Combinatorics, Probability and Computing {\bf 19} (2010), 729~-~751.


\bibitem{dub}
O. Dubois and J. Mandler.
{\em The 3-XORSAT threshold.}
 In Proceedings of 43rd Symposium on Foundations of Computer Science (FOCS
               2002), 16-19 November 2002, Vancouver, BC, Canada, pages 769-778. IEEE Computer Society, 2002.

\bibitem{fern} D. Fernholz and V. Ramachandran. {\em The $k$-orientability thresholds for $G_{n,p}$.} Proceedings of SODA 2007, 459~–~468.

\bibitem{fkp} N. Fountoulakis,  M. Koshla and K. Panagiotou. {\em The multiple-orientability thresholds of random hypergraphs.}  Proceedings of  SODA 2011, 1222~-~1236.

\bibitem{frmix} N. Fountoulakis and B. Reed. {\em The evolution of the mixing rate of a simple random walk on the giant component of a random graph.} Random Structures and Algorithms {\bf 33} (2008), 68~-~86.

\bibitem{dg} D. Gamarnik and M. Sudan. {\em Limits of local algorithms over sparse random graphs.} arXiv:1304.1831.

\bibitem{pw} P. Gao and N. Wormald. {\em Load balancing and orientability thresholds for random hypergraphs.} Proceedings of STOC (ACM) (2010), 97~-~104.

\bibitem{young}
M. Guidetti and A.P. Young. {\em Complexity of several constraint-satisfaction problems using the heuristic classical algorithm WalkSAT.} Phys. Rev. E, {\bf 84} (1), 011102, July 2011.

\bibitem{hmwc} G. Havas, B.S. Majewski, N.C. Wormald, and Z.J. Czech. {\em Graphs, hypergraphs and hashing.} In 19th International Workshop on Graph-Theoretic Concepts in Computer Science (WG’93), Lecture Notes in Computer Science  {\bf 790} (1993) 153~–~165.

\bibitem{ikkm}
M. Ibrahimi, Y. Kanoria, M. Kraning, and A. Montanari.
{\em The set of solutions of random xorsat formulae.}
Proceedings of the Twenty-Third Annual ACM-SIAM Symposium
               on Discrete Algorithms, SODA 2012, Kyoto, Japan, January
               17-19, 2012, pages 760--779. SIAM, 2012.

\bibitem{jlkp}
S. Janson, T. {\L}uczak, D. Knuth, and B. Pittel. {\em The birth of the giant component.} Random Structures and Algorithms {\bf 3} (1993), 233~–~358.

\bibitem{jlr}
S. Janson, T. {\L}uczak and A. Ruci{\'n}ski.
{Random Graphs.} Wiley, New York (2000).

\bibitem{jhk} J.H.Kim.  {\em Poisson cloning model for random graphs.}  	arXiv:0805.4133v1

\bibitem{vk1} V. Kolchin. {\em Random graphs and systems of linear equations in finite fields.} Random Structures and Algorithms, {\bf 5} (1994), pp. 135–146. In Russian.
\bibitem{vk2} V. Kolchin and V. Khokhlov. {\em A threshold effect for systems of random equations of a special form.} Discrete Mathematics and Applications, {\bf 5} (1995), 425~–~436.


\bibitem{kmrsz} F. Krzakala, A. Montanari, F. Ricci-Tersenghi, G. Semerjian and L. Zdeborova. {\em Gibbs States and the Set of Solutions of Random Constraint Satisfaction Problems.} Proc. Natl. Acad. Sci., (2007).


\bibitem{lmss} M. Luby, M. Mitzenmacher, A. Shokrollahi, and D. Spielman.
{\em Efficient Erasure Correcting Codes.}
IEEE Transactions on Information Theory, {\bf 47} (2001), 569~-~584.

\bibitem{lmss2} M. Luby, M. Mitzenmacher, A. Shokrollahi, and D. Spielman
{\em Improved Low-Density Parity-Check Codes Using Irregular Graphs.}
IEEE Transactions on Information Theory, {\bf 47} (2001), 585~-~598.

\bibitem{tlcomp}  T. {\L}uczak.  {\em Component behaviour near the critical point of the random graph process.}
Rand. Struc. \& Alg. {\bf 1} (1990), 287~-~310.

\bibitem{tlcomp2}  T. {\L}uczak.  {\em Random trees and random graphs.} Rand. Struc. \& Alg. {\bf 13} (1998), 485~–~500.

\bibitem{lpw}  T. {\L}uczak, B. Pittel and J. Weirman. {\em The structure of a random graph at the point of the phase transition.} Trans. Am. Math. Soc. {\bf 341} (1994), 721~-~748.

\bibitem{mmw} 	E. Maneva, E. Mossel and M. J. Wainwright.
{\em A new look at Survey Propagation and its generalizations.}
JACM {\bf 54},  (2007).

\bibitem{cm}  C. McDiarmid. {\em Concentration for independent permutations.}
Combinatorics, Probability and Computing {\bf 11} (2002), 163~-~178.

\bibitem{mmbook}
M. M{\'e}zard and A. Montanari.
\newblock {\em Information, Physics, and Computation}.
\newblock Oxford University Press, Inc., New York, NY, USA, 2009.

\bibitem{mmz}
M. M{\'e}zard, T. Mora,  and R. Zecchina
{\em Clustering of solutions in the random satisfiability problem.}
Phys. Rev. Lett.,
{\bf 94} (19),
197205 (2005).

\bibitem{sp}
M. M{\'e}zard, G. Parisi, and R. Zecchina
{\em Analytic and algorithmic solution of random satisfiability problems.}
Science,
{\bf 297},
812~-~815 (2002).


\bibitem{mez}
M. M{\'e}zard, F. Ricci-Tersenghi, and R. Zecchina. {\em Two solutions to diluted $p$-spin models and XORSAT
problems.} J. Stat. Phys. {\bf 111}, 505~-~533, (2003).


\bibitem{mz}
M. Mezard, R. Zecchina {\em The random K-satisfiability problem: from an analytic solution to an efficient algorithm.} Phys. Rev. E {\bf 66}, (2002).

\bibitem{mmcore} M. Molloy {\em Cores in random hypergraphs and boolean formulas.}
Random Structures and Algorithms {\bf 27}, 124~-~135 (2005).

\bibitem{mmfreeze} M. Molloy. {\em The freezing threshold for $k$-colourings of a random graph.}
Proceedings of STOC (2012).

\bibitem{mr2} M. Molloy and B. Reed. {\em Critical Subgraphs of a Random Graph.} Electronic J. Comb. {\bf 6}  (1999),  R35.

\bibitem{mr1} M. Molloy and B. Reed. {\em A critical point for random graphs with a given degree sequence.}
Random Structures and Algorithms {\bf 6} 161~-~180 (1995).


\bibitem{mrbook}  M. Molloy and B. Reed.  {\em Graph Colouring and the
Probabilistic Method}. Springer (2002).

\bibitem{mres} M. Molloy and R. Restrepo. {\em Frozen variables in random boolean constraint satisfaction problems.} Proceedings of SODA (2013).


\bibitem{ms} M. Molloy and M. Salavatipour. {\em The resolution complexity of random constraint satisfaction problems.} SIAM J. Comp. {\bf 37}, 895~-~922 (2007).

\bibitem{mrt}
A. Montanari, R. Restrepo and P. Tetali. {\em Reconstruction and clustering in random constraint satisfaction problems.} SIAM J. Disc. Math. {\bf 25} (2011), 771~-~808.

\bibitem{mome} T. Mora and M. Mézard. {\em Geometrical organization of solutions to random linear Boolean equations.}
Journal of Statistical Mechanics: Theory and Experiment {\bf 10} (2006), P10007

\bibitem{mpwz} R. Mulet, A. Pagani,  M. Weigt and R. Zecchina.  {\em Coloring random graphs.}  Phys. Rev. Lett. {\bf 89}, 268701 (2002).

\bibitem{pittel} B. Pittel. {\em
On trees census and the giant component  in  sparse random graphs.}
Random Structures and Algorithms {\bf 1} (1990), 311~-~342.

\bibitem{ps} B. Pittel and G. Sorkin. {\em The satisfiability threshold for $k$-XORSAT.} 	arXiv:1212.1905 (2012).


\bibitem{psw} B. Pittel, J. Spencer and N. Wormald.
{\em Sudden emergence of a giant $k$-core in a random graph.}
J. Comb. Th. B {\bf 67}, 111~-~151 (1996).

\bibitem{rw} O. Riordan and N. C. Wormald. {\em The diameter of sparse random graphs.}
Combinatorics, Probability and Computing {\bf 19} (2010), 835~–~926.

\bibitem{mt} M. Talagrand.  {\em Concentration of measure and isoperimetric
inequalities in product spaces.} Institut Des Hautes \'{E}tudes Scientifiques,
Publications Math\'{e}matiques {\bf 81} (1995), 73~-~205.

\bibitem{zk} L. Zdeborov\'a and F. Krzakala. {\em Phase transitions in the colouring of random graphs.}
Phys. Rev. E 76, 031131 (2007).

\end{thebibliography}
\end{document}